%% file: main.tex
\documentclass[acmsmall,screen,nonacm]{acmart}
\setcopyright{none}
\acmDOI{}
\acmISBN{}

\usepackage{mathpartir}
\usepackage{booktabs}
\usepackage{longtable}
\usepackage{xcolor}
\usepackage{colortbl}
\usepackage{listings}
\usepackage{xspace}
\usepackage{tikz}
\input{figures/comfy-nodes}
\usetikzlibrary{positioning,arrows.meta,patterns}
\usepackage{graphicx}
\usepackage{placeins}
\usepackage{subcaption}
\usepackage{enumitem}
\definecolor{ctsilent}{HTML}{C0392B}
\definecolor{ctsem}{HTML}{D85A30}
\definecolor{dslkeyword}{HTML}{1F5FBF} \definecolor{dsltype}{HTML}{0F7A6B}    \definecolor{dslnode}{HTML}{7A3E9D}    \definecolor{dslfunc}{HTML}{9A5B00}    \definecolor{dslcomment}{HTML}{6B7280} \definecolor{ctType}{HTML}{16665B}     \definecolor{ctNode}{HTML}{6A3688}     \definecolor{ctBrand}{HTML}{9C3B57}

\newcommand{\tool}{\textsc{ComfyTy}\xspace}
\newcommand{\parh}[1]{\smallskip\noindent\textbf{#1}}
\newcommand{\takeaway}[2]{\par\smallskip\noindent
  \begin{tikzpicture}
    \node[draw=ctType!55, line width=0.7pt, rounded corners=3pt,
          fill=ctType!7, inner xsep=9pt, inner ysep=7pt, align=left,
          text width=\dimexpr\columnwidth-24pt\relax, font=\small]
      (tkbox) {\textcolor{ctType}{\textbf{#1.}}\hspace{0.35em}#2};
    \fill[ctType, rounded corners=1pt]
      ([xshift=1.2pt]tkbox.north west) rectangle
      ([xshift=4.2pt]tkbox.south west);
  \end{tikzpicture}\par\smallskip}
\newcommand{\reqdef}[1]{\phantomsection\label{req:R#1}(R#1)}
\newcommand{\req}[1]{\hyperref[req:R#1]{R#1}}
\newcommand{\tycon}[1]{\ensuremath{\textcolor{ctType}{\mathsf{#1}}}}
\newcommand{\brand}[1]{\ensuremath{\textcolor{ctBrand}{\mathsf{#1}}}}
\newcommand{\nodename}[1]{\textcolor{ctNode}{\texttt{#1}}}
\newcommand{\tyImage}[3]{\tycon{Image}(#1,#2,#3)}
\newcommand{\tyLatent}[4]{\tycon{Latent}(#1,#2,#3,#4)}
\newcommand{\tyMask}[2]{\tycon{Mask}(#1,#2)}
\newcommand{\tyCond}[1]{\tycon{Cond}(#1)}
\newcommand{\tyModel}[2]{\tycon{Model}(#1,#2)}
\newcommand{\tyClip}[1]{\tycon{Clip}(#1)}
\newcommand{\tyVae}[1]{\tycon{VAE}(#1)}
\newcommand{\tyLora}[1]{\tycon{Lora}(#1)}
\newcommand{\tyCtrl}[2]{\tycon{Ctrl}(#1,#2)}
\newcommand{\tyInt}[1]{\tycon{Int}(#1)}
\newcommand{\tyDyn}{\mathord{?}}
\newcommand{\chfn}{\mathit{ch}}
\newcommand{\archfn}{\mathit{arch}}
\newcommand{\consistent}{\sim}
\newcommand{\wf}[1]{\vdash #1\ \mathsf{ok}}

\newenvironment{proofsketch}{\begin{proof}[Proof sketch]}{\end{proof}}

\lstdefinelanguage{schemadsl}{
  keywords={node, forall, where, widget, in, out},
  keywordstyle=\bfseries\color{dslkeyword},
  keywords=[2]{Image, Latent, Mask, Cond, Model, Clip, VAE, Lora, Ctrl, Int},
  keywordstyle=[2]\color{dsltype},
  keywords=[3]{sf, ch, arch},
  keywordstyle=[3]\color{dslfunc},
  emph={VAEEncode, WanImageToVideo, ControlNetApplyAdvanced},
  emphstyle=\color{dslnode},
  comment=[l]{\#},
  commentstyle=\color{dslcomment},
  basicstyle=\small\ttfamily,
  columns=fullflexible,
  keepspaces=true,
  sensitive=true,
}

\begin{document}

\title{Well-Typed ComfyUI Cannot Go Wrong: Gradual Refinement Types for Generative-AI Workflow Graphs}

\author{Pingchuan Ma}
\authornote{Both authors contributed equally to this research.}
\orcid{0000-0001-7680-2817}
\affiliation{\institution{Zhejiang University of Technology}
  \city{Hangzhou}
  \country{China}
}
\email{pma@zjut.edu.cn}

\author{Zhantong Xue}
\authornotemark[1]
\orcid{0009-0000-3419-9431}
\affiliation{\institution{Hong Kong University of Science and Technology}
  \city{Hong Kong}
  \country{China}
}
\email{zxueai@cse.ust.hk}

\author{Hui Li}
\authornote{Corresponding author.}
\orcid{0009-0008-3821-9434}
\affiliation{\institution{Institute of Software, Chinese Academy of Sciences}
  \city{Beijing}
  \country{China}
}
\email{lihui2012@otcaix.iscas.ac.cn}

\author{Zhaoyu Wang}
\orcid{0009-0009-6892-1264}
\affiliation{\institution{Hong Kong University of Science and Technology}
  \city{Hong Kong}
  \country{China}
}
\email{zwangjz@cse.ust.hk}

\author{Yuguang Zhou}
\orcid{0000-0001-5069-4563}
\affiliation{\institution{Hong Kong University of Science and Technology}
  \city{Hong Kong}
  \country{China}
}
\email{yzhougv@cse.ust.hk}

\author{Shuai Wang}
\orcid{0000-0002-0866-0308}
\affiliation{\institution{Hong Kong University of Science and Technology}
  \city{Hong Kong}
  \country{China}
}
\email{shuaiw@cse.ust.hk}

\author{Xiaoqin Zhang}
\orcid{0000-0003-0958-7285}
\affiliation{\institution{Zhejiang University of Technology}
  \city{Hangzhou}
  \country{China}
}
\email{xqzhang@zjut.edu.cn}

\renewcommand{\shortauthors}{Ma and Xue et al.}

\begin{abstract}
Generative-AI workflows are programs assembled from independently developed
models and operators. Their correctness depends on relationships among model
families, tensor geometry, control modalities, and quantization levels. In
ComfyUI, these relationships can be violated even when every connection is
accepted by its validator, causing expensive runtime crashes or silent failures.
To measure how often such semantic composition failures occur, we classify all
8,766 issues and 1,368 discussion threads of ComfyUI's GitHub repository with an
LLM, checked for cross-model agreement. They account for 8\% of issue reports
and 21\% of issue-tracker runtime failures, and 32\% of them are silent. Because
execution cannot reliably reveal those failures, we present ComfyTy, a gradual
refinement type system that checks component compatibility before execution as a
compositional property of workflow graphs. Each node declares a reusable schema
that states which model families, sizes, modalities, and quantizations it
accepts and produces. Wiring nodes together combines their schemas into
constraints, which ComfyTy solves to find and report incompatibilities. To
evaluate ComfyTy, we first inject faults into official ComfyUI workflows and
execute every resulting graph. ComfyTy detects 66.4\% of the confirmed
injected faults, all of which ComfyUI's own validator admits. On real workflows
across more model families, it raises no false alarm on 765 known-good workflows
and detects 67\% of 108 failures rebuilt from user reports. Finally, in a user
study with 10 participants, its diagnostics raise fault-localization accuracy on
faulty workflows from 35.0\% to 87.5\% and repair accuracy from 30.0\% to
62.5\%.
\end{abstract}

\begin{CCSXML}
<ccs2012>
<concept>
<concept_id>10011007.10011006.10011008.10011009.10011012</concept_id>
<concept_desc>Software and its engineering~Data flow languages</concept_desc>
<concept_significance>500</concept_significance>
</concept>
<concept>
<concept_id>10011007.10011006.10011039.10011040</concept_id>
<concept_desc>Software and its engineering~Syntax</concept_desc>
<concept_significance>300</concept_significance>
</concept>
<concept>
<concept_id>10011007.10011074.10011099.10011102.10011103</concept_id>
<concept_desc>Software and its engineering~Software defect analysis</concept_desc>
<concept_significance>500</concept_significance>
</concept>
</ccs2012>
\end{CCSXML}

\ccsdesc[500]{Software and its engineering~Data flow languages}
\ccsdesc[500]{Software and its engineering~Software defect analysis}

\keywords{refinement types, gradual typing, visual dataflow languages,
generative AI, workflow systems, empirical bug study}

\maketitle

\input{sections/introduction}
\input{sections/study}
\input{sections/calculus}
\input{sections/gradual}
\input{sections/implementation}
\input{sections/evaluation}
\input{sections/discussion}
\input{sections/related}
\input{sections/conclusion}

\bibliographystyle{ACM-Reference-Format}
\bibliography{references}

\clearpage
\appendix
\input{sections/appendix}
\input{sections/appendix-validity}
\input{sections/appendix-evaluation}

\input{sections/appendix-user-study}

\input{sections/appendix-threats}

\end{document}

%% file: figures/comfy-nodes.tex
\definecolor{tMODEL}{HTML}{8E7CC3}\definecolor{tCLIP}{HTML}{C9A800}\definecolor{tVAE}{HTML}{E05050}\definecolor{tCOND}{HTML}{E08A10}\definecolor{tLATENT}{HTML}{D660D6}\definecolor{tIMAGE}{HTML}{3D95DB}\definecolor{tCNET}{HTML}{00A870}\def\cfth{0.30}\def\cfrh{0.25}\def\cfink{black}\def\cfdot{100}\newcommand{\cfbox}[8]{\expandafter\xdef\csname cfw@#1\endcsname{#4}\coordinate (#1) at (#2,#3);
  \fill[white, rounded corners=1.2pt] (#2,#3) rectangle ++(#4,{-\cfth-#5*\cfrh-0.05});
  \fill[#7, rounded corners=1.2pt] (#2,#3) rectangle ++(#4,-\cfth);
  \draw[#6, rounded corners=1.2pt] (#2,#3) rectangle ++(#4,{-\cfth-#5*\cfrh-0.05});
  \node[anchor=west, font=\cftitle, text=#8, inner sep=0pt] at ({#2+0.07},{#3-0.5*\cfth})}\newcommand{\cfnode}[7]{\cfbox{#1}{#2}{#3}{#4}{#5}{#6, line width=0.6pt}{#6!16}{black}{#7};}\newcommand{\cfghost}[6]{\cfbox{#1}{#2}{#3}{#4}{#5}{black!35, line width=0.5pt,
  dash pattern=on 1.5pt off 1pt}{black!5}{black!50}{#6};}\newcommand{\cfin}[4]{\path (#1) ++(0,{-\cfth-(#2-0.5)*\cfrh}) coordinate (#1-i#2);
  \fill[#4!\cfdot] (#1-i#2) circle (0.05);
  \node[anchor=west, font=\cftext, text=\cfink, inner sep=0pt] at ([xshift=0.08cm]#1-i#2) {#3};}\newcommand{\cfout}[4]{\path (#1) ++(\csname cfw@#1\endcsname,{-\cfth-(#2-0.5)*\cfrh}) coordinate (#1-o#2);
  \fill[#4!\cfdot] (#1-o#2) circle (0.05);
  \node[anchor=east, font=\cftext, text=\cfink, inner sep=0pt] at ([xshift=-0.08cm]#1-o#2) {#3};}\newcommand{\cfwid}[3]{\path (#1) ++({0.5*\csname cfw@#1\endcsname},{-\cfth-(#2-0.5)*\cfrh})
    node[font=\cftext, fill=black!7, rounded corners=0.8pt, inner sep=0.6pt,
         minimum width={(\csname cfw@#1\endcsname-0.16)*1cm}] {#3};}\newcommand{\ghoston}{\def\cfink{black!50}\def\cfdot{45}}\newcommand{\ghostoff}{\def\cfink{black}\def\cfdot{100}}\newcommand{\cftext}{\fontsize{5}{5}\sffamily}\newcommand{\cftitle}{\fontsize{5.5}{5.5}\sffamily\bfseries}\newcommand{\cfpanel}[7]{\fill[rounded corners=3pt, #5!7] (#1,#2) rectangle (#3,#4);
  \node[anchor=north west, inner sep=0pt, font=\fontsize{6}{6}\sffamily\bfseries,
        text=#5] at ({#1+0.12},{#2-0.1}) {#6};
  \node[anchor=north east, inner sep=0pt, font=\fontsize{6}{6}\sffamily,
        text=#5] at ({#3-0.12},{#2-0.1}) {#7};}

%% file: sections/introduction.tex
\section{Introduction}
\label{sec:intro}

\input{figures/fig-silent-failures}

Generative-AI applications are no longer single models but \emph{pipelines}.
They compose diffusion backbones~\cite{ldm2022}, text encoders~\cite{clip2021},
VAE decoders~\cite{vae2014}, LoRA adapters~\cite{lora2022},
ControlNets~\cite{controlnet2023}, and post-processors into multi-stage
dataflows. To author such pipelines, visual workflow systems have become a
widely used medium, and for image and video generation, ComfyUI~\cite{comfyui}
is among the most widely adopted tools. ComfyUI practitioners express pipelines
as node graphs in its editor, draw on about 5{,}700 community node packs
published in the ComfyUI Registry~\cite{comfy-registry}, share workflows on
community sites, and, increasingly, delegate graph construction to LLM agents
that emit ComfyUI workflows
directly~\cite{comfygpt2025,comfybench2024,comfyclaw2026}. In this regime, a
workflow \emph{is} a program~\cite{ko2011}: its author controls how components
are composed, while their implementations and compatibility assumptions come
from many independent developers~\cite{garlan1995}. This makes checking these
assumptions at composition time a software engineering problem.

ComfyUI already addresses a first version of this problem: its editor performs
\emph{nominal} checking on the type labels of node inputs and outputs, so it
refuses to connect an \texttt{IMAGE} output to a \texttt{LATENT} input. However,
the check stops at the label and ignores the semantic compatibility behind it.
\texttt{MODEL} does not say \emph{which} model family, \texttt{LATENT} does not
say how many channels, and \texttt{CONTROL\_NET} does not say which control
modality the network was trained on, yet these are exactly the facts on which
compatibility hinges. Consequently, (i) an SDXL~\cite{sdxl2024} checkpoint wired
to SD1.5 conditioning, (ii) a 4-channel SD latent fed to a 16-channel Flux~\cite{flux}
decoder, and (iii) a CLIP-vision encoder with 1024-wide embeddings attached to
a Wan~\cite{wan2025} video model that projects 1280-wide ones all pass every
editor check, then crash only after the models have been loaded and execution
has begun. Worse, some mismatches never crash at all.

We call such violations of the components' semantic compatibility requirements
\emph{semantic composition failures}, and call those whose execution still
completes without an exception \emph{silent}~\cite{tambon2024}.
Example~\ref{ex:silent} shows how little it takes to produce a silent one.

\begin{example}\label{ex:silent} Figure~\ref{fig:silent-failures} shows four
workflows run with the same prompt and seed. ComfyUI accepts all four, and all
four run to completion. Workflow~(a) produces the intended image. Workflows
(b)--(d) each add one component that disagrees with the rest of the graph on an
attribute no nominal label records. In~(b), an SDXL LoRA is applied to the SD1.5
checkpoint: 486 of its 2{,}364 tensors match coincidentally named layers, so
loading succeeds, but the intended style is lost and the image resembles one
generated without the LoRA. In~(c), an SDXL VAE decodes the SD1.5 latent: both
latent formats have four channels, so all shapes agree, but the colors are
corrupted. In~(d), a depth-trained ControlNet receives Canny edges: neither the
tensor shape nor the nominal label records the control modality, and the subject
degenerates into a nested-box artifact. In each case, the user receives a
plausible but incorrect image and \emph{no diagnostic that points to the cause}.
\end{example}

To establish the practical scale of this problem, we conducted a census of
\emph{all} 10{,}134 issues and discussion threads in the ComfyUI repository
(\S\ref{sec:study}). Unlike prior studies of deep-learning
bugs~\cite{islam2019,humbatova2020}, which examine training code and frameworks,
our census targets failures in composing pre-built components. Among the
reports, we find 753 semantic composition failures, which account for $8.3\%$ of
issue reports and $20.9\%$ of issue-tracker runtime execution failures. Of these
failures, 31.5\% across both venues are silent. These reports show that
accepting a graph's connections and completing its execution leave component
compatibility unchecked.

Preventing these failures requires checking compatibility before execution, when
the workflow is composed. The central challenge is to make compatibility
requirements \emph{compose with the workflow}. In a typical workflow, model
identity originates at loaders, tensor geometry changes through encoders and
resizers, and consumers constrain values arriving along different paths. A
node's requirements must therefore be stated in terms of its inputs, so that
they remain meaningful when the node is reused in a different graph. Moreover,
workflows mix built-in operators with custom nodes whose behavior may be unknown
to the checker, so the checker must combine the facts available across the graph
and track where information is missing.

Existing techniques do not meet this challenge. Shape checkers for deep-learning
code~\cite{ariadne2018,pytea2022,tensorsfitting2021,dex2021} analyze program
source and track tensor dimensions, whereas a workflow is assembled from opaque
components and often fails on attributes with no shape, such as model family and
control modality. Visual dataflow and scientific workflow
systems~\cite{simulink-types,taverna2004,kepler2006} type their ports,
and JSON subschema checking validates ML pipeline
connections~\cite{habib2021}; our focus is relational compatibility in
generative-AI graphs with missing schemas. LLM agents that construct ComfyUI
workflows produce graphs without checking them against component requirements.

To address this challenge, we present \tool, a gradual refinement type
system~\cite{xi1998,liquidtypes2008,lehmann2017} for generative-AI workflow
graphs, implemented for ComfyUI. \tool expresses compatibility requirements as
\emph{refinements of existing workflow interfaces}. It defines a constrained
polymorphic schema for each supported operator, describing what its inputs
require and what its outputs preserve or transform. For example,
\nodename{VAEEncode} relates an image and VAE to a latent whose channels and
spatial compression depend on the VAE architecture. Composing this schema with a
sampler and decoder automatically relates their geometry and architecture
requirements, wherever they occur in the graph. Schemas are independent of any
graph, so the checker derives each workflow's compatibility constraints from its
edges alone, without annotations from its author.

The refinements in these schemas track model-family and architecture brands,
dimension expressions and divisibility, control modalities, and quantization
levels. A workflow type-checks when the constraints from all its edges hold
together, which \tool decides by brand unification and linear integer
arithmetic. Because custom nodes often lack schemas, \tool is
\emph{gradual}~\cite{siek2006}: it treats the refinements on their ports as
unknown ($\tyDyn$) and checks the rest of the graph. We prove a static gradual
guarantee~\cite{siek2015}: removing a schema never causes an accepted workflow
to be rejected. We also prove that, given faithful schemas and conforming values
from nodes without schemas, the checked nodes of an accepted workflow run
without the compatibility errors their types rule out. Each added schema
therefore adds checks, and the boundaries where schemas are missing show where
more are needed.

We evaluate \tool's detection accuracy and cost, its effectiveness on real
workflows, and the usefulness of its diagnostics to users. We inject faults into
132 official ComfyUI workflows to produce graphs with faults. ComfyUI's
validator admits all confirmed faults, whereas \tool detects 66.4\% of them. We also showed that
\tool is fast: checking takes 5.46\,ms at the median. On real workflows, \tool
raises no false alarm on 765 known-good workflows and detects 67\% of 108 failures rebuilt
from user reports. Furthermore, in a controlled user study with 10 participants,
\tool's diagnostics raise fault-localization accuracy on faulty workflows from
35.0\% to 87.5\% and valid-repair selection from 30.0\% to 62.5\%.

This paper contributes (i)~a census of 10{,}134 ComfyUI reports that
characterizes semantic composition failures and the component relationships they
violate (\S\ref{sec:study}); (ii)~a gradual refinement type system that composes
node schemas to check open generative-AI workflow graphs (\S\S\ref{sec:calculus}
and~\ref{sec:gradual}); and (iii)~an open-source checker with 147 node schemas
(114 built-in, 33 custom), each proved faithful, evaluated on real and
fault-injected workflows (\S\S\ref{sec:impl} and~\ref{sec:eval}).

%% file: figures/fig-silent-failures.tex
\begin{figure}[t]
\centering
\colorlet{vA}{black!55}\definecolor{vB}{HTML}{D55E00}\definecolor{vC}{HTML}{0072B2}\definecolor{vD}{HTML}{CC79A7}\begin{subfigure}[t]{0.227\textwidth}
  \tikz\node[draw=vA, line width=1.2pt, inner sep=0.8pt] {\includegraphics[width=\dimexpr\linewidth-2.8pt\relax]{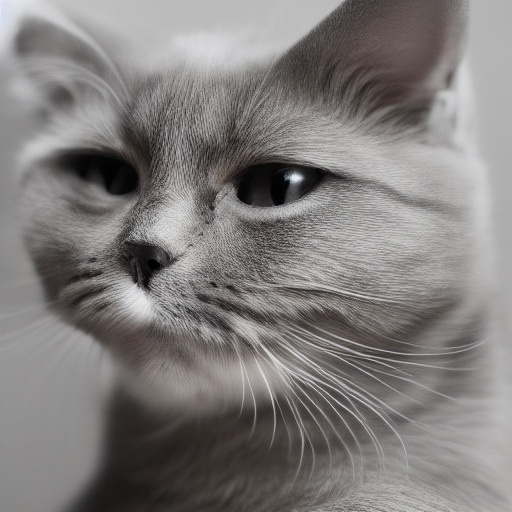}};
  \caption{Intended (SD1.5)}
\end{subfigure}\hfill
\begin{subfigure}[t]{0.227\textwidth}
  \tikz\node[draw=vB, line width=1.2pt, inner sep=0.8pt] {\includegraphics[width=\dimexpr\linewidth-2.8pt\relax]{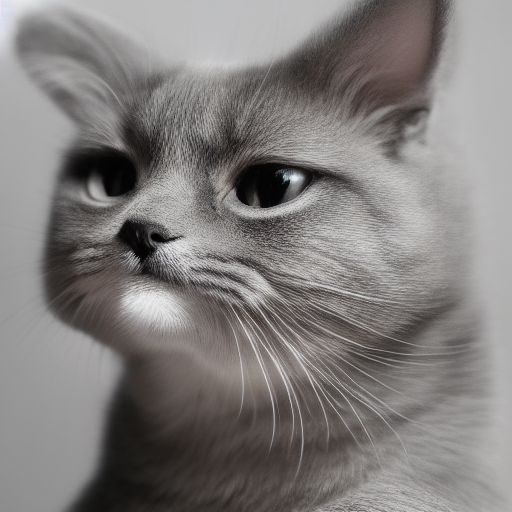}};
  \caption{SDXL LoRA}
\end{subfigure}\hfill
\begin{subfigure}[t]{0.227\textwidth}
  \tikz\node[draw=vC, line width=1.2pt, inner sep=0.8pt] {\includegraphics[width=\dimexpr\linewidth-2.8pt\relax]{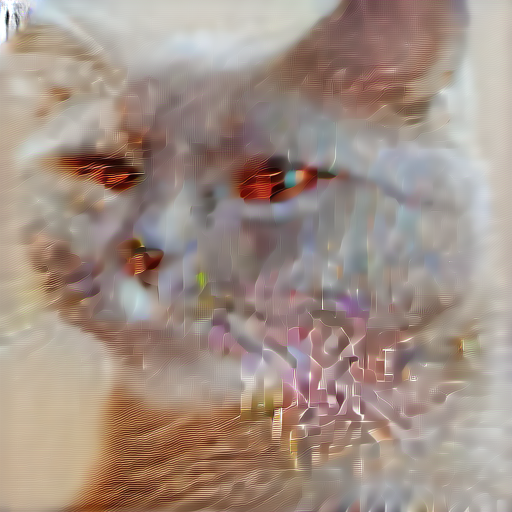}};
  \caption{SDXL VAE}
\end{subfigure}\hfill
\begin{subfigure}[t]{0.227\textwidth}
  \tikz\node[draw=vD, line width=1.2pt, inner sep=0.8pt] {\includegraphics[width=\dimexpr\linewidth-2.8pt\relax]{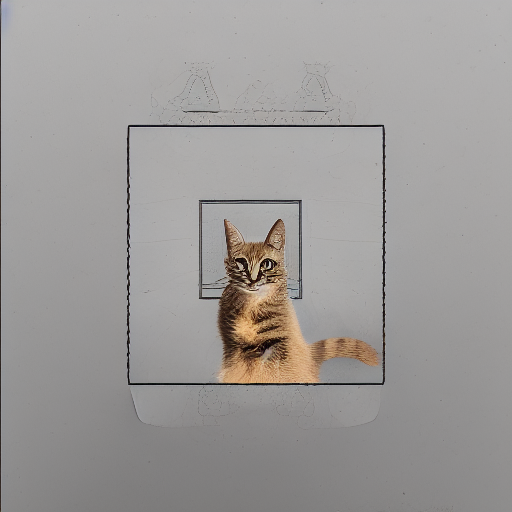}};
  \caption{Depth ControlNet}
\end{subfigure}

\vspace{3pt}
\resizebox{\textwidth}{!}{\begin{tikzpicture}[wire/.style={line width=0.7pt, #1}, wire/.default=black!45]
  \cfpanel{0}{0}{14.3}{-2.15}{vA}{(a) Intended workflow}{}
  \cfpanel{0}{-2.3}{7.15}{-3.95}{vB}{(b) adds Load LoRA}
    {LoRA family: $\mathsf{SDXL}\neq\mathsf{SD15}$}
  \cfpanel{0}{-4.1}{7.15}{-5.75}{vC}{(c) swaps the VAE}
    {VAE family: $\mathsf{SDXL}\neq\mathsf{SD15}$}
  \cfpanel{7.3}{-2.3}{14.3}{-5.75}{vD}{(d) adds a ControlNet}
    {control modality: $\mathsf{Edge}\neq\mathsf{Depth}$}
  \begin{scope}[shift={(1.7,-0.42)}]
    \cfnode{ck}{0}{0}{1.75}{4}{vA}{Load Checkpoint}
    \cfout{ck}{1}{MODEL}{tMODEL}\cfout{ck}{2}{CLIP}{tCLIP}\cfout{ck}{3}{VAE}{tVAE}
    \cfwid{ck}{4}{sd15.safetensors}
    \cfnode{te}{2.3}{-0.55}{2.05}{2}{vA}{CLIP Text Encode}
    \cfin{te}{1}{clip}{tCLIP}\cfout{te}{1}{CONDITIONING}{tCOND}
    \cfwid{te}{2}{a photo of a cat, \dots}
    \cfnode{ks}{5.6}{0}{1.45}{3}{vA}{KSampler}
    \cfin{ks}{1}{model}{tMODEL}\cfout{ks}{1}{LATENT}{tLATENT}
    \cfin{ks}{2}{positive}{tCOND}\cfwid{ks}{3}{seed: 42}
    \cfnode{vd}{7.7}{0}{1.4}{2}{vA}{VAE Decode}
    \cfin{vd}{1}{samples}{tLATENT}\cfout{vd}{1}{IMAGE}{tIMAGE}
    \cfin{vd}{2}{vae}{tVAE}
    \cfnode{sv}{9.7}{0}{1.2}{1}{vA}{Save Image}
    \cfin{sv}{1}{images}{tIMAGE}
    \draw[wire] (ck-o1) -- (ks-i1);
    \draw[wire] (ck-o2) to[out=0,in=180] (te-i1);
    \draw[wire] (te-o1) to[out=0,in=180] (ks-i2);
    \draw[wire] (ck-o3) .. controls +(0.35,0) and +(-0.3,0) .. (2.25,-1.55)
      -- (7.1,-1.55) .. controls +(0.35,0) and +(-0.35,0) .. (vd-i2);
    \draw[wire] (ks-o1) -- (vd-i1);
    \draw[wire] (vd-o1) -- (sv-i1);
  \end{scope}
  \begin{scope}[shift={(1.0,-2.8)}]
    \ghoston\cfghost{b1}{-0.1}{0}{1.65}{1}{Load Checkpoint}\cfout{b1}{1}{MODEL}{tMODEL}
    \cfghost{b3}{4.1}{0}{1.05}{1}{KSampler}\cfin{b3}{1}{model}{tMODEL}\ghostoff
    \cfnode{b2}{1.85}{0}{1.95}{2}{vB}{Load LoRA}
    \cfin{b2}{1}{model}{tMODEL}\cfout{b2}{1}{MODEL}{tMODEL}
    \cfwid{b2}{2}{sdxl\_lora.safetensors}
    \draw[wire=vB] (b1-o1) -- (b2-i1);
    \draw[wire=vB] (b2-o1) -- (b3-i1);
  \end{scope}
  \begin{scope}[shift={(1.95,-4.6)}]
    \ghoston\cfghost{c2}{2.05}{0}{1.2}{1}{VAE Decode}\cfin{c2}{1}{vae}{tVAE}\ghostoff
    \cfnode{c1}{0}{0}{1.7}{2}{vC}{Load Checkpoint}
    \cfout{c1}{1}{VAE}{tVAE}\cfwid{c1}{2}{sdxl.safetensors}
    \draw[wire=vC] (c1-o1) -- (c2-i1);
  \end{scope}
  \begin{scope}[shift={(7.6,-2.78)}]
    \ghoston\cfghost{d1}{0}{0}{1.75}{1}{CLIP Text Encode}\cfout{d1}{1}{CONDITIONING}{tCOND}
    \cfghost{d7}{5.35}{-0.25}{1.05}{1}{KSampler}\cfin{d7}{1}{positive}{tCOND}\ghostoff
    \cfnode{d2}{0}{-0.75}{2.25}{2}{vD}{Load ControlNet Model}
    \cfout{d2}{1}{CONTROL\_NET}{tCNET}\cfwid{d2}{2}{sd15\_depth.safetensors}
    \cfnode{d3}{0}{-1.75}{1.2}{2}{vD}{Load Image}
    \cfout{d3}{1}{IMAGE}{tIMAGE}\cfwid{d3}{2}{ref.png}
    \cfnode{d4}{1.35}{-1.75}{1.4}{1}{vD}{Canny}
    \cfin{d4}{1}{image}{tIMAGE}\cfout{d4}{1}{IMAGE}{tIMAGE}
    \cfnode{d5}{3.3}{-0.25}{1.75}{3}{vD}{Apply ControlNet}
    \cfin{d5}{1}{positive}{tCOND}\cfout{d5}{1}{positive}{tCOND}
    \cfin{d5}{2}{control\_net}{tCNET}\cfin{d5}{3}{image}{tIMAGE}
    \draw[wire=vD] (d1-o1) to[out=0,in=180] (d5-i1);
    \draw[wire=vD] (d2-o1) to[out=0,in=180] (d5-i2);
    \draw[wire=vD] (d3-o1) -- (d4-i1);
    \draw[wire=vD] (d4-o1) to[out=0,in=180] (d5-i3);
    \draw[wire=vD] (d5-o1) -- (d7-i1);
  \end{scope}
\end{tikzpicture}}
\caption{Outputs of four ComfyUI workflows run with the same prompt and seed.
Panel~(a) is the intended workflow; panels (b)--(d) show the change that
produces each variant, with unchanged nodes of~(a) dashed. Workflows are
simplified; Example~\ref{ex:silent} explains each variant.}
\label{fig:silent-failures}
\end{figure}

%% file: sections/study.tex
\section{Background and Failure Study}
\label{sec:study}

This section first introduces the background on ComfyUI that the rest of the
paper relies on: its graph model and how it validates workflows
(\S\ref{sec:study:bg}). It then presents a failure study that answers our first
research question (RQ1): How often do semantic composition failures occur in
ComfyUI, how do they manifest, and which component relationships do they
violate? Section~\ref{sec:study:method} describes the study design, and
\S\ref{sec:study:findings} reports the findings and the requirements they impose
on a checker, with the classifier's full validation in
Appendix~\ref{sec:appendix-validity}.

\subsection{ComfyUI Workflows and Nominal Validation}
\label{sec:study:bg}

A ComfyUI workflow is a directed acyclic graph serialized as JSON. Each node
carries an operator name (its \emph{node class}), a set of \emph{widget} values
(literals chosen in the UI, such as an image width or a sampler name), and input
ports wired to output ports of other nodes. At runtime, the server executes the
nodes in topological order as Python functions over tensors and model objects.

ComfyUI validates graphs at two points. The \emph{editor} refuses to create an
edge whose endpoint port types differ nominally: every port is labeled with a
string such as \texttt{IMAGE}, \texttt{LATENT}, \texttt{MODEL},
\texttt{CONDITIONING}, or \texttt{CONTROL\_NET}, and only equal labels connect.
The \emph{server} re-checks these labels and widget enum membership before
execution. However, both checks stop at the label: \texttt{MODEL} does not say
\emph{which} model family, \texttt{LATENT} does not say how many channels, and
\texttt{CONTROL\_NET} does not say which control modality the network was
trained on. Errors that depend on these facts therefore appear only at run time,
deep inside model code. They typically appear as PyTorch shape errors after the
checkpoint has been loaded onto the GPU, and some never appear at all.

\subsection{Study Design}
\label{sec:study:method}

\emph{Goal.} To answer RQ1, we measure (i)~how common semantic composition
failures are relative to other failure classes, (ii)~how they manifest, in
particular how often they are \emph{silent} and complete execution without an
exception, and (iii)~which compatibility dimensions they involve. Each measure
addresses one aspect of the problem: prevalence establishes its scale, the
silent fraction shows how far execution can be relied on to reveal a failure,
and the dimensions identify the relationships a checker must track.

\emph{Corpus and Census.} We study the ComfyUI core repository
(\texttt{Comfy-Org/ComfyUI}), across its \emph{issue tracker} and its
\emph{Discussions} forum. We enumerated the entire population with {8{,}766
issues} and {1{,}368 discussions}. Then, we classify \emph{every} item with an
LLM, which succeeds for 10{,}121 (99.9\%).

\emph{Classification.} A large language model (\texttt{deepseek-v4-flash})
labels each item by applying one fixed written rubric that encodes the taxonomy
below, following prior use of LLMs to annotate software artifacts and issue
reports~\cite{ahmed2025,colavito2024}. Each label records the item's primary
class, whether it reports a runtime execution failure, its failure mode
(nominal, semantic-crashing, or semantic-silent), whether it passed ComfyUI's
nominal check, and the compatibility dimensions involved. All census statistics
are computed from these labels without manual relabeling. To validate this
classifier, we drew a seeded random subsample of 1{,}583 items and had four
further models relabel it:
\texttt{deepseek-v4-pro}, \texttt{glm-5.2}, \texttt{minimax-m3}, and
\texttt{hy3-preview}. Three of them come from developers other than the
classifier's, which reduces the bias of any single model~\cite{verga2024}, and
agreement between models indicates whether a labeling task suits
LLMs~\cite{ahmed2025}. On the issues, the four models agree with the
classifier's labels at $\kappa=0.71$--$0.81$~\cite{cohen1960}, and their
prevalence estimates lie on both sides of the classifier's. A consensus of the classifier and
\texttt{deepseek-v4-pro}, adjudicated where they disagree, puts the
issue-tracker prevalence within one percentage point of the classifier's.

\emph{Taxonomy.} We classify failures as (a)~\emph{nominal}: caught by ComfyUI's
existing nominal check; (b)~\emph{semantic-crashing}: passes the nominal check,
raises a runtime exception; and (c)~\emph{semantic-silent}: passes the nominal
check, completes execution, and produces wrong or degraded output. Each semantic
case is further tagged with the compatibility dimensions it involves
(\emph{family}, \emph{dims}, \emph{channels}, \emph{divisibility},
\emph{modality}, \emph{quantization}, \emph{routing}).

\subsection{Results}
\label{sec:study:findings}

\input{figures/fig-census}

\parh{Prevalence (Figure~\ref{fig:census}a).} Semantic composition failures make
up \textbf{8.3\%} of all issues (724 of 8{,}753) and \textbf{20.9\%} of issues
that report a runtime failure (724 of 3{,}467). In Discussions, they make up
18.6\% of runtime-failure threads (29 of 156) but only 2.1\% of all threads (29
of 1{,}368), because most threads are usage questions. In total, the census
finds 753 semantic composition failures (724 in issues and 29 in Discussions),
which the rest of this section analyzes.

\parh{Silent Failures (Figure~\ref{fig:census}b).} About 31.5\% semantic
composition failure is \emph{silent} (237 of 753): the run completes without an
exception despite the violation. The share is higher in Discussions (14 of 29)
than in issues (30.8\%), consistent with users reporting silent failures as
``why is my output wrong'' questions, not as bugs. Execution is therefore an
unreliable detector: many such failures raise no exception, and inspecting the
output requires paying the full execution cost first. This motivates checking
compatibility before execution.

\parh{Compatibility Dimensions (Figure~\ref{fig:census}c).} Family mismatches
dominate (421 cases), such as an SDXL LoRA applied to an SD1.5 model or a
CLIP-vision encoder with the wrong embedding width; they motivate the family and
architecture constraints of \S\ref{sec:calculus}. Tensor dimensions,
channels, and divisibility together account for 220 labels, such as video
patchify requirements and frame counts $\equiv 1 \bmod 4$; they motivate
dimension expressions with divisibility constraints. Quantization mismatches
(74 cases) and control-modality mismatches (11) motivate tracking both in types.
Routing labels (157 cases) mark values sent along the wrong path, such as a LoRA
left off a guider's input. Types detect such an error only when it produces an
incompatibility; wiring that is type-compatible but not what the user intended
is out of scope. In all cases, detecting a failure also requires schemas for
the nodes involved and facts about the models.

\parh{Beyond User Error: Bugs in Built-in Nodes.} Types also expose bugs in
built-in nodes. In issue
\href{https://github.com/Comfy-Org/ComfyUI/issues/13858}{\#13858}, a built-in
compositing node hardcoded a 3-channel slice and failed on every model with
16-channel latents; a channel-indexed schema would have exposed the bug when the
schema was written. Node developers are thus a second audience for the checker,
alongside workflow authors.

\parh{From Findings to Design Requirements.} The findings set three requirements
for a workflow type system. \reqdef{1}~Node schemas must express relationships
among model families, tensor geometry, control modality, and quantization.
\reqdef{2}~Compatibility must be checked before execution, because many failures
are silent. \reqdef{3}~Because many nodes lack schemas, checking must extend to
workflows that mix nodes with and without them.

\medskip
\takeaway{RQ1 takeaway}{In a census of \emph{all} 10{,}134 ComfyUI GitHub
reports, semantic composition failures account for $8.3\%$ of issues and
$20.9\%$ of issue-tracker runtime failures, and 31.5\% of them are silent. Their
recurring component relationships motivate making compatibility an explicit,
checkable part of workflow interfaces.}

%% file: figures/fig-census.tex
\begin{figure}[t]
\centering
\resizebox{\textwidth}{!}{\input{figures/census_plot}}
\caption{Census results: (a)~reports per venue at each classification stage;
(b)~failure modes and (c)~compatibility dimensions of the semantic composition failures of
both venues. A failure may involve several dimensions.}
\label{fig:census}
\end{figure}

%% file: figures/census_plot.tex
\begin{tikzpicture}[font=\footnotesize]
\definecolor{censussem}{HTML}{B5474B}
\definecolor{censussilent}{HTML}{8B1A1A}
\definecolor{censusneutral}{HTML}{7D8994}
\node[anchor=west] at (0.0000,3.0000) {\textbf{(a) Reports by venue}};
\node[font=\scriptsize\itshape,text=black!65] at (2.6000,2.6300) {Issues};
\node[font=\scriptsize\itshape,text=black!65] at (3.7500,2.6300) {Discussions};
\fill[black!6,rounded corners=2pt] (0,2.530) rectangle (4.4,2.130);
\node[anchor=west] at (0.0800,2.3300) {classified};
\node[text=black] at (2.6000,2.3300) {\textbf{8,753}};
\node[text=black] at (3.7500,2.3300) {\textbf{1,368}};
\fill[black!6,rounded corners=2pt] (0,1.810) rectangle (4.4,1.410);
\node[anchor=west] at (0.0800,1.6100) {runtime failures};
\node[text=black] at (2.6000,1.6100) {\textbf{3,467}};
\draw[-{Latex[length=1.1mm]},black!50] (2.30,2.100) -- (2.30,1.840);
\node[anchor=west,font=\scriptsize,text=black!65] at (2.3800,1.9700) {39.6\%};
\node[text=black] at (3.7500,1.6100) {\textbf{156}};
\draw[-{Latex[length=1.1mm]},black!50] (3.45,2.100) -- (3.45,1.840);
\node[anchor=west,font=\scriptsize,text=black!65] at (3.5300,1.9700) {11.4\%};
\fill[censussem!12,rounded corners=2pt] (0,1.090) rectangle (4.4,0.690);
\node[anchor=west] at (0.0800,0.8900) {semantic failures};
\node[text=censussem] at (2.6000,0.8900) {\textbf{724}};
\draw[-{Latex[length=1.1mm]},black!50] (2.30,1.380) -- (2.30,1.120);
\node[anchor=west,font=\scriptsize,text=black!65] at (2.3800,1.2500) {20.9\%};
\node[text=censussem] at (3.7500,0.8900) {\textbf{29}};
\draw[-{Latex[length=1.1mm]},black!50] (3.45,1.380) -- (3.45,1.120);
\node[anchor=west,font=\scriptsize,text=black!65] at (3.5300,1.2500) {18.6\%};
\node[font=\scriptsize,text=censussem] at (2.6000,0.5200) {8.3\% of all};
\node[font=\scriptsize,text=censussem] at (3.7500,0.5200) {2.1\% of all};
\node[anchor=west] at (4.7000,3.0000) {\textbf{(b) Failure mode}};
\node[anchor=east] at (6.1000,2.0300) {Silent};
\fill[censussilent] (6.2,1.930) rectangle (6.8070,2.130);
\node[anchor=west,font=\scriptsize] at (6.8870,2.0300) {31.5\% (237)};
\node[anchor=east] at (6.1000,1.5300) {Crashing};
\fill[censusneutral] (6.2,1.430) rectangle (7.5011,1.630);
\node[anchor=west,font=\scriptsize] at (7.5811,1.5300) {67.5\% (508)};
\node[anchor=east] at (6.1000,1.0300) {Unspecified};
\fill[black!25] (6.2,0.930) rectangle (6.2205,1.130);
\node[anchor=west,font=\scriptsize] at (6.3005,1.0300) {1.1\% (8)};
\node[anchor=west] at (9.3000,3.0000) {\textbf{(c) Compatibility dimension}};
\node[anchor=east] at (10.8000,2.5500) {Family};
\fill[censussem!75] (10.9,2.450) rectangle (12.7637,2.650);
\node[anchor=west,font=\scriptsize] at (12.8437,2.5500) {55.9\% (421)};
\node[anchor=east] at (10.8000,2.2100) {Routing};
\fill[censussem!75] (10.9,2.110) rectangle (11.5950,2.310);
\node[anchor=west,font=\scriptsize] at (11.6750,2.2100) {20.8\% (157)};
\node[anchor=east] at (10.8000,1.8700) {Dimensions};
\fill[censussem!75] (10.9,1.770) rectangle (11.4799,1.970);
\node[anchor=west,font=\scriptsize] at (11.5599,1.8700) {17.4\% (131)};
\node[anchor=east] at (10.8000,1.5300) {Quantization};
\fill[censussem!75] (10.9,1.430) rectangle (11.2276,1.630);
\node[anchor=west,font=\scriptsize] at (11.3076,1.5300) {9.8\% (74)};
\node[anchor=east] at (10.8000,1.1900) {Channels};
\fill[censussem!75] (10.9,1.090) rectangle (11.1966,1.290);
\node[anchor=west,font=\scriptsize] at (11.2766,1.1900) {8.9\% (67)};
\node[anchor=east] at (10.8000,0.8500) {Divisibility};
\fill[censussem!75] (10.9,0.750) rectangle (10.9974,0.950);
\node[anchor=west,font=\scriptsize] at (11.0774,0.8500) {2.9\% (22)};
\node[anchor=east] at (10.8000,0.5100) {Modality};
\fill[censussem!75] (10.9,0.410) rectangle (10.9487,0.610);
\node[anchor=west,font=\scriptsize] at (11.0287,0.5100) {1.5\% (11)};
\end{tikzpicture}

%% file: sections/calculus.tex
\section{The \tool Calculus}
\label{sec:calculus}

Section~\ref{sec:study} showed that semantic composition failures arise from
relationships between components that ComfyUI's nominal port labels cannot
express. \tool makes these relationships part of each node's interface: each
node contributes a reusable schema whose indices capture the relationships of
\req{1}, and each edge relates the producer's guarantees to the consumer's
requirements, so that compatibility is checked before execution (\req{2}). This
section formalizes workflow graphs (\S\ref{sec:calc:graphs}), the refinement
type and constraint language (\S\ref{sec:calc:types}), node schemas
(\S\ref{sec:calc:schemas}), and the well-typedness judgment
(\S\ref{sec:calc:judgment}). Section~\ref{sec:gradual} then extends checking to
nodes without schemas (\req{3}). Appendix~\ref{app:supp} gives the full
type language, the faithfulness proof of every schema, and the proofs of the
solver's properties.

\subsection{Workflow Graphs}
\label{sec:calc:graphs}

A workflow is a tuple $G = (N, E, \mathit{op}, \mathit{wid})$, where $N$ is a
finite set of nodes, $\mathit{op}(n)$ is the operator of node $n$,
$\mathit{wid}(n)$ maps each widget slot of $n$ to a literal value, and $E
\subseteq (N \times \mathbb{N}) \times (N \times \mathbb{N})$ connects output
ports to input ports. We require $G$ to be acyclic and each input port to have
exactly one incoming edge. As in ComfyUI, each edge must also pass the
\emph{nominal} check: the labels of its two endpoints are equal. \tool thus keeps ComfyUI's validator and adds checks on top of it.

Saved workflows also contain four editor constructs that this core model
omits. \emph{Reroute} nodes forward a value unchanged, \emph{group nodes} and
subgraphs package several nodes behind a single node, \emph{muted} nodes are
excluded from execution, and \emph{bypassed} nodes pass each input through to
the output with the same nominal label. ComfyUI removes these constructs before
it executes a workflow, and \tool removes them in the same way before checking.
It replaces each reroute by a direct edge, inlines groups and subgraphs, deletes
muted nodes together with the nodes that depend on them, and applies ComfyUI's
own bypass rewiring. \tool therefore checks exactly the graph that runs.

\subsection{Types, Indices, and Constraints}
\label{sec:calc:types}

\input{figures/fig-type-syntax}

ComfyUI's port labels state what kind of value flows along an edge, but not the
facts on which compatibility depends. \tool's types add these facts as
\emph{indices}, following indexed and refinement
types~\cite{xi1998,liquidtypes2008}: each type is a nominal ComfyUI label
refined with the indices that the failure study found to matter
(Figure~\ref{fig:type-syntax}). The indices record four kinds of facts.
\emph{Geometry}: images and masks carry their height and width, images also
their channel count, and latents their height, width, and channel count.
\emph{Model family}: every value tied to a model (latents, conditioning, text
encoders, VAEs, LoRAs, ControlNets, and models) carries the \emph{model family}
$f$ it belongs to, which determines, e.g., a latent's channel count and spatial
compression. \emph{Control
modality}: an image carries its \emph{modality} $s$, such as an edge or depth
map, and a ControlNet carries the modality of the hints it was trained on.
ComfyUI records modality nowhere at run time, so it exists only in the types.
\emph{Quantization}: a model carries its \emph{quantization level} $q$, such as
full precision or fp8. Dimensions are linear over $\mathbb{Z}$ with division
only by constants, and an omitted index is a fresh variable.
The figure elides indices that our examples do not need, such as the protocol and control kind on
$\tycon{Cond}$ and the zero state on $\tycon{Latent}$, and index-free labels such as $\tycon{Noise}$.

To relate the values at different ports, for example to require that a model and its LoRA belong to the same family, schemas
state \emph{constraints} $\varphi$ over indices. A constraint is a conjunction
of three kinds of atoms: (i)~equations between families, modalities, or
quantization levels, (ii)~linear equalities between dimensions, and (iii)~divisibility
atoms $k \mid e$ with a constant $k$, such as the requirement that a width be a
multiple of 16. The arithmetic part is a fragment of quantifier-free linear integer arithmetic, which is
decidable~\cite{kroening2016}, and it omits multiplication of variables,
disjunction, and negation, none of which the studied failures required.

The model family needs more structure than a flat set of names. Community
checkpoints such as $\brand{Pony}$ and $\brand{Illustrious}$~\cite{illustrious2024}
are fine-tunes of $\brand{SDXL}$~\cite{sdxl2024} and inherit its tensor geometry
unchanged. A checker that treats them as unrelated families would reject, for
example, an SDXL LoRA applied to a Pony checkpoint, a false alarm that would
affect many popular fine-tune workflows (\S\ref{sec:eval}). We therefore order
families by derivation ($\brand{Pony} \sqsubseteq \brand{SDXL}$), so that they
form a \emph{derivation forest}, and write $\archfn(f)$ for the root of $f$'s
tree, its \emph{architecture}. Tensor-level compatibility (latent geometry, conditioning dimensions, VAE
channels) depends only on the architecture, whereas weight-level compatibility
(which lineage a LoRA was trained on) depends on the family. Accordingly, types
that carry tensors ($\tycon{Latent}$, $\tycon{Cond}$, $\tycon{Clip}$,
$\tycon{VAE}$, $\tycon{Ctrl}$) are indexed by \emph{architectures}, and types
that carry weights ($\tycon{Model}$, $\tycon{Lora}$) by \emph{families}. A
family written in an architecture position abbreviates $\archfn(f)$.

To the checker, families (including architectures), modalities, and
quantization levels are all \emph{brands}: opaque names that unify only with
the same name or with a brand variable. Equations between them are therefore
\emph{brand equations}. A few interpreted functions connect brands to each
other and to dimensions. $\archfn$ maps a family to its architecture. $\chfn$
maps an architecture to its latent channel count ($\chfn(\brand{SD15}) =
\chfn(\brand{SDXL}) = 4$, $\chfn(\brand{SD3}) = \chfn(\brand{Flux}) = 16$,
\ldots), and $\mathit{sf}$ maps it to its VAE's spatial downscaling factor
($\mathit{sf}(\brand{SD15}) = 8$, $\mathit{sf}(\brand{Wan22\_5B}) = 16$).
$\mathit{sz}$ maps a family to its size class (e.g., 14B for Wan~2.1). With $\chfn$
and $\mathit{sf}$, one VAE schema covers every architecture yet rejects channel
and geometry mismatches between architectures. The functions add no case analysis to checking. Each function is evaluated only once
unification has fixed its brand argument, side predicates on brands are
evaluated in the same way, and guards on widget literals are resolved when a
schema is instantiated. Constraints therefore never need disjunction.

\subsection{Node Schemas}
\label{sec:calc:schemas}

Types describe the values that flow along edges, and a \emph{schema} describes
an operator in terms of these types. Each operator has one schema, written once
and independently of any graph, as a constrained polymorphic signature~\cite{odersky1999}
\[
\Sigma(\mathit{op}) \;=\;
\forall \bar\alpha, \bar x.\;
(\tau_1, \ldots, \tau_k) \xrightarrow{\;\varphi\;} (\tau'_1, \ldots, \tau'_m),
\]
where $\bar\alpha$ ranges over brand variables, $\bar x$ over dimension
variables, and the side condition $\varphi$ relates the indices of inputs and
outputs. A node's widget literals instantiate designated variables. Six schemas
illustrate the refinement dimensions:

\begin{align*}
&\nodename{VAEEncode} : \forall a, h, w. (\tyImage{h}{w}{3},\, \tyVae{a})
  \longrightarrow
  \tyLatent{h/\mathit{sf}(a)}{w/\mathit{sf}(a)}{\chfn(a)}{a}
\\[3pt]
&\nodename{KSampler} : \forall f, q, h, w. (\tyModel{f}{q},\, \tyCond{f},\, \tyCond{f},\, \tyLatent{h}{w}{\chfn(f)}{f})\\
&\quad \longrightarrow\; \tyLatent{h}{w}{\chfn(f)}{f}
\\[3pt]
&\nodename{LoraLoader} : \forall f, g, q.
  (\tyModel{f}{q},\, \tyClip{\archfn(f)},\, \tyLora{g})\\
&\quad \xrightarrow{\archfn(f) \,=\, \archfn(g)\,\wedge\, \mathit{sz}(f) \,=\, \mathit{sz}(g)}
  (\tyModel{f}{q},\, \tyClip{\archfn(f)})
\\[3pt]
&\nodename{ControlNetApply} : \forall f, s, m, h, w. (\tyCond{f},\, \tyCtrl{s}{f},\, \tycon{Image}(h,w,3,m))
  \xrightarrow{s \,=\, m} \tyCond{f}
\\[3pt]
&\nodename{EmptyLatentImage} : \forall a. (\tyInt{\nu_w},\, \tyInt{\nu_h})
  \longrightarrow
  \tyLatent{\nu_h/\mathit{sf}(a)}{\nu_w/\mathit{sf}(a)}{\chfn(a)}{a}
\\[3pt]
&\nodename{WanImageToVideo} : \forall a.\ \ldots (\tyCond{a}, \tyCond{a}, \tyVae{a}, \tyInt{\nu_w}, \tyInt{\nu_h})\\
&\quad \xrightarrow{16 \mid \nu_w \,\wedge\, 16 \mid \nu_h}
  (\tyCond{a}, \tyCond{a}, \tyLatent{\nu_h/8}{\nu_w/8}{16}{a})
\end{align*}

These schemas cover the dimensions of the study. \nodename{KSampler} requires
the model, both conditionings, and the latent to share an architecture.
\nodename{LoraLoader} requires the model and the LoRA to share an architecture
and a size class.
It therefore rejects the workflow of Figure~\ref{fig:silent-failures}b but
accepts an SDXL LoRA on a Pony checkpoint. \nodename{VAEEncode} derives the latent's
channel count and size from the VAE's architecture through $\chfn$ and
$\mathit{sf}$. \nodename{ControlNetApply} requires the hint image's modality to
match the ControlNet's. Modalities are compatibility classes: canny, scribble,
lineart, and MLSD edge maps all map to $\brand{Edge}$, so a depth ControlNet fed
an edge map is rejected. \nodename{WanImageToVideo} requires pixel dimensions
divisible by 16, the Wan resolution grid~\cite{wan2025}, which ComfyUI declares
only in widget metadata that API-format workflows and LLM agents bypass.

A schema is a claim about how its node behaves, and the claim can be wrong. We
call a schema \emph{faithful} if the claim holds: whenever the values passed to
the node have the types that the schema states for its inputs, the node's code
runs without a compatibility error and returns values that have the types that
the schema states for its outputs. A schema that is too loose is not faithful,
because it admits inputs on which the node fails, so \tool misses those
failures. A schema that is too strict is still faithful and only rejects some
working workflows. We prove every schema in our library faithful. Each proof follows the node's code path in
ComfyUI v0.27.0 and assumes standard facts about PyTorch, ComfyUI's model
architectures, and the model database.

\subsection{Well-Typedness as Constraint Satisfaction}
\label{sec:calc:judgment}

\input{figures/fig-typing-rules}

A graph is well-typed when the requirements of all its nodes and edges can hold
together. Figure~\ref{fig:typing-rules} states this judgment, and checking a
graph takes three steps. (1)~\emph{Instantiation}: each node's schema is
instantiated with fresh variables, and its widget literals are substituted.
(2)~\emph{Constraint generation}: each edge structurally equates the producer's
output type with the consumer's input type, which yields brand equations
and dimension equations. Each node also adds
its side condition $\varphi$. (3)~\emph{Solving}: brand equations are solved by
first-order unification (union-find~\cite{tarjan1975}). Grounding a family
variable evaluates $\archfn$, and then $\chfn$ and $\mathit{sf}$, which makes
the dimension equations that mention them concrete. $G$ is \emph{well-typed},
written $\wf{G}$, iff the resulting constraint system $\mathcal{C}(G)$ is
satisfiable under the interpretation $\mathcal{H}$ below.

After unification, the remaining constraints are linear equalities and
divisibility atoms over dimensions. \tool decides them under a \emph{hybrid}
interpretation $\mathcal{H}$ that is exact where the studied failures need it
and relaxed elsewhere. Divisibility atoms, which express requirements such as a
width divisible by 16, are decided exactly over $\mathbb{Z}$, and every variable
that occurs in one is an integer. Equalities are solved over the rationals by
Gaussian elimination. To combine the two, elimination pivots on
rational variables first, and eliminating an integer variable adds an
integrality atom. Each atom is checked for gcd feasibility. Atoms on the same
variable are merged by the Chinese Remainder Theorem (CRT), because individually
feasible congruences can conflict (e.g.\ $4 \mid x \wedge 4 \mid x{+}2$),
and components with several variables are decided by integer elimination. A
variable forced to a non-integer \emph{constant} is rejected, because concrete
sizes are integers. The relaxation errs in one direction only. A rejection is never wrong, but an
accepted system need not have an integer solution,
so the runtime guarantee
assumes faithful schemas and an integer solution. Rational division also cannot
model cropping a size down to a multiple, which needs floor division
($k y \leq x < k(y+1)$), outside our fragment.

Well-typedness means that the constraints \emph{can} hold, not that they hold for
every input: free dimension variables, such as the size of an image loaded from
disk, stay symbolic, and \emph{some} assignment must satisfy all constraints. A run whose inputs violate them can still fail, for
example when two \nodename{LoadImage} branches meeting at a blend differ in
size. As in hybrid type checking~\cite{flanagan2006}, the checker therefore
reports such constraints to be checked at run time.
Section~\ref{sec:gradual} extends them to values produced by nodes without
schemas and states which runs the guarantee covers.

Two scope decisions complete the calculus. Operators with genuinely dynamic
behavior, such as an upscaler with a float ratio widget, give their outputs
fresh symbolic dimensions, related to the inputs only when the ratio is a widget
literal. Batch dimensions are out of scope: ComfyUI broadcasts batches at almost
every node, so a batch index would add complexity without matching a failure
class in our study.

\input{figures/fig-worked-example}

\parh{Worked Example.}
Figure~\ref{fig:worked-example} shows the checker on two variants of a LoRA
workflow. In (a), an SD1.5 LoRA is inserted between an SDXL checkpoint and the
sampler. The model database resolves the checkpoint to $\brand{SDXL}$ and the
LoRA to $\brand{SD15}$, so the checkpoint loader emits
$\tyModel{\brand{SDXL}}{\brand{full}}$. \nodename{LoraLoader}'s side condition
then requires $\archfn(\brand{SDXL}) = \archfn(\brand{SD15})$. These are two
distinct roots of the derivation forest, so unification fails and the graph is
ill-typed. In (b), an SDXL LoRA is applied to a Pony checkpoint. Because
$\archfn(\brand{Pony}) = \brand{SDXL}$, the side condition holds and the graph
is well-typed. The model type still carries the precise family $\brand{Pony}$.

%% file: figures/fig-type-syntax.tex
\begin{figure}[t]
\setlength{\arraycolsep}{3pt}
\[
\begin{array}[t]{rcl}
\tau & ::= & \tycon{Image}(e,e,e,s) \mid \tyLatent{e}{e}{e}{f} \mid \tyMask{e}{e} \\
    & \mid & \tyCond{f} \mid \tyModel{f}{q} \mid \tyClip{f} \mid \tyVae{f} \\
    & \mid & \tyLora{f} \mid \tyCtrl{s}{f} \mid \tyInt{\nu} \\
    & \mid & \tycon{Float} \mid \tycon{String} \mid \tyDyn \\
\nu & ::= & k \mid x
\end{array}
\quad
\begin{array}[t]{rcl}
e & ::= & k \mid x \mid e + e \mid k \cdot e \mid e / k \\
f & ::= & \brand{SD15} \mid \brand{SDXL} \mid \brand{Pony} \mid \cdots \mid \alpha \\
s & ::= & \brand{Edge} \mid \brand{Depth} \mid \brand{Pose} \mid \cdots \mid \sigma \\
q & ::= & \brand{full} \mid \brand{fp8} \mid \brand{int8} \mid \cdots \mid \rho \\
b & ::= & f \mid s \mid q \mid \archfn(f) \\
\varphi & ::= & b = b \mid e = e \mid (k \mid e) \mid \varphi \wedge \varphi \mid \top
\end{array}
\]
\caption{Types, linear index expressions, brands, and constraints. $k$ ranges
over integer literals, $x$ over dimension variables, and $\alpha, \sigma, \rho$
over family, modality, and quantization variables. $(k \mid e)$ is a
divisibility atom. $\tyDyn$ is the dynamic type.}
\label{fig:type-syntax}
\end{figure}

%% file: figures/fig-typing-rules.tex
\begin{figure}[t]
\begin{mathpar}
\inferrule*[right=Node]
  {\Sigma(\mathit{op}(n)) = \forall \bar\alpha, \bar x.\,
     (\tau_1, \ldots, \tau_k) \xrightarrow{\varphi} (\tau'_1, \ldots, \tau'_m) \\\\
   \theta_n \text{ fresh renaming of } \bar\alpha, \bar x \\
   \theta_n^{w} \text{ instantiates widget variables by } \mathit{wid}(n)}
  {\Gamma \vdash n :
     (\theta_n^{w}\theta_n\tau_1, \ldots) \xrightarrow{\theta_n^{w}\theta_n\varphi}
     (\theta_n^{w}\theta_n\tau'_1, \ldots)}
\\
\inferrule*[right=Edge]
  {((n, i) \to (n', j)) \in E \\\\
   \mathit{out}_i(n) = \tau \\ \mathit{in}_j(n') = \tau'}
  {\mathcal{C} \mathrel{{+}{=}} \tau \doteq \tau'}
\and
\inferrule*[right=Graph]
  {\mathcal{C}(G) = \bigwedge_{e \in E} \tau_e \doteq \tau'_e
     \;\wedge\; \bigwedge_{n \in N} \varphi_n \\\\
   \mathcal{H} \models \mathcal{C}(G)}
  {\wf{G}}
\end{mathpar}
\caption{Well-typedness rules. $\tau \doteq \tau'$ equates two types
structurally: it requires equal constructors and equates their indices
position by position. $\mathcal{H}$ is the hybrid interpretation
defined in the text. Rules for $\tyDyn$ are in
Figure~\ref{fig:consistency}.}
\label{fig:typing-rules}
\end{figure}

%% file: figures/fig-worked-example.tex
\begin{figure}[t]
\centering
\definecolor{wBad}{HTML}{B5474B}\definecolor{wOk}{HTML}{009E73}\newcommand{\wtype}[4]{\node[anchor=#3, font=\cftext, inner sep=0.5pt,
  fill=white, fill opacity=0.85, text opacity=1] at (#1,#2) {$#4$};}\newcommand{\wdb}[2]{\node[anchor=north west, font=\cftext, text=black!60,
  inner sep=0pt] at ([yshift=-0.07cm]#1-bl) {#2};}\newcommand{\wpanel}[9]{\cfpanel{0}{#1}{12.7}{#1-2.05}{#2}{#3}{}
  \begin{scope}[shift={(0.25,#1-0.42)}]
    \cfnode{ck}{0}{0}{1.75}{4}{#2}{Load Checkpoint}
    \cfout{ck}{1}{MODEL}{tMODEL}\cfout{ck}{2}{CLIP}{tCLIP}\cfout{ck}{3}{VAE}{tVAE}
    \cfwid{ck}{4}{#4}
    \coordinate (ck-bl) at (0,{-\cfth-4*\cfrh-0.05});
    \wdb{ck}{db: $f = \mathsf{#5}$}
    \cfnode{lo}{3.55}{0}{2.1}{4}{#2}{Load LoRA}
    \cfin{lo}{1}{model}{tMODEL}\cfout{lo}{1}{MODEL}{tMODEL}
    \cfin{lo}{2}{clip}{tCLIP}\cfout{lo}{2}{CLIP}{tCLIP}
    \cfwid{lo}{3}{#6}\cfwid{lo}{4}{strength\_model: 1.00}
    \coordinate (lo-bl) at (3.55,{-\cfth-4*\cfrh-0.05});
    \wdb{lo}{db: $g = \mathsf{#7}$}
    \ghoston
    \cfghost{ks}{7.55}{0}{1.1}{1}{KSampler}\cfin{ks}{1}{model}{tMODEL}
    \cfghost{te}{7.55}{-0.75}{1.75}{1}{CLIP Text Encode}\cfin{te}{1}{clip}{tCLIP}
    \ghostoff
    \draw[wire] (ck-o1) -- (lo-i1);
    \draw[wire] (ck-o2) -- (lo-i2);
    \draw[wire] (lo-o1) -- (ks-i1);
    \draw[wire] (lo-o2) to[out=0,in=180] (te-i1);
    \wtype{2.65}{-0.37}{south}{#8}
    \wtype{2.65}{-0.72}{north}{\tyClip{\mathsf{SDXL}}}
    \wtype{6.6}{-0.37}{south}{#8}
  \end{scope}
  \node[anchor=north west, align=left, font=\fontsize{6}{7.5}\sffamily,
        text width=2.6cm, inner sep=0pt] at (9.95,#1-0.55) {#9};}\resizebox{\textwidth}{!}{\begin{tikzpicture}[wire/.style={line width=0.7pt, #1}, wire/.default=black!45]
  \wpanel{0}{wBad}{(a) SD1.5 LoRA on an SDXL checkpoint}
    {sdxl.safetensors}{SDXL}{sd15\_style.safetensors}{SD15}
    {\tyModel{\mathsf{SDXL}}{\mathsf{full}}}
    {Load LoRA requires\\ $\archfn(f) = \archfn(g)$:\\[1pt]
     $\mathsf{SDXL} \doteq \mathsf{SD15}$ fails\\[1pt]
     \textcolor{wBad}{\textbf{ill-typed}}}
  \wpanel{-2.2}{wOk}{(b) SDXL LoRA on a Pony fine-tune}
    {pony\_v6.safetensors}{Pony}{sdxl\_style.safetensors}{SDXL}
    {\tyModel{\mathsf{Pony}}{\mathsf{full}}}
    {Load LoRA requires\\ $\archfn(f) = \archfn(g)$:\\[1pt]
     $\archfn(\mathsf{Pony}) = \mathsf{SDXL}$ holds\\[1pt]
     \textcolor{wOk}{\textbf{well-typed}}}
\end{tikzpicture}}
\caption{Inferred edge types for (a)~an SD1.5 LoRA on an SDXL checkpoint
and (b)~an SDXL LoRA on a Pony fine-tune.}
\label{fig:worked-example}
\end{figure}

%% file: sections/gradual.tex
\section{Gradual Extension and Metatheory}
\label{sec:gradual}

The calculus of \S\ref{sec:calculus} checks only nodes that have schemas.
Real workflows, however, also use custom and hosted-API nodes that have
none. In the real-world workflows of RQ4, such nodes make up 18\% of node
instances (\S\ref{sec:eval:real}). Requirement~\req{3} therefore asks that
checking extend to workflows that mix both kinds of nodes. This section adds a
dynamic type for nodes without schemas (\S\ref{sec:grad:dyn}) and states what
acceptance then guarantees (\S\ref{sec:grad:meta}), with full proofs in
Appendix~\ref{app:supp}.

\subsection{The Dynamic Type and Consistency}
\label{sec:grad:dyn}

Any operator without a schema is assigned the default schema
$(\tyDyn, \ldots, \tyDyn) \to (\tyDyn, \ldots, \tyDyn)$ over its
nominal arity: one $\tyDyn$ per port, but still respecting the
\emph{nominal} port labels, which ComfyUI's own metadata always
provides. $\tyDyn$ thus refines a known nominal label with unknown
indices.

\input{figures/fig-consistency}

An edge between a typed and an untyped port therefore needs its own rule. As in
gradual typing~\cite{siek2006}, an edge requires its two types to be \emph{consistent} ($\consistent$, Figure~\ref{fig:consistency}), not equal. Since
$\tyDyn$ is consistent with every type, an edge with a $\tyDyn$ end adds no
constraint. Moreover, nothing is known about how a node without a schema
transforms its inputs, so its outputs receive fresh $\tyDyn$ types and
constraints do not pass through it. As a result, a missing schema weakens checking only around its node. Constraints among the typed nodes on either side
of such a node are still checked. Only the constraints that would relate the two
sides are lost. One could instead guess how types flow through nodes without schemas,
for example that a custom resize node preserves the family of a latent. We do
not, because a wrong guess would reject working workflows.

\subsection{Metatheory}
\label{sec:grad:meta}

When \tool accepts a workflow, a user wants to know whether its run can still
fail with a compatibility error. We prove that no node with a schema raises one,
whether it would crash or fail silently (Theorem~\ref{thm:soundness}). Silent failures
count: decoding a latent with another architecture's VAE raises no exception but
produces a wrong image, and the proofs treat it as a compatibility error by tracking which
model produced each value. We also prove that
removing a schema never turns an accepted workflow into a rejected one
(Theorem~\ref{thm:gradual}). Both results assume that schemas are faithful to
their nodes (\S\ref{sec:calc:schemas}). They also make one assumption about the
values that \tool cannot see before a run.

Some values are known only when a workflow runs: the sizes of loaded images,
model files the database does not know, outputs whose sizes a schema leaves open
(\S\ref{sec:calc:judgment}), and untyped outputs. We call the
ports where these values enter the checked part of a graph its \emph{frontier},
and the part of it fed by untyped output ports its \emph{boundary}.
\tool treats each such value as an unknown and accepts the graph if \emph{some}
choice of the unknowns satisfies all constraints. Along the way, it derives the
conditions that these values must meet, for example that the sides of a loaded
image are multiples of the VAE's downscaling factor, and reports them to the user.

At runtime, ComfyUI chooses these values, for example through the image a user
loads. We assume that the values supplied during execution jointly extend
to an integer solution $\theta$ of $\mathcal{C}(G)$.
We call such a run \emph{frontier-conformant}.
The checker reports necessary conditions for conformance; these conditions
are not sufficient in general.

\begin{theorem}[Soundness modulo schema faithfulness]
\label{thm:soundness}
Let $G$ be well-typed with all schemas faithful, and let $E$ be a frontier-conformant
execution of $G$ with witness $\theta$. Then every node with a schema that
executes in $E$ raises no shape, channel, architecture, family, modality, or
quantization error, and every value it produces inhabits $\theta\tau'$ for its
solved output type $\tau'$. Hence the first such error in $E$, if any, occurs at
a $\tyDyn$-typed node.
\end{theorem}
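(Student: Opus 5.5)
The proof is an induction over the executed nodes of $E$ in topological order; the acyclicity of $G$ in \S\ref{sec:calc:graphs} makes that order well defined. The invariant is: \emph{every value produced so far by a schema-bearing node, or arriving at a frontier port, inhabits $\theta\tau$ for the solved type $\tau$ of its port.} At a $\tyDyn$-typed node the invariant asks nothing of its outputs. This is consistent with the default schema of \S\ref{sec:grad:dyn}, which gives those outputs fresh $\tyDyn$ types. Frontier conformance says that the values entering there jointly extend to $\theta$. So boundary ports fed by untyped outputs, loaded images and unknown model files already satisfy the invariant at their instantiation under $\theta$. Before the induction I fix $\theta$ as a solution of all of $\mathcal{C}(G)$, not just of the frontier part. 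This lets every node's side condition $\theta\varphi_n$ and every edge equation hold simultaneously.

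Fix an executing node $n$ with a schema, instantiated by rule \textsc{Node} to inputs $\tau_1,\dots,\tau_k$, outputs $\tau'_1,\dots,\tau'_m$ and side condition $\varphi_n$. Each input port $j$ has exactly one incoming edge $(n_0,i)\to(n,j)$. There are two cases. If $n_0$ has a schema, rule \textsc{Edge} contributed $\mathit{out}_i(n_0)\doteq\tau_j$ to $\mathcal{C}(G)$. Since $\theta$ satisfies it, structural equality gives $\theta\,\mathit{out}_i(n_0)=\theta\tau_j$, and the induction hypothesis puts the incoming value in $\theta\tau_j$. If $n_0$ is untyped, port $j$ lies on the boundary, and conformance gives membership directly. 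In both cases all inputs inhabit $\theta\tau_j$, and $\theta\varphi_n$ holds. Here I need a small lemma: substituting $\theta$ commutes with evaluating the interpreted functions $\archfn,\chfn,\mathit{sf},\mathit{sz}$ once their brand arguments are ground. This holds because solving (\S\ref{sec:calc:judgment}) evaluates them only after unification has fixed the argument. Faithfulness of $\Sigma(\mathit{op}(n))$, applied to the ground instance $\theta\theta_n^w\theta_n$, then says that $n$ raises none of the listed errors and returns values in $\theta\tau'_\ell$. This re-establishes the invariant. The final sentence of the theorem follows: an error at a schema-bearing node would contradict what was just shown, so the first error must occur at a $\tyDyn$ node.

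I expect the main obstacle to be reconciling the \emph{hybrid} interpretation $\mathcal{H}$ with the integer semantics that faithfulness presupposes. $\wf{G}$ only gives satisfiability under $\mathcal{H}$, which relaxes equalities to $\mathbb{Q}$. The theorem sidesteps this by taking $\theta$ as an integer witness from conformance, so the proof must use $\theta$ and never the solver's rational solution. The plan is to state explicitly that well-typedness is used only to make the semantic objects well defined: the solved types and the evaluated interpretated functions. Soundness itself rests on $\theta\models\mathcal{C}(G)$ over $\mathbb{Z}$. Two further points need care. First, division $e/k$ in types such as that of \nodename{VAEEncode} must denote an integer under $\theta$; I take this from the fact that $\theta$ is an integer solution of the same constraint system. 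Second, silent failures (the wrong-VAE case) count as errors. For this I read "inhabits $\tyLatent{\cdot}{\cdot}{\cdot}{a}$" as including the provenance of which model architecture produced the value, so that faithfulness excludes them just as it excludes crashes. Finally, nodes downstream of muted nodes are deleted before checking (\S\ref{sec:calc:graphs}), so every executing node in $E$ is covered by the induction.
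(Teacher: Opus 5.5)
Your skeleton matches the paper's proof. You induct over the evaluation order and split each input on whether its producer is typed. Inhabitation moves across typed edges because $\theta$ satisfies the structural equations, conformance covers boundary inputs, and faithfulness closes the step.

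\textbf{The gap: the output step.} Faithfulness does not put the outputs in $\theta\tau'_\ell$. Definition~\ref{def:fidelity} only gives \emph{some} $\sigma'$ with $w'_\ell\models\sigma'\tau'_\ell$. This $\sigma'$ agrees with $\sigma_n=\theta\circ\theta_n^w\theta_n$ except on variables fresh at $n$, meaning variables that occur in no input type and that no widget designates.

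No notion of faithfulness can give more. For \nodename{ImageScaleBy}, the preprocessors, \nodename{LatentUpscaleBy} or \nodename{CLIPVisionEncode}, an output size or width is fresh. The node's result cannot depend on which value $\theta$ happened to choose for that variable. If faithfulness required inhabiting $\sigma\tau'$ for every such choice, these schemas could never be faithful.

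As written, your invariant is therefore not re-established. A consumer of such an output only knows the value is in $\sigma'\tau'$, not $\theta\tau_j$, so your \textsc{Edge} case fails at the next edge.

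\textbf{The fix: split the output ports.}
\begin{itemize}
\item If $\tau'_\ell$ mentions no fresh variable, then $\sigma'\tau'_\ell=\theta\tau'_\ell$.
\item If it does, the port is itself a frontier position, by clause~(ii) of Definition~\ref{def:frontier}. This is the main text's ``outputs whose sizes a schema leaves open'', which your list of frontier values omits. Conformance then gives $w'_\ell\models\theta\tau'_\ell$ directly.
\end{itemize}
So conformance is needed at the outputs of typed nodes, not only at boundary inputs.

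\textbf{Smaller hypotheses of faithfulness you skip.}
\begin{itemize}
\item A producer with a schema may feed the port from an output its schema does not declare. That port is $\tyDyn$ and lies on the boundary, so it belongs to your conformance case, not the \textsc{Edge} case.
\item $\sigma_n$ must agree with the widget literals, and the literal of a model-file widget is the ghost of the loaded file. Conformance supplies this only for files the database leaves unresolved. For resolved files you need the premise that the database identifies files correctly, which the paper builds into Lemma~\ref{lem:inst}.
\item Faithfulness is stated with undeclared inputs unconnected. This holds because the checker types a node with a connected undeclared input as $\tyDyn$.
\end{itemize}
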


\begin{proofsketch}
By induction over the nodes in topological order. Each input of a node with a
schema comes either from another such node or from the frontier. In the first
case, the producer's output inhabits its type by the induction hypothesis, and
the type carries across the edge because $\theta$ satisfies the edge's
equations. In the second case, conformance gives inhabitation directly. Since
$\theta$ also satisfies the node's side condition, faithfulness yields normal
execution and outputs that inhabit their types.
\end{proofsketch}

Theorem~\ref{thm:soundness} leaves two questions for mixed workflows. Can
removing a schema cause an accepted workflow to be rejected? And when a node
with a schema still fails, where does the fault lie? Theorem~\ref{thm:gradual}
answers both.

\begin{theorem}[Gradual guarantee]
\label{thm:gradual}
(i)~If $\wf{G}$ and $G'$ replaces the schema of any node of $G$ with the
$\tyDyn$ default, then $\wf{G'}$.
(ii)~Let $G$ be well-typed with all schemas faithful, and let the graph inputs of an
execution $E$ extend to an integer solution of $\mathcal{C}(G)$. If $E$ raises
a shape, channel, architecture, family, modality, or quantization error at a
node with a schema, then the graph inputs and the values entering typed ports
from untyped output ports do not jointly extend to such a solution.
\end{theorem}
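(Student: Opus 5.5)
For part~(i) I would use a constraint-inclusion argument. Let $n$ be the node whose schema is replaced, and compare $\mathcal{C}(G')$ with $\mathcal{C}(G)$. In $G'$, node $n$ carries the default schema $(\tyDyn,\ldots,\tyDyn)\to(\tyDyn,\ldots,\tyDyn)$. It therefore contributes no side condition $\varphi_n$, and by the consistency rule of Figure~\ref{fig:consistency} every edge incident to $n$ has a $\tyDyn$ end and contributes no equation. Every other node keeps its schema and receives the same fresh renaming $\theta_m$ and widget instantiation. Every edge not incident to $n$ yields the same equation $\tau_e\doteq\tau'_e$ as in $G$.

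Two consequences follow.
\begin{itemize}
\item Modulo renaming of $n$'s now-absent variables, $\mathcal{C}(G')$ is a sub-conjunction of $\mathcal{C}(G)$.
\item The brand unification problem of $G'$ is a subset of that of $G$.
\end{itemize}
The key step is then a monotonicity lemma for $\mathcal{H}$: if $\mathcal{H}\models\mathcal{C}$ and $\mathcal{C}'\subseteq\mathcal{C}$, then $\mathcal{H}\models\mathcal{C}'$.

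For brands, the most general unifier of the smaller problem is more general than that of the larger one. Grounding $\archfn$, $\chfn$ and $\mathit{sf}$ in $G'$ therefore produces either the same concrete dimension terms or terms that remain symbolic.

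For the arithmetic part, Gaussian elimination over $\mathbb{Q}$, the gcd checks, the CRT merging and integer elimination should each accept subsystems of accepted systems. Each of these is a sound-and-complete-on-its-fragment test, or at least one whose rejections are witnessed by a sub-derivation that only uses constraints present in the system.

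I expect this lemma to be the main obstacle. The hybrid interpretation is procedural, not model-theoretic: pivot order and the integrality atoms added when an integer variable is eliminated depend on the whole system. Dropping constraints can also turn a variable from "forced to a constant" into free.

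My first attempt would be to characterize $\mathcal{H}\models\mathcal{C}$ declaratively: $\mathcal{C}$ has a solution in which every variable occurring in a divisibility atom, or forced to a constant, is integral, and all other variables are rational. Satisfiability in this mixed domain is clearly antimonotone under removal of conjuncts, once one checks that the integrality sort of each remaining variable in $\mathcal{C}'$ is no stronger than in $\mathcal{C}$. I would prove the procedure correct for that characterization via the solver properties in Appendix~\ref{app:supp}.

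If exact equivalence fails, I would instead prove the weaker one-directional statement needed here. Any rejection of $\mathcal{C}'$ is derived from a set of atoms, each of which is present in $\mathcal{C}$, so $\mathcal{C}$ would also be rejected.

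For part~(ii) I would argue by contraposition from Theorem~\ref{thm:soundness}. Suppose the graph inputs and the values entering typed ports from untyped outputs jointly extend to an integer solution $\theta$ of $\mathcal{C}(G)$. Those values, together with the graph inputs, constitute exactly the frontier of $G$. The other frontier values are the open sizes of schema-typed outputs, and these are produced by typed nodes whose outputs inhabit $\theta\tau'$ by the induction in Theorem~\ref{thm:soundness}, so they need no separate assumption. Hence $E$ is frontier-conformant with witness $\theta$. Theorem~\ref{thm:soundness} then rules out any shape, channel, architecture, family, modality or quantization error at a node with a schema, which contradicts the hypothesis.

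The only care needed is to confirm that open output sizes of typed nodes can be taken from the actual run and still agree with $\theta$. To handle this, I would strengthen the induction hypothesis of Theorem~\ref{thm:soundness} to say that $\theta$ may be chosen to agree with the observed values at those ports.
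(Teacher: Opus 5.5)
Your part~(i) is the paper's proof. The inclusion $\mathcal{C}(G')\subseteq\mathcal{C}(G)$ is Lemma~\ref{lem:mono}. The declarative reading of $\mathcal{H}$ you propose is exactly how the appendix defines it, so any model of $\mathcal{C}(G)$ is also a model of $\mathcal{C}(G')$. This holds because an integer variable of $\mathcal{C}(G')$ occurs in a divisibility atom of $\mathcal{C}(G)$, and a fixed variable of $\mathcal{C}(G')$ is fixed to the same value by $\mathcal{C}(G)$. Since $\wf{G}$ is defined semantically, you need neither the solver's correctness nor your fallback about derivations of rejections.

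Part~(ii) has the right skeleton, a contradiction via Theorem~\ref{thm:soundness}, but one step fails. You claim that outputs whose sizes a schema leaves open inhabit $\theta\tau'$ by the induction of Theorem~\ref{thm:soundness}. You then propose strengthening that induction so that $\theta$ can be chosen to agree with the observed values there. This cannot be proved, for three reasons:
\begin{itemize}
\item Faithfulness (Definition~\ref{def:fidelity}(b)) yields only some $\sigma'$ that agrees with $\sigma$ off the fresh variables.
\item The run fixes each fresh variable to whatever the node actually returns.
\item $\mathcal{C}(G)$ may constrain that variable further downstream.
\end{itemize}
For example, wire the fresh integer output of \nodename{MathExpression|pysssss} into a size widget on which a Wan node's schema imposes $16 \mid w$. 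Every solution then assigns it a multiple of~16, while the run may return $100$. No solution $\theta$ agrees with that observation, even in a graph with no $\tyDyn$ node. This is why the paper's proof of Theorem~\ref{thm:soundness} obtains inhabitation at such ports from conformance, not from the induction.

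The missing observation is definitional. By clause~(ii) of Definition~\ref{def:frontier}, these output ports are themselves graph inputs, so the hypothesis of~(ii) already covers them. The graph inputs and the boundary values then make up all of $F(G)$. A joint integral solution therefore makes $E$ frontier-conformant, and Theorem~\ref{thm:soundness} gives the contradiction with no strengthening. Under your narrower reading of ``graph inputs'', statement~(ii) would not hold in general: a run like the one above can crash at the typed Wan node while the boundary is empty.
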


\begin{proofsketch}
(i)~Removing a schema only removes constraints. The node's side condition and
widget bindings disappear, and its edges become consistency checks that add
nothing. Removing constraints preserves satisfiability. (ii)~follows from
Theorem~\ref{thm:soundness}. The error shows that $E$ is not
frontier-conformant. Since the graph inputs extend by assumption, the conflict
must involve a boundary value.
\end{proofsketch}

Part~(i) is the static gradual guarantee~\cite{siek2015} for the precision
order that erases whole schemas. It lets the schema library grow one node at a
time, because a missing schema never causes a false alarm. Part~(ii) shows that
a remaining error involves a value from the boundary. It is weaker than
blame in gradual typing~\cite{wadler2009}, which names a single culprit through
runtime casts. \tool inserts no casts, and the conflict may involve several
boundary values. Together, the theorems ensure that a workflow
with missing schemas is checked as far as its schemas reach, and that any
remaining compatibility error starts where they end.

%% file: figures/fig-consistency.tex
\begin{figure}[t]
\[
\inferrule{ }{\tyDyn \consistent \tau}
\quad
\inferrule{ }{\tau \consistent \tyDyn}
\quad
\inferrule{ }{\mathsf{C}(\bar\iota) \consistent \mathsf{C}(\bar\iota')}
\qquad
\inferrule
  {((n,i) \to (n',j)) \in E \quad
   \mathit{out}_i(n) = \tau \quad \mathit{in}_j(n') = \tau' \quad
   \tau \consistent \tau'}
  {\mathcal{C} \mathrel{{+}{=}} \tau \doteq \tau'}
\]
\caption{Consistency (left three rules) and \textsc{Edge-Grad} (right), which
replaces \textsc{Edge} of Figure~\ref{fig:typing-rules}. $\mathsf{C}$ is a
type constructor and $\bar\iota$ its indices. When $\tau$ or $\tau'$ is
$\tyDyn$, $\tau \doteq \tau'$ is read as $\top$, so the edge adds no
constraint.}
\label{fig:consistency}
\end{figure}

%% file: sections/implementation.tex
\section{Implementation}
\label{sec:impl}

\tool checks ComfyUI workflows directly from their JSON representation, in two
parts. Node schemas and a model database supply the facts that ComfyUI's nominal
labels omit (\S\ref{sec:impl:dsl}). The checker then turns a workflow into
constraints over these facts and solves them (\S\ref{sec:impl:solve}). The
implementation totals $\sim$3.7k lines of Python, including the diagnostics
evaluated in RQ5 (\S\ref{sec:eval:users}).

\subsection{Schema DSL and Database}
\label{sec:impl:dsl}

The schemas of \S\ref{sec:calc:schemas} are written in a small declarative DSL:

\noindent
\begin{minipage}[t]{0.6\linewidth}
\begin{lstlisting}[language=schemadsl,basicstyle=\footnotesize\ttfamily,frame=single,framerule=0.4pt,framesep=3pt,rulecolor=\color{black!60},xleftmargin=4pt,xrightmargin=4pt,aboveskip=2pt,belowskip=2pt,showlines=true]
node VAEEncode
  forall a, h, w
  in  pixels : Image(h, w, 3)
  in  vae    : VAE(a)
  out latent : Latent(h/sf(a), w/sf(a), ch(a), a)

node WanImageToVideo   # excerpt
  widget width  : Int(w)
  widget height : Int(h)
  where 16 | w, 16 | h
\end{lstlisting}
\end{minipage}\hfill
\begin{minipage}[t]{0.39\linewidth}
\begin{lstlisting}[language=schemadsl,basicstyle=\footnotesize\ttfamily,frame=single,framerule=0.4pt,framesep=3pt,rulecolor=\color{black!60},xleftmargin=4pt,xrightmargin=4pt,aboveskip=2pt,belowskip=2pt,showlines=true]
node ControlNetApplyAdvanced
  forall f, s, m, h, w
  in  positive : Cond(f)
  in  negative : Cond(f)
  in  control_net : Ctrl(s, f)
  in  image : Image(h, w, 3, m)
  out positive : Cond(f)
  out negative : Cond(f)
  where s = m

\end{lstlisting}
\end{minipage}

Each schema names its type variables (\texttt{forall}), gives a type to every
input and output port, binds widgets to variables, and states its side condition
(\texttt{where}). We wrote schemas for 147 nodes: 114 ComfyUI built-in nodes and
33 custom nodes from popular community packs.

A schema relates the brands of its ports, for example requiring a LoRA and the
model it patches to share an architecture. Which brands a particular file has,
however, depends on the file the user selects. Therefore we provide a
\emph{model database} supplying these brands. It records the derivation forest
of families (\S\ref{sec:calc:types}) and maps checkpoint, LoRA,
ControlNet, VAE, and CLIP files to family, modality, and quantization brands, by
filename and, more reliably, by safetensors header metadata such as tensor key
patterns and shapes. When a file is
unknown, its brand variables remain free. The graph is still checked for
internal consistency, and the file's constraints still propagate, but its actual
brands go unchecked and the checker reports it as unresolved.

\subsection{Constraint Generation and Solving}
\label{sec:impl:solve}

Two obstacles separate the calculus from a usable checker. Saved workflows
contain editor constructs that the calculus does not model, and a checker
meant to run in the editor must answer within milliseconds.

\emph{From saved workflows to core graphs.} The checker lowers these constructs
as described in \S\ref{sec:calc:graphs} and then generates constraints exactly
as the calculus prescribes.

\emph{Fast solving.} Solving is fast because brands are solved first. Brand
equations are solved by union-find, and grounding a family evaluates $\archfn$,
$\chfn$, and $\mathit{sf}$, which makes the dimension equations that mention it
concrete. The resulting systems are also small, with a median of 100 constraints per
graph in our evaluation (\S\ref{sec:eval:cost}). A dedicated engine
without native dependencies therefore suffices to decide them exactly with the
procedure of \S\ref{sec:calc:judgment}
when side predicates have ground family arguments.
As a result, median checking time is 5.46\,ms.

%% file: sections/evaluation.tex
\section{Evaluation}
\label{sec:eval}

Section~\ref{sec:study} answered RQ1: semantic composition failures are common,
and 31.5\% of them are silent. The evaluation asks whether \tool
addresses them, through four further questions:
\begin{itemize}
  \item[\textbf{RQ2}] How reliably does \tool detect compatibility faults
    injected into real workflows, and how does detection change when schemas
    are unavailable?
  \item[\textbf{RQ3}] How much time, memory, and solver work does
    checking these workflows require?
  \item[\textbf{RQ4}] What do known-good workflows and reported real
    failures reveal about practical coverage and failure mechanisms?
  \item[\textbf{RQ5}] How do \tool's diagnostics help
    users to locate compatibility faults and select valid repairs in
    ComfyUI workflows?
\end{itemize}

\parh{Evaluation data.} RQ2 and RQ3 use faults injected into 132
official ComfyUI workflows. RQ4 tests whether the results carry over to
known-good and reported real workflows, and RQ5 involves 10 participants
working on 12 workflows, with full details in Appendices~\ref{app:eval}
and~\ref{app:user-study}.

\subsection{RQ2: Effectiveness on Injected Faults}
\label{sec:eval:synthetic}

Measuring detection requires faulty workflows across every compatibility
dimension of the census, which real failure reports rarely provide. Only 89 of
the 753 failures in our census include the faulty workflow graph, and only one
of these involves control modality. Rebuilding graphs from the reports' text
adds failures, but still yields only 108 that execution reproduces (RQ4), 82\%
of them model-family mismatches. We therefore inject faults into official
workflows that are known to run, with operators that cover every dimension of
the census. RQ4 then establishes realism on the failures that users report.

ComfyUI ships 765 official workflows, so we build our corpus from them. The
operators swap model files, rewire values within a workflow, graft one
workflow's latent or image into another, and break dimension, quantization, and
modality requirements. Of the 765 workflows, 156 use only built-in nodes and
public models within our GPU budget and contain a component that some injection
operator can change, such as a model file to swap. Of these, 24 fail to run
unmodified, mostly because they are reusable subgraph components, which leaves
132. Each injection operator may yield both candidate faults and benign graphs.
We use model documentation and ComfyUI's source code, never \tool, to determine
whether each resulting graph is a candidate fault or benign. This gives 1,837
graphs, 619 of them candidate faults. We execute every graph to confirm the
candidate faults. Of the 619, 12 count in no measurement because their
baseline does not run correctly. Of the other 607, 417 raise an error, and 83
are confirmed as silent faults, 42 by an image oracle and 41 by manual review.
This yields 500 confirmed faults. The remaining 107 stay unconfirmed and do not
count either. We compare \tool with ComfyUI's own validator.

\input{figures/fig-synthetic-detection}

\parh{Results.} ComfyUI's own validator admits all 500 confirmed faults (i.e.,
ComfyUI rejects none). Of these, \tool detects 332
(66.4\%) (Figure~\ref{fig:synthetic-detection}a), including all
116 latent grafts, 174 of 275 file swaps, 22 of 34 rewirings, and
16 of 52 dimension faults.

\parh{Measurement on Missing Schemas.} Because many nodes lack schemas, we also
measure how missing schemas affect detection. We use the 169
confirmed faulty graphs that \tool detects and in which every node has a schema,
so that only the schemas we remove are missing. We then remove the schemas of
node classes in 100 random orders (Figure~\ref{fig:synthetic-detection}b). With
50\% of the node classes removed, 48.0\% of nodes keep their
schemas but only 17.7\% of these 169 detections
remain, because 29 of the 123 classes carry them
all. No run raises a false alarm on schema removals, as
Theorem~\ref{thm:gradual}(i) guarantees.

\medskip
\takeaway{RQ2 takeaway}{\tool detects injected faults that ComfyUI's own
validator admits. Missing schemas
can hide faults but never cause false alarms.}

\subsection{RQ3: Cost and Resource Use}
\label{sec:eval:cost}

\input{study/data/rq3_real_macros}
A checker that runs before every execution must cost far less than the execution
it guards. Figure~\ref{fig:synthetic-cost} reports checking time, including
diagnostics, solver work, and peak memory on the graphs of RQ2.

\input{figures/fig-synthetic-cost}

\parh{Time (Figure~\ref{fig:synthetic-cost}a).} Checking takes
\CostCheckMedianMs\,ms at the median and \CostCheckPctMs\,ms at the 95th
percentile. Execution takes \CostExecRatio{} times as long at the median over
graphs. A rejection therefore costs the user milliseconds instead of a GPU run.

\parh{Solver work and memory (Figure~\ref{fig:synthetic-cost}b--d).} The
constraint systems are small, with a median of 100 constraints over 96 variables
per graph. Accepting a graph takes one solver call. Rejecting one takes 110
calls at the median, because diagnosis minimizes the unsatisfiable
core~\cite{marco2016} and computes repair hints. Peak memory, with the Python
interpreter, is \CostMemMedianMiB\,MiB at the median and at most
\CostMemMaxMiB\,MiB.

\medskip
\takeaway{RQ3 takeaway}{\tool is cheap enough to check every workflow before it
runs, and it spends extra work only on explaining errors.}

\subsection{RQ4: Effectiveness Study on Real-World Workflows}
\label{sec:eval:real}

RQ2 injects faults into official workflows that use only built-in nodes. Real
workflows span more model families and contain custom nodes without schemas, and
the mistakes that users make may differ from the injected faults. RQ4 therefore
examines whether \tool's results carry over to real workflows, from two sides.
(1) Known-good workflows show whether the checker accepts real graphs and how
much of them it types. We use all 765 official workflows of RQ2. Because ComfyUI
maintains them as working workflows and components, any rejection would be a
false alarm. (2) Reported failures show whether the checker detects the failures
that users actually meet. Such failures are hard to obtain, because few reports
include a workflow that reproduces the failure. We therefore draw them from the
census of 753 reports in \S\ref{sec:study} and rebuild their graphs from the
reports.

\input{figures/fig-coverage}

\parh{Acceptance and coverage (Figure~\ref{fig:coverage}).} \tool accepts all
765 known-good workflows, so it raises no false alarm on real graphs across the
families they span. It also types most nodes. Across the 765 workflows, 82\% of
all node instances are typed. Per workflow, the median coverage is 85.7\%, and
39.1\% of workflows are fully typed. The untyped nodes come from custom packs
and hosted APIs, which receive $\tyDyn$. At the left end of the distribution,
13.6\% of workflows have at most 50\% of their nodes typed and 3.1\% have none,
because they consist mostly of such nodes. Low coverage limits detection in
these workflows but never causes a false alarm.

\begin{table}[t]
\caption{Representative reported failures that \tool detects. Numbers refer to ComfyUI GitHub threads.}
\label{tab:decidable}
\resizebox{0.85\columnwidth}{!}{\begin{tabular}{@{}lll cc@{}}
\toprule
ID & Mode & Fault & ComfyUI & \tool \\
\midrule
Discussion~\#2115 & crash  & LoRA file loaded as a checkpoint                          & $\times$ & \checkmark \\
Issue~\#3690      & crash  & SD3 checkpoint without text encoders used for prompts     & $\times$ & \checkmark \\
Issue~\#5098      & crash  & SD1.5 ControlNet on an SDXL checkpoint                    & $\times$ & \checkmark \\
Issue~\#7335      & crash  & Control-LoRA on a 9-channel SDXL inpaint UNet             & $\times$ & \checkmark \\
Issue~\#8362      & crash  & Wan VACE width not divisible by 16                        & $\times$ & \checkmark \\
Issue~\#9479      & crash  & Wan text encoder loaded with the wrong loader type        & $\times$ & \checkmark \\
\midrule
Issue~\#6813      & silent & Scaled-fp8 HunyuanVideo weights without their scales      & $\times$ & \checkmark \\
Issue~\#7407      & silent & Flux depth model combined with a depth LoRA               & $\times$ & \checkmark \\
Issue~\#4611      & silent & Flux text encoders given two T5 files and no CLIP-L       & $\times$ & \checkmark \\
\bottomrule
\end{tabular}}
\end{table}

\parh{Reported failures.} We obtain these failures by filtering the census of
RQ1 in two steps. First, we keep the semantic failures whose reports include the
faulty workflow graph, or whose text and logs describe it precisely enough to
rebuild it, and whose components are all built-in nodes. This leaves 129 of
the 753 reports. Second, we try to reconstruct
each graph in two versions (faulty and fixed) based on the report and the
available logs, execute them, and keep the graph only if it fails while its
fixed version runs. This results in 108 faulty workflow graphs. We deliberately
keep graphs whose faults lie outside what \tool's types can express, as well as
graphs with nodes without schemas. The evaluation therefore
covers the broad range of failures that users meet, not only those that the
checker is designed to catch.

We compare \tool with ComfyUI on those graphs. ComfyUI's own validator admits
all 108 graphs, but \tool rejects 72 (67\%) of them from the graphs alone,
without the reporters' models or environments.
It detects 64 of 88 crashing failures and 8 of 20 silent failures.
Table~\ref{tab:decidable} lists
representative failures.

\medskip
\takeaway{RQ4 takeaway}{On real workflows, \tool raises no false alarm and types
most nodes. Its mechanisms carry over from injected faults to reported failures,
across newer model families and including silent ones.}

\subsection{\texorpdfstring{RQ5: User Study}{RQ5: User Study}}
\label{sec:eval:users}

\parh{Setup.}
To assess whether \tool's diagnostics help users locate faults and choose
repairs, we conduct a within-participant study with 10 participants. Each
participant works both with and without diagnostics, allowing us to compare
their performance while accounting for differences in prior experience. The
study includes 12 workflows: 6 with silent compatibility faults, 2 that crash,
and 4 valid ones, which test whether participants can also recognize when no
repair is necessary. Each participant sees every workflow once to avoid recall
from an earlier attempt, with 6 tasks without diagnostics (baseline) and 6 with
diagnostics (treatment). To balance task composition, each set includes 3 silent
faults, 1 crash, and 2 valid workflows, and each workflow appears in each
version for 5 participants. Both versions show the workflow graph, intended
behavior, model-file information, and output image or error trace. Only the
treatment adds \tool's report, isolating diagnostic availability as the
difference in task information.

\parh{Procedure.}
We recruit participants with at least 3 months of experience with generative AI
workflows, so that the study focuses on users familiar with these tools.
Participants work in a read-only web interface, so the study measures how
participants interpret diagnostics and choose repairs, not how they edit or
execute workflows. A common tutorial, unscored practice tasks, and a checker
briefing introduce the interface and the report format. For each workflow,
participants judge whether it is faulty and, if so, identify the relevant region
and node, then select a repair category and a specific repair. They answer
through multiple-choice questions, which make scoring consistent. The interface
records all responses and interaction events.

\input{figures/fig-user-study}

\parh{Scoring.} We score locating a fault separately from repairing it. A
faulty workflow counts as diagnosed when the participant identifies the fault's
region and node within the time limit and, for a silent fault, also judges the
workflow faulty. A valid workflow counts as diagnosed when the participant
reports no fault. A repair additionally requires the correct repair category
and a valid specific repair. Timeouts receive no credit.

\parh{Results.} Diagnostics help participants both find faults and fix them.
With diagnostics, diagnosis accuracy rises from 55.0\% to 91.7\%
(Figure~\mbox{\ref{fig:user-study}}a). On faulty workflows, localization rises
from 35.0\% to 87.5\%, and repair from 30.0\% to 62.5\%. The improvement is
consistent across participants. No participant is less accurate with diagnostics
on any outcome (Figure~\mbox{\ref{fig:user-study}}b), and exploratory two-sided
exact sign tests on paired accuracies are significant for all three outcomes
(Holm-adjusted~\cite{holm1979} \mbox{$p=0.0117$} for diagnosis and localization
and \mbox{$p=0.0156$} for repair). Diagnosis accuracy rises from 36.7\% to
83.3\% on silent faults and from 30.0\% to 100\% on crashes
(Figure~\mbox{\ref{fig:user-study}}c). On valid workflows, it stays near perfect
(19 of 20 trials without diagnostics, 20 of 20 with), so the diagnostics do not
lead participants to see faults that are not there.

\medskip
\takeaway{RQ5 takeaway}{\tool's diagnostics help users locate compatibility
faults and choose valid repairs, especially for silent faults.}
\FloatBarrier

%% file: figures/fig-synthetic-detection.tex
\begin{figure}[t]
\centering
\includegraphics{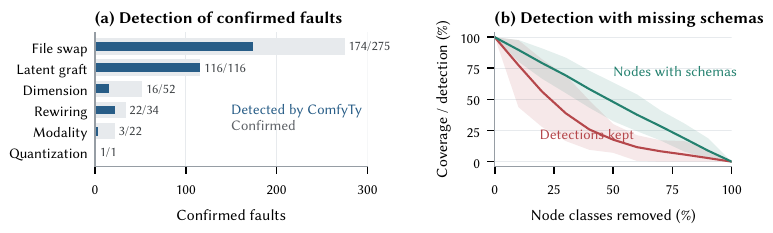}
\caption{\textbf{Detection and missing schemas.} (a)~Confirmed faults per operator family and
their detection. (b)~Detections kept and node coverage as schemas are removed.}
\Description{Horizontal bars per operator family compare confirmed faults with the
detections of ComfyTy. ComfyTy detects all latent grafts and most rewirings. Two lines with shaded bands show
that, as the schemas of node classes are removed, the detections kept fall faster than the
fraction of nodes with schemas.}
\label{fig:synthetic-detection}
\end{figure}

%% file: study/data/rq3_real_macros.tex
\newcommand{\CostGraphs}{1,837}
\newcommand{\CostCheckMedianMs}{5.46}
\newcommand{\CostCheckPctMs}{83.72}
\newcommand{\CostExecRatio}{3,426}
\newcommand{\CostConstraintsLo}{0}
\newcommand{\CostConstraintsHi}{756}
\newcommand{\CostVariablesLo}{0}
\newcommand{\CostVariablesHi}{603}
\newcommand{\CostAcceptCallsLo}{1}
\newcommand{\CostAcceptCallsHi}{1}
\newcommand{\CostRejectCallsLo}{1}
\newcommand{\CostRejectCallsHi}{627}
\newcommand{\CostZThreeCalls}{0}
\newcommand{\CostMemMedianMiB}{25.62}
\newcommand{\CostMemMaxMiB}{26.68}

%% file: figures/fig-synthetic-cost.tex
\begin{figure}[t]
\centering
\includegraphics{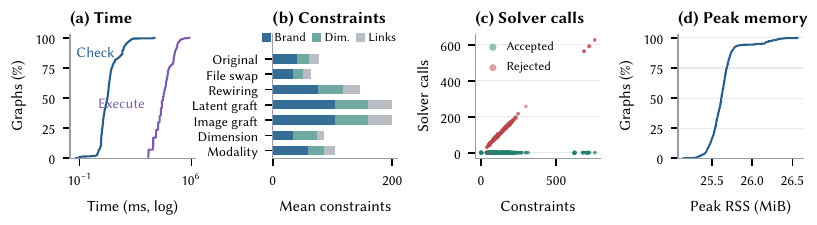}
\caption{\textbf{Checking cost and resource use.} (a)~Checking and execution time. (b)~Mean constraints per operator family, by kind.
(c)~Solver calls per graph, including diagnosis. (d)~Peak process memory.}
\Description{Four panels compare checking and execution time, show
constraint composition for each operator family, plot solver calls against
constraint size for accepted and rejected graphs, and show the distribution of
peak memory.}
\label{fig:synthetic-cost}
\end{figure}

%% file: figures/fig-coverage.tex
\begin{figure}[t]
\centering
\resizebox{0.8\columnwidth}{!}{\input{figures/coverage_plot}}
\caption{Distribution of per-workflow typing coverage across the 765
known-good workflows.}
\label{fig:coverage}
\end{figure}

%% file: figures/coverage_plot.tex
\begin{tikzpicture}[x=0.105cm,y=0.03cm,font=\small]
\definecolor{covline}{HTML}{16665B}
\definecolor{covnote}{HTML}{546B89}
\foreach \y in {0,25,50,75,100} {
  \draw[black!10,line width=0.3pt] (0,\y) -- (100,\y);
  \node[anchor=east,font=\footnotesize] at (-1.2,\y) {\y};}
\foreach \x in {0,25,50,75,100} {
  \draw[black!45,line width=0.4pt] (\x,0) -- (\x,-1.4);
  \node[anchor=north,font=\footnotesize] at (\x,-2) {\x};}
\draw[black!45,line width=0.4pt] (0,100) -- (0,0) -- (100,0);
\node[rotate=90] at (-9.5,50) {Typed nodes (\%)};
\node at (50,-24) {Workflow percentile (ordered by coverage)};
\node[anchor=south east] at (98,4) {$n=765$ workflows};
\draw[covnote!65,densely dashed,line width=0.6pt] (0,85.714286) -- (50,85.714286) -- (50,0);
\path[fill=covline,fill opacity=0.055] (0.000000,0.000000) --
  (3.137255,0.000000) --
  (3.137255,7.407407) --
  (3.267974,7.407407) --
  (3.267974,13.333333) --
  (3.398693,13.333333) --
  (3.398693,14.285714) --
  (3.529412,14.285714) --
  (3.529412,20.000000) --
  (3.660131,20.000000) --
  (3.660131,25.000000) --
  (3.921569,25.000000) --
  (3.921569,27.272727) --
  (4.052288,27.272727) --
  (4.052288,33.333333) --
  (5.228758,33.333333) --
  (5.228758,37.500000) --
  (5.490196,37.500000) --
  (5.490196,37.931034) --
  (5.620915,37.931034) --
  (5.620915,40.000000) --
  (6.274510,40.000000) --
  (6.274510,42.857143) --
  (6.797386,42.857143) --
  (6.797386,44.444444) --
  (7.058824,44.444444) --
  (7.058824,46.153846) --
  (7.189542,46.153846) --
  (7.189542,46.666667) --
  (7.320261,46.666667) --
  (7.320261,47.368421) --
  (7.450980,47.368421) --
  (7.450980,48.148148) --
  (7.581699,48.148148) --
  (7.581699,48.780488) --
  (7.712418,48.780488) --
  (7.712418,50.000000) --
  (13.594771,50.000000) --
  (13.594771,53.333333) --
  (13.725490,53.333333) --
  (13.725490,53.846154) --
  (13.986928,53.846154) --
  (13.986928,54.545455) --
  (14.379085,54.545455) --
  (14.379085,55.555556) --
  (14.509804,55.555556) --
  (14.509804,57.142857) --
  (14.901961,57.142857) --
  (14.901961,58.333333) --
  (15.032680,58.333333) --
  (15.032680,59.322034) --
  (15.163399,59.322034) --
  (15.163399,60.000000) --
  (16.601307,60.000000) --
  (16.601307,61.111111) --
  (16.732026,61.111111) --
  (16.732026,61.764706) --
  (16.862745,61.764706) --
  (16.862745,62.068966) --
  (16.993464,62.068966) --
  (16.993464,62.500000) --
  (17.385621,62.500000) --
  (17.385621,63.333333) --
  (17.516340,63.333333) --
  (17.516340,63.636364) --
  (17.777778,63.636364) --
  (17.777778,65.116279) --
  (17.908497,65.116279) --
  (17.908497,65.384615) --
  (18.039216,65.384615) --
  (18.039216,65.789474) --
  (18.169935,65.789474) --
  (18.169935,66.666667) --
  (26.143791,66.666667) --
  (26.143791,67.391304) --
  (26.274510,67.391304) --
  (26.274510,67.500000) --
  (26.405229,67.500000) --
  (26.405229,67.647059) --
  (26.535948,67.647059) --
  (26.535948,67.875648) --
  (26.666667,67.875648) --
  (26.666667,68.421053) --
  (26.797386,68.421053) --
  (26.797386,68.571429) --
  (26.928105,68.571429) --
  (26.928105,68.888889) --
  (27.058824,68.888889) --
  (27.058824,69.072165) --
  (27.189542,69.072165) --
  (27.189542,69.444444) --
  (27.320261,69.444444) --
  (27.320261,70.000000) --
  (27.450980,70.000000) --
  (27.450980,70.967742) --
  (27.581699,70.967742) --
  (27.581699,71.111111) --
  (27.712418,71.111111) --
  (27.712418,71.428571) --
  (28.888889,71.428571) --
  (28.888889,71.875000) --
  (29.019608,71.875000) --
  (29.019608,72.549020) --
  (29.150327,72.549020) --
  (29.150327,72.727273) --
  (29.281046,72.727273) --
  (29.281046,73.333333) --
  (29.542484,73.333333) --
  (29.542484,73.469388) --
  (29.934641,73.469388) --
  (29.934641,74.418605) --
  (30.065359,74.418605) --
  (30.065359,74.509804) --
  (30.196078,74.509804) --
  (30.196078,75.000000) --
  (36.732026,75.000000) --
  (36.732026,75.531915) --
  (36.862745,75.531915) --
  (36.862745,75.757576) --
  (36.993464,75.757576) --
  (36.993464,76.086957) --
  (37.124183,76.086957) --
  (37.124183,76.315789) --
  (37.254902,76.315789) --
  (37.254902,76.811594) --
  (37.385621,76.811594) --
  (37.385621,76.923077) --
  (37.647059,76.923077) --
  (37.647059,77.777778) --
  (37.777778,77.777778) --
  (37.777778,78.571429) --
  (38.039216,78.571429) --
  (38.039216,80.000000) --
  (41.307190,80.000000) --
  (41.307190,80.487805) --
  (41.437908,80.487805) --
  (41.437908,80.555556) --
  (41.568627,80.555556) --
  (41.568627,80.645161) --
  (41.699346,80.645161) --
  (41.699346,81.034483) --
  (41.830065,81.034483) --
  (41.830065,81.250000) --
  (42.352941,81.250000) --
  (42.352941,81.818182) --
  (43.137255,81.818182) --
  (43.137255,82.051282) --
  (43.398693,82.051282) --
  (43.398693,82.352941) --
  (43.790850,82.352941) --
  (43.790850,82.608696) --
  (43.921569,82.608696) --
  (43.921569,82.857143) --
  (44.052288,82.857143) --
  (44.052288,83.050847) --
  (44.183007,83.050847) --
  (44.183007,83.333333) --
  (46.535948,83.333333) --
  (46.535948,83.636364) --
  (46.666667,83.636364) --
  (46.666667,84.090909) --
  (46.928105,84.090909) --
  (46.928105,84.210526) --
  (47.320261,84.210526) --
  (47.320261,84.444444) --
  (47.450980,84.444444) --
  (47.450980,84.615385) --
  (48.235294,84.615385) --
  (48.235294,84.722222) --
  (48.366013,84.722222) --
  (48.366013,85.000000) --
  (49.019608,85.000000) --
  (49.019608,85.365854) --
  (49.150327,85.365854) --
  (49.150327,85.714286) --
  (50.849673,85.714286) --
  (50.849673,86.363636) --
  (51.111111,86.363636) --
  (51.111111,86.666667) --
  (51.503268,86.666667) --
  (51.503268,86.956522) --
  (51.764706,86.956522) --
  (51.764706,87.500000) --
  (53.202614,87.500000) --
  (53.202614,87.564767) --
  (53.333333,87.564767) --
  (53.333333,88.000000) --
  (53.594771,88.000000) --
  (53.594771,88.461538) --
  (53.725490,88.461538) --
  (53.725490,88.888889) --
  (54.771242,88.888889) --
  (54.771242,89.473684) --
  (55.163399,89.473684) --
  (55.163399,89.743590) --
  (55.294118,89.743590) --
  (55.294118,90.000000) --
  (56.601307,90.000000) --
  (56.601307,90.476190) --
  (56.993464,90.476190) --
  (56.993464,90.909091) --
  (57.516340,90.909091) --
  (57.516340,91.666667) --
  (58.823529,91.666667) --
  (58.823529,91.803279) --
  (58.954248,91.803279) --
  (58.954248,92.592593) --
  (59.084967,92.592593) --
  (59.084967,92.753623) --
  (59.215686,92.753623) --
  (59.215686,92.857143) --
  (59.869281,92.857143) --
  (59.869281,93.333333) --
  (60.000000,93.333333) --
  (60.000000,94.117647) --
  (60.130719,94.117647) --
  (60.130719,94.444444) --
  (60.392157,94.444444) --
  (60.392157,95.652174) --
  (60.522876,95.652174) --
  (60.522876,97.468354) --
  (60.653595,97.468354) --
  (60.653595,98.305085) --
  (60.784314,98.305085) --
  (60.784314,98.809524) --
  (60.915033,98.809524) --
  (60.915033,100.000000) --
  (100.000000,100.000000) -- (100,0) -- cycle;
\draw[covline,line width=1.25pt,line join=round] (0.000000,0.000000) --
  (3.137255,0.000000) --
  (3.137255,7.407407) --
  (3.267974,7.407407) --
  (3.267974,13.333333) --
  (3.398693,13.333333) --
  (3.398693,14.285714) --
  (3.529412,14.285714) --
  (3.529412,20.000000) --
  (3.660131,20.000000) --
  (3.660131,25.000000) --
  (3.921569,25.000000) --
  (3.921569,27.272727) --
  (4.052288,27.272727) --
  (4.052288,33.333333) --
  (5.228758,33.333333) --
  (5.228758,37.500000) --
  (5.490196,37.500000) --
  (5.490196,37.931034) --
  (5.620915,37.931034) --
  (5.620915,40.000000) --
  (6.274510,40.000000) --
  (6.274510,42.857143) --
  (6.797386,42.857143) --
  (6.797386,44.444444) --
  (7.058824,44.444444) --
  (7.058824,46.153846) --
  (7.189542,46.153846) --
  (7.189542,46.666667) --
  (7.320261,46.666667) --
  (7.320261,47.368421) --
  (7.450980,47.368421) --
  (7.450980,48.148148) --
  (7.581699,48.148148) --
  (7.581699,48.780488) --
  (7.712418,48.780488) --
  (7.712418,50.000000) --
  (13.594771,50.000000) --
  (13.594771,53.333333) --
  (13.725490,53.333333) --
  (13.725490,53.846154) --
  (13.986928,53.846154) --
  (13.986928,54.545455) --
  (14.379085,54.545455) --
  (14.379085,55.555556) --
  (14.509804,55.555556) --
  (14.509804,57.142857) --
  (14.901961,57.142857) --
  (14.901961,58.333333) --
  (15.032680,58.333333) --
  (15.032680,59.322034) --
  (15.163399,59.322034) --
  (15.163399,60.000000) --
  (16.601307,60.000000) --
  (16.601307,61.111111) --
  (16.732026,61.111111) --
  (16.732026,61.764706) --
  (16.862745,61.764706) --
  (16.862745,62.068966) --
  (16.993464,62.068966) --
  (16.993464,62.500000) --
  (17.385621,62.500000) --
  (17.385621,63.333333) --
  (17.516340,63.333333) --
  (17.516340,63.636364) --
  (17.777778,63.636364) --
  (17.777778,65.116279) --
  (17.908497,65.116279) --
  (17.908497,65.384615) --
  (18.039216,65.384615) --
  (18.039216,65.789474) --
  (18.169935,65.789474) --
  (18.169935,66.666667) --
  (26.143791,66.666667) --
  (26.143791,67.391304) --
  (26.274510,67.391304) --
  (26.274510,67.500000) --
  (26.405229,67.500000) --
  (26.405229,67.647059) --
  (26.535948,67.647059) --
  (26.535948,67.875648) --
  (26.666667,67.875648) --
  (26.666667,68.421053) --
  (26.797386,68.421053) --
  (26.797386,68.571429) --
  (26.928105,68.571429) --
  (26.928105,68.888889) --
  (27.058824,68.888889) --
  (27.058824,69.072165) --
  (27.189542,69.072165) --
  (27.189542,69.444444) --
  (27.320261,69.444444) --
  (27.320261,70.000000) --
  (27.450980,70.000000) --
  (27.450980,70.967742) --
  (27.581699,70.967742) --
  (27.581699,71.111111) --
  (27.712418,71.111111) --
  (27.712418,71.428571) --
  (28.888889,71.428571) --
  (28.888889,71.875000) --
  (29.019608,71.875000) --
  (29.019608,72.549020) --
  (29.150327,72.549020) --
  (29.150327,72.727273) --
  (29.281046,72.727273) --
  (29.281046,73.333333) --
  (29.542484,73.333333) --
  (29.542484,73.469388) --
  (29.934641,73.469388) --
  (29.934641,74.418605) --
  (30.065359,74.418605) --
  (30.065359,74.509804) --
  (30.196078,74.509804) --
  (30.196078,75.000000) --
  (36.732026,75.000000) --
  (36.732026,75.531915) --
  (36.862745,75.531915) --
  (36.862745,75.757576) --
  (36.993464,75.757576) --
  (36.993464,76.086957) --
  (37.124183,76.086957) --
  (37.124183,76.315789) --
  (37.254902,76.315789) --
  (37.254902,76.811594) --
  (37.385621,76.811594) --
  (37.385621,76.923077) --
  (37.647059,76.923077) --
  (37.647059,77.777778) --
  (37.777778,77.777778) --
  (37.777778,78.571429) --
  (38.039216,78.571429) --
  (38.039216,80.000000) --
  (41.307190,80.000000) --
  (41.307190,80.487805) --
  (41.437908,80.487805) --
  (41.437908,80.555556) --
  (41.568627,80.555556) --
  (41.568627,80.645161) --
  (41.699346,80.645161) --
  (41.699346,81.034483) --
  (41.830065,81.034483) --
  (41.830065,81.250000) --
  (42.352941,81.250000) --
  (42.352941,81.818182) --
  (43.137255,81.818182) --
  (43.137255,82.051282) --
  (43.398693,82.051282) --
  (43.398693,82.352941) --
  (43.790850,82.352941) --
  (43.790850,82.608696) --
  (43.921569,82.608696) --
  (43.921569,82.857143) --
  (44.052288,82.857143) --
  (44.052288,83.050847) --
  (44.183007,83.050847) --
  (44.183007,83.333333) --
  (46.535948,83.333333) --
  (46.535948,83.636364) --
  (46.666667,83.636364) --
  (46.666667,84.090909) --
  (46.928105,84.090909) --
  (46.928105,84.210526) --
  (47.320261,84.210526) --
  (47.320261,84.444444) --
  (47.450980,84.444444) --
  (47.450980,84.615385) --
  (48.235294,84.615385) --
  (48.235294,84.722222) --
  (48.366013,84.722222) --
  (48.366013,85.000000) --
  (49.019608,85.000000) --
  (49.019608,85.365854) --
  (49.150327,85.365854) --
  (49.150327,85.714286) --
  (50.849673,85.714286) --
  (50.849673,86.363636) --
  (51.111111,86.363636) --
  (51.111111,86.666667) --
  (51.503268,86.666667) --
  (51.503268,86.956522) --
  (51.764706,86.956522) --
  (51.764706,87.500000) --
  (53.202614,87.500000) --
  (53.202614,87.564767) --
  (53.333333,87.564767) --
  (53.333333,88.000000) --
  (53.594771,88.000000) --
  (53.594771,88.461538) --
  (53.725490,88.461538) --
  (53.725490,88.888889) --
  (54.771242,88.888889) --
  (54.771242,89.473684) --
  (55.163399,89.473684) --
  (55.163399,89.743590) --
  (55.294118,89.743590) --
  (55.294118,90.000000) --
  (56.601307,90.000000) --
  (56.601307,90.476190) --
  (56.993464,90.476190) --
  (56.993464,90.909091) --
  (57.516340,90.909091) --
  (57.516340,91.666667) --
  (58.823529,91.666667) --
  (58.823529,91.803279) --
  (58.954248,91.803279) --
  (58.954248,92.592593) --
  (59.084967,92.592593) --
  (59.084967,92.753623) --
  (59.215686,92.753623) --
  (59.215686,92.857143) --
  (59.869281,92.857143) --
  (59.869281,93.333333) --
  (60.000000,93.333333) --
  (60.000000,94.117647) --
  (60.130719,94.117647) --
  (60.130719,94.444444) --
  (60.392157,94.444444) --
  (60.392157,95.652174) --
  (60.522876,95.652174) --
  (60.522876,97.468354) --
  (60.653595,97.468354) --
  (60.653595,98.305085) --
  (60.784314,98.305085) --
  (60.784314,98.809524) --
  (60.915033,98.809524) --
  (60.915033,100.000000) --
  (100.000000,100.000000);
\fill[covline] (50,85.714286) circle[radius=2pt];
\node[anchor=north west,inner sep=3pt,text=covnote] at (52,71.714286) {Median: 85.7\%};
\draw[covnote,line width=0.7pt] (60.915033,102) -- (60.915033,106) -- (100,106) -- (100,102);
\node[anchor=north,align=center,inner sep=3pt,text=covline] at (80.457516,97) {\textbf{39.1\% fully typed}\\[-1pt]\footnotesize 299/765 workflows at 100\%};
\fill[covline] (13.594771,50) circle[radius=1.8pt];
\node[anchor=north west,align=left,inner sep=2pt,font=\footnotesize] at (15.594771,47) {13.6\% of workflows\\have coverage $\leq 50\%$};
\fill[covline] (3.137255,0) circle[radius=1.8pt];
\node[anchor=west,inner sep=2pt,font=\footnotesize] at (6,9) {3.1\% entirely untyped};
\end{tikzpicture}

%% file: figures/fig-user-study.tex
\begin{figure}[t]
\centering
\includegraphics{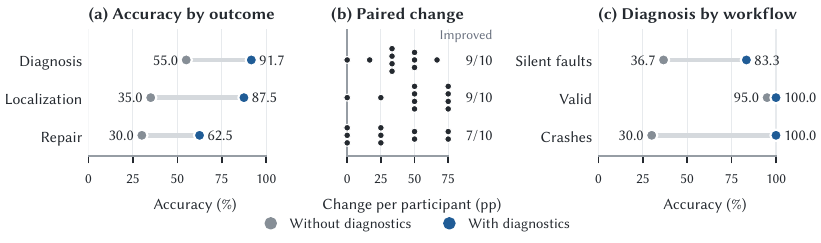}
\caption{\textbf{User-study results.} (a)~Accuracy without and with diagnostics
(60 trials per condition, 40 on faulty workflows). (b)~Change in each
participant's accuracy. (c)~Diagnosis accuracy by workflow type.}
\Description{Three panels. In panel (a), accuracy with diagnostics rises
from 55.0 to 91.7 percent for diagnosis, from 35.0 to 87.5 percent for
localization, and from 30.0 to 62.5 percent for repair. Panel (b) plots
each participant's change: every change is zero or positive; 9 of 10
participants improve in diagnosis and localization and 7 of 10 in repair.
In panel (c), diagnosis accuracy rises from 36.7 to 83.3 percent for silent
faults, from 95.0 to 100 percent for valid workflows, and from 30.0 to 100
percent for crashes.}
\label{fig:user-study}
\end{figure}

%% file: sections/discussion.tex
\section{Discussion}
\label{sec:discussion}

\subsection{Diagnostics in the Editor}
\label{sec:discussion:diagnostics}

\tool's diagnostics already localize a fault. They reduce the conflicting
constraints to a minimal unsatisfiable core, name the nodes that produced it,
and, where possible, point to the file or widget whose change restores
consistency, which lets users locate faults far more accurately (RQ5). An
editor could also highlight the implicated nodes during editing, and the model
database could suggest files from the user's library whose family matches the
rest of the graph.

\subsection{Growing the Schema Library}
\label{sec:discussion:schemas}

Detection depends on the schemas along a failure path (RQ2), so each new schema
extends what \tool can check. The gradual design lets the library grow one node
at a time. Inferring schemas from node code and execution traces could speed
this growth, because these sources reveal how nodes transform tensor dimensions
and which model architectures they require. We see growing the schema library
largely an engineering effort, which could strengthen \tool's capability without
affecting its calculus, guarantees, or results.

\subsection{Threats to Validity}
\label{sec:discussion:validity}

Appendix~\ref{sec:appendix-threats} discusses each threat in full.

\emph{Construct validity}. The failure labels come from an LLM classifier. Four
models agree with its labels at $\kappa=0.71$--$0.81$. This shows that the
labels are reliable, not that they match human judgment. Silent failures may
also go unreported, so the census may undercount them.

\emph{Internal validity.} As is standard, our faithfulness proofs assume that
PyTorch operators and model architectures behave as ComfyUI v0.27.0 implements
them, and that the model database identifies files correctly. A later release
may require re-checking affected schemas.

\emph{External validity.} RQ2 injects faults into official workflows. Injected
faults may differ from the mistakes users make, and official workflows use fewer
custom nodes than community ones. RQ4 complements RQ2 with 108 reported
failures, a small sample because few reports allow rebuilding. The census covers
ComfyUI's core GitHub venues, whose mix of failures may differ from that of
community forums. In RQ5, participants select a repair from a fixed menu without
applying it, so the study measures whether users recognize a valid repair, not
the effort of carrying it out.

%% file: sections/related.tex
\section{Related Work}
\label{sec:related}

\parh{Typing Tensor Programs.}
A rich line of work assigns types to tensor dimensions in programming languages
and ML frameworks: Dependent ML's indexed types~\cite{xi1998}, Dex's typed index
sets~\cite{dex2021}, named-axis proposals~\cite{tensorsfitting2021}, and static
shape checkers for Python ML code such as Ariadne~\cite{ariadne2018} and
PyTea~\cite{pytea2022}. \tool builds on the same foundations: refinement types
in the tradition of liquid types~\cite{liquidtypes2008}, here restricted to a
decidable equality-plus-divisibility fragment, and gradual
typing~\cite{siek2006,siek2015} for nodes without schemas, with
Theorem~\ref{thm:gradual}(ii) a coarse-grained relative of
blame~\cite{wadler2009}. These systems type \emph{programs}; \tool checks the
wiring of \emph{opaque, independently authored components} in end-user graphs.
Beyond shape, it checks family, modality, and quantization: semantic brands that
dimension-only disciplines cannot express.

\parh{Gradual Typing.} Gradual refinement types~\cite{lehmann2017} and gradual
tensor shape checking~\cite{hattori2023} insert runtime checks where static
information is imprecise, and they prove both the static and the dynamic gradual
guarantee~\cite{siek2015}. \tool adopts the static half.

\parh{Typed Dataflow and Generative-AI Workflows.}
Dataflow and scientific workflow systems type operator
connections~\cite{simulink-types,taverna2004,kepler2006}.
JSON subschema checking detects incompatible output/input schemas in ML
pipelines before training~\cite{habib2021}; refinement types also check
threading effects across composed stream operators~\cite{stein2018}.
\tool develops relational, polymorphic schemas for generative-AI components,
covering model families, modalities, and geometry, with a static gradual
guarantee when node schemas are missing.
A newer class of tools produces such graphs but does not type them:
agentic systems generate ComfyUI workflows from natural language
(ComfyGPT~\cite{comfygpt2025}, ComfyAgent / ComfyBench~\cite{comfybench2024}),
and \textsc{ComfyClaw} frames workflow editing over a typed graph
abstraction~\cite{comfyclaw2026}. These are complementary to \tool, which could
check the graphs they construct against component requirements and guide their
edits.

\parh{Empirical Studies of ML System Failures.}
Prior taxonomies of deep-learning bugs~\cite{islam2019,humbatova2020}
study defects in training code and frameworks. Our study is, to our
knowledge, the first to characterize \emph{composition} failures in
end-user generative-AI pipelines, and the first to quantify the silent
fraction.

%% file: sections/conclusion.tex
\section{Conclusion}
\label{sec:conclusion}

Semantic composition failures cause 20.9\% of issue-tracker runtime failures in
ComfyUI, and 31.5\% of these failures are silent. \tool checks component
compatibility before execution using reusable node schemas and gradual
refinement types, while tolerating missing schemas. \tool detects injected
faults that ComfyUI's own validator admits, raises no false alarm on known-good
workflows, and rejects 67\% of the user-reported failures that we rebuild and
confirm by execution. Its diagnostics help users locate faults and choose valid
repairs. The approach could extend to other workflow systems with component
schemas.

%% file: sections/appendix.tex
\section{Full Definitions and Proofs}
\label{app:supp}

This appendix proves Theorems~\ref{thm:soundness} and~\ref{thm:gradual}
and the correctness of the constraint solver of \S\ref{sec:impl:solve}.
The argument has two layers. The first layer
(\S\ref{app:values}--\S\ref{app:proof-gradual}) proves the theorems under
a single assumption: every hand-written schema is faithful to the node
it describes (Definition~\ref{def:fidelity}). The second layer
(\S\ref{app:faithful}) proves this assumption for all 147 schemas.

\parh{Roadmap.} The appendix proceeds in six steps.
\begin{itemize}
\item \S\ref{app:values} defines when a runtime value inhabits a type,
  which failures count as compatibility errors, and when a schema is faithful.
\item \S\ref{app:frontier} defines executions and the frontier, and
  states the conformance hypothesis of Theorem~\ref{thm:soundness}.
\item \S\ref{app:lemmas} proves three lemmas that the two theorems use,
  and a fourth lemma showing that the solver decides well-typedness.
\item \S\ref{app:proof-soundness} proves Theorem~\ref{thm:soundness}.
\item \S\ref{app:proof-gradual} proves Theorem~\ref{thm:gradual}.
\item \S\ref{app:faithful} proves that the 147 schemas are faithful.
\end{itemize}

\parh{Notation.} Throughout, $G = (N, E, \mathit{op}, \mathit{wid})$ is
a normalized workflow (\S\ref{sec:calc:graphs}), $\mathcal{C}(G)$ is
its constraint system, and $\Sigma$ is the schema database. The schema
files carry some indices and side predicates ($\mathit{proto}$,
$\mathit{ctrl\_ok}$, $\mathit{iw}$, $\mathit{sz}$, $\mathit{zero\_ok}$)
that Figure~\ref{fig:type-syntax} elides, and the checker completes
each schema per node before using it. Convention~(iv) of
\S\ref{app:faithful} lists these indices, predicates, and completions.
The definitions below need only the fact that such a completion takes
place.

\subsection{Instrumented Values and Inhabitation}
\label{app:values}

Theorem~\ref{thm:soundness} claims that values inhabit their types. To
make this claim precise, we must say what a value is and when it
inhabits a type. Two of our indices, control-signal modality and, in
part, model family, have no native runtime representation. For
example, a depth map and a photograph are both plain RGB tensors.
Hence ``the value inhabits its type'' cannot be read off a raw tensor.
We therefore state the metatheory over an \emph{instrumented}
(ghost-annotated) semantics. This is a standard device:
instrumentation adds bookkeeping and does not change what the program
computes.

The four definitions of this section build on each other.
Definition~\ref{def:values} attaches ghosts to runtime values.
Definition~\ref{def:inhabitation} uses the ghosts to say when a value
inhabits a type. Definition~\ref{def:errors} fixes which failures
count as compatibility errors. Definition~\ref{def:fidelity} combines the three
to say when a schema is faithful. This is the assumption on which the
rest of the appendix builds.

\begin{definition}[Instrumented values]
\label{def:values}
An instrumented value is a pair $\langle v, g \rangle$ of a raw
runtime value $v$ (tensor, model object, conditioning list, \ldots)
and a ghost record $g$. The ghost records the following, depending on
the kind of value.
\begin{itemize}
\item \emph{Image.} A modality tag $g.\mathit{mod} \in \mathcal{S}
  \cup \{\ast\}$, where $\ast$ means ``unclassified'', e.g.\ a
  photograph loaded from disk.
\item \emph{Model, LoRA, VAE, CLIP, control network, or latent.} The
  family (resp.\ architecture) of the weights it was derived from. A
  model also records a quantization tag, and a control network a kind.
  A latent created without weights (an all-zero tensor) has the
  unclassified ghost $\ast$.
\item \emph{Conditioning.} The architectures of its producing
  encoders, of every attached control network (with its kind), of the
  VAE that such a network stores for encoding its hint, and of every
  concat or reference latent. It also records the width of every
  attached image embedding.
\item \emph{Guider.} The family of its model and the ghosts of its
  conditionings.
\end{itemize}
Ghosts are created at source nodes. A loader's ghost is the true
provenance of the file it loads, and \nodename{LoadImage} yields
$\mathit{mod} = \ast$. Every operator propagates ghosts according to
its actual dataflow. For example, a preprocessor stamps its output's
$\mathit{mod}$, a sampler's output latent inherits the model's
architecture, and a LoRA that reshapes the model's input projection
gives the patched model its own family.
\end{definition}

\begin{definition}[Inhabitation]
\label{def:inhabitation}
Let $\tau$ be a \emph{ground} type, i.e.\ all its dimension indices
are in $\mathbb{Z}$ and all its brand indices are ground. Inhabitation
$\langle v, g\rangle \models \tau$ holds when the raw geometry matches
the dimension indices and the ghost matches the brand indices. A
family ghost $f_0$ matches an architecture index $a$ iff $\archfn(f_0)
= a$, and a family index $f$ iff $f_0 = f$. The latent ghost $\ast$
matches every architecture index. For each type, this means the
following.
\begin{itemize}
\item \emph{Images.} An image of height $h$, width $w$, and $c$
  channels whose ghost $\mathit{mod}$ is $\ast$ or $m$ inhabits
  $\tycon{Image}(h,w,c,m)$. An unclassified image thus imposes no
  modality constraint, matching \S\ref{sec:calc:schemas}.
\item \emph{Latents.} A latent's height and width are the last two
  axes of its tensor. Its type carries a zero state that
  Figure~\ref{fig:type-syntax} elides.
  \begin{itemize}
  \item A latent inhabits $\tycon{Latent}(h,w,c,a,\brand{Dense})$ when
    its tensor has $c$ channels and size $h\times w$, and it carries
    no downscale-ratio tag other than architecture $a$'s own.
  \item An all-zero latent inhabits
    $\tycon{Latent}(h,w,c,a,\brand{Empty})$ when normalizing it for a
    model of architecture $a$ (\texttt{fix\_empty\_latent\_channels},
    Lemma~\ref{lem:f:empty}) yields $c$ channels and size $h\times w$.
  \end{itemize}
\item \emph{Conditionings.} A conditioning inhabits
  $\tycon{Cond}(a,p,k)$ when every architecture in its ghost is $a$,
  every attached image embedding has a width that models of
  architecture $a$ consume, and every attached control network has
  kind $k$. The protocol $p$ imposes nothing.
\item \emph{Guiders.} A guider inhabits $\tycon{Guider}(a)$ when its
  model's family $f$ has $\archfn(f) = a$, and each of its
  conditionings inhabits $\tycon{Cond}(a,p,k)$ for some $p$ and some
  $k$ with $\mathit{ctrl\_ok}(f,k)$.
\item \emph{Integers.} A value inhabits $\tycon{Int}(\nu)$ when the
  carried integer equals $\nu$.
\item \emph{Index-free types.} A value inhabits an index-free type
  such as $\tycon{Noise}$ or $\tycon{Sampler}$ when it carries the
  label and meets the interface that ComfyUI's built-in nodes assume of
  it. For example, a noise generator returns noise of its latent's
  shape, and an upscale model maps RGB images to RGB images.
\item \emph{The dynamic type.} Every value inhabits $\tyDyn$.
\end{itemize}
Finally, the absent value (e.g.\ the text encoder of a checkpoint that
has none) inhabits every type.
\end{definition}

\begin{definition}[Compatibility errors]
\label{def:errors}
A node's execution raises a \emph{compatibility error} if one of the following
four events occurs.
\begin{itemize}
\item \emph{Shape/channel error.} The implementation aborts because
  the height, width, or channel count of a tensor argument violates
  the operator's precondition. Batch extents are out of scope
  (\S\ref{sec:calc:judgment}).
\item \emph{Architecture/family/quantization error.} The
  implementation aborts because the geometry or the ghost family of a
  weight argument is incompatible. A weight patch (a LoRA or a merge)
  also raises this error if it skips keys, either because it follows
  another architecture's key map or because a key's tensor shape
  disagrees with the weight it addresses.
\item \emph{Modality error} (in the instrumented semantics only). A
  control network is applied to a hint image whose ghost modality is
  neither $\ast$ nor in the network's ghost modality class.
\item {\sloppy
  \emph{Latent-architecture error} (in the instrumented semantics
  only). \nodename{VAEDecode} or \nodename{VAEDecodeTiled} decodes, or
  a core sampler (Proposition~\ref{prop:f:sampler}) samples, a latent
  whose ghost architecture is neither $\ast$ nor the architecture of the
  VAE or of the model.\par}
\end{itemize}
Other aborts, such as a missing optional input or an absent component,
are not compatibility errors.
\end{definition}

\parh{Two remarks on Definition~\ref{def:errors}.} First, the
instrumented semantics raises a skipped key or a shape mismatch of a
weight patch at the node that adds the patch. ComfyUI itself detects
such a mismatch only later, when it applies the patch during sampling
(\texttt{weight\_adapter/lora.py:284--285}). The instrumented semantics
checks it eagerly, so that the error is attributed to the node that
adds the patch. Second, modality and latent-architecture errors are
silent in the raw semantics, which is the problem this paper
addresses. For example, an SDXL VAE decodes an SD1.5 latent without an
exception but corrupts the colors (Figure~\ref{fig:silent-failures}c).
In the instrumented semantics these mismatches are \emph{defined}
events, and this is what makes the claim of
Theorem~\ref{thm:soundness} about them meaningful.

\parh{Towards faithfulness.} Informally, a schema is faithful if
inputs that inhabit its input types make the node run without a type
error and produce outputs that inhabit its output types. Making this
precise needs two ingredients. The first is the schema in the form the
checker uses for a particular node, together with the notions of
designation and freshness (Definition~\ref{def:designation}). The
second is a ground instantiation of that schema
(Definition~\ref{def:fidelity}).

\begin{definition}[Instantiated schema, designation, freshness]
\label{def:designation}
Let $\Sigma(\mathit{op}) = \forall \bar\alpha, \bar x.\ (\tau_1,
\ldots, \tau_k) \xrightarrow{\varphi} (\tau'_1, \ldots, \tau'_m)$ be
the schema of a node $n$ \emph{as the checker instantiates it for
$n$}. This is the schema file's entry after the widget-dependent
completions of convention~(iv) of \S\ref{app:faithful}, namely guarded
bindings and constraints, and the converter rule for LoRA files. The
quantifier list $\bar\alpha$ includes every brand variable that a
completion introduces. Widgets of $n$ designate variables as follows.
\begin{itemize}
\item A literal widget \emph{designates} its variable when the literal
  meets the widget's guard (\S\ref{app:faithful}).
\item A model-file widget designates its brand variables.
\item A widget wired to a producer designates nothing.
\end{itemize}
A variable is \emph{fresh} at $n$ if it occurs in no input type and no
widget of $n$ designates it.
\end{definition}

Fresh variables are the indices that a schema leaves open. For
example, when a node resizes an image to a size given by a wired
widget, the output size is fresh.

\begin{definition}[Ground instantiation; faithful schema]
\label{def:fidelity}
Let $\Sigma(\mathit{op})$ be instantiated for $n$ as in
Definition~\ref{def:designation}. A \emph{ground instantiation} is a
map $\sigma$ from $\bar\alpha, \bar x$ to ground brands and integers
such that $\sigma\varphi$ holds and $\sigma$ evaluates every
interpreted application ($\archfn$, $\chfn$, $\mathit{sf}$, and the
side predicates) according to the derivation forest. It \emph{agrees
with the widget literals} of $n$ if it maps every designated variable
to its literal, where the literal of a model-file widget is the ghost
of the file it names.

The schema is \emph{faithful} if the following holds for every ground
instantiation $\sigma$ that agrees with the widget literals of $n$ and
for all instrumented values $w_i \models \sigma\tau_i$ on the declared
inputs, with undeclared inputs unconnected. The real implementation of
$\mathit{op}$, applied to the raw parts of the $w_i$, satisfies two
conditions.
\begin{enumerate}[label=(\alph*)]
\item It raises no compatibility error (Definition~\ref{def:errors}).
\item If it returns outputs $w'_1, \ldots, w'_m$, then some $\sigma'$
  that agrees with $\sigma$ except on fresh variables satisfies $w'_j
  \models \sigma'\tau'_j$ for every $j$.
\end{enumerate}
\end{definition}

\subsection{Executions, the Frontier, and Conformance}
\label{app:frontier}

Theorem~\ref{thm:soundness} holds only for executions in which the
values that the checker cannot see conform to their types. This
section makes that hypothesis precise. It defines an execution, the
frontier of positions whose values the checker cannot see, and
conformance at the frontier.

\begin{definition}[Execution]
\label{def:execution}
An execution $E$ of $G$ evaluates nodes in a topological order. It
applies each node's implementation to the instrumented values on the
node's in-edges and to its widget literals. It stops when all nodes
have executed or when some node raises an error, which ends the
execution. We write $w_e$ for the value carried by edge $e$ in $E$.
\end{definition}

\begin{definition}[Frontier]
\label{def:frontier}
The \emph{frontier} $F(G)$ is the set of positions whose values the
static system cannot relate to anything upstream. It consists of two
groups. The \emph{graph inputs} are the following positions.
\begin{enumerate}[label=(\roman*)]
\item Output ports of runtime-determined source nodes (e.g.\
  \nodename{LoadImage}).
\item Output ports of nodes with schemas whose instantiated type
  mentions a variable that is fresh at that node
  (Definition~\ref{def:designation}). This includes a variable that the
  checker introduces when instantiating the node.
\item Model-file widgets whose brand variables the model database does
  not all resolve to ground brands. The observed value at such a
  widget is the ghost of the loaded file.
\end{enumerate}
The \emph{boundary} consists of the remaining positions.
\begin{enumerate}[label=(\roman*),resume]
\item Input ports of nodes with schemas fed by a $\tyDyn$-typed port,
  i.e.\ by a port of a $\tyDyn$-typed node or by an output that its
  node's schema does not declare.
\end{enumerate}
\end{definition}

\begin{definition}[Frontier conformance]
\label{def:conformance}
An execution $E$ is \emph{frontier-conformant} with witness $\theta$
if $\theta$ is an \emph{integral} satisfying assignment of
$\mathcal{C}(G)$ that matches every observed frontier value. Integral
means that all dimension variables are mapped to $\mathbb{Z}$, all
brand variables to ground brands, and all interpreted applications are
evaluated. Matching means two things.
\begin{itemize}
\item For every frontier position $p \in F(G)$ with instantiated
  static type $\tau_p$, the observed value satisfies $w_p \models
  \theta\tau_p$.
\item At a model-file widget, the file's ghost equals $\theta$ of the
  widget's brand variables.
\end{itemize}
Input-conformance (\S\ref{sec:calc:judgment}) is the special case in
which the boundary positions are absent or ignored.
\end{definition}

\parh{Why an integral witness.} Definition~\ref{def:conformance}
requires an integral witness, although the checker accepts under the
rational relaxation of equalities (\S\ref{sec:calc:judgment}). The two
differ when a division such as $h/\mathit{sf}(a)$ is inexact. The real
node then floors the result, and the rational equality does not encode
the floor. The runtime guarantee therefore covers only executions
whose observed sizes admit an integer witness, i.e.\ executions in
which every such division is exact. Executions with an inexact
division fall outside the theorem's hypotheses. Covering them would
require schemas that model the cropping faithfully.

\subsection{Lemmas and Solver Correctness}
\label{app:lemmas}

The proofs of \S\ref{app:proof-soundness} and
\S\ref{app:proof-gradual} rely on three lemmas about the constraint
system.
\begin{itemize}
\item Lemma~\ref{lem:inst} turns a solution of $\mathcal{C}(G)$ into a
  ground instantiation of each node's schema, as
  Definition~\ref{def:fidelity} requires.
\item Lemma~\ref{lem:edge} carries inhabitation across an edge.
\item Lemma~\ref{lem:mono} shows that erasing a schema only removes
  constraints.
\end{itemize}
The first two serve Theorem~\ref{thm:soundness}, and the third serves
Theorem~\ref{thm:gradual}(i). A fourth lemma, Lemma~\ref{lem:decision},
connects the semantic notion of well-typedness in the theorems to what
the solver actually accepts.

\begin{lemma}[Instantiation]
\label{lem:inst}
Let $\theta \models \mathcal{C}(G)$ and let $n$ be a node with a schema
with fresh renaming $\theta_n$ and widget substitution $\theta_n^w$
(\textsc{Node} rule, Figure~\ref{fig:typing-rules}). Let $\sigma_n :=
\theta \circ \theta_n^w \theta_n$, restricted to the variables of
$\Sigma(\mathit{op}(n))$ as instantiated for $n$
(Definition~\ref{def:designation}). Then $\sigma_n$ is a ground
instantiation of $\Sigma(\mathit{op}(n))$. It agrees with $n$'s
designating literal widgets and with the ghosts of the model files that
the database resolves to ground brands.
\end{lemma}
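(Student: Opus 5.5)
The plan is to unfold the definitions and check the three requirements of Definition~\ref{def:fidelity} one at a time: totality into ground values, satisfaction of the side condition, and evaluation of interpreted applications. Agreement with widgets is checked last. Since $\theta \models \mathcal{C}(G)$ is understood as an integral assignment (brand variables ground, dimension variables integral, interpreted applications evaluated, as in Definition~\ref{def:conformance}), $\sigma_n = \theta \circ \theta_n^w\theta_n$ is a map from the schema's variables to ground brands and integers. Every variable of $\Sigma(\mathit{op}(n))$, including those a completion introduces (Definition~\ref{def:designation}), is renamed by $\theta_n$ to a fresh variable or replaced by $\theta_n^w$ with a literal. In both cases $\theta$ sends the result to a ground value, because all these variables occur in $\mathcal{C}(G)$ or are trivially assigned.

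The side condition comes next. By the \textsc{Graph} rule, $\mathcal{C}(G)$ contains the conjunct $\varphi_n = \theta_n^w\theta_n\varphi$. Hence $\theta \models \theta_n^w\theta_n\varphi$, and by the substitution lemma for constraint satisfaction (composition of substitutions commutes with evaluation of atoms) $\sigma_n\varphi$ holds. This is routine for brand equations, linear equalities and divisibility atoms, because each atom's truth depends only on the values assigned to its variables.

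The interpreted applications $\archfn$, $\chfn$, $\mathit{sf}$ and the side predicates need a separate argument. Because $\theta$ evaluates every interpreted application according to the derivation forest, and $\sigma_n$ only precomposes a renaming and literal substitution, $\sigma_n$ evaluates them in the same way. Guarded widget completions must be consistent with this. Guards on widget literals are resolved at instantiation, so the completed schema the checker used coincides with the one Lemma~\ref{lem:inst} refers to.

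Agreement with widgets is where I expect the main obstacle. For a designating literal widget, $\theta_n^w$ substitutes the literal itself, so $\sigma_n$ maps the variable to it regardless of $\theta$. For a model-file widget that the database resolves to ground brands, the checker adds the brand equations, or substitutes the brands directly, equating the widget's brand variables with the database brands. $\theta$ must respect these, so $\sigma_n$ maps them to the database brands. These equal the file's ghost only under the standing assumption that the database identifies files correctly. I would make that assumption explicit here, since Definition~\ref{def:fidelity} takes the literal of a model-file widget to be the ghost. Some care is also needed with guarded literals that fail their guard. They designate nothing by Definition~\ref{def:designation}, so no agreement is required for them, and that case should be dispatched explicitly.
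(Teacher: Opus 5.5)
Your proposal is correct and follows essentially the same route as the paper's proof. Both read four facts off $\theta \models \mathcal{C}(G)$: the node's instantiated side condition, its widget bindings, its database equations, and its interpreted-function links. Both then invoke the database premise (convention~(iii)) to identify resolved brands with the file's ghost, and get groundness from the integrality of $\theta$. The only cosmetic difference is that the paper treats designating literals and the $\chfn$/$\mathit{sf}$ links as explicit equations in $\mathcal{C}(G)$. You instead obtain them from the substitution $\theta_n^w$ and from the definition of an integral assignment, which comes to the same thing.
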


\begin{proof}
For $n$, $\mathcal{C}(G)$ contains four kinds of constraints: the
instantiated side condition $\theta_n^w\theta_n\varphi$, the
widget-binding equations $x = \nu$ for each literal that meets its
guard, the database equations of its model-file widgets, and the
interpreted-function links, i.e.\ the equations for $\chfn$ and
$\mathit{sf}$ generated at instantiation. Since $\theta$ satisfies all
of them, four properties follow.
\begin{itemize}
\item $\sigma_n\varphi$ holds.
\item The designated variables of $\sigma_n$ equal the literals.
\item A variable that the database binds to a ground brand equals that
  brand. By the database premise (convention~(iii) of
  \S\ref{app:faithful}), this brand is the file's ghost.
\item Every interpreted application evaluates consistently with the
  forest. The solver releases $\chfn$ and $\mathit{sf}$ equations only
  at their forest values, and it enforces $\archfn$ facts during
  unification.
\end{itemize}
Finally, $\sigma_n$ is ground: by integrality of $\theta$, brand
variables receive ground brands and dimension variables receive
integers.
\end{proof}

\begin{lemma}[Edge transfer]
\label{lem:edge}
Let $\theta \models \mathcal{C}(G)$ and let $e = ((n,i),(n',j))$ be a
fully typed edge with producer output type $\tau$ and consumer input
type $\tau'$. Then $\theta\tau = \theta\tau'$. Hence
$w \models \theta\tau$ implies $w \models \theta\tau'$.
\end{lemma}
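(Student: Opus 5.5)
The plan is to use the definition of $\mathcal{C}(G)$ directly. For a fully typed edge $e = ((n,i),(n',j))$, neither endpoint type is $\tyDyn$. The premise of \textsc{Edge-Grad} (Figure~\ref{fig:consistency}) then requires $\tau \consistent \tau'$ through the constructor rule, so $\tau = \mathsf{C}(\bar\iota)$ and $\tau' = \mathsf{C}(\bar\iota')$ share a constructor $\mathsf{C}$. The nominal check of \S\ref{sec:calc:graphs} agrees with this, since equal labels mean equal constructors. The rule then contributes the conjunct $\tau \doteq \tau'$ to $\mathcal{C}(G)$. By the caption of Figure~\ref{fig:typing-rules}, this conjunct unfolds to the atoms $\iota_k = \iota'_k$, one per index position. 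Each atom is a brand equation or a linear dimension equation. Zero-state, protocol and control-kind indices are elided in Figure~\ref{fig:type-syntax}, and they are treated as brand positions in the same way.

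Next I would use $\theta \models \mathcal{C}(G)$ position by position to get $\theta\iota_k = \theta\iota'_k$. This needs care in two cases. For a brand position, satisfaction means both sides unify to the same ground brand. If one side is an interpreted application such as $\archfn(f)$, or an expression $\chfn(a)$ or $h/\mathit{sf}(a)$, then $\theta$ evaluates it according to the forest. That is the sense of ``integral'' in Definition~\ref{def:conformance} and of the evaluation convention in Lemma~\ref{lem:inst}. So both sides denote the same ground value. For a dimension position, the equation holds in $\mathbb{Z}$ for the integral $\theta$. The relaxation in $\mathcal{H}$ does not matter here, because the lemma is stated for a genuine solution $\theta$ and not for mere acceptance by the solver. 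Since substitution commutes with constructors, it follows that $\theta\tau = \mathsf{C}(\theta\bar\iota) = \mathsf{C}(\theta\bar\iota') = \theta\tau'$. These are syntactically equal ground types, after evaluating interpreted functions.

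The second claim then follows immediately: inhabitation (Definition~\ref{def:inhabitation}) is a relation between a value and a ground type, so $w \models \theta\tau$ is literally $w \models \theta\tau'$.

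The main obstacle I expect is the interpreted functions. Brand equality ``after unification'' must coincide with equality of the evaluated ground types that inhabitation inspects. An example is an architecture index $\archfn(f)$ on one side and a plain architecture variable $a$ on the other. I would handle this by fixing, once, that $\theta\iota$ means the fully evaluated ground index, and by checking that every index expression in Figure~\ref{fig:type-syntax} has a unique evaluation under an integral $\theta$. Given that, the argument is a routine structural induction over index expressions.
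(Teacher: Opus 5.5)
Your proposal follows the paper's proof. Neither end of the edge is $\tyDyn$, so \textsc{Edge-Grad} records $\tau \doteq \tau'$ with equal constructors, and $\mathcal{C}(G)$ then contains one brand or dimension equation per index position. Since $\theta$ satisfies all of them, the ground types coincide, and the inhabitation claim follows immediately. The only divergence is how you obtain equal constructors. The paper argues that otherwise a clash is recorded, so $\mathcal{C}(G)$ is unsatisfiable and no such $\theta$ exists. That route is safer than your side remark that equal nominal labels force equal constructors: the paper does not rely on that remark, and the checker does report constructor clashes on workflows that pass ComfyUI's nominal check (T03 and T04), so you should drop it.
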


\begin{proof}
Neither end of $e$ is $\tyDyn$. \textsc{Edge-Grad}
(Figure~\ref{fig:consistency}) therefore records the structural
equation $\tau \doteq \tau'$. The constructors are equal, since
otherwise $\tau \not\consistent \tau'$, a clash was recorded, and $G$
is not well-typed. Moreover, $\mathcal{C}(G)$ contains an index-wise equation
for every position, namely brand equations and dimension equations as
in the signature of Figure~\ref{fig:type-syntax}. $\theta$ satisfies
each of them, so the ground types coincide index-wise, i.e.\
$\theta\tau = \theta\tau'$.
\end{proof}

\begin{lemma}[Erasure monotonicity]
\label{lem:mono}
Let $G'$ be $G$ with one operator's schema replaced by the $\tyDyn$
default. Then, up to the shared fresh-variable namespace,
$\mathcal{C}(G') \subseteq \mathcal{C}(G)$.
\end{lemma}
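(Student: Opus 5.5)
I would prove Lemma~\ref{lem:mono} by comparing, piece by piece, the conjuncts that the \textsc{Graph} rule assembles into $\mathcal{C}(G)$ and $\mathcal{C}(G')$. Fix the erased node $n_0$; if several nodes share the operator, repeat the argument for each. Choose the fresh renamings $\theta_n$ for all $n \neq n_0$ identically in $G$ and $G'$; this is the ``shared fresh-variable namespace'' of the statement. Then $\mathcal{C}(G) = \bigwedge_{e} (\tau_e \doteq \tau'_e) \wedge \bigwedge_{n} \varphi_n$, together with the widget-binding, database, and interpreted-function equations that Lemma~\ref{lem:inst} lists per node. I would split these conjuncts into three groups: those contributed by nodes $n \neq n_0$, those from edges not incident to $n_0$, and those from $n_0$ and its incident edges.

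\emph{Step 1: the contributions of other nodes and of edges away from $n_0$ are identical.} The \textsc{Node} rule is local. The instantiated type and side condition of $n \neq n_0$ depend only on $\Sigma(\mathit{op}(n))$, $\theta_n$, and $\mathit{wid}(n)$, none of which change. Likewise, the widget guards, model-file database equations, and $\chfn$/$\mathit{sf}$ links generated at instantiation are per-node. An edge $e$ not touching $n_0$ has the same endpoint types in both graphs, so \textsc{Edge-Grad} emits the same equation in both, or $\top$ in both. One subtlety needs checking: whether the consistency verdict on an edge can depend on anything other than its two endpoint types. By Figure~\ref{fig:consistency} it cannot.

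\emph{Step 2: the contributions of $n_0$ shrink to $\top$.} In $G'$, $n_0$ has the default schema $(\tyDyn,\ldots,\tyDyn) \to (\tyDyn,\ldots,\tyDyn)$ with trivial side condition. It therefore contributes no $\varphi_{n_0}$, binds no widget variables, and emits no database or interpreted-function equations. Every edge incident to $n_0$ now has a $\tyDyn$ end, so \textsc{Edge-Grad} reads its equation as $\top$. By contrast, in $G$ those edges contribute either genuine equations or also $\top$, when their other end is already $\tyDyn$. The nominal labels are unchanged, so the label check, which is outside $\mathcal{C}$, is unaffected.

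Combining the two steps, every conjunct of $\mathcal{C}(G')$ is either a conjunct of $\mathcal{C}(G)$ or $\top$, which gives $\mathcal{C}(G') \subseteq \mathcal{C}(G)$ read as sets of atoms modulo $\top$. I expect the main obstacle to be making Step~1 airtight against the checker's \emph{completions} (convention~(iv)). These add guarded bindings, extra brand variables, and converter rules per node. I must confirm that they depend only on the node's own widgets and schema, and never on neighbouring nodes' solved types. I would first argue this from Definition~\ref{def:designation}, where completions are defined ``as the checker instantiates it for $n$''. If a completion were neighbour-dependent, I would fall back to showing that it depends only on the neighbour's widgets, which erasure leaves unchanged.
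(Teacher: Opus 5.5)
Your proposal is correct and follows essentially the paper's argument. Per-node fresh renamings leave every other node's side condition, widget bindings, and interpreted-function links unchanged, while the erased node's own constraints vanish and each incident edge contributes only $\top$ under \textsc{Edge-Grad}; your extra checks (that completions are per-node, and that several nodes sharing the operator are handled by iterating the inclusion) are sound refinements the paper leaves implicit.
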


\begin{proof}
Fresh renamings are per-node, so the instantiation of every other node
is unchanged. Their side conditions, widget bindings, and
interpreted-function links therefore appear in both systems. The erased
node $n_0$ contributes to $\mathcal{C}(G)$ its own side condition,
widget bindings, links, and, for each incident edge, structural
equations. In $\mathcal{C}(G')$ all of these vanish. The ports of
$n_0$ are $\tyDyn$, so by \textsc{Edge-Grad}
(Figure~\ref{fig:consistency}) each incident edge contributes no
constraint, and the $\tyDyn$ default has no side condition. Nothing
else changes, and no new constraint is created.
\end{proof}

\parh{Solver correctness.} Well-typedness $\wf{G}$ is defined
semantically, as satisfiability of $\mathcal{C}(G)$ under the hybrid
interpretation $\mathcal{H}$ (Figure~\ref{fig:typing-rules}).
Precisely, $\mathcal{H} \models \mathcal{C}(G)$ holds if the brand
equations have a unifier and some assignment of rationals to the
dimension variables satisfies every equality and divisibility atom,
where two sets of variables must receive integers: the
\emph{integer variables}, which occur in some divisibility atom, and
the \emph{fixed variables}, whose value the equalities alone
determine. Every integral satisfying assignment of $\mathcal{C}(G)$ is
such a model, so $\mathcal{H}$ relaxes integer satisfiability. The
converse fails: $2x = 2y + 1$ is satisfiable under $\mathcal{H}$ but
has no integer solution. Deleting constraints can only shrink both
sets, which is why $\mathcal{H}$-satisfiability survives deletion
(\S\ref{app:proof-gradual}). The checker, however, runs a concrete
procedure. The next lemma relates the two.

\begin{lemma}[Decision procedure]
\label{lem:decision}
The procedure of \S\ref{sec:impl:solve} never rejects a system that
is satisfiable under the hybrid interpretation of
\S\ref{sec:calc:judgment}. Conversely, acceptance implies
satisfiability whenever the side predicates have ground family
arguments.
\end{lemma}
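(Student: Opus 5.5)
The plan is to treat the procedure as a pipeline of stages and prove, stage by stage, two facts. First, no stage discards an $\mathcal{H}$-model (soundness of rejection). Second, when every stage succeeds, the final solved form yields an $\mathcal{H}$-model (completeness of acceptance), under the ground-argument hypothesis. The stages, in the order of \S\ref{sec:calc:judgment} and \S\ref{sec:impl:solve}, are:
\begin{itemize}
\item brand unification by union-find, with $\archfn$ facts enforced during unification;
\item release of $\chfn$ and $\mathit{sf}$ equations once their brand argument is ground;
\item Gaussian elimination over $\mathbb{Q}$, pivoting on rational variables first and adding an integrality atom whenever an integer variable is eliminated;
\item gcd feasibility checks on each atom;
\item CRT merging of atoms on a common variable;
\item integer elimination on multi-variable components;
\item rejection of integer or fixed variables forced to non-integer constants.
\end{itemize}
For each stage I will show that its output system has exactly the same $\mathcal{H}$-models as its input, restricted to the surviving variables. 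Each stage is then an equivalence-preserving rewrite, and a rejection occurs only at a syntactically unsatisfiable residue.

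For the \emph{never rejects} direction, I argue stage by stage.
\begin{itemize}
\item \emph{Unification.} A most general unifier exists iff the brand equations have any ground solution, because brands are opaque constants plus variables. The $\archfn$ links are functional facts about the derivation forest, so any model respects them. Hence a unification failure has no model.
\item \emph{Elimination.} Gaussian elimination over $\mathbb{Q}$ preserves the rational solution set. When an integer variable $x$ is expressed as $x = \sum_i c_i y_i + c_0$, requiring integrality of the right-hand side is exactly what $\mathcal{H}$ demands, since $x$ must be integral. So the added atom is implied, not extra.
\item \emph{gcd and CRT.} The gcd test on $k \mid \sum a_i y_i + b$ with integer $y_i$ is the standard necessary and sufficient test. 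CRT merging of congruences on one variable is an equivalence.
\item \emph{Constants.} Forcing a fixed variable to a non-integer constant contradicts the integrality the definition of $\mathcal{H}$ requires of fixed variables.
\end{itemize}
Each rejection point therefore witnesses unsatisfiability.

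For the converse, I build a model from the accepted solved form. Take the most general unifier and ground each remaining brand variable arbitrarily. Here the hypothesis enters: because side predicates have ground family arguments, they were already evaluated to true, so arbitrary grounding of the other brand variables cannot falsify them. Then set the free rational variables, choosing integer values for the integer variables by back-substitution through the triangular form produced by integer elimination. That form exists because each component passed gcd and CRT checks. Finally, read off the pivot variables. Fixed variables were checked to be integral constants, and the rational pivots need no integrality.

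The main obstacle is the multi-variable integer components. I must argue that the integer elimination (Hermite-style, or repeated gcd reduction over the atoms $k \mid e$ rewritten as $e = k z$ with fresh integer $z$) is complete. I would first reduce all divisibility atoms to linear Diophantine equations with fresh integer unknowns. I would then invoke the classical fact that a system $Ax = b$ over $\mathbb{Z}$ is solvable iff Smith or Hermite normal-form elimination succeeds, matching the implemented steps to that algorithm.

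A subtler point is the interaction between rational pivots and integer variables. Pivoting on rational variables first means any equation still containing a rational variable can absorb an arbitrary value, so only purely integer residues constrain integrality. I will state this as a small invariant of the elimination order.
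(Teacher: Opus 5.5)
Your stage-by-stage plan follows the paper's proof. The paper uses the same stages: union-find with pending $\archfn$ facts for the brands; Gaussian elimination over $\mathbb{Q}$ that pivots on rational variables first and adds an integrality atom $1 \mid e$ whenever it must eliminate an integer variable; then the per-atom gcd criterion, generalized CRT on univariate components, and unimodular (Hermite-style) elimination with fresh integer multipliers on multivariate components. Your never-rejects direction matches the paper's argument.

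The step that fails is in the converse, where you ground each remaining brand variable arbitrarily. The hypothesis grounds only the side predicates ($\mathit{proto}$, $\mathit{ctrl\_ok}$, $\mathit{iw}$, $\mathit{sz}$, $\mathit{zero\_ok}$). It does not cover $\archfn$, $\chfn$, and $\mathit{sf}$, which are interpreted functions whose links belong to $\mathcal{C}(G)$.

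\begin{itemize}
\item A family variable $f$ that never becomes ground can still carry a pending fact such as $\archfn(f) = \brand{SD15}$. This happens, for example, with an unresolved checkpoint used only for its VAE when that VAE decodes an SD1.5 latent.
\item A free architecture $a$ can have $\chfn(a)$ equated with a channel index that another schema fixes to $16$. The solver never releases that link, so the dimension solution you read off was computed without it.
\end{itemize}

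Grounding $f$ or $a$ to $\brand{SDXL}$, say, falsifies the $\archfn$ fact or gives $\chfn(a) = 4 \neq 16$. So the assignment you build is not a model, even though no side predicate is involved.

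The paper's proof does not ground. Acceptance yields a most general unifier, which is what $\mathcal{H}$ requires of the brand equations. An $\archfn$ fact on a never-grounded variable stays pending, which the paper notes can only fail to reject. Keep the unifier itself as the brand part of your witness. A ground witness would need a grounding that respects every pending link, and such a choice need not exist without an extra hypothesis that all interpreted applications become ground. For the same reason, your claim that every stage preserves exactly the same $\mathcal{H}$-models is too strong at the brand stage.

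Two smaller repairs:
\begin{itemize}
\item After elimination the residual atoms have rational coefficients, so you must clear denominators before the gcd test: $k \mid E/L$ becomes $E \equiv 0 \pmod{kL}$.
\item The triangular form you back-substitute through is solvable because the Hermite elimination succeeds. The per-atom gcd and pairwise CRT checks do not decide multivariate components, so they cannot be the justification.
\end{itemize}
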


\begin{proof}
The procedure handles brands, equalities, and congruences in turn. We
show for each part that it never rejects a satisfiable system, and
state when its acceptance is exact.

\emph{Brands.} Syntactic unification with union-find is sound and
complete for the equational part. The interpreted layer only
\emph{adds} ground facts. $\archfn$ of a grounded family is a total
forest lookup. Pending $\archfn$ facts are carried across
variable-variable merges: they migrate to the representative and are
checked on grounding. A clash between two ground brands is a genuine
inconsistency. Hence rejection at this stage implies unsatisfiability,
and acceptance yields a most general unifier of the syntactic brand
equations. An $\archfn$ fact on a variable that never becomes ground
stays pending and is accepted, which can only fail to reject. The side
predicates
($\mathit{proto}$, $\mathit{ctrl\_ok}$, $\mathit{iw}$, $\mathit{sz}$,
$\mathit{zero\_ok}$) are evaluated once their family arguments are
ground. Until then they are accepted, which can only fail to reject.

\emph{Equalities.} Gaussian elimination over $\mathbb{Q}$ is sound
and complete for the equalities. It pivots on a rational variable
whenever the equation has one. When it must eliminate an integer
variable $x$ as $x = e$, it adds the \emph{integrality atom}
$1 \mid e$, which the congruence stage decides with the other atoms.
Without these atoms, elimination would let an integer variable take a
fractional value, and the verdict would depend on the pivot order. For
example, pivoting on $x$ in $2x = y \wedge 2 \mid 2x{+}1$ would leave
the satisfiable atom $2 \mid y{+}1$. With the atoms, the remaining
constraints are congruences over free variables that are integers
exactly when the theory requires it. A variable forced to a
non-integer constant is a fixed variable, so rejecting it is part of
the theory.

\emph{Congruences.} Each residual atom $k \mid e$ with rational
coefficients is normalized by clearing denominators. Writing $e =
E/L$ with $E$ integer-coefficiented, the atom becomes $E \equiv 0 \
(\mathrm{mod}\ kL)$. The two conditions are equivalent for
integer-valued variables. Per-atom feasibility is the standard
linear-congruence criterion $\gcd(\text{coefficients}, kL) \mid
\text{constant}$. For the joint check, we partition the atoms into
components by shared variables. Distinct components constrain disjoint
variables, so their satisfying assignments merge. Two kinds of
component remain.
\begin{itemize}
\item \emph{Univariate components.} Every atom of such a component is
  in the same variable $x$. Each atom $ax + c \equiv 0\
  (\mathrm{mod}\,M)$ either is a tautology, is infeasible, or
  determines a residue class $x \equiv r \ (\mathrm{mod}\ m)$ with $m
  = M/\gcd(a,M)$ and $r = -(c/g)\,(a/g)^{-1} \bmod m$. Two classes
  $(r_1,m_1)$ and $(r_2,m_2)$ intersect iff $\gcd(m_1,m_2) \mid r_2 -
  r_1$. In that case the intersection is again a residue class, modulo
  $\mathrm{lcm}(m_1,m_2)$ (generalized CRT). Folding the component's
  atoms therefore decides it exactly.
\item \emph{Multivariate components.} Each congruence becomes an
  equation with a fresh integer multiplier. Unimodular column
  operations (extended Euclid on each row) bring the system to echelon
  form while preserving integer solvability, and forward substitution
  then decides it exactly.
\end{itemize}
We additionally validate the implementation differentially against a
Z3~\cite{demoura2008} encoding of the same theory. On 18{,}000 random systems, the two
agree on every verdict, and the verdicts depend neither on the order of
the constraints nor, for accepted systems, on deleting any one of
them.
\end{proof}

\parh{Consequence.} Lemma~\ref{lem:decision} has two consequences for
the checker. First, the solver never rejects a well-typed graph.
Second, the solver \emph{decides} well-typedness whenever the side
predicates' family arguments are ground. It needs no SMT back end for
this.

\subsection{Proof of Theorem~\ref{thm:soundness}}
\label{app:proof-soundness}

We restate Theorem~\ref{thm:soundness}. Let $G$ be well-typed with all
schemas faithful, and let $E$ be a frontier-conformant execution of $G$
with witness $\theta$. The theorem makes three claims.
\begin{enumerate}[label=(\alph*)]
\item Every node with a schema that executes in $E$ raises no compatibility error
  (Definition~\ref{def:errors}).
\item Every value that such a node produces inhabits $\theta\tau'$ for
  its solved output type $\tau'$.
\item Hence the first compatibility error in $E$, if any, occurs at a
  $\tyDyn$-typed node.
\end{enumerate}
The proof also shows that every value on a fully typed edge inhabits
the edge's solved type under $\theta$.

\begin{proof}
\emph{Idea.} We induct over the nodes in the order in which $E$
evaluates them. Every input of a node with a schema inhabits its type,
either by the induction hypothesis and Lemma~\ref{lem:edge}, or by
conformance at the frontier. The node's widgets and side condition
agree with $\theta$ by Lemma~\ref{lem:inst}. Faithfulness of the
node's schema then yields error-free execution and well-typed outputs.

\emph{Invariant.} Let $n_1, \ldots, n_N$ be the topological order in
which $E$ evaluates $G$. We prove by strong induction on $t$ the
invariant $P(t)$: \emph{if $n_t$ has a schema and $E$ reaches $n_t$,
then $n_t$ raises no compatibility error and, if it returns, the value it
produces on every output port $i$ with instantiated type $\tau'_i$
satisfies $w \models \theta\tau'_i$.} Assume $P(s)$ for all $s < t$ and
suppose that $E$ reaches the node with a schema $n_t$.

\emph{Inputs.} Consider any input port $j$ of $n_t$ with instantiated
type $\tau_j$ and an incoming edge from producer $p$. Two cases arise.
\begin{itemize}
\item \emph{Case 1: $p$ has a schema that declares the output.} Since
  $E$ reached $n_t$, $p$ executed earlier without error. By the
  induction hypothesis, its output value $w$ satisfies $w \models
  \theta\tau$ for the producer's instantiated output type $\tau$. The
  edge is fully typed, so Lemma~\ref{lem:edge} gives $\theta\tau =
  \theta\tau_j$ and hence $w \models \theta\tau_j$.
\item \emph{Case 2: $p$ is $\tyDyn$-typed, or the port is fed by an
  output that $p$'s schema does not declare.} Then the position is in
  the boundary of $F(G)$ (clause~(iv) of Definition~\ref{def:frontier}),
  and Definition~\ref{def:conformance} gives $w \models \theta\tau_j$
  directly.
\end{itemize}
Undeclared inputs of $n_t$ are unconnected, because the checker types
a node with a connected undeclared input as $\tyDyn$.

\emph{Widgets and side condition.} By Lemma~\ref{lem:inst}, the ground
instantiation $\sigma_{n_t}$ agrees with the designating literal
widgets and with the model files that the database resolves. Frontier
conformance at positions~(iii) extends this agreement to the remaining
model files. A wired widget designates nothing, so it needs no
agreement. Thus all declared inputs of $n_t$ inhabit
$\sigma_{n_t}\tau_j = \theta\tau_j$. Moreover, $\sigma_{n_t}\varphi$
holds (Lemma~\ref{lem:inst}, using integrality of $\theta$ for the
divisibility atoms).

\emph{Applying faithfulness.} All hypotheses of
Definition~\ref{def:fidelity} now hold for $\sigma_{n_t}$. Faithfulness
of $\Sigma(\mathit{op}(n_t))$ therefore yields execution without a type
error. If $n_t$ returns, it also yields some $\sigma'$ that agrees with
$\sigma_{n_t}$ off the fresh variables and under which every output
inhabits $\sigma'\tau'_i$.

\emph{Outputs.} It remains to replace $\sigma'$ by $\theta$. For an
output port whose type has no fresh variable, $\sigma'\tau'_i =
\theta\tau'_i$. For an output port whose type has a fresh variable, the
port is in $F(G)$ by clause~(ii) of Definition~\ref{def:frontier}, and
Definition~\ref{def:conformance} gives $w \models \theta\tau'_i$
directly. This establishes $P(t)$.

\emph{Conclusion.} The invariant establishes claims~(a) and~(b) for
every node with a schema that executes. Together with
Lemma~\ref{lem:edge}, it also shows that every value on a fully typed
edge inhabits the edge's solved type under $\theta$. For claim~(c),
suppose that $E$ raises a compatibility error. The invariant rules out a
node with a schema as its source, so the node is $\tyDyn$-typed.
\end{proof}

\parh{Label-only schemas.} For a node without a hand-written schema,
the checker derives a schema from ComfyUI's port labels
(\texttt{/object\_info}) in which every index is a distinct fresh
variable. Such a schema claims no faithfulness, and the theorems treat
its node as $\tyDyn$. This is safe for acceptance. The schema
contributes only constructor equations on its edges and equations with
its own fresh variables. Replacing it by $\tyDyn$ therefore removes
constraints as in Lemma~\ref{lem:mono}, and acceptance carries over.

\parh{Scope of the guarantee.} The proof places no restriction on how
$\tyDyn$-typed nodes compute. Their outputs, however, must jointly conform at
the frontier for the guarantee to hold downstream. When they do not,
Theorem~\ref{thm:gradual}(ii), proved in \S\ref{app:proof-gradual}, locates the
incompatibility at the boundary. The CLI reports solved input equalities and
residual congruences, but does not enforce them at runtime. Because the
equalities use the rational relaxation, the printed conditions are necessary but
not sufficient for frontier conformance, which requires an integral satisfying
assignment.

\subsection{Proof of Theorem~\ref{thm:gradual}}
\label{app:proof-gradual}

We restate Theorem~\ref{thm:gradual}. It has two parts.
\begin{enumerate}[label=(\roman*)]
\item If $\wf{G}$ and $G'$ replaces the schema of any node of $G$ with
  the $\tyDyn$ default, then $\wf{G'}$.
\item Let $G$ be well-typed with all schemas faithful, and let the
  graph inputs (Definition~\ref{def:frontier}) of an execution $E$
  extend to an integral satisfying assignment of $\mathcal{C}(G)$ under
  which they inhabit their frontier types. If $E$ raises a compatibility error
  at a node with a schema, then the graph inputs and the boundary values
  do not jointly admit such an assignment.
\end{enumerate}
Part~(i) follows from Lemma~\ref{lem:mono}. Part~(ii) follows from
Theorem~\ref{thm:soundness} by contradiction.

\begin{proof}[Proof of (i)]
By Lemma~\ref{lem:mono}, $\mathcal{C}(G') \subseteq \mathcal{C}(G)$ up
to the shared fresh-variable namespace. Since $\wf{G}$, an
$\mathcal{H}$-model $\theta$ of $\mathcal{C}(G)$ exists. It satisfies
every constraint of $\mathcal{C}(G')$. It also meets the integrality
requirements of $\mathcal{C}(G')$. An integer variable of
$\mathcal{C}(G')$ occurs in one of its divisibility atoms, which is
also an atom of $\mathcal{C}(G)$. A fixed variable of $\mathcal{C}(G')$
is fixed to the same value by the larger system $\mathcal{C}(G)$.
Hence $\theta$ is an $\mathcal{H}$-model of $\mathcal{C}(G')$, i.e.\
$\wf{G'}$. Iterating the argument erases any set of schemas.
\end{proof}

\begin{proof}[Proof of (ii)]
\emph{Idea.} If the boundary values conformed as well, $E$ would be
frontier-conformant, and Theorem~\ref{thm:soundness} would rule out the
error.

Suppose, for contradiction, that the graph inputs and all boundary
values jointly admit an integral satisfying assignment $\theta$ of
$\mathcal{C}(G)$ under which they inhabit their frontier types. Then
$E$ is frontier-conformant with witness $\theta$, and
Theorem~\ref{thm:soundness} rules out a compatibility error at a node
with a schema. This contradicts the hypothesis, so no such joint assignment
exists. By hypothesis, the graph inputs alone admit a conforming
assignment. The inconsistency therefore involves at least one
observation at a boundary position. Shared downstream constraints may
relate several boundary values. The argument therefore establishes
their joint incompatibility without singling out one producer as the
fault.
\end{proof}

\subsection{Schema Faithfulness}
\label{app:faithful}
\sloppy

This section proves that all 147 hand-written schemas are faithful in
the sense of Definition~\ref{def:fidelity}. Of these, 114 are for
ComfyUI v0.27.0 built-in nodes and 33 are for community-pack nodes.
The proofs are relative to nine trusted lemmas about PyTorch operators
and about the model architectures as ComfyUI implements them
(Lemmas~\ref{lem:f:resize}--\ref{lem:f:clone}). They are also relative
to a premise about the model database. Each proof follows the node's
code path in ComfyUI v0.27.0 (commit \texttt{bb131be9}) or in the
pack's source.

Faithfulness is a \emph{soundness} property. A schema stricter than
its node is still faithful and only costs precision. The proofs make
no completeness claim. Schemas derived from port labels alone claim no
faithfulness (\S\ref{app:proof-soundness}).

\parh{Roadmap.} The section has three parts.
\begin{enumerate}
\item A proof pattern and six conventions fix what each proof must show
  and what it may assume.
\item Nine trusted lemmas state the facts about PyTorch and ComfyUI's
  model code that the proofs use.
\item Twenty-two propositions cover the schemas, grouped by the kind of
  node:
  \begin{itemize}
  \item index-free nodes (Proposition~\ref{prop:f:indexfree}),
  \item loaders and model patchers
    (Propositions~\ref{prop:f:ckpt}--\ref{prop:f:patch}),
  \item text encoders and conditioning
    (Propositions~\ref{prop:f:text}--\ref{prop:f:cn}),
  \item latents and VAEs
    (Propositions~\ref{prop:f:empty}--\ref{prop:f:latentops}),
  \item guiders, samplers, and video models
    (Propositions~\ref{prop:f:guider}--\ref{prop:f:wan}),
  \item image and mask nodes
    (Propositions~\ref{prop:f:imgsrc}--\ref{prop:f:preproc}),
  \item sinks and pipes
    (Propositions~\ref{prop:f:sinks} and~\ref{prop:f:pipe}).
  \end{itemize}
\end{enumerate}
A closing remark lists the nodes that have no schema. Each proof opens
with a short \emph{Idea}, so a reader can take the intuition and skip
the details.

\subsubsection*{Proof pattern and conventions}
Each proof instantiates
Definition~\ref{def:fidelity} in four steps.
\begin{enumerate}
\item It fixes a ground instantiation $\sigma$ that satisfies the side
  condition and agrees with the designating widget literals. It also
  fixes inputs $w_i \models \sigma\tau_i$. We call this step ``fix
  $\sigma$ and inputs''.
\item It follows the node's code path.
\item It shows that no error of Definition~\ref{def:errors} occurs.
\item It exhibits an extension $\sigma'$ of $\sigma$ on the fresh
  variables with $w'_j \models \sigma'\tau'_j$. Here $\sigma' =
  \sigma$ unless stated.
\end{enumerate}
The proofs mark these steps with the run-in labels \emph{Setting},
\emph{Code path}, \emph{No error}, and \emph{Outputs}, after an
opening \emph{Idea}. Short proofs merge adjacent labels. When the
nodes of a proposition take different code paths, the proof treats
them in groups, each in the same order. \emph{Other nodes} reduces the
remaining nodes of a proposition to the first one.

We write $A = \sigma(a)$ and $f_0 = \sigma(f)$. Files are named by
basename in the ComfyUI tree and prefixed by the pack's name
otherwise.

\parh{Conventions.} The proofs rely on the following six conventions.
\begin{enumerate}[label=(\roman*)]
\item \emph{Exact division.} A ground instantiation makes every index
  expression an integer (Definition~\ref{def:inhabitation}). Each
  division $e/k$ is therefore exact under $\sigma$. ComfyUI's floor
  division $\lfloor e/k\rfloor$, written $e/\!/k$ below, then equals
  $e/k$.

\item \emph{Validated workflows and untyped occurrences.} Workflows
  pass ComfyUI's validator, which checks port labels and widget ranges
  (\texttt{execution.py:1008}, \S\ref{sec:calc:graphs}). The checker
  types a node as $\tyDyn$ when a connected input is not declared by
  its schema. An output the schema does not declare is $\tyDyn$. Each
  schema is therefore proved with its undeclared optional inputs
  absent. The same rule types a node as $\tyDyn$ when a widget takes
  it outside its schema. This applies to \nodename{FaceDetailer}'s
  LoRA-loading \texttt{wildcard} (Proposition~\ref{prop:f:pipe}).

\item \emph{Database premise.} A brand that the model database
  resolves for a file or a CLIP type is the ghost of the loaded
  weights. An architecture that it resolves alone is the ghost's
  architecture. This covers a name without its size variant, such as
  \texttt{wan}. Loaded files are well-formed models of the kind their
  loader expects. Only LoRAs that the database marks as converters
  address the input projection. We rely on this premise without
  further argument. Variants outside the forest stay unresolved. These
  are the SDXL refiner, Flux Lite, Flux.2, HunyuanVideo~1.5, and
  LTX-2.

\item \emph{Instantiated schemas.} The schema files carry more than
  Figure~\ref{fig:type-syntax} shows, and the checker completes them
  per node.
  \begin{itemize}
  \item \emph{Elided indices.} The files carry indices that
    Figure~\ref{fig:type-syntax} elides. These are a protocol and a
    control kind on $\tycon{Cond}$, and a zero state on
    $\tycon{Latent}$. They also include an architecture on
    $\tycon{Guider}$, on hook groups, and on the Wan wrapper's VAE,
    and an embedding width on CLIP-vision outputs.
  \item \emph{Side predicates.} The files use the side predicates
    $\mathit{proto}$, $\mathit{ctrl\_ok}$, $\mathit{iw}$,
    $\mathit{sz}$ (size class), and $\mathit{zero\_ok}$. Side
    predicates only restrict $\sigma$.
  \item \emph{Guarded constraints and bindings.} A constraint may be
    guarded by a widget literal. A guarded constraint is dropped only
    when the widget holds a literal other than the guard's. A guarded
    widget binding $\tycon{Int}(x) \ge k$ designates $x$ only for
    literals of at least $k$.
  \item \emph{Omitted indices.} An index written $\_$ and an index
    that a schema omits are fresh variables.
  \item \emph{LoRA completion.} The LoRA loaders' schemas are
    completed from the model database. Here $g$ is the family of the
    LoRA file, which the file widget designates. For a file that the
    database marks as a converter, the checker distinguishes three
    cases by the strength literals.
    \begin{itemize}
    \item If some strength literal is nonzero, it adds the side
      condition $\mathit{iw}(g)>\mathit{iw}(f)$. It retypes the model
      output as $\tycon{Model}(g,q)$.
    \item If every strength literal is $0$, it keeps
      $\tycon{Model}(f,q)$ and adds no side condition.
    \item Otherwise a strength is wired and no literal is nonzero. The
      checker adds the same side condition. It retypes the output as
      $\tycon{Model}(g',q)$ for a fresh brand variable $g'$.
    \end{itemize}
    Proposition~\ref{prop:f:lora} proves the completed schemas.
  \end{itemize}

\item \emph{Ghosts, attachments, and patch errors.} Ghosts follow
  Definition~\ref{def:values}. An output derived from one weighted
  input inherits its ghost. A resized or cropped image keeps its
  modality. A preprocessor stamps its output's modality. A
  latent-space control network stores the VAE given at its application
  (\texttt{controlnet.py:103--110}). A conditioning's ghost records
  that VAE's architecture, so $\tycon{Cond}(a,p,k)$ requires it to be
  $a$. Patch errors are raised by the node that adds the patch
  (Definition~\ref{def:errors}).

\item \emph{Fresh indices.} The proofs instantiate fresh modality,
  protocol, kind, and zero-state indices without comment. They name
  every fresh dimension and weight brand.
\end{enumerate}

\subsubsection*{Trusted lemmas}

The following lemmas state facts about PyTorch and about ComfyUI's
model code that we take as given. Each is followed by the code that
justifies it. Lemmas~\ref{lem:f:resize} and~\ref{lem:f:pointwise}
concern tensor shapes. Lemmas~\ref{lem:f:vae}--\ref{lem:f:text}
concern the geometry of VAEs, models, and text encoders.
Lemmas~\ref{lem:f:attach}--\ref{lem:f:clone} concern attachments,
weight patches, and clones.

\begin{lemma}[Resizing]
\label{lem:f:resize}
Called with a target $(w,h)$, the helper \texttt{common\_upscale}
returns a tensor whose last two dimensions are $(h,w)$. Its other
dimensions are those of its input.
\end{lemma}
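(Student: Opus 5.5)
The statement is Lemma~\ref{lem:f:resize}, a trusted lemma about \texttt{common\_upscale}. It is not derived from the calculus but justified by the implementation in ComfyUI v0.27.0 (\texttt{comfy/utils.py}). My plan is a code-path argument in the same style as the faithfulness proofs. I fix an input tensor $x$ of shape $(\ldots, H, W)$ and a target $(w,h)$, then follow \texttt{common\_upscale} branch by branch. For each branch I show two things: the last two axes of the result are $(h,w)$, and every leading axis is copied from $x$.

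\textbf{Cropping branch.} When the crop mode is \texttt{"center"}, the function first slices $x$ along its last two axes to match the target aspect ratio. Slicing the last two axes leaves all leading axes unchanged. I record only that the result has shape $(\ldots, H', W')$ with the same leading axes; the values of $H'$ and $W'$ do not matter, because the next step overwrites them.

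\textbf{Resampling branch.} The function then dispatches on the method.
\begin{itemize}
\item For \texttt{"bislerp"} and \texttt{"lanczos"}, it calls helpers that reshape to a batch of images, resample to $(h,w)$, and reshape back.
\item For all other methods, it calls \texttt{torch.nn.functional.interpolate} with \texttt{size=(h,w)}. For this case I invoke PyTorch's documented contract: interpolate returns a tensor of shape $(N, C, h, w)$ for a 4-D input of shape $(N, C, \ast, \ast)$.
\end{itemize}

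\textbf{Rank handling (the main obstacle).} The delicate part is inputs that are not 4-D, for example 5-D video latents of shape $(N,C,T,H,W)$, and the reshaping inside the bislerp and lanczos helpers. For the 5-D case I would check that the code takes the dedicated path that flattens the leading axes into the batch and restores them afterwards, so that $N$, $C$, and $T$ are preserved. Where a branch instead uses \texttt{"trilinear"} with a size triple, I would check that the frame axis in that triple is passed through unchanged.

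For the helpers, I would verify that each reshape-back uses the original leading sizes. For lanczos, which round-trips through PIL per image, I also need that PIL's resize produces exactly $(w,h)$ and that the channel count is restored.

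If any branch fails to preserve a leading axis, the fallback is to narrow the lemma's scope to the ranks and methods the schemas actually invoke. I would record that restriction alongside the code citation that justifies the lemma.
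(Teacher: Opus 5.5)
Your code-path plan is essentially the paper's own justification of this trusted lemma: \texttt{common\_upscale} (\texttt{utils.py:1069--1101}) optionally center-crops, resamples with \texttt{interpolate} at \texttt{size}${}=(h,w)$, \texttt{bislerp}, or \texttt{lanczos}, and restores the leading dimensions of inputs with more than four axes. The \texttt{trilinear} contingency does not arise, because the extra axes are folded into the batch and every resampler sees a 4-D tensor; your caveats about the lanczos PIL round trip, and about narrowing the lemma to the ranks and channel counts the schemas actually pass, are extra care that the paper leaves implicit.
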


\noindent\emph{Justification.} The routine optionally center-crops. It
then resizes with \texttt{interpolate} and \texttt{size}${}=(h,w)$,
\texttt{bislerp}, or \texttt{lanczos}. It restores the leading
dimensions of inputs with more than four axes
(\texttt{utils.py:1069--1101}).

\begin{lemma}[Size-preserving operations]
\label{lem:f:pointwise}
Pointwise arithmetic, \texttt{zeros\_like}, \texttt{clone}, channel
selection and expansion, batch slicing, repetition, and concatenation
along the batch or channel axis keep height and width. So do
symmetrically padded filters.
\end{lemma}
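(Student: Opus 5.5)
The plan is to treat the lemma as a statement about PyTorch's documented shape semantics and to prove it operation by operation. First I fix which axes are meant by ``height and width''. For images and masks in ComfyUI's BHWC/BHW layout they are axes $1,2$. For latents they are the last two axes, as in Definition~\ref{def:inhabitation}. For each listed operation I then show that the output extent on exactly these axes equals the input extent. I would also record, for each call site used in \S\ref{app:faithful}, which axes the operation acts on, since the lemma is only invoked when it acts on a non-spatial axis.

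\emph{Shape-rule cases.} The cases follow directly from PyTorch's shape rules.
\begin{itemize}
\item \texttt{zeros\_like} and \texttt{clone} return a tensor of the input's shape by specification.
\item Channel selection (indexing or slicing the channel axis) and batch slicing leave every other axis untouched.
\item Channel expansion (\texttt{expand}, \texttt{repeat} with factor $1$ off the channel axis) and batch repetition multiply only the targeted extent. Spatial factors are $1$, so spatial extents are preserved.
\item For concatenation, \texttt{torch.cat} along axis $d$ requires all operands to agree on every axis $\neq d$ and returns that common extent there. Taking $d$ to be the batch or channel axis gives the claim. If the operands disagree, \texttt{cat} raises, so no output is produced at all.
\item For pointwise arithmetic, the output shape is the broadcast of the operand shapes. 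I state the case with the hypothesis used at every call site: each operand has the given $(h,w)$ or extent $1$ on the spatial axes (scalars, per-channel vectors). The broadcast extent is then $h$ resp.\ $w$. Unequal non-unit extents raise in PyTorch and again produce no output.
\end{itemize}

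\emph{Symmetrically padded filters.} For these I use the convolution/pooling size formula. The output extent along a spatial axis is
\[\left\lfloor \tfrac{H + p_\ell + p_r - d(k-1) - 1}{s}\right\rfloor + 1.\]
I would then instantiate it with stride $s=1$ and total padding $p_\ell+p_r = d(k-1)$, which is what ``symmetrically padded'' must mean here. The result is exactly $H$. The same holds for the blur, sharpen, and morphology helpers, which pad with \texttt{reflect}/\texttt{replicate} by $k/\!/2$ for odd $k$.

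\emph{Main obstacle.} I expect the main difficulty to be this filter case, not the algebra. Even kernel sizes with $k/\!/2$ on both sides would grow the extent by one, and dilation or stride would shrink it. So the step I would actually do first is audit each filter call site reached by a schema, and confirm that it uses odd $k$ (or asymmetric $\lfloor\cdot\rfloor/\lceil\cdot\rceil$ padding summing to $k-1$) and unit stride. I would then restrict the lemma's ``symmetrically padded'' clause to exactly that class. A secondary care point is broadcasting, where the side hypothesis on unit extents must be made explicit rather than hidden.
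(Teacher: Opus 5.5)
Your proposal is correct and takes essentially the paper's route, only more explicitly. The paper does not prove this lemma; it takes it as trusted with the one-line justification ``PyTorch and SciPy semantics.'' Your case-by-case appeal to PyTorch's shape rules is a sharper version of that justification, because it states outright the broadcasting hypothesis and, for filters, the unit-stride, total-padding-$d(k-1)$ condition. The odd-kernel caveat is genuinely needed: for instance, \nodename{VAEEncodeForInpaint} pads its default $6\times 6$ kernel by $3$ on each side, and only its final slice back to the cropped size removes the extra row and column. The one case to add is the SciPy half the paper cites: a \texttt{scipy.ndimage} filter returns an array of its input's shape for any footprint and boundary mode, so it needs no size formula.
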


\noindent\emph{Justification.} PyTorch and SciPy semantics.

\begin{lemma}[VAE geometry]
\label{lem:f:vae}
A VAE of architecture $a$ has $\chfn(a)$ latent channels and spatial
ratio $\mathit{sf}(a)$.
\begin{itemize}
\item Its \texttt{encode} crops each side of an RGB image to a multiple
  of $\mathit{sf}(a)$. It returns a latent with $\chfn(a)$ channels and
  size $\lfloor h/\mathit{sf}(a)\rfloor \times \lfloor
  w/\mathit{sf}(a)\rfloor$.
\item Its \texttt{decode} maps a latent with $\chfn(a)$ channels and
  size $h \times w$ to an RGB image of size $\mathit{sf}(a)h \times
  \mathit{sf}(a)w$.
\item Its \texttt{decode} raises a channel error on a latent with
  another channel count.
\end{itemize}
\end{lemma}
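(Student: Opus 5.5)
This lemma is one of the trusted facts, so I would establish it by justification rather than derivation. The plan is to trace ComfyUI v0.27.0's \texttt{VAE} wrapper in \texttt{sd.py} and show that its configuration fields coincide with the interpreted functions $\chfn$ and $\mathit{sf}$. For each architecture $a$ in the derivation forest, the VAE loader detects the architecture from the state-dict keys. It then sets \texttt{latent\_channels} and \texttt{downscale\_ratio} (equivalently \texttt{spacial\_compression\_encode()}/\texttt{decode()}). The first step is a case table listing, per root $a$, the detected configuration: four channels and ratio $8$ for SD15/SDXL, sixteen channels for SD3/Flux, ratio $16$ for Wan22\_5B, and so on. I would then check that this table agrees entry by entry with the database's definitions of $\chfn(a)$ and $\mathit{sf}(a)$. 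By the database premise (convention~(iii)), the loaded file's ghost architecture is $a$, so the detected configuration is the one in row $a$.

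For the \texttt{encode} bullet, I would follow \texttt{VAE.encode}. It first calls the crop helper (\texttt{vae\_encode\_crop\_pixels}), which trims height and width to $\mathit{sf}(a)\lfloor h/\mathit{sf}(a)\rfloor$ and $\mathit{sf}(a)\lfloor w/\mathit{sf}(a)\rfloor$. It then moves channels to axis 1 and runs the encoder's first-stage model. The remaining step is the architectural fact that the encoder's downsampling stack has total stride $\mathit{sf}(a)$ on inputs whose sides are multiples of $\mathit{sf}(a)$. I would argue it from the number of stride-2 convolutions (or the patchify factor for video VAEs) in each architecture's config. The output channel count is the encoder's \texttt{z\_channels}, or half of it after the Gaussian split, which equals \texttt{latent\_channels}$=\chfn(a)$. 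The \texttt{decode} bullet is symmetric: \texttt{VAE.decode} passes the latent to the first-stage decoder, whose upsampling stack has total factor $\mathit{sf}(a)$, and whose output has three RGB channels after \texttt{process\_output} and the permute.

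For the channel-error bullet, I would point out that the decoder's first layer is a convolution \texttt{conv\_in} (or \texttt{post\_quant\_conv}) with \texttt{in\_channels}${}=\chfn(a)$. PyTorch's \texttt{conv2d}/\texttt{conv3d} raises a \texttt{RuntimeError} when the input channel dimension differs, and no reshaping before that layer could absorb the mismatch.

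The main obstacle is exhaustiveness across architectures. Video VAEs (Wan, Hunyuan, LTX) have temporal compression, causal padding, and tiled fallbacks, and some have patchified latents where the spatial ratio combines a patch size with convolution strides. Here I would first restrict the claim to the spatial axes, as the lemma does, treating batch and frame axes as out of scope. I would then check that the tiled fallback path (\texttt{decode\_tiled\_}) preserves the same output geometry, since it stitches tiles decoded by the same model with the same ratio. Any architecture whose detected configuration fails to match the forest table would have to be marked unresolved under convention~(iii) rather than covered.
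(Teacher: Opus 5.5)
Your proposal is correct and takes essentially the same route as the paper. The paper's justification just cites where \texttt{sd.py} sets the latent channels and spatial ratio per architecture (a default plus overrides for LTXV, Wan~2.2, and Wan~2.1), together with the crop to a multiple of the ratio before encoding; your stride check and your derivation of the channel error from the decoder's first convolution spell out what the paper leaves to that configuration, without changing the argument.
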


\noindent\emph{Justification.} ComfyUI sets the latent channels and the
ratio per architecture, with one default (\texttt{sd.py:483--485}) and
overrides for LTXV (\texttt{:664--670}), Wan~2.2 (\texttt{:731--737}),
and Wan~2.1 (\texttt{:745--750}). Before encoding, the VAE crops the
image to a multiple of its ratio (\texttt{:956--966}).

\begin{lemma}[Model geometry]
\label{lem:f:model}
A model of family $f$ has a latent format with $\chfn(\archfn(f))$
channels and spatial ratio $\mathit{sf}(\archfn(f))$.
\begin{itemize}
\item $8$ divides every $\mathit{sf}(a)$.
\item The model accepts a latent with $\chfn(\archfn(f))$ channels at
  any height and width. So do its reference-latent inputs.
\item Sampling returns a tensor of the latent's shape.
\item Absent inpainting inputs receive defaults.
\end{itemize}
\end{lemma}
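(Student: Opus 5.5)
This lemma is a trusted lemma, so the plan is to justify it from ComfyUI's model code, not to derive it from anything earlier. The paper does the same for Lemma~\ref{lem:f:vae}. I would justify each bullet by reading the code path in ComfyUI v0.27.0 that fixes a model's latent format. Every model class in the supported architectures carries a \texttt{latent\_format} object, and its channel count and downscale ratio are set per architecture in \texttt{latent\_formats.py} and \texttt{supported\_models.py}. The first step is to tabulate these values for each architecture root of the derivation forest. I would then check that the table coincides with the interpreted functions $\chfn$ and $\mathit{sf}$ of the model database. This coincidence gives the preamble of the lemma: a model of family $f$ uses the latent format of $\archfn(f)$. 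The reason is that fine-tunes such as $\brand{Pony}$ load through the detected base configuration, so the format depends only on the root.

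The first bullet, that $8 \mid \mathit{sf}(a)$, then follows by inspecting the tabulated ratios. They are 8 for the SD-family and Flux-family architectures, 8 for Wan~2.1, 16 for Wan~2.2~5B, and 32 for LTXV. The second bullet concerns acceptance of any height and width. I would argue from the diffusion-model forward paths. UNet-based models pad or interpolate skip connections, which handles heights and widths not divisible by the downsampling depth. Transformer models pad to the patch size before patchifying and crop afterwards, for example \texttt{pad\_to\_patch\_size} in the Flux and Wan code. So the only shape requirement left is the input-channel count of the first convolution or projection, which equals $\chfn(\archfn(f))$, plus the extra concatenated channels for inpainting variants. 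Reference latents are concatenated or tokenized through the same patchify path, so the same argument covers them.

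The third bullet, that sampling preserves the latent's shape, comes from the sampler loop in \texttt{samplers.py}. The noise is created with the latent's shape, each denoising step returns a tensor of that shape, and the \texttt{process\_out} conversion is shape-preserving. The fourth bullet, about absent inpainting inputs, comes from the \texttt{concat\_cond} branch in \texttt{model\_base.py}. When no mask or concat latent is supplied, it substitutes zeros or a default mask of the right shape.

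The main obstacle is the second bullet. Its universality over every architecture rests on per-architecture padding code. Some models may impose a divisibility requirement, for example video frame counts or patch sizes for certain transformers. For each such architecture I would either find the padding call or narrow the claim to the sizes that the schemas already constrain through divisibility atoms.
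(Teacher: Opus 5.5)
Your plan matches the paper's justification: the paper also treats this as a trusted lemma and grounds each clause in the same code. That code is the per-architecture latent formats in \texttt{latent\_formats.py} and \texttt{supported\_models.py}, the UNet's upsampling to the skip shape, \texttt{pad\_to\_patch\_size} in \texttt{common\_dit.py} (which also covers reference latents), the shape-preserving sampling entry points, and the inpainting defaults in \texttt{model\_base.py}. Your fallback of narrowing the claim is unnecessary, because the paper rests the any-size claim on that padding, and frame-count divisibility concerns the time axis, which the lemma's height-and-width claim does not cover.
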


\noindent\emph{Justification.} Latent formats are declared in
\texttt{latent\_formats.py} and \texttt{supported\_models.py}. The UNet
upsamples to the skip shape (\texttt{openaimodel.py:92}). Transformers
pad to the patch size (\texttt{common\_dit.py:5}). This includes
reference latents (\texttt{model\_base.py:1020--1035},
\texttt{flux/model.py:360--365}). The sampling entry points keep the
latent's shape (\texttt{sample.py:9--11, 71--82}). The inpainting
defaults are set in the model base (\texttt{model\_base.py:298--302}).

\begin{lemma}[Empty-latent normalization]
\label{lem:f:empty}
Every core sampler passes its latent through
\texttt{fix\_empty\_latent\_channels} before sampling. Let $a$ be the
model's architecture.
\begin{itemize}
\item On an all-zero latent, the call repeats or truncates the channel
  axis to the model's channel count. If the latent is tagged with a
  spatial ratio $r \neq \mathit{sf}(a)$, the call also rescales height
  and width by $r/\mathit{sf}(a)$ with rounding.
\item Other latents it leaves unchanged, apart from adding a time axis
  for video models.
\end{itemize}
\end{lemma}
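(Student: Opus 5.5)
The statement is Lemma~\ref{lem:f:empty}, a trusted lemma. Its justification is therefore a close reading of ComfyUI v0.27.0 at commit \texttt{bb131be9}, not a derivation from earlier results. I would proceed in three steps.

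\emph{Every core sampler calls the helper.} First I would enumerate the core sampler entry points that Proposition~\ref{prop:f:sampler} refers to. These are \nodename{KSampler}, \nodename{KSamplerAdvanced}, and the custom-sampling path through \nodename{SamplerCustom} and \nodename{SamplerCustomAdvanced}. For each one I would trace the call graph down to the point where the latent tensor is handed to the model, via \texttt{common\_ksampler}, \texttt{sample.py}, and the custom-sampler \texttt{sample} functions. At each path I would check that \texttt{comfy.sample.fix\_empty\_latent\_channels} is invoked on the latent samples, with the model patcher as argument, before noise is prepared. The risk is a sampler entry point that bypasses the helper. If I find one, I would either exclude it from the set of core samplers or weaken the lemma for it.

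\emph{Behaviour on all-zero latents.} Next I would read the body of \texttt{fix\_empty\_latent\_channels}. I expect a guard of the form ``the tensor is all zeros'' via \texttt{torch.count\_nonzero}. Inside the guard, the code should do two things:
\begin{itemize}
\item compare the latent's channel axis with \texttt{latent\_format.latent\_channels} of the model, and repeat or slice that axis to match;
\item consult a downscale-ratio tag carried with the latent dictionary, compare it with the model's \texttt{spacial\_downscale\_ratio}, which by Lemma~\ref{lem:f:model} is $\mathit{sf}(a)$, and, when the two differ, call \texttt{common\_upscale} with the scaled size. Lemma~\ref{lem:f:resize} then gives the resulting height and width as the input size scaled by $r/\mathit{sf}(a)$, rounded.
\end{itemize}
Since the architecture $a$ is read from the model, not from the latent, the channel count and ratio come out as $\chfn(a)$ and $\mathit{sf}(a)$.

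\emph{Behaviour on other latents.} Finally I would show that outside the all-zero guard the only change is the video-model branch, which unsqueezes a time axis when the latent format declares a temporal dimension and the tensor lacks one. That branch leaves the last two axes, height and width, and the channel axis unchanged.

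\emph{Main obstacle.} The hardest part is the rescaling clause. I need to confirm the exact rounding used, since the lemma claims only ``with rounding''. I also need to check where the ratio tag lives (a key in the latent dict versus a tensor attribute) and that it is the only tag consulted. If the code instead rescales unconditionally or uses a different source for the ratio, I would restate the lemma to match the code. I would then recheck the zero-state inhabitation clause of Definition~\ref{def:inhabitation}, which depends on this lemma.
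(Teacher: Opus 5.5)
Your plan is correct and matches the paper's justification. The paper treats this as a trusted lemma and discharges it by citing code: the body of \texttt{fix\_empty\_latent\_channels} and the code it calls (\texttt{sample.py:40--61}, \texttt{utils.py:859--864}), and its call sites in \texttt{common\_ksampler} (\texttt{nodes.py:1557}) and the two custom samplers (\texttt{nodes\_custom\_sampler.py:755, 1037}). Your three-step code reading covers exactly those points.
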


\noindent\emph{Justification.} The normalization repeats or truncates
the channels and rescales tagged latents (\texttt{sample.py:40--61},
\texttt{utils.py:859--864}). All three core samplers apply it before
sampling (\texttt{nodes.py:1557},
\texttt{nodes\_custom\_sampler.py:755, 1037}).

\begin{lemma}[Text encoders]
\label{lem:f:text}
A text encoder of architecture $a$ emits embeddings of the width that
models of architecture $a$ consume. Its class is fixed by the
checkpoint or by the \texttt{type} widget of its loader. The helper
\texttt{conditioning\_set\_values} edits only the metadata of a
conditioning.
\end{lemma}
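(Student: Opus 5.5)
The lemma has three clauses, and I would justify each by following ComfyUI v0.27.0's code rather than deriving it from earlier results. It is a trusted lemma: the faithfulness proofs of \S\ref{app:faithful} use it as an axiom about model code, in the same way as Lemmas~\ref{lem:f:vae} and~\ref{lem:f:model}. The plan is to fix the map from architectures to encoder classes once, and then check that every path constructing a text encoder passes through that map.

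First, for the class clause, I would enumerate the two ways a text encoder object comes into existence. (1)~A checkpoint loader calls the model-detection routine in \texttt{supported\_models.py}. That routine picks a model configuration by matching state-dict keys, and the configuration's \texttt{clip\_target} fixes the tokenizer and encoder class. (2)~A standalone CLIP loader maps its \texttt{type} widget through a fixed enum-to-class table in \texttt{sd.py}. In both cases the class is a function of the checkpoint or of the widget literal alone. By the database premise (convention~(iii)), this class agrees with the ghost architecture $a$.

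Second, for the width clause, I would show that for each architecture $a$ in the forest, the encoder class that $\texttt{clip\_target}$ selects emits embeddings (and pooled outputs) whose last dimension equals the context width that models of architecture $a$ declare in their configuration, namely \texttt{context\_dim} or the transformer's text input projection. This is a finite table check, one row per architecture in the forest. The main obstacle is this step. The width comes from the concatenation of several sub-encoders, such as CLIP-L with CLIP-G for SDXL, or CLIP-L with T5 for Flux and SD3, and padding or projection conventions differ per family. So the check must be done per \texttt{clip\_target}, not by a generic argument. I would start from the model configs' declared dimensions and match each encoder's \texttt{hidden\_size} sum against them. Architectures that fall outside the forest are excluded by convention~(iii).

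Third, for the metadata clause, I would read \texttt{conditioning\_set\_values} in \texttt{node\_helpers.py}. It iterates over the conditioning list, copies each metadata dictionary, updates only the given keys, and reuses the embedding tensor unchanged. The embedding width, and hence the ghost architectures recorded by Definition~\ref{def:values}, are therefore preserved. Together the three clauses give the lemma.
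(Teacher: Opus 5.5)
Your route matches the paper's. Lemma~\ref{lem:f:text} is a trusted lemma, and the paper justifies it only by citing the class selection in \texttt{sd.py} and the body of \texttt{conditioning\_set\_values} in \texttt{node\_helpers.py}. It gives no separate argument for the width clause, so your per-architecture table check goes beyond the paper rather than taking a different route.

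Two of your intermediate claims would not survive the code check, though. First, a standalone CLIP loader does not map its \texttt{type} widget to a class on its own. \texttt{load\_text\_encoder\_state\_dicts} first classifies each file with \texttt{detect\_te\_model} and only then branches on the widget. For instance, the same \texttt{stable\_diffusion} type yields an SD1, SD2, or SDXL-refiner encoder for a CLIP-L, OpenCLIP-H, or CLIP-G file. So the class is a function of the files together with the widget, not of ``the widget literal alone''. That is why the paper's justification says ``from the file or the \texttt{type} widget''. It is also why the database premise, convention~(iii), covers both files and CLIP types: that premise is what ties the resulting class to the ghost architecture. Your width table must likewise range over the \texttt{clip\_target} constructions reachable from such (file kind, widget) pairs. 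There, a sum of hidden sizes is not the invariant for SD3, whose CLIP features are padded to 4096 and then concatenated with T5 along the token axis. Nor is it the invariant for Flux, where CLIP-L contributes only the pooled vector.

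Second, drop the inference that \texttt{conditioning\_set\_values} preserves the ghost architectures of Definition~\ref{def:values}. ComfyUI stores attachments as metadata keys, and this same helper is how \nodename{ReferenceLatent}, \nodename{InpaintModelConditioning}, and \nodename{WanImageToVideo} attach reference latents, concat latents, masks, and CLIP-vision outputs. The conditioning ghost records all of these. The lemma claims only that the embedding tensors are untouched. The paper uses it in exactly that sense for the area, range, and guidance operators (Proposition~\ref{prop:f:condops}), and it accounts for attachments separately in Propositions~\ref{prop:f:ref}, \ref{prop:f:encode}, and~\ref{prop:f:wan}.
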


\noindent\emph{Justification.} Loading a CLIP selects its encoder class
from the file or the \texttt{type} widget (\texttt{sd.py:1275, 1315}).
\texttt{conditioning\_set\_values} copies each entry and sets only
metadata keys (\texttt{node\_helpers.py:9}).

\begin{lemma}[Attachments]
\label{lem:f:attach}
Attachments are adapted to the sampled latent as follows.
\begin{itemize}
\item Concatenated latents and masks attached to a conditioning are
  resized to the sampled latent.
\item A pixel-space control network resizes its hint to the latent
  size.
\item A latent-space control network encodes its hint with the VAE
  given at its application. Without that VAE it raises an error that
  names the missing VAE.
\item A control network of architecture $a$ is built for the models of
  $a$. If it works in latent space, it is also built for hints encoded
  by a VAE of $a$.
\end{itemize}
\end{lemma}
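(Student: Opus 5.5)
This lemma is a trusted lemma, so the plan is to justify it the same way the paper justifies Lemmas~\ref{lem:f:resize}--\ref{lem:f:text}. That means tracing ComfyUI v0.27.0 (commit \texttt{bb131be9}) code paths, one per bullet, and relying only on standard PyTorch semantics (Lemmas~\ref{lem:f:resize} and~\ref{lem:f:pointwise}). I will treat the four bullets as independent claims. For each I will locate where the attachment is consumed during sampling rather than where it is attached. Attachment nodes only edit conditioning metadata, by Lemma~\ref{lem:f:text}, so the sampler is where any adaptation to the latent must happen.

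\emph{Concatenated latents and masks.} I will follow the sampler's conditioning preprocessing in \texttt{model\_base.py}, the \texttt{extra\_conds} / \texttt{concat\_cond} path, and the mask handling in \texttt{samplers.py}. I expect each concat latent and each mask to pass through \texttt{common\_upscale}, or an equivalent \texttt{interpolate}, with the noise latent's height and width as target. Lemma~\ref{lem:f:resize} then gives the matching last two axes.

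\emph{Pixel-space control networks.} I will follow \texttt{ControlNet.get\_control} in \texttt{controlnet.py}. There the hint is upscaled to the latent size times the compression ratio whenever its shape differs. For a pixel-space network this amounts to resizing to the latent's geometry.

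\emph{Latent-space control networks.} I will use the same function. When the network declares a latent format, the hint is encoded with the stored VAE (\texttt{controlnet.py:103--110}), and a missing VAE raises the explicit error. Only this error path needs care, because it must name the VAE in order to count as a non-compatibility abort under Definition~\ref{def:errors}.

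\emph{Built for the models of $a$.} This bullet is the main obstacle, because it is a claim about architectures and not about a single code line. My plan is to reduce it to how ComfyUI loads control networks. The loader \texttt{load\_controlnet} detects the architecture from the state-dict keys and shapes, and it instantiates the network with that architecture's UNet or DiT configuration and latent format. Its input and output widths therefore match models of $a$. When the network works in latent space, it consumes $\chfn(a)$-channel hints. By Lemma~\ref{lem:f:vae}, a VAE of $a$ produces exactly those hints. I will combine this with the database premise, convention~(iii), which makes the recorded architecture the true ghost. This is the step where I would accept a trusted-fact justification citing the detection code, rather than attempt a full proof.
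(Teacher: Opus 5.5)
Your plan matches the paper, which also treats this as a trusted lemma and justifies it only by citing code. It cites the attachment resizing in \texttt{model\_base.py:257--288} and the control-network code in \texttt{controlnet.py:103--110, 269--283, 731--886}, where the last range is the \texttt{load\_controlnet} architecture detection you rely on for the fourth bullet. One correction: lines 103--110 are \texttt{set\_cond\_hint}, which only stores the VAE and merely logs a warning when it is absent (line 109), whereas the encoding and the error naming the missing VAE happen at sampling time in \texttt{get\_control} (line 278).
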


\noindent\emph{Justification.} The model base resizes conditioning
attachments (\texttt{model\_base.py:257--288}). The control network
code handles hints (\texttt{controlnet.py:103--110, 269--283,
731--886}).

\begin{lemma}[Weight patches]
\label{lem:f:weights}
LoRA and merge patches address weights by key, and key maps are
determined by the architecture.
\begin{itemize}
\item A key outside the target's key map is skipped with a warning.
\item A key whose tensor shape disagrees with the weight is logged and
  skipped.
\item Families of equal architecture and size class have weights of
  equal shape, except the input projection, whose width is
  $\mathit{iw}$.
\item A converter LoRA pads the input projection to its own width
  before adding its update.
\end{itemize}
\end{lemma}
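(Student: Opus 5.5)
Lemma~\ref{lem:f:weights} is a \emph{trusted} lemma. The paper does not derive it from more basic results; it justifies it by pointing to ComfyUI's code. So the plan is a code-reading justification of the same kind as the preceding trusted lemmas. It goes clause by clause through the patching path of ComfyUI v0.27.0: the LoRA key-map construction (\texttt{lora.py}, \texttt{model\_lora\_keys\_unet} and \texttt{model\_lora\_keys\_clip}), patch registration in \texttt{ModelPatcher.add\_patches}, and weight application in \texttt{weight\_adapter/lora.py}. For each clause, I would cite the branch of code that establishes it.

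\emph{Key maps and skipped keys.} First, I would show that key maps are determined by the architecture. \texttt{model\_lora\_keys\_unet} enumerates the state-dict keys of the instantiated model. Which keys exist depends only on the model class that \texttt{supported\_models.py} selects, and that class is fixed by the architecture. I would then show that \texttt{add\_patches} records a patch only for keys present in the model's state dict. \texttt{load\_lora} reports every unmatched LoRA key with a ``lora key not loaded'' warning, which gives the first clause. For the second clause, I would locate the shape comparison in \texttt{weight\_adapter/lora.py:284--285}. When the reshaped update disagrees with the target weight, that code logs the mismatch and returns the weight unchanged, i.e.\ the key is skipped.

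\emph{Equal shapes within an architecture and size class.} This is the step I expect to be the main obstacle. It is a claim about every family in the derivation forest, not about a single code path. The approach is to argue from the model configuration. Families with the same $\archfn$ and $\mathit{sz}$ are fine-tunes that share one \texttt{unet\_config}, or one transformer config for DiT models. Their parameter shapes are therefore functions of that shared config. The only exception is the configuration entry for input channels (\texttt{in\_channels}), which inpainting and depth variants enlarge. That entry fixes the width of the first convolution or patch embedding, and I would define this width as $\mathit{iw}$. To establish the clause, I would check, for each forest node, that \texttt{model\_config\_from\_unet} detects the same config up to \texttt{in\_channels}. Where the check reduces to naming conventions, I would lean on the database premise (convention~(iii)).

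\emph{Converter LoRAs.} For the final clause, I would cite the branch in the LoRA weight application that handles an update wider than the target weight on the input projection. That branch allocates a zero-padded weight of the LoRA's width, copies the old weight in, and then adds the update. This matches the completion $\mathit{iw}(g) > \mathit{iw}(f)$ in convention~(iv).
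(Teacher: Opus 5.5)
Your plan matches the paper's justification, which also treats the lemma as trusted and discharges each clause by citing ComfyUI's code: the key maps and the ``lora key not loaded'' warning in \texttt{lora.py}, the logged shape check and the converter's padding in \texttt{weight\_adapter/lora.py}, and \texttt{supported\_models.py} for weight shapes. The only difference is that the paper settles the equal-shape clause with a one-line appeal to the model configurations, whereas you propose a more thorough per-family check that the detected config agrees up to \texttt{in\_channels}.
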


\noindent\emph{Justification.} Key maps are derived from the model's
architecture (\texttt{lora.py:97, 187}). A LoRA key that the map does
not contain is skipped with a warning (\texttt{:93}). A patch whose
shape differs from its weight is skipped with an error
(\texttt{weight\_adapter/lora.py:270, 284--285}). A converter reshapes
the input projection before adding its update
(\texttt{weight\_adapter/lora.py:243--246}). Weight shapes follow each
architecture's model configuration (\texttt{supported\_models.py}).

\begin{lemma}[Clones and object patches]
\label{lem:f:clone}
\texttt{ModelPatcher.clone} and \texttt{CLIP.clone} keep the weights
and the latent format. So do object patches of the schedule, the CFG
function, the denoise mask, the attention, or hooks. The one exception
is the sampling mode \texttt{edm\_playground\_v2.5}, which installs a
four-channel latent format.
\end{lemma}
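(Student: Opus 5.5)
This lemma is trusted rather than derived, so the plan is to justify it by reading ComfyUI v0.27.0's code, as for Lemmas~\ref{lem:f:resize}--\ref{lem:f:weights}. Write $m$ for the input and $m'$ for the returned object. The target invariant is that $m'$ shares $m$'s weight tensors, with the same keys and shapes, and $m'$.\texttt{model.latent\_format} is the object $m$ carries. The inhabitation facts of Definition~\ref{def:inhabitation} depend only on weight geometry, the family ghost, and the latent format, so the invariant suffices for every later use.

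First, \texttt{clone}. In \texttt{model\_patcher.py}, \texttt{ModelPatcher.clone} builds a new patcher around the \emph{same} \texttt{self.model} reference. It copies the patch dictionaries, \texttt{model\_options}, object patches, and hook state. It never reassigns \texttt{model}, \texttt{model.latent\_format}, or the diffusion model's parameters. Both weights and format are therefore aliased, not copied. For \texttt{CLIP.clone} in \texttt{sd.py}, the new object reuses the same \texttt{cond\_stage\_model} through a cloned patcher and the same tokenizer. By the argument just given, its weights are unchanged, and a CLIP has no latent format. I will record the line numbers of both methods.

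Second, object patches. I enumerate the setter families: \texttt{set\_model\_sampler\_cfg\_function}, post-CFG functions, \texttt{set\_model\_denoise\_mask\_function}, attention patches and replacements (\texttt{set\_model\_attn1/2\_patch}, \texttt{set\_model\_patch\_replace}), hook registration, and \texttt{add\_object\_patch} for \texttt{model\_sampling}. For each, I check that it writes only into \texttt{model\_options} or \texttt{object\_patches}. I then check that \texttt{add\_object\_patch} targets only \texttt{model\_sampling} and related callables, not \texttt{latent\_format}. Some object patches are applied to the model at load time, so I also check that the schedule and CFG nodes in \texttt{comfy\_extras/nodes\_model\_advanced.py} patch only \texttt{model\_sampling}.

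Third, the exception. The \texttt{ModelSamplingDiscrete} node with the \texttt{edm\_playground\_v2.5} option also sets a \texttt{latent\_format} object patch to \texttt{SDXL\_Playground\_2\_5}. That format has four channels but different scaling and shift, so it changes the format, which is why the lemma excludes it. I will cite the exact lines.

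The main obstacle is completeness. The lemma quantifies over all object patches, so a missed node that patches \texttt{latent\_format} would break it. My first step is to grep the ComfyUI tree at commit \texttt{bb131be9} for \texttt{add\_object\_patch} and for writes to \texttt{latent\_format}. I confirm that \texttt{edm\_playground\_v2.5} is the only site that modifies the format. Any other site I find would be excluded from the schema library or typed as $\tyDyn$ under convention~(ii).
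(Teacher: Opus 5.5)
Your plan is the paper's own justification, only more thorough. It cites \texttt{ModelPatcher.clone} and \texttt{CLIP.clone} as sharing the weights and copying only patch lists (\texttt{model\_patcher.py:382}, \texttt{sd.py:290}), treats the single \texttt{latent\_format} object patch as the exception, and adds a grep for other writes to \texttt{latent\_format}, a completeness check the paper leaves implicit.

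One correction: the \texttt{edm\_playground\_v2.5} mode belongs to \nodename{ModelSamplingContinuousEDM} (\texttt{nodes\_model\_advanced.py:224--236}), not to \nodename{ModelSamplingDiscrete}, whose modes only replace \texttt{model\_sampling}. This matters downstream, because Proposition~\ref{prop:f:patch} attaches the guarded constraint $\chfn(\archfn(f)) = 4$ to the former node's schema.
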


\noindent\emph{Justification.} Cloning a model or a CLIP shares its
weights and copies its patch lists (\texttt{model\_patcher.py:382},
\texttt{sd.py:290}). The exception installs a four-channel latent
format in the \texttt{edm\_playground\_v2.5} mode
(\texttt{nodes\_model\_advanced.py:224--236}).

\subsubsection*{Index-free nodes}
Every schema is named in exactly one of
Propositions~\ref{prop:f:indexfree}--\ref{prop:f:pipe}.

\begin{proposition}[Nodes without index claims]
\label{prop:f:indexfree}
The schemas of
\nodename{CLIPVisionLoader}, \nodename{StyleModelLoader},
\nodename{UpscaleModelLoader}, \nodename{SAMLoader},
\nodename{UltralyticsDetectorProvider}, \nodename{WanVideoModelLoader},
\nodename{LoadAudio}, \nodename{LoadVideo},
\nodename{KSamplerSelect}, \nodename{SamplerEulerAncestral},
\nodename{SamplerDPMPP\_2M\_SDE}, \nodename{KarrasScheduler},
\nodename{RandomNoise}, \nodename{DisableNoise},
\nodename{PrimitiveFloat}, \nodename{PrimitiveString},
\nodename{PrimitiveStringMultiline}, \nodename{PrimitiveBoolean},
\nodename{FloatConstant}, \nodename{StringConstant},
\nodename{PreviewAny}, \nodename{Display Any (rgthree)},
\nodename{SaveAudio}, \nodename{SaveAudioMP3},
\nodename{SaveAudioOpus}, \nodename{SaveVideo}, \nodename{SaveGLB},
and \nodename{Preview3D} are faithful.
\end{proposition}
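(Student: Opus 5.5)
The plan follows the four-step pattern of the conventions, and it is short because these schemas carry no indices. First I fix what an index-free schema is. Every port of the listed nodes is typed either by an index-free constructor (such as $\tycon{Noise}$, $\tycon{Sampler}$, $\tycon{Float}$, $\tycon{String}$, an upscale-model, CLIP-vision, audio, or video label) or by $\tycon{Int}(\nu)$ for a widget literal. The side condition is $\top$, and no brand or dimension variable occurs except possibly fresh ones. Fix $\sigma$ and inputs $w_i \models \sigma\tau_i$ (\emph{Setting}). Since $\sigma\varphi = \top$, the only content of faithfulness is condition~(a) of Definition~\ref{def:fidelity}, no compatibility error, and condition~(b), that each output carries its label and meets the interface clause of Definition~\ref{def:inhabitation} for index-free types.

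For (a), I argue by Definition~\ref{def:errors}. Each of the four error kinds requires the node to operate on a tensor's height, width, or channels against a precondition, to apply a weight patch, to apply a control network to a hint, or to decode or sample a latent. I group the nodes by code path and check that none does any of these. \emph{Loaders} (\nodename{CLIPVisionLoader}, \nodename{StyleModelLoader}, \nodename{UpscaleModelLoader}, \nodename{SAMLoader}, \nodename{UltralyticsDetectorProvider}, \nodename{WanVideoModelLoader}, \nodename{LoadAudio}, \nodename{LoadVideo}) read a file and construct an object. By the database premise (convention~(iii)), the file is a well-formed model of the kind the loader expects, so any abort is a missing-file or format error, not a compatibility error. \emph{Sampler, scheduler, and noise constructors} (\nodename{KSamplerSelect}, the two sampler nodes, \nodename{KarrasScheduler}, \nodename{RandomNoise}, \nodename{DisableNoise}) only build closures or sigma tensors from scalar widgets, which are validated by convention~(ii). \emph{Primitives and constants} return their literal. \emph{Sinks} (\nodename{PreviewAny}, \nodename{Display Any (rgthree)}, the audio, video, GLB, and 3D savers and previewers) serialize their input and return nothing, so (b) is vacuous for them.

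For (b), the main work is the interface clause. An upscale model must map RGB to RGB, and a noise generator must return noise of its latent's shape. For \nodename{RandomNoise}, I cite the \texttt{generate\_noise} path and argue that it produces a tensor shaped like its latent argument by Lemma~\ref{lem:f:pointwise}. \nodename{DisableNoise} returns \texttt{zeros\_like}, which is again covered by Lemma~\ref{lem:f:pointwise}. For \nodename{UpscaleModelLoader}, I rely on the database premise that the loaded file is an image-to-image upscaler, whose RGB-in/RGB-out behavior is the kind the loader expects. Integers and strings inhabit their types because the carried value equals the literal, and $\sigma$ agrees with designated widgets. Fresh indices (convention~(vi)) are instantiated by $\sigma'$ to the observed values, for example the embedding width of a CLIP-vision output, which is fresh at that loader.

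The expected obstacle is making sure that no listed node secretly inspects geometry or weights. The most suspicious is \nodename{WanVideoModelLoader}, which loads diffusion weights. I would check that its schema emits only an index-free wrapper label, with the family left fresh or untracked, so no claim is made. Downstream consumers that rely on the family would then see a frontier variable by clause~(ii) of Definition~\ref{def:frontier}, not a schema claim. If its schema did assert a family, I would move it to the loader proposition rather than prove it here.
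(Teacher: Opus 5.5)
Your grouping (loaders, sampler/scheduler/noise constructors, primitives, sinks) and your use of convention~(iii) for loaded files match the paper's proof. The gap is at \nodename{WanVideoModelLoader}. You flag it as the suspicious case but look for the risk in the wrong place. Its output is not the problem: in the paper's setting every port of these schemas is index-free or $\tyDyn$, so $\sigma$ is empty and nothing is claimed about the output. The problem is the node's inputs. Your step ``loaders read a file and construct an object'' is false for this node. The node has optional weight-patching inputs \texttt{lora} and \texttt{multitalk\_model}. If a LoRA built for another architecture is wired in, the node patches weights and skips keys, which Definition~\ref{def:errors} counts as an architecture/family error. Your argument never excludes this. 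The missing ingredient is convention~(ii) together with the clause ``undeclared inputs unconnected'' of Definition~\ref{def:fidelity}. These inputs are not declared by the schema, so the checker types any occurrence with them connected as $\tyDyn$. Faithfulness therefore only has to be proved with them absent.

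Relatedly, your Setting assumes every port is an index-free constructor or $\tycon{Int}(\nu)$. It omits $\tyDyn$-typed ports, and \nodename{WanVideoModelLoader} declares one, its block-swap argument. Every value inhabits $\tyDyn$, so you must show that no value there can cause a compatibility error. The paper does this by noting that the node reads that argument only for block counts.

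A smaller slip: the fresh embedding width you cite belongs to the output of \nodename{CLIPVisionEncode} (Proposition~\ref{prop:f:sinks}), not to \nodename{CLIPVisionLoader}, whose output is index-free. There are no fresh variables anywhere in this proposition. Your concrete \texttt{generate\_noise} argument for the noise interface is harmless but unnecessary. The paper discharges the interface clause by observing that the built-in samplers, schedulers, and noise sources are the very implementations that interface is defined against.
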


\begin{proof}
\emph{Idea.} None of these nodes looks at a tensor's shape or patches
weights, so no index claim can fail.

\emph{Setting.} Every port of these schemas has an index-free type or
$\tyDyn$. The schemas therefore have no variables, and $\sigma$ is
empty.

\emph{Code path.} None of the nodes reads the height, width, or
channel count of an image or latent, and none patches weights.
\begin{itemize}
\item The loaders read files.
\item The sampler, scheduler, and noise nodes build objects from their
  widget values (\texttt{nodes\_custom\_sampler.py:63, 390, 437, 508,
  990, 1007}).
\item The primitives return their widget values.
\item The sinks serialize or display their inputs.
\end{itemize}

\emph{No error.} Hence no error of Definition~\ref{def:errors} occurs.

\emph{Outputs.} Each output carries its label. It meets the interface
that ComfyUI's built-in nodes assume of that label. The built-in samplers,
schedulers, and noise sources are the implementations those nodes are
written against. Loaded files meet the interface by convention~(iii).

\emph{Other nodes.} \nodename{WanVideoModelLoader} needs two further
remarks. It reads its $\tyDyn$-typed block-swap argument only for
block counts (WanVideoWrapper \texttt{nodes\_model\_loading.py:846--851,
1835}). Its weight-patching inputs \texttt{lora} and
\texttt{multitalk\_model} are undeclared and therefore absent by
convention~(ii).
\end{proof}

\subsubsection*{Loaders and model patchers}
These nodes load models, CLIPs, VAEs, and control networks, or modify
a loaded model or CLIP.

\begin{proposition}[Checkpoint and diffusion-model loaders]
\label{prop:f:ckpt}
The schemas of \nodename{CheckpointLoaderSimple},
\nodename{UNETLoader}, and \nodename{UnetLoaderGGUF} are faithful.
\end{proposition}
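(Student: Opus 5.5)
The plan follows the four-step pattern fixed above: \emph{Idea}, \emph{Setting}, \emph{Code path}, \emph{No error}, \emph{Outputs}. The three loaders share a shape. Each has no input ports, only a model-file widget, plus, for the checkpoint loader, the config-free defaults. Their outputs are typed from the brands of that file, e.g.\ $\tyModel{f}{q}$, $\tyClip{\archfn(f)}$, $\tyVae{\archfn(f)}$ for \nodename{CheckpointLoaderSimple}, and $\tyModel{f}{q}$ alone for the two diffusion-model loaders. So I fix a ground instantiation $\sigma$ that agrees with the widget literals. By Definition~\ref{def:fidelity} it maps the file widget's brand variables to the ghost of the named file, so $f_0 = \sigma(f)$ and $q_0 = \sigma(q)$ are the true family and quantization of the loaded weights. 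There are no input values to fix.

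For the code path I follow \texttt{load\_checkpoint\_guess\_config} (\texttt{sd.py}) for \nodename{CheckpointLoaderSimple}, \texttt{load\_diffusion\_model} for \nodename{UNETLoader}, and the GGUF pack's loader for \nodename{UnetLoaderGGUF}. None of them reads the geometry of a user tensor or applies a weight patch. They detect the model configuration from the state-dict keys and construct the model object, and a \texttt{weight\_dtype} widget may only change storage precision. Hence no event of Definition~\ref{def:errors} can occur. Aborts on malformed or mismatched files are excluded by the database premise, convention~(iii): loaded files are well-formed models of the kind their loader expects. An absent text encoder or VAE is not a compatibility error, and it inhabits every type by Definition~\ref{def:inhabitation}.

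For outputs, the ghost of each output is set by Definition~\ref{def:values}: a loader's ghost is the file's true provenance. So the model's family ghost is $f_0$ and matches the family index exactly. The CLIP and VAE ghosts carry $f_0$ (or their architecture), and these match the architecture index $\archfn(f_0)$ by the matching rule of Definition~\ref{def:inhabitation}. The quantization ghost must equal $q_0$. I will also use Lemma~\ref{lem:f:model} and Lemma~\ref{lem:f:vae} to note that the constructed objects really have latent format $\chfn(\archfn(f_0))$ and ratio $\mathit{sf}(\archfn(f_0))$, although inhabitation of these types only inspects ghosts. No output type mentions a fresh variable, so $\sigma' = \sigma$.

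I expect the main obstacle to be the quantization index when the \texttt{weight\_dtype} widget of \nodename{UNETLoader}, or the GGUF file format, overrides the stored precision. The issue is whether the brand $q$ is designated by the file or by the widget. I would handle it with a guarded binding under convention~(iv): a literal \texttt{weight\_dtype} other than \texttt{default} designates $q$ to the corresponding brand, and otherwise $q$ comes from the database. The argument is then that the ghost's quantization tag records the precision actually installed by the loader. For GGUF, the database resolves the quantization from the file header, so convention~(iii) applies directly.
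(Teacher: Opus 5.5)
Your code path, your no-error step, and your treatment of the family and architecture indices all match the paper's proof. The gap is the quantization index.

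\textbf{The failing step.} In the schema $\forall f,q.\ \tycon{Ckpt}(f) \to (\tyModel{f}{q}, \tyClip{\archfn(f)}, \tyVae{\archfn(f)})$, the file widget has type $\tycon{Ckpt}(f)$, so it designates only $f$. The variable $q$ occurs in no input type and no widget designates it. It is therefore \emph{fresh} in the sense of Definition~\ref{def:designation}, and the paper says so explicitly. A ground instantiation $\sigma$ that agrees with the widget literals may consequently assign $q$ any ground brand. So $\sigma(q)$ is not ``the true quantization of the loaded weights''. Your step ``no output type mentions a fresh variable, so $\sigma'=\sigma$'' then fails: a full-precision model does not inhabit $\tyModel{f_0}{\brand{fp8}}$ when $\sigma(q)=\brand{fp8}$.

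\textbf{The missing idea.} Use the freedom that clause~(b) of Definition~\ref{def:fidelity} grants on fresh variables. Let $\sigma'$ agree with $\sigma$, except that $\sigma'(q)$ is the quantization ghost of the model actually built. The model output then inhabits $\tyModel{f_0}{\sigma'(q)}$ with nothing further to prove. The runtime value of $q$ is accounted for at the level of the theorems instead: the model port is a frontier position by clause~(ii) of Definition~\ref{def:frontier}.

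\textbf{Consequences for your anticipated obstacle.} Drop the proposed remedy; freshness of $q$ removes the obstacle entirely. A guarded binding of $q$ to the \texttt{weight\_dtype} literal, or a database resolution of quantization from the GGUF header, would change the very schemas whose faithfulness the proposition asserts. It would also leave you obligations that none of the trusted lemmas covers. You would need to show that each literal installs exactly the corresponding brand, and that loading with \texttt{default} yields the precision the database records. With $q$ fresh, neither the \texttt{weight\_dtype} widget nor the GGUF format matters. \nodename{UNETLoader} and \nodename{UnetLoaderGGUF} then reduce to the model-output case of \nodename{CheckpointLoaderSimple}, exactly as in the paper. Lemmas~\ref{lem:f:model} and~\ref{lem:f:vae} are likewise unnecessary, since inhabitation of $\tycon{Model}$, $\tycon{Clip}$, and $\tycon{VAE}$ inspects only ghosts.
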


\begin{proof}
\emph{Idea.} A loader returns objects built from the named file, so
they carry that file's ghost.

\emph{Setting.} Consider \nodename{CheckpointLoaderSimple}, whose
schema is $\forall f,q.\ \tycon{Ckpt}(f) \to (\tycon{Model}(f,q),
\tycon{Clip}(\archfn(f)), \tycon{VAE}(\archfn(f)))$. The file widget
designates $f$, so $f_0 = \sigma(f)$ is the ghost of the named file
(Definition~\ref{def:fidelity}). The variable $q$ is fresh.

\emph{Code path.} The node calls
\texttt{load\_checkpoint\_guess\_config}, which detects the model
configuration from the weights (\texttt{nodes.py:627--630}).

\emph{No error.} The file is a well-formed checkpoint by
convention~(iii). Loading reads no tensor argument, so no error
occurs.

\emph{Outputs.} The model is built from the file, so its ghost family
is $f_0$. With $\sigma'(q)$ its quantization ghost, it inhabits
$\tycon{Model}(f_0, \sigma'(q))$. The CLIP and the VAE are built from
the same file. Their ghost architecture is therefore $\archfn(f_0) =
\sigma(\archfn(f))$. A component that the checkpoint lacks is returned
as \texttt{None}, the absent value, which inhabits every type.

\emph{Other nodes.} \nodename{UNETLoader} and \nodename{UnetLoaderGGUF}
return only the model through the same detection
(\texttt{nodes.py:977--989}, GGUF \texttt{nodes.py:176}). The argument
for the model output applies.
\end{proof}

\begin{proposition}[Component loaders]
\label{prop:f:comp}
The schemas of \nodename{VAELoader}, \nodename{CLIPLoader},
\nodename{DualCLIPLoader}, \nodename{DualCLIPLoaderGGUF},
\nodename{TripleCLIPLoader}, \nodename{QuadrupleCLIPLoader},
\nodename{ControlNetLoader}, and \nodename{WanVideoVAELoader} are
faithful.
\end{proposition}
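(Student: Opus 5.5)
The proof follows the four-step pattern of the conventions, \emph{Setting}, \emph{Code path}, \emph{No error}, \emph{Outputs}, in the same way as Proposition~\ref{prop:f:ckpt}. The key observation is that every one of these loaders reads only files and widget literals. None of them takes a tensor argument whose geometry could violate a precondition, and none applies a weight patch. The whole burden therefore falls on the \emph{Outputs} step: each returned object must carry the ghost demanded by its output type. I will treat \nodename{VAELoader} as the representative node and then reduce the others to it.

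For \nodename{VAELoader} the schema is $\forall a.\ \tycon{VAEFile}(a) \to \tyVae{a}$, or its file-widget equivalent. The file widget designates $a$, so $\sigma(a)$ is the ghost architecture of the named file by Definition~\ref{def:fidelity} and convention~(iii). The code loads the state dict and constructs the VAE from the file's weights. By convention~(iii) the file is a well-formed VAE, so loading raises no error of Definition~\ref{def:errors}. The resulting object is built from that file, so its ghost is the file's architecture, and it inhabits $\tyVae{\sigma(a)}$. Lemma~\ref{lem:f:vae} is not needed for faithfulness here. It matters only downstream, where it ties $\chfn$ and $\mathit{sf}$ to the object, and the ghost is what inhabitation requires.

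The CLIP loaders follow the same argument, with one refinement. The encoder's architecture index is fixed by the \texttt{type} widget together with the files, and Lemma~\ref{lem:f:text} states that the encoder class is selected exactly from these. The schema binds the architecture variable through the \texttt{type} literal, guarded as in convention~(iv), and through the database brands of the files. The resulting CLIP's ghost therefore equals $\sigma$ of that variable. For the dual, triple, and quadruple loaders, the ghost records the combined architecture that the \texttt{type} literal selects. If the database cannot resolve the files, the position falls under clause~(iii) of Definition~\ref{def:frontier} and needs no argument here. \nodename{ControlNetLoader} returns a network whose ghost family/architecture and modality are the file's database brands by convention~(iii), so it inhabits $\tyCtrl{s}{f}$. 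The control kind is a fresh index under convention~(vi). \nodename{DualCLIPLoaderGGUF} and \nodename{WanVideoVAELoader} are handled by pointing to their pack sources, which perform the same file-driven construction.

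The main obstacle is the multi-file CLIP loaders. A mismatch between the \texttt{type} literal and the supplied files, as in the two-T5 example of Table~\ref{tab:decidable}, could produce an encoder whose ghost differs from the architecture the schema claims. I would first try to show that the schema's side condition requires each file's database brand to be compatible with the \texttt{type}. Under that condition, the encoder class chosen by \texttt{sd.py:1315} emits embeddings of width consumed by that architecture. If a file is missing a component, I would use the absent-value clause of Definition~\ref{def:inhabitation}.
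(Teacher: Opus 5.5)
Your treatment of the file loaders (\nodename{VAELoader}, \nodename{ControlNetLoader}, \nodename{WanVideoVAELoader}) matches the paper. The file widget designates the brands, convention~(iii) gives well-formedness and the ghost, and the loaded object inhabits the output type. One slip: the control kind of \nodename{ControlNetLoader} is not fresh. The file designates it together with the other brands, but your argument survives either way.

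The CLIP part has a concrete gap. \nodename{TripleCLIPLoader} and \nodename{QuadrupleCLIPLoader} have no \texttt{type} widget, so ``the combined architecture that the \texttt{type} literal selects'' refers to nothing for them. Their schemas fix the output architecture to $\brand{SD3}$ and $\brand{HiDream}$, and the proof must go through the code. The nodes pass three and four files to \texttt{comfy.sd.load\_clip}, whose corresponding branches build SD3 and HiDream encoders (\texttt{sd.py:1734--1739}). Under the well-formed-file premise, the outputs therefore inhabit $\tycon{Clip}(\brand{SD3})$ and $\tycon{Clip}(\brand{HiDream})$. The file count, not a literal, fixes the architecture, so this case needs its own argument.

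Your ``main obstacle'' is also left open, and the side-condition route you propose is not what closes it. In the paper's argument, the \texttt{type} widget alone designates the architecture of \nodename{CLIPLoader}, \nodename{DualCLIPLoader}, and \nodename{DualCLIPLoaderGGUF}. The proof uses no constraint tying the files' brands to it. A mismatch such as two T5 files is instead excluded by the trusted premises:
\begin{itemize}
\item Convention~(iii) says that a brand the database resolves for a CLIP type is the ghost of the loaded weights, and that loaded files are well-formed models of the kind their loader expects.
\item Lemma~\ref{lem:f:text} says the encoder class is fixed by the \texttt{type} widget.
\end{itemize}
Together these give the returned CLIP the ghost architecture $\sigma(a)$ directly.

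Your fallback to clause~(iii) of Definition~\ref{def:frontier} is misplaced. The frontier belongs to the proof of Theorem~\ref{thm:soundness}. Faithfulness (Definition~\ref{def:fidelity}) quantifies over every $\sigma$ that agrees with the widget literals, and a model-file literal is the file's ghost by definition. Unresolved files are therefore not a case this proposition has to treat.
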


\begin{proof}
\emph{Idea.} The file, encoder-type widget, or loader's fixed
architecture determines every brand of the output.

\emph{Setting.} In the single- and dual-file loaders, one widget
designates every brand variable of the output type. The triple- and
quadruple-file loaders instead fix the output architecture to SD3
and HiDream, respectively.

\emph{Code path, no error, and outputs.} The argument depends on how
the loader fixes the output brands.
\begin{itemize}
\item \emph{File loaders.} For \nodename{VAELoader},
  \nodename{ControlNetLoader}, and \nodename{WanVideoVAELoader} the
  widget is the file, so $\sigma$ assigns the brands of the file's
  ghost. The files are well-formed by convention~(iii), so loading
  raises no error. The loaded object carries the file's ghost, which
  inhabits the output type. The Wan wrapper builds a 16- or 48-channel
  VAE according to the file (WanVideoWrapper
  \texttt{nodes\_model\_loading.py:1900--1928}).
\item \emph{CLIP loaders.} For the three CLIP loaders the
  \texttt{type} widget designates $a$ through the database premise. It
  also fixes the class of the text encoder (Lemma~\ref{lem:f:text},
  \texttt{nodes.py:1007--1016, 1035--1046}, GGUF \texttt{nodes.py:272}).
  The returned CLIP therefore has the ghost architecture $A$.
\item \emph{Fixed-architecture CLIP loaders.}
  \nodename{TripleCLIPLoader} and \nodename{QuadrupleCLIPLoader}
  pass three and four files to \texttt{comfy.sd.load\_clip}, respectively
  (\texttt{nodes\_sd3.py}, \texttt{nodes\_hidream.py}). The corresponding
  branches build SD3 and HiDream encoders (\texttt{sd.py:1734--1739}).
  Under the well-formed-file premise of convention~(iii), loading raises
  no compatibility error, and the outputs inhabit
  $\tycon{Clip}(\brand{SD3})$ and $\tycon{Clip}(\brand{HiDream})$.
\end{itemize}
\end{proof}

\begin{proposition}[LoRA loaders]
\label{prop:f:lora}
The schemas of \nodename{LoraLoader} and \nodename{LoraLoaderModelOnly}
are faithful.
\end{proposition}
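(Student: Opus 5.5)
The proof follows the four-step pattern of \S\ref{app:faithful}: first \emph{Setting}, then \emph{Code path}, then \emph{No error}, then \emph{Outputs}. I treat \nodename{LoraLoader} first and reduce \nodename{LoraLoaderModelOnly} to it. For the setting, I fix a ground instantiation $\sigma$ of the schema as completed by convention~(iv). The file widget designates $g$, so $\sigma(g)$ is the ghost of the LoRA file by convention~(iii). I write $f_0=\sigma(f)$ and $g_0=\sigma(g)$. The side condition gives $\archfn(f_0)=\archfn(g_0)$ and $\mathit{sz}(f_0)=\mathit{sz}(g_0)$. For a converter file with a nonzero or wired strength, it also gives $\mathit{iw}(g_0)>\mathit{iw}(f_0)$. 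I fix inputs $w_1\models\tyModel{f_0}{\sigma(q)}$, $w_2\models\tyClip{\archfn(f_0)}$, and the LoRA file. I then split into the three converter cases of convention~(iv) and the non-converter case.

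The code path is \texttt{load\_lora\_for\_models}. It clones the model and the CLIP, which by Lemma~\ref{lem:f:clone} keeps the weights, latent format, and ghosts. It then builds key maps from the targets' architectures and adds patches at the given strengths. Since $\archfn(g_0)=\archfn(f_0)$, the LoRA's keys follow the target's key map, so Lemma~\ref{lem:f:weights} skips no key for being outside the map. Since the size classes also agree, every addressed weight has the LoRA tensor's shape, except possibly the input projection. The input projection is the crux and the step I expect to be hardest. For a non-converter LoRA, convention~(iii) says it does not address the input projection, so no shape mismatch arises. For a converter, the last clause of Lemma~\ref{lem:f:weights} pads the projection to width $\mathit{iw}(g_0)$, and the side condition $\mathit{iw}(g_0)>\mathit{iw}(f_0)$ makes this padding a widening that matches the LoRA's tensor. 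When all strengths are zero, no update is applied, and I must argue from the code that no patch is added. If the code does add zero-strength patches, I would instead show that the skip logic is never reached. In every case, none of the patch-error events of Definition~\ref{def:errors} occurs, and by convention~(v) these events are attributed to this node.

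For the outputs, the patched CLIP keeps architecture $\archfn(f_0)$ and inhabits $\tyClip{\archfn(f_0)}$. The non-converter and zero-strength models keep ghost $f_0$ and quantization $\sigma(q)$, since patching does not change the quantization tag. The converter model takes ghost $g_0$ by Definition~\ref{def:values}, so it inhabits the retyped $\tyModel{g_0}{\sigma(q)}$. In the wired-strength case, the output family is the fresh $g'$, and I set $\sigma'(g')$ to the actual ghost, which is either $f_0$ or $g_0$. For \nodename{LoraLoaderModelOnly}, the node runs the same call with the CLIP absent, so the model argument applies unchanged and the CLIP clauses drop out.
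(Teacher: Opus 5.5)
Your outline follows the paper's proof in structure and in most steps. The shared pieces are the key-map argument from $\archfn(g_0)=\archfn(f_0)$, the size-class argument that leaves only the input projection, the converter padding justified by $\mathit{iw}(g_0)>\mathit{iw}(f_0)$, the three converter cases with a fresh $g'$, and the reduction of \nodename{LoraLoaderModelOnly}. The gap is in how you handle zero strengths, which is exactly where the converter completion of convention~(iv) is delicate.

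\textbf{All strengths zero.} You leave open the code fact the proof rests on: \texttt{load\_lora} returns its inputs unchanged when both strengths are $0$ (\texttt{nodes.py:719--721}), so no patch is added at all. Your fallback would not work for a converter file. The padding of the input projection happens before the strength-scaled update (\texttt{weight\_adapter/lora.py:243--246, 270}). A zero-strength patch would therefore still reshape the projection, giving the model ghost $g_0$ while the checker keeps $\tyModel{f}{q}$. And since this case carries no $\mathit{iw}$ side condition, nothing would rule out a shape mismatch either.

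\textbf{Mixed strengths.} Your case~(1) lacks the same fact when \nodename{LoraLoader} has \texttt{strength\_model}${}=0$ and a nonzero \texttt{strength\_clip}. Convention~(iv) retypes the output as $\tyModel{g}{q}$ here, but you assert ghost $g_0$ without showing that the model is changed at all. The missing step has three parts:
\begin{itemize}
\item the early return does not fire, because the strengths are not both zero;
\item \texttt{load\_lora\_for\_models} adds the model patch at strength $0$;
\item the padding happens at any strength once the patch is added.
\end{itemize}
This is why the retyping is keyed on ``some strength literal is nonzero'' rather than on \texttt{strength\_model}.

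\textbf{A smaller omission.} For \nodename{LoraLoaderModelOnly}, the CLIP clauses do not simply drop out. The LoRA's text-encoder keys fall outside the model-only key map and are skipped. You should note that Definition~\ref{def:errors} does not count these skips.
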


\begin{proof}
\emph{Idea.} A LoRA built for the model's architecture and size class
addresses the model's keys with matching shapes, so no patch is
skipped. Only a converter LoRA changes the model's family, and the
checker retypes the output to match.

\emph{Setting.} The schemas quantify over $f,q,g$. They take a model of
type $\tycon{Model}(f,q)$ and, at \nodename{LoraLoader}, a CLIP of type
$\tycon{Clip}(\archfn(f))$. They require $\archfn(f)=\archfn(g)$ and
$\mathit{sz}(f)=\mathit{sz}(g)$. Fix $\sigma$ and inputs. The file
widget designates $g$, so $g_0 = \sigma(g)$ is the ghost of the LoRA.
Hence $\archfn(f_0) = \archfn(g_0)$ and $\mathit{sz}(f_0) =
\mathit{sz}(g_0)$.

\emph{Code path.} The node branches on the strengths.
\begin{itemize}
\item If both strengths are $0$, \texttt{load\_lora} returns its inputs
  (\texttt{nodes.py:719--721}). \nodename{LoraLoaderModelOnly} passes a
  CLIP strength of $0$ (\texttt{:751--752}).
\item Otherwise the node loads the file and calls
  \texttt{load\_lora\_for\_models} (\texttt{sd.py:90--122}). This
  function builds one key map from the model's key map and, if a CLIP
  is given, the CLIP's key map. It converts the LoRA's key naming. It
  adds the resulting patches to clones of the model and the CLIP.
\end{itemize}

\emph{No error.} The only errors of Definition~\ref{def:errors} that a
weight patch can raise are skipped keys and shape mismatches. We rule
out both.
\begin{itemize}
\item \emph{Skipped keys.} Key maps are determined by the architecture
  (Lemma~\ref{lem:f:weights}). The LoRA's keys follow the key map of
  $\archfn(g_0) = \archfn(f_0)$. This is the key map of the model. The
  CLIP has ghost architecture $\archfn(f_0)$, so it is the key map of
  the CLIP too. No key is therefore skipped for following another
  architecture's key map. Text-encoder keys that
  \nodename{LoraLoaderModelOnly} skips are skipped because no CLIP is
  given. Definition~\ref{def:errors} does not count them.
\item \emph{Shape mismatches.} Families of equal architecture and size
  class, such as $f_0$ and $g_0$, have weights of equal shape except
  the input projection (Lemma~\ref{lem:f:weights}). An ordinary LoRA
  does not address the input projection (convention~(iii)). Every patch
  then has the shape of its weight, and none is skipped.
\item \emph{Converter LoRAs.} A converter LoRA pads the input
  projection to its own width before adding its update
  (Lemma~\ref{lem:f:weights}). For a converter file, the checker adds
  the strict side condition $\mathit{iw}(g) > \mathit{iw}(f)$ whenever some
  strength literal is nonzero or some strength is wired. So whenever
  the node may add the patch, the padding widens the projection. The
  update then matches the padded shape. When every strength is the
  literal $0$, the node adds no patch and the checker adds no side condition.
\end{itemize}

\emph{Outputs.} The model output depends on the kind of LoRA.
\begin{itemize}
\item \emph{Ordinary LoRA.} The patched clones keep the weights'
  geometry, the latent format, the quantization, and the ghost family
  $f_0$ (Lemma~\ref{lem:f:clone}). The outputs therefore inhabit
  $\tycon{Model}(f_0,\sigma(q))$ and $\tycon{Clip}(\archfn(f_0))$.
\item \emph{Converter file.} A converter file changes the model
  output's family according to the strength literals. The node adds the
  patch whenever some strength is nonzero (\texttt{nodes.py:720}). The
  padding of the input projection precedes the strength-scaled update
  (\texttt{weight\_adapter/lora.py:243--246, 270}). It therefore
  happens at any strength once the patch is added. In particular, it
  happens for \texttt{strength\_model}~$=0$ with a nonzero
  \texttt{strength\_clip}. Three cases arise.
  \begin{enumerate}
  \item \emph{Some strength literal is nonzero.} The checker retypes
    the output as $\tycon{Model}(g,q)$. The patched model has the input
    projection of $g_0$. By Definition~\ref{def:values} it has the
    ghost family $g_0$, so it inhabits $\tycon{Model}(g_0,\sigma(q))$.
  \item \emph{Every strength is the literal $0$.} The checker keeps
    $\tycon{Model}(f,q)$. The node returns its input model, which
    inhabits $\tycon{Model}(f_0,\sigma(q))$.
  \item \emph{Otherwise.} Some strength is wired and no literal is
    nonzero. Whether the node patches is then decided at run time. The
    checker gives the output the fresh family variable $g'$. The
    output position's frontier observation fixes it
    (Definition~\ref{def:frontier}).
  \end{enumerate}
\end{itemize}
\end{proof}

\begin{proposition}[Model and CLIP patchers]
\label{prop:f:patch}
The schemas of \nodename{ModelSamplingSD3},
\nodename{ModelSamplingAuraFlow}, \nodename{ModelSamplingFlux},
\nodename{ModelSamplingDiscrete},
\nodename{ModelSamplingContinuousEDM}, \nodename{RescaleCFG},
\nodename{CFGNorm}, \nodename{DifferentialDiffusion},
\nodename{PathchSageAttentionKJ}, \nodename{CLIPSetLastLayer},
\nodename{SetClipHooks}, and \nodename{CLIPMergeSimple} are faithful.
\end{proposition}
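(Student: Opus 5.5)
We follow the four-step proof pattern (Setting, Code path, No error, Outputs) and split the twelve nodes into three groups by what they touch. \emph{Group A (sampling-mode patchers):} \nodename{ModelSamplingSD3}, \nodename{ModelSamplingAuraFlow}, \nodename{ModelSamplingFlux}, \nodename{ModelSamplingDiscrete}, and \nodename{ModelSamplingContinuousEDM}. \emph{Group B (sampling-function and attention patchers):} \nodename{RescaleCFG}, \nodename{CFGNorm}, \nodename{DifferentialDiffusion}, and \nodename{PathchSageAttentionKJ}. \emph{Group C (CLIP patchers):} \nodename{CLIPSetLastLayer}, \nodename{SetClipHooks}, and \nodename{CLIPMergeSimple}.

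For Groups A and B the schema has the form $\forall f,q.\ (\tyModel{f}{q}, \ldots) \to \tyModel{f}{q}$, possibly with float widgets that designate nothing relevant. We fix $\sigma$ and an input model $w \models \tyModel{f_0}{\sigma(q)}$. Along the code path, each node calls \texttt{ModelPatcher.clone} and then installs an object patch: a model-sampling object, a CFG function, a denoise-mask function, or an attention override. None of these reads a tensor's height, width, or channel count, and none adds a weight patch, so no error of Definition~\ref{def:errors} can occur. By Lemma~\ref{lem:f:clone}, the clone keeps the weights, the latent format, and the quantization, and by convention~(v) it inherits the ghost family $f_0$. The output therefore inhabits $\tyModel{f_0}{\sigma(q)}$ with $\sigma'=\sigma$.

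The one delicate case is \nodename{ModelSamplingDiscrete}'s \texttt{edm\_playground\_v2.5} mode. Lemma~\ref{lem:f:clone} singles it out because it installs a four-channel latent format. We handle it by reading the schema's guarded constraint (convention~(iv)). Under that sampling literal, the schema restricts $f$ to an architecture with $\chfn = 4$, or else retypes the output. The guard is undropped exactly when the widget holds this literal, so $\sigma$ satisfies $\chfn(\archfn(f_0)) = 4$. The installed format then agrees with Lemma~\ref{lem:f:model}, and inhabitation is preserved. We expect this to be the main obstacle. We must check that the schema file really carries this guard, and that a wired sampling widget drops no needed constraint. In the wired case, the guard is not dropped, because the widget holds no differing literal.

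For Group C, \nodename{CLIPSetLastLayer} and \nodename{SetClipHooks} clone the CLIP and set a layer index or attach hooks. By Lemma~\ref{lem:f:clone} they keep the weights and the encoder class, so the output inhabits $\tyClip{A}$. For hooks, the schema's architecture index on the hook group equals $A$, and attaching the hooks raises no error. \nodename{CLIPMergeSimple} takes two CLIPs of type $\tyClip{a}$ with the same $a$ and adds key patches from the second onto a clone of the first. Both inputs have ghost architecture $A$, so by Lemma~\ref{lem:f:weights} their key maps coincide. Shape equality additionally needs equal size class. We would either rely on the CLIP index determining the encoder class by Lemma~\ref{lem:f:text}, so equal architecture fixes the shapes, or point to a $\mathit{sz}$ side predicate in the schema. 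This case is the second place where a check against the schema file is required. Under these conditions no key is skipped and no shape disagrees, and the merged clone inhabits $\tyClip{A}$.
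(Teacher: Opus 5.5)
There is one concrete error, and it leaves one node unproved as written. Your decomposition and key lemmas otherwise match the paper's proof:
\begin{itemize}
\item clone plus object patch via Lemma~\ref{lem:f:clone} for the model patchers;
\item a guarded constraint $\chfn(\archfn(f)) = 4$ for the one mode that installs a latent format;
\item key-map agreement for the CLIP patchers.
\end{itemize}

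The error is that the \texttt{edm\_playground\_v2.5} mode is not a mode of \nodename{ModelSamplingDiscrete}. That node only installs a discrete model-sampling object and never touches the latent format. The mode belongs to \nodename{ModelSamplingContinuousEDM} (\texttt{nodes\_model\_advanced.py:224--236}), which adds a \texttt{latent\_format} object patch in that mode. As written, you cover \nodename{ModelSamplingContinuousEDM} by your uniform Group~A argument. That argument is the general clause of Lemma~\ref{lem:f:clone}, that the clone keeps the latent format, and this is exactly the case the lemma excludes. So faithfulness of that node is not established. Conversely, the schema check you flag for \nodename{ModelSamplingDiscrete} would fail, harmlessly, because that node needs no guard.

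The repair is to move your guarded-constraint paragraph to \nodename{ModelSamplingContinuousEDM}. Its schema carries $\chfn(\archfn(f)) = 4$ guarded by that literal, and the constraint is also kept when the widget is wired. Under $\sigma$, the installed four-channel format then has the $\chfn(\archfn(f_0))$ channels that Lemma~\ref{lem:f:model} associates with $f_0$. Your ``or else retypes the output'' alternative is not needed.

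Two smaller points:
\begin{itemize}
\item For \nodename{SetClipHooks}, ``attaching the hooks raises no error'' needs its reason. With no hooks, the node returns its CLIP unchanged. Otherwise it registers the hooks' LoRA patches on a clone through the CLIP's key map. Since the hooks inhabit $\tycon{Hooks}(A)$, their LoRAs follow the key map of $A$, which is the CLIP's, so by Lemma~\ref{lem:f:weights} no key is skipped.
\item For \nodename{CLIPMergeSimple}, the paper takes your first option: the shared architecture index $A$ is taken to fix both the keys and the shapes of the two CLIPs. No $\mathit{sz}$ predicate is involved.
\end{itemize}
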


\begin{proof}
\emph{Idea.} These nodes clone their input and change how it samples
or encodes, not the geometry of its weights. The type therefore passes
through unchanged.

\emph{Setting.} The model schemas map $\tycon{Model}(f,q)$ to
$\tycon{Model}(f,q)$. The CLIP schemas map CLIPs of type
$\tycon{Clip}(a)$ to $\tycon{Clip}(a)$.

\emph{Model patchers.} Each node clones its model and installs an
object patch of the sampling schedule, the CFG function, the
denoise-mask function, or the attention kernel
(\texttt{nodes\_model\_advanced.py:65--304}, \texttt{nodes\_cfg.py:78},
\texttt{nodes\_differential\_diffusion.py:34}, KJNodes
\texttt{model\_optimization\_nodes.py:127}). No tensor or weight
argument is processed, so no error occurs. By Lemma~\ref{lem:f:clone}
the clone keeps the weights, the quantization, and the latent format.
Its ghost family is therefore $f_0$, and it inhabits
$\tycon{Model}(f_0,\sigma(q))$.

\emph{The EDM exception.} The one patch that changes the latent format
is \nodename{ModelSamplingContinuousEDM} in mode
\texttt{edm\_playground\_v2.5}. It installs a four-channel format
(\texttt{nodes\_model\_advanced.py:224--236}). Its schema carries the
constraint $\chfn(\archfn(f)) = 4$ guarded by that literal. The
constraint is kept when the widget is wired. Under $\sigma$ the
installed format therefore has the $\chfn(\archfn(f_0))$ channels that
Lemma~\ref{lem:f:model} associates with $f_0$.

\emph{CLIP patchers.} Each node returns its CLIP or a clone of it, and
no key is skipped.
\begin{itemize}
\item \nodename{CLIPSetLastLayer} clones its CLIP and sets the output
  layer (\texttt{nodes.py:688--691}). This keeps the architecture
  (Lemma~\ref{lem:f:clone}).
\item \nodename{SetClipHooks} returns its CLIP unchanged when no hooks
  are given (\texttt{nodes\_hooks.py:249--260}). Otherwise it registers
  the hooks' LoRA patches on a clone through the CLIP's key map
  (\texttt{hooks.py:154--171}). The hooks inhabit $\tycon{Hooks}(A)$,
  so their LoRAs follow the key map of $A$. That is the key map of the
  CLIP, so no key is skipped (Lemma~\ref{lem:f:weights}).
\item \nodename{CLIPMergeSimple} adds the key patches of one CLIP to a
  clone of the other. Both have architecture $A$ and thus the same keys
  and shapes. No key is skipped.
\end{itemize}
In all three cases the output is a clone of a CLIP of architecture $A$
and inhabits $\tycon{Clip}(A)$.
\end{proof}

\subsubsection*{Text encoders and conditioning}
These nodes build and combine conditionings. Their proofs track the
conditioning's architecture, protocol, and control kind.

\begin{proposition}[Text encoders]
\label{prop:f:text}
The schemas of \nodename{CLIPTextEncode}, \nodename{CLIPTextEncodeSDXL},
and \nodename{CLIPTextEncodeFlux} are faithful.
\end{proposition}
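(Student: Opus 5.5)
I will follow the four-step proof pattern (Setting, Code path, No error, Outputs). I will prove \nodename{CLIPTextEncode} first and then reduce the SDXL and Flux variants to it under \emph{Other nodes}.

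\emph{Setting.} The schema of \nodename{CLIPTextEncode} takes $\tycon{Clip}(a)$ and returns $\tycon{Cond}(a,p,k)$. The protocol $p$ and the control kind $k$ are fresh by convention~(vi); the protocol may instead be fixed per architecture by the completion of convention~(iv). Fix $\sigma$ and an input CLIP $w \models \tycon{Clip}(A)$. Its ghost architecture is then $A$, and the prompt and size widgets are plain literals.

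\emph{Code path and no error.} The node tokenizes the prompt and calls \texttt{encode\_from\_tokens\_scheduled} on the CLIP. It wraps the result as a conditioning list whose single entry carries the embedding and a pooled output. None of the four events of Definition~\ref{def:errors} can occur here. The node reads no image or latent geometry, patches no weights, applies no control network, and decodes no latent. A CLIP that the checkpoint lacks arrives as the absent value. The resulting abort is a missing component, which Definition~\ref{def:errors} excludes. Tokenization and encoding by the CLIP's own encoder class cannot disagree in width with that class, by Lemma~\ref{lem:f:text}.

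\emph{Outputs.} The produced conditioning has ghost architecture $A$, because its only producing encoder is the CLIP of ghost architecture $A$. It carries no attached control network, concat latent, or image embedding. It therefore vacuously meets the kind and embedding-width clauses of inhabitation, for any choice of $\sigma'(k)$. By Lemma~\ref{lem:f:text}, the embedding width is the one that models of architecture $A$ consume. Hence the output inhabits $\tycon{Cond}(A,\sigma'(p),\sigma'(k))$, with $\sigma'$ extending $\sigma$ on the fresh indices.

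\emph{Other nodes.} \nodename{CLIPTextEncodeSDXL} and \nodename{CLIPTextEncodeFlux} differ from \nodename{CLIPTextEncode} in two ways. They tokenize several prompts into the sub-encoders (\texttt{g}/\texttt{l}, resp.\ \texttt{clip\_l}/\texttt{t5xxl}), and they attach size or guidance metadata. I will argue as follows.
\begin{itemize}
\item Their schemas pin $a$ to $\brand{SDXL}$ resp.\ $\brand{Flux}$, so the input CLIP is of that architecture.
\item Its tokenizer then accepts exactly these keys, and the encoding is as before.
\item The extra widgets (width, height, crop, target sizes, guidance) enter only metadata, either via the conditioning dictionary or via \texttt{conditioning\_set\_values}. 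By Lemma~\ref{lem:f:text} they leave the ghost and the embedding unchanged.
\end{itemize}

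The main obstacle is the variant nodes. I must check in the code that the extra keys are accepted only by the matching architecture's tokenizer. I must also check that the schemas really fix $a$; otherwise a CLIP of another architecture could reach an unexpected tokenizer key. If a schema leaves $a$ open, I would instead show that the tokenizer silently ignores missing or extra keys. That would yield a valid embedding of the CLIP's own architecture, which still inhabits $\tycon{Cond}(A,\ldots)$.
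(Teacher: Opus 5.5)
Your proposal is correct and follows the paper's proof. It uses the same four-step argument for \nodename{CLIPTextEncode}: an absent CLIP is a non-compatibility abort, Lemma~\ref{lem:f:text} gives the right embedding width, and a ghost with the single architecture $A$ and no attachments inhabits $\tycon{Cond}(A,p,k)$ for every $p$ and $k$. It also reduces the two variants the same way, through their schemas fixing $a$ to $\brand{SDXL}$ and $\brand{Flux}$, which the paper confirms.

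Your ``main obstacle'' is misidentified, though harmlessly. You only need the matching encoder to provide the streams the node reads, not that \emph{only} it does. That stronger claim is false: the SD3 encoder also provides \texttt{g}, \texttt{l}, and \texttt{t5xxl}, so the schemas are merely stricter than the nodes. Your fallback would not hold either, because the nodes index these streams directly rather than ignoring missing ones. Neither point matters once the schemas pin the architecture.
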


\begin{proof}
\emph{Idea.} A text encoder of architecture $A$ produces embeddings
that models of $A$ consume, and it attaches nothing else.

\emph{Setting.} Consider \nodename{CLIPTextEncode}, with schema
$\forall a.\ \tycon{Clip}(a) \to \tycon{Cond}(a)$. Fix $\sigma$ and a
CLIP of ghost architecture $A$.

\emph{Code path.} The node tokenizes the text and calls
\texttt{encode\_from\_tokens\_scheduled} (\texttt{nodes.py:73--77}).

\emph{No error.} An absent CLIP makes the node raise an error about the
missing component, which is not a compatibility error. Otherwise no tensor
argument is read.

\emph{Outputs.} By Lemma~\ref{lem:f:text} the embeddings have the width
that models of $A$ consume. The output's ghost thus has the single
architecture $A$, no image embedding, and no control network. It
therefore inhabits $\tycon{Cond}(A,p,k)$ for every $p$ and $k$.

\emph{Other nodes.} \nodename{CLIPTextEncodeSDXL} reads the token
streams \texttt{g} and \texttt{l} (\texttt{nodes\_clip\_sdxl.py:52--59}).
\nodename{CLIPTextEncodeFlux} reads \texttt{l} and \texttt{t5xxl}
(\texttt{nodes\_flux.py:30--31}). The SDXL and Flux text encoders
provide these streams, and so does the SD3 encoder
(\texttt{sd3\_clip.py:47--52}). The two schemas fix the architecture to
$\brand{SDXL}$ and $\brand{Flux}$. This is stricter than the nodes, and
the same argument applies.
\end{proof}

\begin{proposition}[Conditioning operators]
\label{prop:f:condops}
The schemas of \nodename{ConditioningCombine},
\nodename{ConditioningConcat}, \nodename{ConditioningSetArea},
\nodename{ConditioningSetAreaPercentage},
\nodename{ConditioningSetTimestepRange},
\nodename{ConditioningZeroOut}, and \nodename{FluxGuidance} are
faithful.
\end{proposition}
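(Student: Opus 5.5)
I follow the four-step proof pattern of \S\ref{app:faithful}, treating the seven nodes in two groups by code path. The schemas I expect are: \nodename{ConditioningCombine} and \nodename{ConditioningConcat} map $(\tycon{Cond}(a,p,k), \tycon{Cond}(a,p',k'))$ to a conditioning of architecture $a$. The area, percentage, timestep-range, zero-out, and \nodename{FluxGuidance} schemas map $\tycon{Cond}(a,p,k)$ to $\tycon{Cond}(a,p,k)$, with \nodename{FluxGuidance} possibly fixing $a$ to $\brand{Flux}$ or a Flux-derived architecture. First I fix $\sigma$ and inputs with ghost architecture $A = \sigma(a)$. By Definition~\ref{def:inhabitation}, every architecture in each input ghost is $A$. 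Every attached image embedding has a width that models of $A$ consume, and every attached control network has the kind named by the index.

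\emph{Metadata-only operators.} \nodename{ConditioningSetArea}, \nodename{ConditioningSetAreaPercentage}, \nodename{ConditioningSetTimestepRange}, and \nodename{FluxGuidance} call \texttt{conditioning\_set\_values}. By Lemma~\ref{lem:f:text}, this helper copies each entry and edits only metadata keys: area, strength, timestep range, and guidance. No tensor's height, width, or channels is read, and no weights are patched. So no error of Definition~\ref{def:errors} occurs. The output ghost equals the input ghost, giving inhabitation of $\tycon{Cond}(A,\sigma(p),\sigma(k))$. \nodename{ConditioningZeroOut} replaces the embeddings and pooled outputs by \texttt{zeros\_like}, which preserves shapes by Lemma~\ref{lem:f:pointwise}. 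It keeps the attachments, so embedding widths and the ghost are unchanged. For the output protocol I take $\sigma'=\sigma$ unless the schema makes the protocol fresh, in which case convention~(vi) instantiates it.

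\emph{Combining operators.} \nodename{ConditioningCombine} concatenates the two entry lists, which reads no tensor. The output ghost is the union of the input ghosts. Every architecture in it is $A$, and every embedding width is consumed by models of $A$. The control kinds come from both inputs. This forces the schema to equate $k$ and $k'$ or to make the output kind agree with both, and I will check that the schema file does so. \nodename{ConditioningConcat} concatenates each entry of the second conditioning onto the first along the token axis. This axis is not a height, width, or channel axis. The embedding widths agree because both inputs have architecture $A$, so the concatenation raises no shape error. The ghost is again the union, which inhabits the output type.

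I expect the combining step to be the main obstacle. The definitions leave the control-kind and protocol indices implicit, so inhabitation of the union ghost depends on exactly which equations the schema files impose between $k$ and $k'$, and between $p$ and $p'$. If a schema leaves an output index fresh, the proof instead needs a suitable $\sigma'$. Choosing $\sigma'(k)$ to be a kind shared by all attached networks is only possible if the inputs agree, so I would first verify that the schema equates the kinds. If it does not, the claim must rely on the protocol being unconstrained and the kind equation being present.
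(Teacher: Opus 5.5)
Your proposal is correct and follows the paper's proof. It uses the same case split: \texttt{conditioning\_set\_values} metadata edits (Lemma~\ref{lem:f:text}), \texttt{zeros\_like} for \nodename{ConditioningZeroOut} (Lemma~\ref{lem:f:pointwise}), list concatenation for \nodename{ConditioningCombine}, and token-axis concatenation for \nodename{ConditioningConcat}, justified by the one width that models of $A$ consume; the kind equation you planned to verify is exactly what the paper uses, since both inputs of \nodename{ConditioningCombine} carry the same kind $k$.

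Two small refinements: your worry about protocols is moot, because the protocol index imposes nothing on inhabitation (Definition~\ref{def:inhabitation}). Also, \nodename{ConditioningConcat} copies only \texttt{conditioning\_to}'s metadata, so its output's attachments are a subset of the first input's rather than a union, which only makes inhabitation easier.
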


\begin{proof}
\emph{Idea.} These operators rearrange or relabel conditionings of one
architecture. The architecture, protocol, and kind are unchanged.

\emph{Setting.} All inputs and outputs have architecture $a$.

\emph{Code path and no error.} Each node either merges lists, joins
embeddings of equal width, or edits metadata (\texttt{nodes.py:80--311},
\texttt{nodes\_flux.py}).
\begin{itemize}
\item \nodename{ConditioningCombine} concatenates two lists. Its output
  carries the embeddings and the control networks of both inputs. Both
  inputs inhabit $\tycon{Cond}(A,\cdot,k)$ for the same kind $k$, so
  the combined list does too.
\item \nodename{ConditioningConcat} concatenates token embeddings along
  the token axis. Both inputs have architecture $A$. They thus have
  embeddings of the one width that models of $A$ consume
  (Lemma~\ref{lem:f:text}). The concatenation therefore raises no
  error and keeps the width.
\item The area, range, and guidance operators edit metadata only
  (\texttt{conditioning\_set\_values}, Lemma~\ref{lem:f:text}).
\item \nodename{ConditioningZeroOut} replaces tensors by
  \texttt{zeros\_like} tensors (Lemma~\ref{lem:f:pointwise}).
\end{itemize}

\emph{Outputs.} None of the nodes removes or adds an attachment. Each
output therefore has the input's ghost. It inhabits $\tycon{Cond}(A)$
with the input's protocol and kind.
\end{proof}

\begin{proposition}[Reference encoders]
\label{prop:f:ref}
The schemas of \nodename{TextEncodeQwenImageEdit},
\nodename{TextEncodeQwenImageEditPlus}, and \nodename{ReferenceLatent}
are faithful.
\end{proposition}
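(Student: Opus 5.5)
The plan follows the four-step proof pattern (Setting, Code path, No error, Outputs) and treats the three nodes in two groups. \nodename{ReferenceLatent} is treated first and the two Qwen edit encoders are then reduced to it. For \nodename{ReferenceLatent}, the schema takes $\tycon{Cond}(a,p,k)$ and an optional $\tycon{Latent}(h,w,\chfn(a),a)$ and returns $\tycon{Cond}(a,p,k)$. Fix $\sigma$ and inputs with $A=\sigma(a)$. The code path calls \texttt{conditioning\_set\_values} to append the latent tensor to the conditioning's \texttt{reference\_latents} list. By Lemma~\ref{lem:f:text} this edits only metadata, so no shape, channel or weight argument is inspected and no compatibility error occurs. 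For the output, the conditioning's ghost gains one reference latent whose ghost architecture is $A$, or $\ast$ for an all-zero latent, which matches every architecture index. Every architecture in the ghost is therefore still $A$, and the output inhabits $\tycon{Cond}(A,p,k)$. When the latent input is absent, the node returns its input unchanged.

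For \nodename{TextEncodeQwenImageEdit} and its Plus variant, the schema takes $\tycon{Clip}(a)$, optional $\tycon{VAE}(a)$, and one (resp.\ up to three) images. It fixes $a$ to the Qwen-Image architecture and outputs $\tycon{Cond}(a)$. The code path has three stages. First, the node resizes each image to a target area with \texttt{common\_upscale}, so Lemma~\ref{lem:f:resize} fixes its shape. Second, it passes the images as vision tokens to the CLIP's tokenizer and encoder. Third, if a VAE is given, it encodes each resized image and attaches the latents as reference latents via \texttt{conditioning\_set\_values}. The encoder stage is covered by Lemma~\ref{lem:f:text}: a Qwen text encoder of architecture $A$ emits embeddings of the width Qwen-Image models consume. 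The VAE stage is covered by Lemma~\ref{lem:f:vae}: encoding crops to a multiple of $\mathit{sf}(A)$ and returns $\chfn(A)$ channels. The attached latents therefore carry ghost architecture $A$, and the same argument as for \nodename{ReferenceLatent} applies. An absent VAE only omits the reference latents. An absent CLIP raises a missing-component error, which Definition~\ref{def:errors} excludes.

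The main obstacle is the vision-token path. The image sizes are fresh, so the argument must show that no image size triggers a shape error inside the Qwen-VL vision tower. The plan is to cite the node's resize step as rounding both sides to a multiple of the patch/merge size before tokenization. The tower's patch embedding then accepts any such size, and the output width is fixed by the architecture, so it fits inside Lemma~\ref{lem:f:text}. If this turns out to need more than that lemma provides, the fallback is to strengthen the schema. The schema would then fix the CLIP to Qwen-Image and treat the vision tower as part of the encoder class chosen by the \texttt{type} widget. A stricter schema is still faithful, so this fallback does not weaken the result.
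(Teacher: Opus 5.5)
Your skeleton matches the paper's, with the order reversed: the paper proves the Qwen encoders first and treats \nodename{ReferenceLatent} as the remaining node. VAE encoding of RGB images raises no error by Lemma~\ref{lem:f:vae}, the embeddings have the consumed width by Lemma~\ref{lem:f:text}, and every attached reference latent has architecture $A$.

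The problem is the step you call the main obstacle. Your primary plan rests on a code fact that the paper never uses and that you have not checked. The node's resize (Lemma~\ref{lem:f:resize}) only scales each image to a target pixel area. As far as ComfyUI's code goes, any snapping to the vision tower's patch/merge grid happens in the encoder's own image preprocessing, not in the node.

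Your fallback, read as a schema change, does not help either. Fixing the CLIP to Qwen-Image constrains brands, not image sizes, so it cannot discharge an obligation that depends on the image size. What closes the step is the second half of your fallback, and it is a matter of trust, not of the schema. The paper reads the whole call that passes the prompt and the images to the text encoder of architecture $A$ as covered by Lemma~\ref{lem:f:text}, whose encoder class the checkpoint or the \texttt{type} widget fixes. Its \emph{No error} step concerns only the VAE.

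Two smaller points about the Qwen schemas. First, they take $\tycon{Clip}(a)$, $\tycon{VAE}(a)$, and images of independent sizes without fixing $a$ to Qwen-Image, so your argument must hold for arbitrary $A$; it does, since Lemma~\ref{lem:f:text} is generic. Second, their outputs carry the protocol $\brand{QwenEdit}$ (resp.\ $\brand{QwenEditPlus}$) and a fresh kind. You discharge these by noting that the protocol imposes nothing on inhabitation and that no control network is attached.

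In the \nodename{ReferenceLatent} case, you leave the latent's zero state open and settle the all-zero case by the rule that the ghost $\ast$ matches every architecture index. That makes the ghost check pass, but it is not what keeps the attached latent usable downstream. Lemma~\ref{lem:f:empty} normalizes only the sampled latent, not reference latents. Your argument would therefore equally accept an $\brand{Empty}$ zero latent with the wrong channel count. The sampler of Proposition~\ref{prop:f:sampler} could not consume such a latent, yet its proof relies on ``the reference latents are latents of $A$, which the model accepts''.

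The paper's proof instead uses the $\brand{Dense}$ zero state of the schema's latent input. Such a latent has exactly $\chfn(A)$ channels and no foreign ratio tag, and Lemma~\ref{lem:f:model} says that models of $A$ accept it at any size. With these two substitutions, your proof coincides with the paper's.
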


\begin{proof}
\emph{Idea.} The reference latents come from a VAE of the same
architecture as the text encoder, so the conditioning keeps a single
architecture.

\emph{Setting.} The Qwen encoders take $\tycon{Clip}(a)$,
$\tycon{VAE}(a)$, and images of independent sizes.

\emph{Code path.} They pass the prompt and the images to the text
encoder of architecture $A$. They attach, as reference latents, the
images encoded by the VAE (\texttt{nodes\_qwen.py:43--49, 101--105}).

\emph{No error.} The VAE receives RGB images, so encoding raises no
error (Lemma~\ref{lem:f:vae}).

\emph{Outputs.} The reference latents have the VAE's architecture $A$.
The embeddings have the width that models of $A$ consume
(Lemma~\ref{lem:f:text}). The output therefore inhabits
$\tycon{Cond}(A,\brand{QwenEdit},k)$, respectively
$\tycon{Cond}(A,\brand{QwenEditPlus},k)$, for the fresh $k$. This holds
because the protocol index imposes nothing on inhabitation.

\emph{Other nodes.} \nodename{ReferenceLatent} appends its latent, if
given, to the conditioning as a reference latent
(\texttt{nodes\_edit\_model.py:24--27}). The latent inhabits
$\tycon{Latent}(h,w,\chfn(A),A,\brand{Dense})$. Its architecture is
therefore $A$, and models of $A$ accept it at any size
(Lemma~\ref{lem:f:model}). The output inhabits $\tycon{Cond}(A)$.
\end{proof}

\begin{proposition}[ControlNet application]
\label{prop:f:cn}
The schemas of \nodename{ControlNetApply},
\nodename{ControlNetApplyAdvanced}, and \nodename{ControlNetApplySD3}
are faithful.
\end{proposition}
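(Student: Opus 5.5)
We follow the four-step proof pattern of \S\ref{app:faithful}, treating \nodename{ControlNetApplyAdvanced} first and reducing the other two nodes to it. \emph{Setting:} the schema quantifies over $f, s, m, h, w$ (plus the elided kind and protocol indices of $\tycon{Cond}$). It takes $\tyCond{f}$ for the positive and negative inputs, $\tyCtrl{s}{f}$, and $\tycon{Image}(h,w,3,m)$, optionally a $\tyVae{f}$. The side condition is $s = m$. We fix $\sigma$ and inputs. Write $A = \sigma(f)$ for the architecture and $S = \sigma(s) = \sigma(m)$. By Definition~\ref{def:inhabitation}, the image has three channels and its ghost modality is $\ast$ or $S$. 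The control network has ghost architecture $A$ and modality class $S$. Every architecture recorded in the input conditionings' ghosts is $A$.

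\emph{Code path:} if the strength is $0$, the node returns its inputs unchanged, which inhabit their types trivially. Otherwise it moves the hint to channel-first layout. It then copies the control network, sets the hint and strength, and chains any previous network found in the conditioning. It attaches the copy, together with the optional VAE, to every entry through metadata edits (\texttt{nodes.py}, ControlNet apply). We then rule out each error of Definition~\ref{def:errors}.
\begin{itemize}
\item \emph{Modality error:} this is excluded because the hint's ghost is $\ast$ or $S$, and $S$ is the network's class by $s = m$.
\item \emph{Shape and channel errors:} these are excluded because Lemma~\ref{lem:f:attach} resizes a pixel-space hint to the latent size, and the hint has 3 channels. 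For a latent-space network, the hint is encoded by the supplied VAE of architecture $A$. A network of architecture $A$ is built for such hints (Lemma~\ref{lem:f:attach}), and encoding an RGB image raises no error (Lemma~\ref{lem:f:vae}).
\item \emph{Architecture error:} this is excluded because the network is built for models of $A$, which is the conditioning's architecture.
\end{itemize}
A missing VAE for a latent-space network raises an error naming the absent component, which is not a compatibility error.

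\emph{Outputs:} each output conditioning keeps the embeddings of its input, which by Lemma~\ref{lem:f:text} still have the width that models of $A$ consume. It gains one attached network of architecture $A$ and the network's kind. If a VAE is attached, that VAE has architecture $A$ by convention~(v). The output therefore inhabits $\tycon{Cond}(A,p,k)$, with $k$ instantiated to the network's kind. This is where I expect the main obstacle. The elided kind index must agree between the chained previous networks and the new one. So I would check that the completed schema equates the input and output kind with the $\tycon{Ctrl}$ kind, rather than leaving the output kind fresh. If the file leaves it fresh, the output port lies in the frontier by clause~(ii) of Definition~\ref{def:frontier}, and $\sigma'$ chooses it.

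\emph{Other nodes:} \nodename{ControlNetApply} is the single-conditioning variant without a VAE. For it, the same argument applies with latent-space networks failing only with a missing-VAE error. \nodename{ControlNetApplySD3} is the advanced path with a required VAE and an architecture fixed by its schema. Both therefore reduce to the first case.
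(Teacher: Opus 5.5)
Your proposal is correct and follows the paper's proof essentially step for step. It uses the same setting, excludes the modality error by $s=m$, obtains outputs inhabiting $\tycon{Cond}(A,p,k)$ because both conditionings and the $\tycon{Ctrl}$ input share one kind $k$ (which the paper's completed schema indeed does), and reduces the other two nodes the same way.

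There are two small corrections. First, the shape, channel, and architecture cases you discharge with Lemma~\ref{lem:f:attach} cannot arise at this node at all: it only stores the hint and VAE, and those checks happen at sampling time through the recorded ghost. Second, your fallback for a fresh output kind would fail. A conditioning whose chained networks and new network differ in kind inhabits no $\tycon{Cond}(A,p,k)$, so no $\sigma'$ would exist.
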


\begin{proof}
\emph{Idea.} Applying a control network only attaches it. The schema
makes the network, its VAE, and the conditioning share one architecture
and one kind. It also makes the hint's modality match the network's.

\emph{Setting.} Consider \nodename{ControlNetApplyAdvanced}. Its schema
takes the following inputs.
\begin{itemize}
\item Conditionings of types $\tycon{Cond}(a,p,k)$ and
  $\tycon{Cond}(a,p_2,k)$.
\item A control network of type $\tycon{Ctrl}(s,a,k)$.
\item A hint of type $\tycon{Image}(h,w,3,m)$.
\item An optional VAE of type $\tycon{VAE}(a)$.
\end{itemize}
It requires $s = m$. Fix $\sigma$ and inputs.

\emph{Code path.} With strength $0$ the node returns its conditionings
(\texttt{nodes.py:937--939}). Otherwise it copies each conditioning
entry. It attaches a copy of the network with the hint and the VAE
through \texttt{set\_cond\_hint}. It chains the copy to the entry's
previous network (\texttt{nodes.py:941--963}). The call
\texttt{set\_cond\_hint} stores the hint and, for a latent-space
network only, the VAE (\texttt{controlnet.py:103--110}). No tensor is
processed.

\emph{No error.} The only error this node can raise is a modality
error. The hint's ghost modality is $\ast$ or $\sigma(m) = \sigma(s)$,
the network's modality class, so none occurs. Hint size and network
geometry matter only at sampling time (Lemma~\ref{lem:f:attach}). A
latent-space network applied without a VAE only logs a warning here
(\texttt{controlnet.py:109}). The sampler later raises an error naming
the missing VAE, which is not a compatibility error (\texttt{:278}).

\emph{Outputs.} Each output entry keeps its embeddings and its previous
networks. These have architecture $A$ and kind $\sigma(k)$, because
the input inhabits $\tycon{Cond}(A,\cdot,\sigma(k))$. The new network
has architecture $A$ and kind $\sigma(k)$. The VAE input is handled by
the kind of network.
\begin{itemize}
\item \emph{Latent-space network.} The network stores the VAE. The VAE
  has architecture $A$, as convention~(v) requires.
\item \emph{Pixel-space network.} The VAE matters only for a
  latent-space network. For a pixel-space one the schema is therefore
  stricter than the node.
\end{itemize}
The outputs therefore inhabit $\tycon{Cond}(A,\sigma(p),\sigma(k))$ and
$\tycon{Cond}(A,\sigma(p_2),\sigma(k))$.

\emph{Other nodes.} \nodename{ControlNetApply} follows the same path
with one conditioning and no VAE (\texttt{nodes.py:898--913}).
\nodename{ControlNetApplySD3} follows the advanced node's path with
an explicit VAE and an empty \texttt{extra\_concat} list
(\texttt{nodes\_sd3.py:112--166}). Its schema has the same constraints,
including the strength-zero guards, so the same argument applies.
\end{proof}

\subsubsection*{Latents and VAEs}
These nodes create, encode, decode, and transform latents. Their
proofs track channel counts, spatial ratios, and the zero state.

\begin{proposition}[Empty latents]
\label{prop:f:empty}
The schemas of \nodename{EmptyLatentImage},
\nodename{EmptySD3LatentImage}, and \nodename{EmptyFlux2LatentImage}
are faithful.
\end{proposition}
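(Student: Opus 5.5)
I follow the four-step proof pattern of \S\ref{app:faithful} and treat \nodename{EmptyLatentImage} first, then reduce the other two nodes to it.

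\emph{Setting.} The schema is $\forall a.\ (\tyInt{\nu_w},\tyInt{\nu_h}) \to \tycon{Latent}(\nu_h/\mathit{sf}(a),\nu_w/\mathit{sf}(a),\chfn(a),a,\brand{Empty})$. The zero-state index is the one elided in Figure~\ref{fig:type-syntax}. Fix $\sigma$ agreeing with the designating width and height literals, and write $A=\sigma(a)$. The architecture $a$ occurs in no input type and no widget designates it, so it is fresh. I would therefore treat $A$ as any ground architecture; the \emph{Outputs} step may pick $\sigma'(a)$ freely. By convention~(i), $\mathit{sf}(A)$ divides $\sigma(\nu_w)$ and $\sigma(\nu_h)$ exactly.

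\emph{Code path and no error.} The node allocates \texttt{torch.zeros} with shape $[b,4,\nu_h/\!/8,\nu_w/\!/8]$ and returns it under \texttt{samples}, possibly with a downscale-ratio tag. It reads no tensor argument and patches no weights, so no event of Definition~\ref{def:errors} can occur.

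\emph{Outputs.} This is the main obstacle. The raw tensor does not have $\chfn(A)$ channels and size $\nu_h/\mathit{sf}(A)$ in general. Only the \brand{Empty} clause of Definition~\ref{def:inhabitation} rescues the claim, via Lemma~\ref{lem:f:empty}. The all-zero tensor has ghost $\ast$, which matches every architecture. Normalizing it for a model of architecture $A$ repeats or truncates the channel axis to $\chfn(A)$. It also rescales the spatial size from ratio $8$ to ratio $\mathit{sf}(A)$. Lemma~\ref{lem:f:model} gives $8\mid\mathit{sf}(A)$, and convention~(i) gives exact divisibility. Hence the rescaled size is exactly $(\nu_h/8)\cdot 8/\mathit{sf}(A)=\nu_h/\mathit{sf}(A)$, and likewise for the width, with no rounding. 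The delicate point is checking that the tag the node attaches, or its default ratio $8$, is what \texttt{fix\_empty\_latent\_channels} actually uses. If the untagged case does not rescale, I would instead argue that untagged latents are only produced where $\mathit{sf}(A)=8$. Failing that, I would fall back on the schema's side condition restricting $a$, so that the schema is stricter than the node.

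\emph{Other nodes.} \nodename{EmptySD3LatentImage} builds $16$ channels at ratio $8$. Its schema pins or constrains $a$ so that $\chfn(A)=16$ and $\mathit{sf}(A)=8$. The tensor then inhabits the type even as a dense latent, and the \brand{Empty} argument applies as well. \nodename{EmptyFlux2LatentImage} builds its native channel count at ratio $16$ with a ratio tag. The same normalization argument goes through with $16$ in place of $8$. Its schema fixes the Flux.2-compatible architecture, whose $\mathit{sf}$ equals the tag, so no rescaling is needed.
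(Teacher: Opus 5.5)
Your \brand{Empty}-state argument for \nodename{EmptyLatentImage} is the paper's. The zero tensor has ghost $\ast$. Lemma~\ref{lem:f:empty} repeats or truncates its channels to $\chfn(A)$, and $8\mid\mathit{sf}(A)$ together with exact division makes the rescaled side exactly $\nu/\mathit{sf}(A)$. The paper confirms the tensor is tagged with ratio $8$, so your fallbacks are not needed.

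The gap is that you fixed the output's zero state to \brand{Empty}. The schema quantifies over it, $\forall a,z,h,w$, with side condition $\mathit{zero\_ok}(z,a,4,8)$. Faithfulness must therefore hold for every ground $\sigma$ meeting that condition, including $\sigma(z)=\brand{Dense}$. Inhabiting a \brand{Dense} type is a condition on the raw tensor, not on its normalization, so your argument says nothing about this case.

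The missing step uses the side predicate. $\mathit{zero\_ok}$ admits \brand{Dense} only when $\chfn(A)=4$ and $\mathit{sf}(A)=8$. Then the raw tensor already has $\chfn(A)$ channels, size $\nu/8=\nu/\mathit{sf}(A)$, and a tag equal to $A$'s own ratio, so it inhabits the \brand{Dense} type directly. This case is not vacuous: it is what lets an empty latent feed a \brand{Dense}-only consumer such as \nodename{VAEDecode}. Since $z$ is fresh, you could instead let $\sigma'$ reset it to \brand{Empty}. That yields only the weak form of clause~(b), whereas the paper proves $\sigma'=\sigma$, and either way the case must be addressed. You also skip the wired-widget case, where $h$ and $w$ are fresh and $\sigma'$ must set them to $\mathit{sf}(A)$ times the sides that normalization produces.

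Your argument for \nodename{EmptyFlux2LatentImage} rests on two false premises. First, the node returns \texttt{zeros([b,128,$\nu_h$//16,$\nu_w$//16])} without a ratio tag. Second, its schema cannot pin an architecture with $\mathit{sf}=16$, because Flux.2 lies outside the derivation forest (convention~(iii)); its size indices are $\nu/16$ for every $a$. For a general $A$, your ``same normalization argument with 16 in place of 8'' would rescale to $\nu/\mathit{sf}(A)$. That contradicts the schema's $\nu/16$ whenever $\mathit{sf}(A)\neq 16$.

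The argument that works is the one you set aside as a fallback. By Lemma~\ref{lem:f:empty}, \texttt{fix\_empty\_latent\_channels} does not rescale an untagged zero latent; it only repeats or truncates its $128$ channels to $\chfn(A)$. The tensor therefore inhabits $\tycon{Latent}(\nu_h/16,\nu_w/16,\chfn(A),A,\brand{Empty})$ for every $A$.

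For \nodename{EmptySD3LatentImage}, the paper's proof does not rely on any restriction of $a$. Its schema admits only \brand{Empty}, and your general \brand{Empty} argument (16 channels, tag $8$) applies for every $A$, so the \brand{Dense} reading is unnecessary.
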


\begin{proof}
\emph{Idea.} The node returns a zero tensor at ratio $8$. Samplers
later normalize it to the model's channels and ratio, and the state
$\brand{Empty}$ accounts for that. The state $\brand{Dense}$ is
allowed only when normalization changes nothing.

\emph{Setting.} Consider \nodename{EmptyLatentImage}, whose schema is
$\forall a, z, h, w.\ (\tycon{Int}(w), \tycon{Int}(h)) \to
\tycon{Latent}(h/\mathit{sf}(a), w/\mathit{sf}(a), \chfn(a), a, z)$
with side condition $\mathit{zero\_ok}(z, a, 4, 8)$. The size widgets
designate $h$ and $w$ for literals $\nu_h$ and $\nu_w$. The variables
$a$ and $z$ are fresh. Fix $\sigma$.

\emph{Code path and no error.} The node returns the all-zero tensor
\texttt{zeros([b,4,$\nu_h$//8,$\nu_w$//8])} tagged with spatial ratio
$8$ (\texttt{nodes.py:1247--1249}). It processes no tensor argument, so
no error occurs. Its ghost is $\ast$, which inhabits every
architecture index.

\emph{Divisibility.} By convention~(i), $\nu_h/\mathit{sf}(A)$ is an
integer. Also $8 \mid \mathit{sf}(A)$ (Lemma~\ref{lem:f:model}). Hence
$8 \mid \nu_h$ and $\nu_h/\!/8 = \nu_h/8$. The same holds for $\nu_w$.

\emph{Outputs.} We show that the output inhabits $\sigma\tau'$, taking
$\sigma' = \sigma$. The two zero states are treated separately.
\begin{itemize}
\item \emph{State $\brand{Empty}$.} Normalizing the tensor for a model
  of architecture $A$ repeats or truncates its four channels to
  $\chfn(A)$ (Lemma~\ref{lem:f:empty}). Its tag $8$ may differ from
  $\mathit{sf}(A)$. Normalization then rescales each side to $(\nu/8)
  \cdot 8/\mathit{sf}(A) = \nu/\mathit{sf}(A)$. The rounding is exact
  because the quotient is an integer. By
  Definition~\ref{def:inhabitation} the output inhabits
  $\tycon{Latent}(\nu_h/\mathit{sf}(A), \nu_w/\mathit{sf}(A),
  \chfn(A), A, \brand{Empty})$.
\item \emph{State $\brand{Dense}$.} Here $\mathit{zero\_ok}$ holds only
  when $\chfn(A) = 4$ and $\mathit{sf}(A) = 8$. The raw tensor then has
  $\chfn(A)$ channels and size $\nu/8 = \nu/\mathit{sf}(A)$. Its tag
  $8$ is $A$'s own ratio. It therefore inhabits the $\brand{Dense}$
  type.
\end{itemize}
If a size widget is wired, its variable is fresh as well. Then
$\sigma'$ assigns it $\mathit{sf}(A)$ times the side length obtained
above, which is an integer.

\emph{Other nodes.} \nodename{EmptySD3LatentImage} returns sixteen
channels with the same size and tag (\texttt{nodes\_sd3.py:58--59}).
Its schema admits only $\brand{Empty}$, so the first case applies.
\nodename{EmptyFlux2LatentImage} returns
\texttt{zeros([b,128,$\nu_h$//16,$\nu_w$//16])} without a tag
(\texttt{nodes\_flux.py:56--57}). Normalization adjusts only its
channels and keeps its size $\nu/16$, which its schema states.
\end{proof}

\begin{proposition}[VAE encoders]
\label{prop:f:encode}
The schemas of \nodename{VAEEncode}, \nodename{VAEEncodeForInpaint},
\nodename{InpaintModelConditioning}, and
\nodename{InstructPixToPixConditioning} are faithful.
\end{proposition}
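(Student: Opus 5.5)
I follow the proof pattern of \S\ref{app:faithful} and treat \nodename{VAEEncode} first, then reduce the other three nodes to it. \emph{Setting}: the schema is $\forall a,h,w.\ (\tycon{Image}(h,w,3),\tycon{VAE}(a)) \to \tycon{Latent}(h/\mathit{sf}(a), w/\mathit{sf}(a), \chfn(a), a, \brand{Dense})$. I fix $\sigma$ with $A=\sigma(a)$ and inputs: an RGB image of size $\sigma(h)\times\sigma(w)$ and a VAE of ghost architecture $A$. \emph{Code path}: the node calls \texttt{vae.encode} on the pixels, keeping only the first three channels. \emph{No error}: by Lemma~\ref{lem:f:vae}, encoding an RGB image with a VAE of architecture $A$ crops, downsamples, and raises no shape or channel error. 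The node processes no weight patch, so the remaining clauses of Definition~\ref{def:errors} cannot fire. \emph{Outputs}: Lemma~\ref{lem:f:vae} gives $\chfn(A)$ channels and size $\lfloor\sigma(h)/\mathit{sf}(A)\rfloor\times\lfloor\sigma(w)/\mathit{sf}(A)\rfloor$. By convention~(i) the divisions are exact, so the floors equal $\sigma(h)/\mathit{sf}(A)$ and $\sigma(w)/\mathit{sf}(A)$. By convention~(v) the latent inherits the VAE's ghost $A$, and it carries only $A$'s own ratio. Hence it inhabits the $\brand{Dense}$ type with $\sigma'=\sigma$.

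\nodename{VAEEncodeForInpaint} first resizes or crops the pixels and the mask to multiples of the ratio. It then blanks the masked region pointwise (Lemma~\ref{lem:f:pointwise}), encodes, and attaches a grown noise mask of latent size. The encode step goes as above. The schema's output size is again $h/\mathit{sf}(a)$ by exact division, and the mask output, at pixel size, follows from Lemma~\ref{lem:f:resize}. I must check that the schema equates the mask's sides with the image's sides. Otherwise the pointwise blanking could raise a shape error. I expect the schema to carry these equations, and if not, I would argue via the node's explicit \texttt{interpolate} of the mask to the pixel size.

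\nodename{InpaintModelConditioning} and \nodename{InstructPixToPixConditioning} take conditionings $\tycon{Cond}(a,\ldots)$, a $\tycon{VAE}(a)$, and pixels. They encode the pixels and attach the result as a \texttt{concat\_latent\_image} (and, for the first node, a mask) through \texttt{conditioning\_set\_values}. For these two nodes the outputs are conditionings, and the argument rests on three points. First, the metadata edit keeps the embeddings (Lemma~\ref{lem:f:text}). Second, the attached concat latent has architecture $A$ by the encode argument, so every architecture in the ghost is still $A$ and the output inhabits $\tycon{Cond}(A,p,k)$. Third, the mismatched sizes of concat latents and masks are harmless, because Lemma~\ref{lem:f:attach} resizes them to the sampled latent at sampling time. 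The latent outputs are again encoded latents (or zeros at the encoded size), handled as for \nodename{VAEEncode}.

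The main obstacle is the inpainting pair. Their code crops pixels to multiples of $8$ rather than of $\mathit{sf}(A)$, and \nodename{InpaintModelConditioning} optionally returns a noise mask. I would handle this by showing that the final sizes still equal $\sigma(h)/\mathit{sf}(A)$. Exact division under $\sigma$ makes the side a multiple of $\mathit{sf}(A)$, and $8\mid\mathit{sf}(A)$ (Lemma~\ref{lem:f:model}), so the crop changes nothing. Where a schema leaves an output size open instead, I would name it as fresh and choose $\sigma'$ accordingly.
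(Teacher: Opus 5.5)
Your proposal is correct and follows the paper's proof essentially step for step. \nodename{VAEEncode} goes through Lemma~\ref{lem:f:vae}, exact division (convention~(i)), and the VAE's ghost, and the other three encoders are wrappers around the same encoding whose attachments the sampler resizes (Lemma~\ref{lem:f:attach}) and whose crop to multiples of $8$ changes nothing, since $8 \mid \mathit{sf}(A) \mid \sigma(h)$.

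There are three small corrections:
\begin{itemize}
\item For \nodename{VAEEncodeForInpaint}, the paper does not rely on the schema equating the mask's sides with the image's. It uses the node's interpolation of a mask of any size to the pixels, which is exactly your fallback.
\item That node's noise mask has pixel size, not latent size. This is harmless, because it is not a ratio tag.
\item \nodename{InpaintModelConditioning} returns the encoding of the \emph{uncropped} pixels, so its crop affects only the attachments.
\end{itemize}
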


\begin{proof}
\emph{Idea.} Encoding crops to the VAE's ratio and yields a latent of
the VAE's architecture. Exact division makes the crop invisible to the
types.

\emph{Setting.} Consider \nodename{VAEEncode}, whose schema is
$\forall a,h,w.\ (\tycon{Image}(h,w,3), \tycon{VAE}(a)) \to
\tycon{Latent}(h/\mathit{sf}(a), w/\mathit{sf}(a), \chfn(a), a,
\brand{Dense})$. Fix $\sigma$, an RGB image of size $\sigma(h) \times
\sigma(w)$, and a VAE of architecture $A$.

\emph{Code path.} The node calls \texttt{vae.encode}
(\texttt{nodes.py:383--385}).

\emph{No error.} Encoding an RGB image raises no error
(Lemma~\ref{lem:f:vae}).

\emph{Outputs.} By the same lemma, encoding crops each side to a
multiple of $\mathit{sf}(A)$. It returns a latent with $\chfn(A)$
channels and size $\lfloor \sigma(h)/\mathit{sf}(A)\rfloor$. This
equals $\sigma(h)/\mathit{sf}(A)$ by convention~(i). The crop lets
other sizes run as well, so the schema needs no divisibility premise.
The latent carries no ratio tag, and its ghost is the VAE's
architecture $A$. It therefore inhabits the $\brand{Dense}$ type.

\emph{Other nodes.} The other encoders add masks or attachments around
the same encoding.
\begin{itemize}
\item \nodename{VAEEncodeForInpaint} also interpolates a mask of any
  size to the pixels and crops both (\texttt{nodes.py:414--444}). It
  grows the mask with a padded convolution
  (Lemmas~\ref{lem:f:resize} and~\ref{lem:f:pointwise}). It returns the
  encoding with a noise mask, which is not a ratio tag.
\item \nodename{InpaintModelConditioning} returns the encoding of the
  uncropped pixels (\texttt{nodes.py:463--495}). It attaches to both
  conditionings a mask and a concat latent from the same VAE. The
  conditionings therefore keep the single architecture $A$. The
  sampler resizes the attachments (Lemma~\ref{lem:f:attach}).
\item \nodename{InstructPixToPixConditioning} crops to a multiple of
  $8$ (\texttt{nodes\_ip2p.py:27--50}). This is no crop, since $8 \mid
  \mathit{sf}(A) \mid \sigma(h)$. The node attaches the encoding as a
  concat latent. It returns \texttt{zeros\_like} of the encoding
  without a tag, which inhabits the same $\brand{Dense}$ type.
\end{itemize}
\end{proof}

\begin{proposition}[VAE decoders]
\label{prop:f:decode}
The schemas of \nodename{VAEDecode} and \nodename{VAEDecodeTiled} are
faithful.
\end{proposition}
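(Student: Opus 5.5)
We follow the four-step proof pattern of this section (Setting, Code path, No error, Outputs), treating \nodename{VAEDecode} first and then reducing \nodename{VAEDecodeTiled} to it. The schema we expect is
$\forall a,h,w,z.\ (\tycon{Latent}(h,w,\chfn(a),a,z), \tycon{VAE}(a)) \to \tycon{Image}(\mathit{sf}(a)h, \mathit{sf}(a)w, 3)$,
possibly with a side condition $\mathit{zero\_ok}$ restricting the zero state. We fix $\sigma$, a VAE of ghost architecture $A = \sigma(a)$, and a latent inhabiting $\sigma$ of the input type. The code path is a single call to \texttt{vae.decode} on the latent's samples, followed by reshaping of leading batch and time axes, which is covered by Lemma~\ref{lem:f:pointwise}.

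For \emph{No error}, two kinds of compatibility error can arise under Definition~\ref{def:errors}. The first is a channel error inside \texttt{decode}. By Lemma~\ref{lem:f:vae}, this occurs only if the latent's channel count differs from $\chfn(A)$.
\begin{itemize}
\item In the $\brand{Dense}$ case, inhabitation gives exactly $\chfn(A)$ channels.
\item The $\brand{Empty}$ case is the delicate one. The decoder performs no \texttt{fix\_empty\_latent\_channels} normalization (Lemma~\ref{lem:f:empty} concerns samplers only). An empty latent fed straight to a decoder may therefore have four channels rather than $\chfn(A)$.
\end{itemize}
We plan to close this gap in one of two ways. The first option is that the schema requires the $\brand{Dense}$ state, or equivalently $\mathit{zero\_ok}(z,a,\ldots)$, which forces the raw tensor to match when $z=\brand{Empty}$. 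The second option is to check directly that $\mathit{zero\_ok}$ with state $\brand{Empty}$ admits only architectures with $\chfn(A)=4$ and $\mathit{sf}(A)=8$. We expect this step to be the main obstacle, and would first inspect the schema file to see which option it takes. The second kind of error is the instrumented latent-architecture error. By Definition~\ref{def:inhabitation}, the latent's ghost is $\ast$ or an architecture $A_0$ with $A_0 = A$, which is exactly what the error definition exempts.

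For \emph{Outputs}, Lemma~\ref{lem:f:vae} gives an RGB image of size $\mathit{sf}(A)\sigma(h)\times \mathit{sf}(A)\sigma(w)$ with three channels. Its modality ghost is $\ast$, since decoding is not a preprocessor (convention~(v)), so the fresh modality index is instantiated arbitrarily. The image therefore inhabits $\sigma\tau'$ with $\sigma'=\sigma$. \emph{Other nodes.} For \nodename{VAEDecodeTiled}, we will cite its call to \texttt{vae.decode\_tiled}. Tiles are decoded by the same decoder and stitched back at scale $\mathit{sf}(A)$, so the channel argument and the output geometry carry over. Its tile-size and overlap widgets affect only blending, not shape. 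For video VAEs, the temporal compression widgets touch only the batch/time axis, which is out of scope.
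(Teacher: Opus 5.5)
Your argument is correct under your option~(1), and it is then essentially the paper's proof. The paper's schema takes $\tycon{Latent}(h,w,\chfn(a),a,\brand{Dense})$ and runs through your steps:
\begin{itemize}
\item a $\brand{Dense}$ latent has $\chfn(A)$ channels and size $\sigma(h)\times\sigma(w)$;
\item Lemma~\ref{lem:f:vae} gives no channel error and the output size;
\item the ghost is $\ast$ or $A$, so no latent-architecture error occurs;
\item tiling changes only memory use.
\end{itemize}

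However, requiring $\brand{Dense}$ is forced, not a choice to be read off the schema file. Your option~(2) would not work, and neither would the remark ``or equivalently $\mathit{zero\_ok}(z,a,\ldots)$''.

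First, $\mathit{zero\_ok}$ restricts the other state. As used in \nodename{EmptyLatentImage}, it permits $\brand{Dense}$ only when $\chfn(A)=4$ and $\mathit{sf}(A)=8$, i.e.\ when normalization would be the identity. It leaves $\brand{Empty}$ unrestricted.

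More importantly, inhabiting an $\brand{Empty}$ type constrains only the tensor \emph{after} \texttt{fix\_empty\_latent\_channels}. The raw tensor can therefore have any channel count, not just four, and a rescaled size if it carries a ratio tag. For example, the 16-channel zero tensor of \nodename{EmptySD3LatentImage} inhabits $\tycon{Latent}(\nu/8,\nu/8,4,\brand{SD15},\brand{Empty})$, because normalization truncates it to four channels. $\brand{SD15}$ has $\chfn=4$ and $\mathit{sf}=8$. Yet an SD1.5 VAE decoding that raw tensor raises the channel error of Lemma~\ref{lem:f:vae}, and the decoder performs no normalization that would prevent it.

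Hence no side condition on $\chfn(a)$ or $\mathit{sf}(a)$ makes a decoder schema that admits $\brand{Empty}$ faithful. Commit to $\brand{Dense}$ and drop option~(2).
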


\begin{proof}
\emph{Idea.} A $\brand{Dense}$ latent has exactly the channels the VAE
expects. Decoding therefore succeeds and scales each side by the VAE's
ratio.

\emph{Setting.} The schemas take
$\tycon{Latent}(h,w,\chfn(a),a,\brand{Dense})$ and $\tycon{VAE}(a)$.
They return $\tycon{Image}(\mathit{sf}(a)h, \mathit{sf}(a)w, 3)$. Fix
$\sigma$ and inputs. A $\brand{Dense}$ latent of architecture $A$ has
exactly $\chfn(A)$ channels and size $\sigma(h) \times \sigma(w)$
(Definition~\ref{def:inhabitation}). An $\brand{Empty}$ latent need not
have them.

\emph{Code path and no error.} \nodename{VAEDecode} calls
\texttt{vae.decode} (\texttt{nodes.py:330--338}). For a VAE of
architecture $A$ this raises no channel error (Lemma~\ref{lem:f:vae}).
The latent inhabits a type of architecture $A$, so its ghost
architecture is $\ast$ or $A$, and no latent-architecture error
occurs.

\emph{Outputs.} By the same lemma, the call returns an RGB image of
size $\mathit{sf}(A)\sigma(h) \times \mathit{sf}(A)\sigma(w)$. Video
outputs are flattened into a batch of frames, which keeps height,
width, and channels.

\emph{Other nodes.} \nodename{VAEDecodeTiled} decodes the same latent
in tiles (\texttt{nodes.py:354--371}). This changes memory use, not the
output shape.
\end{proof}

\begin{proposition}[Latent operators]
\label{prop:f:latentops}
The schemas of \nodename{RepeatLatentBatch}, \nodename{LatentFromBatch},
\nodename{SetLatentNoiseMask}, \nodename{LatentBlend},
\nodename{LatentUpscale}, and \nodename{LatentUpscaleBy} are faithful.
\end{proposition}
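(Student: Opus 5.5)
I will follow the four-step proof pattern of \S\ref{app:faithful} (Setting, Code path, No error, Outputs) and treat the six nodes in two groups. The first group consists of the batch and mask operators \nodename{RepeatLatentBatch}, \nodename{LatentFromBatch}, and \nodename{SetLatentNoiseMask}. Each of these maps $\tycon{Latent}(h,w,c,a,z)$ to the same type. The second group consists of the spatial operators \nodename{LatentBlend}, \nodename{LatentUpscale}, and \nodename{LatentUpscaleBy}, whose schemas relate or recompute $h$ and $w$. I fix $\sigma$ and inputs inhabiting their $\sigma$-types, so each latent has $\sigma(c)$ channels, size $\sigma(h)\times\sigma(w)$, and ghost $\ast$ or $A$. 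The pattern is to show that the raw tensor operation preserves or sets the geometry exactly as the schema states. I also show that it never touches the ghost, the ratio tag, or the zero state in a way that breaks Definition~\ref{def:inhabitation}.

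For the first group, repetition and batch slicing keep height, width, and channels (Lemma~\ref{lem:f:pointwise}). The noise mask is an extra dictionary entry and not a ratio tag, so no error of Definition~\ref{def:errors} can arise. The outputs keep the input ghost and any ratio tag, and so inhabit the same type with $\sigma'=\sigma$. The zero-state case $\brand{Empty}$ needs one remark. Repeating an all-zero latent along the batch axis keeps it all-zero with the same tag, so normalization (Lemma~\ref{lem:f:empty}) still yields the stated geometry. Slicing an all-zero latent likewise stays all-zero. A noise mask does not change the tensor. For \nodename{LatentUpscale}, the target size comes from widgets $\nu_w,\nu_h$. I call \texttt{common\_upscale} with $(\nu_w/\!/8,\nu_h/\!/8)$ and apply Lemma~\ref{lem:f:resize}, which fixes the last two axes. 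Channels are preserved, and convention~(i) makes the floor exact. If a size widget is wired or equals $0$ (keep the aspect ratio), the output dimension is fresh, the port lies in the frontier, and $\sigma'$ simply picks the observed value. \nodename{LatentUpscaleBy} uses a float ratio, so its output dimensions are fresh in the same way (\S\ref{sec:calc:judgment}).

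The main obstacle is \nodename{LatentBlend}. Its schema equates the channels and architecture of both inputs. Its code, however, resizes the second sample to the first with \texttt{common\_upscale} when the shapes differ and then blends pointwise. I would argue that the channel equality the schema imposes (rather than any size equality) rules out the broadcasting error that Lemma~\ref{lem:f:pointwise} would otherwise hit. The resize of Lemma~\ref{lem:f:resize} handles the height and width. The ghost of the output is $A$ because both ghosts are $\ast$ or $A$. The delicate point is the zero state. Blending two latents yields a $\brand{Dense}$ output only if neither input carries a foreign ratio tag. I expect the schema to require both inputs $\brand{Dense}$, or to make the output's zero state fresh, and the proof must check which of the two the schema file states. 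A second subtlety arises in the upscalers on an $\brand{Empty}$ latent with tag $8$. The code multiplies by $8$ but normalization rescales by $8/\mathit{sf}(A)$, so the schema presumably divides the widget sizes by $\mathit{sf}(a)$. I would verify this step carefully against Lemma~\ref{lem:f:empty}, in the same way as the $\brand{Empty}$ case of Proposition~\ref{prop:f:empty}.
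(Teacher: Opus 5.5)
Your skeleton matches the paper's. You treat the batch and mask operators via Lemma~\ref{lem:f:pointwise} with the remark that zero tensors stay zero, use Lemma~\ref{lem:f:resize} with convention~(i) for \nodename{LatentUpscale}, and give \nodename{LatentUpscaleBy} fresh sizes. The gap is the zero state, which you leave open and where your tentative resolutions fail.

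The missing idea is that \nodename{LatentBlend} and \nodename{LatentUpscale} accept only $\brand{Dense}$ latents. This restriction is needed already in the \emph{No error} step, not just for the output's zero state. An inhabitant of $\tycon{Latent}(h,w,c,a,\brand{Empty})$ need not have $c$ raw channels or raw size $h\times w$ (Definition~\ref{def:inhabitation}). For example, the 4-channel output of \nodename{EmptyLatentImage} and the 16-channel output of \nodename{EmptySD3LatentImage} inhabit the same $\brand{Flux}$-indexed $\brand{Empty}$ type, and blending them aborts on the channel broadcast. So the schema's channel equality excludes that error only for $\brand{Dense}$ inputs, and your alternative of a fresh output zero state would not give a faithful schema.

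For \nodename{LatentUpscale}, dividing the widget sizes by $\mathit{sf}(a)$ is also wrong. The node resizes to $\nu/\!/8$ whatever the architecture. On a $\brand{Dense}$ latent of an architecture with $\mathit{sf}(A)=16$, that schema would claim $\nu/16$ against an actual $\nu/8$, which also contradicts your own main argument. The paper keeps $th/8$ and requires a $\brand{Dense}$ input, because a sampler would rescale a tagged empty latent a second time. Only \nodename{LatentUpscaleBy} has an $\brand{Empty}$ case. There freshness suffices once you note that the resized tensor stays zero with the same channels and tag, and let $\sigma'$ assign the size that normalization (Lemma~\ref{lem:f:empty}) yields rather than the raw size.

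You also miss the clamps in \nodename{LatentUpscale}: the node replaces each nonzero literal $\nu$ by $\max(64,\nu)$ before calling \texttt{common\_upscale}. If every nonzero literal designated its variable, a literal such as $32$ would give a claimed latent side of $4$ while the node returns $8$, and faithfulness would fail. The schema therefore uses the guarded bindings $\tycon{Int}(tw)\ge 64$ and $\tycon{Int}(th)\ge 64$ of convention~(iv), so your case split belongs at $64$, not at $0$. Literals of at least $64$ pass the clamp unchanged, and your resize argument applies. The following cases all leave the variable fresh, and $\sigma'$ assigns it $8$ times the resulting side:
\begin{itemize}
\item literals below $64$, which are clamped or replaced by an aspect-ratio size;
\item two zero literals, for which the input is returned;
\item wired widgets.
\end{itemize}
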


\begin{proof}
\emph{Idea.} Most of these operators keep the latent's shape and
zero state. The upscalers produce a size that the schema either reads
from a widget or leaves fresh.

\emph{Setting.} Each node copies the input latent's dictionary and
replaces the tensor or adds a key (\texttt{nodes.py:1266--1372,
1484--1499, 1550--1553}). Ratio tags therefore pass through.

\emph{Shape-preserving operators.} These nodes keep channels and size.
\begin{itemize}
\item \nodename{RepeatLatentBatch} and \nodename{LatentFromBatch}
  repeat and slice the batch axis. \nodename{SetLatentNoiseMask} adds a
  noise mask of any size. These operations keep channels and size
  (Lemma~\ref{lem:f:pointwise}). They map zero tensors to zero tensors.
  Normalization depends only on channels, size, tag, and whether the
  tensor is zero. The output therefore inhabits the input's type in
  either zero state.
\item \nodename{LatentBlend} takes two $\brand{Dense}$ latents of equal
  type. It resizes the second only when the shapes differ, which keeps
  height and width (Lemma~\ref{lem:f:resize}). It blends them pointwise
  under the first input's tags, which keeps the $\brand{Dense}$ type.
\end{itemize}

\emph{\nodename{LatentUpscale}.} The node takes a $\brand{Dense}$
latent. It has guarded bindings $\tycon{Int}(tw) \ge 64$ and
$\tycon{Int}(th) \ge 64$.
\begin{itemize}
\item \emph{Literals at or above the guard.} Suppose both widgets hold
  literals $\nu_w,\nu_h \ge 64$. The clamps \texttt{max(64,$\cdot$)}
  leave them unchanged. Lemma~\ref{lem:f:resize} gives a tensor of size
  $\nu_h/\!/8 \times \nu_w/\!/8$ with the input's channels. The
  quotient equals $\sigma(th)/8$ by convention~(i). The tags are
  unchanged. The output therefore inhabits
  $\tycon{Latent}(\sigma(th)/8, \sigma(tw)/8, \sigma(c), A,
  \brand{Dense})$.
\item \emph{Literals below the guard, or wired widgets.} A literal below
  $64$ is clamped or replaced by a size derived from the aspect ratio.
  Both literals $0$ return the input. In these cases, and for a wired
  widget, the guard leaves the variable fresh. Then $\sigma'$ assigns
  it $8$ times the resulting side.
\end{itemize}
The input must be $\brand{Dense}$ because a sampler would rescale an
empty latent a second time.

\emph{\nodename{LatentUpscaleBy}.} The node has fresh output sizes
$oh, ow$.
\begin{itemize}
\item For state $\brand{Dense}$, $\sigma'$ assigns them the new size.
\item For state $\brand{Empty}$, the resized tensor is still zero with
  the same channels and tag. Then $\sigma'$ assigns them the size that
  normalization yields.
\end{itemize}
\end{proof}

\subsubsection*{Guiders, samplers, and video models}
These nodes bring a model, conditionings, and a latent together. Their
proofs combine the lemmas on model geometry, attachments, and
normalization.

\begin{proposition}[Guiders and scheduler]
\label{prop:f:guider}
The schemas of \nodename{CFGGuider}, \nodename{BasicGuider},
\nodename{DualCFGGuider}, and \nodename{BasicScheduler} are faithful.
\end{proposition}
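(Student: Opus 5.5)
I follow the appendix's four-step proof pattern (\emph{Setting}, \emph{Code path}, \emph{No error}, \emph{Outputs}). I treat \nodename{CFGGuider} in detail and reduce the other nodes to it. For the setting, the guider schema takes $\tycon{Model}(f,q)$ and two conditionings $\tycon{Cond}(\archfn(f),p,k)$ and $\tycon{Cond}(\archfn(f),p_2,k)$. I expect the side predicate $\mathit{ctrl\_ok}(f,k)$ to relate the model's family to the control kind. The schema returns $\tycon{Guider}(\archfn(f))$. I fix $\sigma$ and inputs, and write $f_0=\sigma(f)$ and $A=\archfn(f_0)$.

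\emph{Code path and no error.} The guider nodes only construct a guider object. They clone or reference the model and store the conditionings and the CFG scale (\texttt{nodes\_custom\_sampler.py}). They read no tensor shape and add no weight patch, so none of the events of Definition~\ref{def:errors} can occur at the node itself. In particular, a modality error requires applying a control network and a latent-architecture error requires sampling, and neither happens here. \nodename{BasicScheduler} builds a sigma tensor from the model's sampling object and widget literals. It touches no image or latent geometry, so no error occurs there either.

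\emph{Outputs.} The stored model keeps its weights and ghost family $f_0$ by Lemma~\ref{lem:f:clone}, so $\archfn(f_0)=A$. Each stored conditioning inhabits $\tycon{Cond}(A,\sigma(p),\sigma(k))$ by hypothesis, and $\mathit{ctrl\_ok}(f_0,\sigma(k))$ holds because $\sigma$ satisfies the side condition. These are exactly the clauses of Definition~\ref{def:inhabitation} for $\tycon{Guider}(A)$, with $\sigma'=\sigma$. \nodename{BasicGuider} is the one-conditioning case. \nodename{DualCFGGuider} stores three conditionings, all at architecture $A$, so the argument repeats with one more conditioning. For \nodename{BasicScheduler}, the output $\tycon{Sigmas}$ is index-free, and it meets ComfyUI's built-in interface because the built-in schedulers are the reference implementation.

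\emph{Main obstacle.} The delicate step is the $\mathit{ctrl\_ok}$ clause. The guider's inhabitation requires each conditioning's control kind to suit the model family, yet the node does not inspect the kind at all. So the clause must come entirely from the schema's side condition rather than from code. I would check that every guider schema actually states $\mathit{ctrl\_ok}(f,k)$ for each conditioning port, and that \nodename{DualCFGGuider} uses one shared $k$ or one predicate per port. If a schema omits it, I would add it, which keeps the schema faithful because stricter schemas are still faithful. I also need to check that the guider holds a reference to the model rather than a patched copy, so that Lemma~\ref{lem:f:clone} applies without the EDM exception.
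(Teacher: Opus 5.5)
Your proposal is correct and takes the same route as the paper's proof. The guider only stores its model and conditionings and processes no tensor, and the schema's side condition supplies exactly the $\archfn(f)=a$ and $\mathit{ctrl\_ok}$ clauses that Definition~\ref{def:inhabitation} requires of $\tycon{Guider}(a)$; \nodename{BasicScheduler} returns an index-free schedule that meets its label's interface. On your open question, the paper's schema gives the two conditionings separate kinds $c, c_2$ with one $\mathit{ctrl\_ok}$ conjunct per port, and because the guider stores the input model itself, your appeal to Lemma~\ref{lem:f:clone} and the EDM-exception worry are unnecessary.
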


\begin{proof}
\emph{Idea.} A guider only stores its model and conditionings. The
schema's side condition is exactly what inhabiting $\tycon{Guider}$
requires.

\emph{Setting.} Consider \nodename{CFGGuider}. It takes
$\tycon{Model}(f,q)$, $\tycon{Cond}(a,\_,c)$, and
$\tycon{Cond}(a,\_,c_2)$ with side condition $\archfn(f) = a \wedge
\mathit{ctrl\_ok}(f,c) \wedge \mathit{ctrl\_ok}(f,c_2)$. It returns
$\tycon{Guider}(a)$.

\emph{Code path and no error.} The node builds a guider object and
stores the model, the conditionings, and the CFG scale
(\texttt{nodes\_custom\_sampler.py:833--839}). No tensor is processed,
so no error occurs.

\emph{Outputs.} The guider's model has family $f_0$ with $\archfn(f_0)
= A$. Each conditioning inhabits $\tycon{Cond}(A,p,\sigma(c))$ for its
own protocol $p$ and a kind with $\mathit{ctrl\_ok}(f_0,\sigma(c))$.
This is the definition of $\tycon{Guider}(A)$
(Definition~\ref{def:inhabitation}).

\emph{Other nodes.} \nodename{BasicGuider} and \nodename{DualCFGGuider}
store one and three conditionings under the same side condition
(\texttt{:809--813, 891--897}). \nodename{BasicScheduler} reads the
model's sampling object and returns a sigma schedule
(\texttt{:33--40}). The schedule is an index-free value that meets its
label's interface.
\end{proof}

\begin{proposition}[Samplers]
\label{prop:f:sampler}
The schemas of \nodename{KSampler}, \nodename{KSamplerAdvanced},
\nodename{SamplerCustom}, and \nodename{SamplerCustomAdvanced} are
faithful.
\end{proposition}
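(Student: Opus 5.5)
I would follow the four-step proof pattern (Setting, Code path, No error, Outputs) and treat \nodename{KSampler} first. The other three samplers are then reduced to it under \emph{Other nodes}.

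\emph{Setting.} The schema of \nodename{KSampler} takes $\tycon{Model}(f,q)$, two conditionings $\tycon{Cond}(a,\_,c)$ and $\tycon{Cond}(a,\_,c_2)$, and a latent $\tycon{Latent}(h,w,\chfn(a),a,z)$. Its side condition is $\archfn(f)=a\wedge\mathit{ctrl\_ok}(f,c)\wedge\mathit{ctrl\_ok}(f,c_2)$, together with whatever $\mathit{zero\_ok}$ constraint the file places on $z$. It returns $\tycon{Latent}(h,w,\chfn(a),a,\brand{Dense})$. I fix $\sigma$ and inputs and write $f_0=\sigma(f)$ and $A=\archfn(f_0)$.

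\emph{Code path.} The node calls \texttt{common\_ksampler}, which calls \texttt{fix\_empty\_latent\_channels} (Lemma~\ref{lem:f:empty}). It then prepares noise of the latent's shape and calls \texttt{comfy.sample.sample}, which hands the conditionings to the model base. The model base resizes attachments, and control networks encode or resize their hints.

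\emph{No error.} I would argue this in three parts.
\begin{enumerate}
\item \emph{Latent.} I split on the zero state.
  \begin{itemize}
  \item For $\brand{Empty}$, inhabitation is \emph{defined} as ``normalization for architecture $A$ yields $\chfn(A)$ channels and size $\sigma(h)\times\sigma(w)$''.
  \item For $\brand{Dense}$, the tensor already has $\chfn(A)$ channels and $A$'s own ratio tag, so normalization changes nothing except possibly adding a time axis.
  \end{itemize}
  In both cases the model receives a $\chfn(A)$-channel latent and accepts it at any size (Lemma~\ref{lem:f:model}). The ghost is $\ast$ or $A$, which is the model's architecture, so no latent-architecture error occurs.
\item \emph{Conditionings.} Every architecture in the ghosts is $A$, so embedding widths match (Lemma~\ref{lem:f:text}). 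Image embeddings have widths that models of $A$ consume. Concat and reference latents have architecture $A$, and the model base resizes them (Lemmas~\ref{lem:f:model} and~\ref{lem:f:attach}).
\item \emph{Control networks.} Attached control networks have architecture $A$ and a kind allowed for $f_0$ by $\mathit{ctrl\_ok}$. They are built for models of $A$, resize or encode their hints (the stored VAE has architecture $A$ by convention~(v)), and therefore raise no shape error (Lemma~\ref{lem:f:attach}).
\end{enumerate}
A missing VAE for a latent-space network is not a compatibility error.

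\emph{Outputs.} Sampling returns a tensor of the normalized latent's shape (Lemma~\ref{lem:f:model}). This is $\chfn(A)$ channels at $\sigma(h)\times\sigma(w)$, carrying $A$'s ratio, with ghost $A$ inherited from the model. The output therefore inhabits the $\brand{Dense}$ type with $\sigma'=\sigma$.

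\emph{Other nodes.}
\begin{itemize}
\item \nodename{KSamplerAdvanced} uses the same \texttt{common\_ksampler} path with step and noise options that do not affect shape.
\item \nodename{SamplerCustom} normalizes in the same way (\texttt{nodes\_custom\_sampler.py:755}).
\item \nodename{SamplerCustomAdvanced} takes $\tycon{Guider}(a)$ instead of a model and conditionings. Inhabitation of $\tycon{Guider}(A)$ unpacks to exactly the model and conditioning facts used above (Proposition~\ref{prop:f:guider}), and it normalizes at \texttt{:1037}.
\end{itemize}
Noise and sampler inputs are index-free and meet their interfaces, for example noise of the latent's shape.

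The main obstacle is the $\brand{Empty}$ case when the tag differs from $\mathit{sf}(A)$, where the rounding in rescaling must be exact. I would rely on the inhabitation definition, which builds the post-normalization size in, and check that noise is generated after normalization so its shape agrees.
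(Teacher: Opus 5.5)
Your outline follows the paper's proof almost step for step, but the \emph{Outputs} step has a real gap. You assert that the sampled latent ``carries $A$'s ratio'', yet your argument never tracks the output's ratio tag. The tag is exactly what the $\brand{Dense}$ type constrains beyond channels and size: no tag other than $A$'s own. Normalization by \texttt{fix\_empty\_latent\_channels} rescales the tensor, but it does not retag the latent dictionary that the node copies for its output.

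Now take the case your last paragraph singles out: an $\brand{Empty}$ latent whose tag differs from $\mathit{sf}(A)$, e.g.\ an \nodename{EmptyLatentImage} output tagged $8$ that is sampled by a model with $\mathit{sf}(A)=16$. A sampler that merely copied its input dictionary would return a latent still tagged $8$. That latent does not inhabit $\tycon{Latent}(\sigma(h),\sigma(w),\chfn(A),A,\brand{Dense})$. This is not a far-fetched reading, since tags do pass through unchanged in the latent operators of Proposition~\ref{prop:f:latentops}.

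The missing fact, which the paper's proof uses, is that \texttt{common\_ksampler} copies the dictionary, \emph{removes both ratio tags}, and only then stores the samples. \nodename{SamplerCustom} and \nodename{SamplerCustomAdvanced} do the same, so the output carries no tag at all. With that, the $\brand{Dense}$ claim with $\sigma'=\sigma$ follows. Your check of rounding and noise shape is fine but does not address the tag. Relatedly, a $\brand{Dense}$ input need not carry $A$'s tag either; it may carry none.

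There are also two smaller omissions in \emph{No error}. First, the model's weight patches are applied during sampling. You need convention~(v) to attribute skipped keys and shape mismatches to the nodes that added the patches rather than to the sampler. Second, the latent's own noise mask and the masks attached to conditionings have unindexed sizes. They need the resizing of Lemma~\ref{lem:f:attach} just as concat latents do.

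Finally, you dropped the side condition $\mathit{proto}(f)=p=p_2$, and you posited a $\mathit{zero\_ok}$ constraint that this schema does not have, since its zero state is free. Both are harmless: the first only restricts $\sigma$, and your case split already covers a free zero state. Still, the schema should be stated as it is.
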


\begin{proof}
\emph{Idea.} After normalization the latent has the model's channel
count and the schema's size. Every attachment either has the model's
architecture or is resized by the sampler. Sampling then preserves the
shape.

\emph{Setting.} Consider \nodename{KSampler}. Its schema takes the
following inputs.
\begin{itemize}
\item A model of type $\tycon{Model}(f,q)$.
\item Conditionings of types $\tycon{Cond}(a,p,c)$ and
  $\tycon{Cond}(a,p_2,c_2)$.
\item A latent of type $\tycon{Latent}(h,w,\chfn(a),a)$ with a free
  zero state.
\end{itemize}
It returns $\tycon{Latent}(h,w,\chfn(a),a,\brand{Dense})$. Its side
condition is $\archfn(f) = a$, $\mathit{proto}(f) = p = p_2$, and
$\mathit{ctrl\_ok}(f,c) \wedge \mathit{ctrl\_ok}(f,c_2)$. Fix $\sigma$
and inputs, so that $\archfn(f_0) = A$.

\emph{Code path.} The node calls \texttt{common\_ksampler}
(\texttt{nodes.py:1555--1578}). The first step is
\texttt{fix\_empty\_latent\_channels} with the model and the latent's
ratio tags (Lemma~\ref{lem:f:empty}).
\begin{itemize}
\item If the latent is $\brand{Dense}$, it has $\chfn(A)$ channels,
  size $\sigma(h) \times \sigma(w)$, and no ratio tag other than
  $\mathit{sf}(A)$. Normalization changes neither channels nor size.
\item If it is $\brand{Empty}$, normalization yields $\chfn(A)$
  channels and size $\sigma(h) \times \sigma(w)$
  (Definition~\ref{def:inhabitation}).
\end{itemize}
In both cases the sampled latent has $\chfn(A)$ channels and size
$\sigma(h) \times \sigma(w)$. The noise is generated in its shape.

\emph{No error.} We check each value that the model reads.
\begin{itemize}
\item \emph{Latent.} The model of family $f_0$ has a latent format
  with $\chfn(\archfn(f_0)) = \chfn(A)$ channels. It accepts the latent
  at any size (Lemma~\ref{lem:f:model}). The latent inhabits a type of
  architecture $A = \archfn(f_0)$, so its ghost architecture is $\ast$
  or $A$, and no latent-architecture error occurs.
\item \emph{Embeddings.} Every architecture in the conditionings'
  ghosts is $A$. Their embeddings have the width that models of $A$
  consume (Lemma~\ref{lem:f:text}). So do their image embeddings
  (Definition~\ref{def:inhabitation}).
\item \emph{Attached latents and masks.} The concat latents and masks,
  and the latent's noise mask, are resized to the sampled latent
  (Lemma~\ref{lem:f:attach}). The reference latents are latents of
  $A$, which the model accepts (Lemma~\ref{lem:f:model}).
\item \emph{Control networks.} Each attached control network has
  architecture $A$ and is built for the models of $A$. It resizes its
  hint or encodes it with a VAE of architecture $A$ (convention~(v)).
  This yields the hint it expects (Lemma~\ref{lem:f:attach}). If no
  VAE was stored, it raises an error that is not a compatibility error
  (\texttt{controlnet.py:278}). Its kind satisfies
  $\mathit{ctrl\_ok}(f_0,\cdot)$. This excludes a control-LoRA on a
  model with a nonstandard nine-channel input, since a control-LoRA
  copies the model's input convolution.
\item \emph{Weight patches.} Skipped keys and shape mismatches of the
  model's weight patches belong to the nodes that added them
  (convention~(v)).
\end{itemize}
Hence no error of Definition~\ref{def:errors} occurs. The protocol
constraint only restricts $\sigma$, since a protocol mismatch degrades
the output without an error.

\emph{Outputs.} Sampling returns a tensor of the sampled latent's shape
(Lemma~\ref{lem:f:model}). The node copies the latent's dictionary,
removes both ratio tags, and stores the samples. The result has
$\chfn(A)$ channels, size $\sigma(h) \times \sigma(w)$, and no ratio
tag. Its ghost is the model's architecture $A$
(Definition~\ref{def:values}). It therefore inhabits
$\tycon{Latent}(\sigma(h),\sigma(w),\chfn(A),A,\brand{Dense})$ with
$\sigma' = \sigma$.

\emph{Other nodes.} The other samplers reuse this argument.
\begin{itemize}
\item \nodename{KSamplerAdvanced} calls \texttt{common\_ksampler} with
  other step and noise settings (\texttt{nodes.py:1634--1641}). The
  argument is unchanged.
\item \nodename{SamplerCustom} performs the same normalization,
  sampling, and tag removal with a sampler and sigmas of index-free
  types (\texttt{nodes\_custom\_sampler.py:751--786}). Its schema has no
  protocol constraint, which only weakens the restriction on $\sigma$.
\item \nodename{SamplerCustomAdvanced} takes the model and the
  conditionings from a guider that inhabits $\tycon{Guider}(A)$
  (\texttt{:1033--1060}). The guider supplies exactly the premises used
  above. These are a model of a family $f_0$ with $\archfn(f_0) = A$,
  and conditionings of architecture $A$ whose kinds satisfy
  $\mathit{ctrl\_ok}(f_0,\cdot)$. Its noise source returns noise of the
  latent's shape by the interface of $\tycon{Noise}$. Its second output
  keeps the input's ratio tags and is undeclared.
\end{itemize}
\end{proof}

\begin{proposition}[Wan video nodes]
\label{prop:f:wan}
The schemas of \nodename{WanImageToVideo}, \nodename{WanVaceToVideo},
and \nodename{WanVideoDecode} are faithful.
\end{proposition}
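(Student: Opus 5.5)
I follow the four-step proof pattern of \S\ref{app:faithful}, treating \nodename{WanImageToVideo} first and reducing the other two nodes to it.

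\emph{Setting.} The schema takes conditionings $\tyCond{a}$ (twice), a VAE $\tyVae{a}$, size widgets $\tyInt{\nu_w}$ and $\tyInt{\nu_h}$ with side condition $16 \mid \nu_w \wedge 16 \mid \nu_h$, and optional start image and CLIP-vision inputs. It returns $(\tyCond{a}, \tyCond{a}, \tyLatent{\nu_h/8}{\nu_w/8}{16}{a})$. Fix $\sigma$ agreeing with the literals, and inputs inhabiting their types, so the VAE has ghost architecture $A$ in the Wan forest with $\chfn(A)=16$ and $\mathit{sf}(A)=8$. I would first check that the schema pins $A$ to the Wan~2.1-style architecture; if it admits $\brand{Wan22\_5B}$, that case needs its own $48$-channel, ratio-$16$ output. \emph{Code path.} The node allocates \texttt{zeros([b,16,((length-1)//4)+1,$\nu_h$//8,$\nu_w$//8])}. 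If a start image is given, it resizes the image to $(\nu_h,\nu_w)$ with \texttt{common\_upscale} (Lemma~\ref{lem:f:resize}), pads the frames, encodes them with the VAE (Lemma~\ref{lem:f:vae}), builds a mask at latent resolution, and attaches both as concat latent and mask through \texttt{conditioning\_set\_values}. A CLIP-vision output is attached as an image embedding.

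\emph{No error.} Encoding an RGB image raises no error by Lemma~\ref{lem:f:vae}. The resize fixes the image size to the widgets, and the side condition with convention~(i) makes $\nu/8$ exact and the encoded latent of size $\nu_h/8 \times \nu_w/8$ with $16$ channels. This matches the concat-mask shape, so the concatenation raises no error (Lemma~\ref{lem:f:pointwise}). \emph{Outputs.} The latent is an untagged zero tensor with ghost $\ast$ of the stated shape, so it inhabits the latent type. If the schema records the state $\brand{Empty}$, I would argue via Lemma~\ref{lem:f:empty} that normalization changes nothing, because channels and ratio already agree. The conditionings gain a concat latent from a VAE of architecture $A$ and keep the single architecture $A$, so they inhabit $\tyCond{A}$.

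The main obstacle is the CLIP-vision embedding width. I expect the schema to carry a constraint $\mathit{iw}$ or a width index on CLIP-vision outputs (convention~(iv)) forcing the width that models of $A$ consume, matching the 1024-versus-1280 failure of the introduction. I would cite that constraint to discharge the image-embedding clause of Definition~\ref{def:inhabitation}. For \nodename{WanVaceToVideo} I would repeat the argument with the control video and mask resized to the widgets, and with the extra trim output left fresh. For \nodename{WanVideoDecode} I would adapt Proposition~\ref{prop:f:decode}, using the wrapper VAE's stored architecture and the channel count fixed at load time (Proposition~\ref{prop:f:comp}).
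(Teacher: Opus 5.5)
Your proposal has a genuine gap in the step you yourself flag as the main obstacle, and the gap comes from assuming the schema pins $A$. The schema of \nodename{WanImageToVideo} quantifies over $a$ and has no constraint other than $16 \mid \nu_w \wedge 16 \mid \nu_h$. Its elided CLIP-vision width index (convention~(iv)) is fixed to the literal $1280$, and no constraint relates that width to $a$. So there is no constraint to cite.

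Faithfulness must hold for every ground $\sigma$, including $A = \brand{SD15}$ or $\brand{Flux}$. The clause of Definition~\ref{def:inhabitation} requiring every attached image embedding to have a width that models of $A$ consume must therefore be discharged from the code, uniformly in $A$. The missing idea is this. Among the architectures of the forest, only Wan models read \texttt{clip\_vision\_output}, and they read it through an image projection of width $1280$. Wan models without that projection ignore it, and so do all other forest architectures. HunyuanVideo~1.5 also reads it but lies outside the forest by convention~(iii). Hence a $1280$-wide embedding is consumable by models of every $A$, and both conditionings inhabit $\tycon{Cond}(A)$ with no case split on $A$.

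The same unjustified premise $\chfn(A)=16$, $\mathit{sf}(A)=8$ leaks into the rest of your argument, but nothing there actually needs it.
\begin{enumerate}
\item The latent output type has the literal channel index $16$ and size $\nu/8$. The untagged zero tensor with ghost $\ast$ therefore inhabits $\tycon{Latent}(\nu_h/8,\nu_w/8,16,A,\brand{Dense})$ for every $A$, because $\ast$ matches every architecture index. No separate $48$-channel case for $\brand{Wan22\_5B}$ is required; a mismatch with such a model is caught at the sampler's edge, not at this node.
\item The schema's zero state is $\brand{Dense}$, so the appeal to Lemma~\ref{lem:f:empty} is moot.
\item The node performs no concatenation. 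It only stores the encoded start image and the mask through \texttt{conditioning\_set\_values}. Their fit to the sampled latent is the sampler's obligation (Lemma~\ref{lem:f:attach}, Proposition~\ref{prop:f:sampler}), so you can drop your shape-matching step, which rests on the unproved channel count.
\end{enumerate}
With these repairs, your outline for \nodename{WanVaceToVideo} and \nodename{WanVideoDecode} matches the paper's. For \nodename{WanVaceToVideo}, also note that viewing the mask as $8\times 8$ pixel blocks is exact since $8 \mid \nu$.
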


\begin{proof}
\emph{Idea.} The node resizes all pixel inputs to the requested size
and encodes them with the one VAE, so every attachment has architecture
$A$. The divisibility side condition makes the empty latent's size
exact.

\emph{Setting.} Consider \nodename{WanImageToVideo}. Its schema takes
conditionings and a VAE of architecture $a$, a start image of
independent size, and a CLIP-vision output of width $1280$. The size
widgets designate $w$ and $h$. The side condition is $16 \mid w \wedge
16 \mid h$. The outputs are the two conditionings and
$\tycon{Latent}(h/8,w/8,16,a,\brand{Dense})$. Fix $\sigma$ and inputs.

\emph{Code path and no error.} The node resizes the start image to
$\nu_w \times \nu_h$ (Lemma~\ref{lem:f:resize},
\texttt{nodes\_wan.py:41--61}). It encodes the image's RGB channels
with the VAE, which raises no error (Lemma~\ref{lem:f:vae}). It
attaches the result and a mask as concat latent and mask to both
conditionings.

\emph{Conditioning outputs.} The concat latent has the VAE's
architecture $A$, so the conditionings keep the single architecture
$A$. The CLIP-vision output is attached as metadata. Of the
architectures in the forest, only Wan models read it
(\texttt{model\_base.py:1634}). HunyuanVideo~1.5 also reads it but lies
outside the forest (\texttt{:2399}, convention~(iii)). Wan models read
it through an image projection of width $1280$
(\texttt{wan/model.py:514}). Wan models without that projection ignore
it (\texttt{:585--586}). Its width is therefore one that models of $A$
consume, and both conditionings inhabit $\tycon{Cond}(A)$.

\emph{Latent output.} The latent is
\texttt{zeros([b,16,t,$\nu_h$//8,$\nu_w$//8])} without a tag. Since
$16 \mid \nu$, we have $\nu/\!/8 = \nu/8$. Its ghost $\ast$ inhabits
every architecture. It therefore inhabits
$\tycon{Latent}(\nu_h/8,\nu_w/8,16,A,\brand{Dense})$.

\emph{Other nodes.} The other two nodes follow the same pattern.
\begin{itemize}
\item \nodename{WanVaceToVideo} resizes the control video, the masks,
  and the reference image to $\nu_w \times \nu_h$
  (\texttt{nodes\_wan.py:325--366}). It encodes the RGB channels with
  the VAE and attaches the latents to both conditionings. The view of
  the mask as blocks of $8 \times 8$ pixels is exact because $8 \mid
  \nu$. Its trim count is fresh, and $\sigma'$ assigns it the returned
  integer.
\item \nodename{WanVideoDecode} takes a VAE of type
  $\tycon{WanVae}(a)$ and a wrapper latent with $\chfn(a)$ channels.
  The wrapper decodes it with the loaded VAE, whose channel count
  $\chfn(A)$ matches (WanVideoWrapper \texttt{nodes.py:2081--2149}).
  The result is RGB frames whose fresh size $\sigma'$ assigns.
\end{itemize}
\end{proof}

\subsubsection*{Image and mask nodes}
These nodes create, measure, and transform images and masks. Their
proofs track height, width, channels, and modality.

\begin{proposition}[Sources, primitives, and queries]
\label{prop:f:imgsrc}
The schemas of \nodename{PrimitiveInt}, \nodename{INTConstant},
\nodename{EmptyImage}, \nodename{SolidMask},
\nodename{GetImageSize}, \nodename{LoadImage},
\nodename{GetVideoComponents}, \nodename{VHS\_LoadVideo}, and
\nodename{MathExpression|pysssss} are faithful.
\end{proposition}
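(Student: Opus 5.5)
The proof follows the four-step pattern of \S\ref{app:faithful}: \emph{Setting}, \emph{Code path}, \emph{No error}, \emph{Outputs}. The nodes split into three groups by how their output indices arise: widget sources (\nodename{PrimitiveInt}, \nodename{INTConstant}, \nodename{EmptyImage}, \nodename{SolidMask}); runtime-determined sources (\nodename{LoadImage}, \nodename{VHS\_LoadVideo}, \nodename{GetVideoComponents}); and queries (\nodename{GetImageSize}, \nodename{MathExpression|pysssss}).

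\emph{Widget sources.} None of these nodes reads a tensor argument, so no error of Definition~\ref{def:errors} can occur. For \nodename{PrimitiveInt} and \nodename{INTConstant}, the schema $\tycon{Int}(x)$ has $x$ designated by the literal $\nu$, and the node returns $\nu$. The output therefore inhabits $\tycon{Int}(\sigma(x))$ by Definition~\ref{def:inhabitation}. If the widget is wired, $x$ is fresh, and $\sigma'$ assigns it the returned integer. For \nodename{EmptyImage} and \nodename{SolidMask}, I would cite the constructor, a \texttt{torch.full} or \texttt{zeros} call of shape $[b,\nu_h,\nu_w,3]$ or $[b,\nu_h,\nu_w]$. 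The result then inhabits $\tycon{Image}(\nu_h,\nu_w,3,m)$ with ghost modality $\ast$, which matches any fresh $m$, or $\tycon{Mask}(\nu_h,\nu_w)$.

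\emph{Runtime sources.} For \nodename{LoadImage} and the video loaders, every dimension index in the output types is fresh, since it occurs in no input and is designated by no widget. Faithfulness then reduces to two facts. First, the implementation raises no compatibility error: it only decodes a file and converts it to RGB, and the mask is derived from alpha or defaults to a fixed size. Second, some $\sigma'$ exists with $w'\models\sigma'\tau'$: take the observed height and width, 3 channels after \texttt{convert("RGB")}, and ghost $\ast$ as in Definition~\ref{def:values}. The one point to check is whether the schema ties the mask's size to the image's. If it does, the code path for a missing alpha channel must return a mask of the image size, or the schema must leave that mask size fresh. \nodename{GetVideoComponents} splits a video object into frames, audio and fps. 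Its frames receive fresh sizes in the same way.

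\emph{Queries.} This is the main obstacle, because these schemas make relational claims. The schema of \nodename{GetImageSize} is presumably $\tycon{Image}(h,w,c,m)\to(\tycon{Int}(w),\tycon{Int}(h),\dots)$. The code returns \texttt{image.shape[2]} and \texttt{shape[1]}, so by inhabitation these equal $\sigma(w)$ and $\sigma(h)$. This has to be checked against the code's axis order, because swapping the axes would make the schema unfaithful. Any batch-count output is either fresh or index-free. For \nodename{MathExpression|pysssss}, the output cannot be related to the inputs except for trivial expressions. I expect the schema to give it a fresh $\tycon{Int}$, so $\sigma'$ assigns it the evaluated value. A float or error result is not a compatibility error, and the integer output is the int cast of that result. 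If the schema instead interprets linear expressions, I would argue it only for literal expressions whose evaluation is exact and integral under convention~(i), and leave every other case fresh.
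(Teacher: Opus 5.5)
Your decomposition and arguments follow the paper's proof. The integer primitives go through designation, with a fresh variable when the widget is wired. The filled tensors have size $\nu_h\times\nu_w$. \nodename{GetImageSize} reads \texttt{shape[2]} and \texttt{shape[1]} as $\sigma(w)$ and $\sigma(h)$, and its batch output is fresh. \nodename{MathExpression|pysssss} gets a fresh integer, and the loaders get fresh sizes that $\sigma'$ assigns. The point you leave open for \nodename{LoadImage} is settled by the code. For an image without alpha, the node returns a $64\times 64$ zero mask. A schema that tied the mask to the image would therefore be unfaithful, and the paper's schema gives the mask an independent size, which is your second branch.

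The step that does not go through as written is your uniform treatment of the loaders. \nodename{VHS\_LoadVideo} does not always convert to RGB. For a video with an alpha channel it returns four channels (VHS \texttt{load\_video\_nodes.py:379}). Your witness of three channels after RGB conversion therefore fails on such inputs. A schema that fixed this channel index to $3$ would not be faithful. The paper's schema is faithful because it leaves the channel index of \nodename{VHS\_LoadVideo} fresh, so that $\sigma'$ assigns it the observed count, $3$ or $4$. You need to treat \nodename{VHS\_LoadVideo} separately from \nodename{LoadImage} and \nodename{GetVideoComponents}, whose outputs are RGB, and state this explicitly.
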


\begin{proof}
\emph{Idea.} These nodes create values from widgets or files, or read
sizes off an image. Each output index is designated by a widget, read
from a typed input, or fresh.

\emph{Code path and outputs.} We go through the nodes in groups.
\begin{itemize}
\item \emph{Integer primitives.} They have the schema $\tycon{Int}(v)
  \to \tycon{Int}(v)$ and return their widget value
  (\texttt{nodes\_primitive.py:57}, KJNodes \texttt{nodes.py:55}). A
  literal $\nu$ designates $v$, so the output carries $\sigma(v) =
  \nu$. For a wired widget $v$ is fresh, and $\sigma'(v)$ is the
  received integer.
\item \emph{Filled tensors.} \nodename{EmptyImage} and
  \nodename{SolidMask} have unguarded bindings $\tycon{Int}(w)$ and
  $\tycon{Int}(h)$. They return tensors filled to size $\nu_h \times
  \nu_w$, with three channels for the image (\texttt{nodes.py:1975--1981},
  \texttt{nodes\_mask.py:196--197}).
\item \emph{Size query.} \nodename{GetImageSize} returns
  \texttt{shape[2]} and \texttt{shape[1]} of an image of type
  $\tycon{Image}(h,w,3)$ (\texttt{nodes\_images.py:591--593}). These
  are $\sigma(w)$ and $\sigma(h)$. Its batch output is fresh.
\item \emph{File loaders.} \nodename{LoadImage} converts every frame to
  RGB and returns one minus the alpha channel as a mask. For an image
  without alpha it returns a $64 \times 64$ zero mask
  (\texttt{nodes.py:1745, 1772}). Its image therefore has three
  channels, and its mask size is independent.
  \nodename{GetVideoComponents} returns RGB frames of the video's size.
  \nodename{VHS\_LoadVideo} returns four channels for a video with
  alpha, so its channel index is fresh (VHS
  \texttt{load\_video\_nodes.py:379}).
\item \emph{Expressions.} \nodename{MathExpression|pysssss} evaluates
  an arbitrary expression, and its integer is fresh.
\end{itemize}
The remaining outputs are fresh, and $\sigma'$ assigns them the
observed values.

\emph{No error.} None of these nodes has a tensor precondition that its
inputs could violate.
\end{proof}

\begin{proposition}[Size-preserving image and mask operators]
\label{prop:f:imgpres}
The schemas of \nodename{ImageInvert}, \nodename{ImageFromBatch},
\nodename{ImageBatch}, \nodename{SaveImage}, \nodename{PreviewImage},
\nodename{PreviewBridge}, \nodename{MaskToImage},
\nodename{InvertMask}, \nodename{GrowMask}, \nodename{FeatherMask},
and \nodename{Canny} are faithful.
\end{proposition}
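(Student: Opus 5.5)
The proof follows the four-step pattern of \S\ref{app:faithful}: Setting, Code path, No error, Outputs. Since every node here maps an image or mask to one of the same height and width, the natural choice is $\sigma' = \sigma$ for all dimension indices. I take the schemas to have the form $\forall h,w,m.\ \tycon{Image}(h,w,3,m) \to \tycon{Image}(h,w,3,m')$, with $m'=m$ for pure pixel operators and $m'$ fixed by the node for \nodename{Canny}. The masks are typed analogously as $\tyMask{h}{w} \to \tyMask{h}{w}$, and the sinks have no outputs. The proof first fixes $\sigma$ and inputs $w_i \models \sigma\tau_i$. It then treats the nodes in groups, each reduced to Lemmas~\ref{lem:f:resize} and~\ref{lem:f:pointwise} together with convention~(v) for modality ghosts.

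The groups are as follows.
\begin{enumerate}
\item \emph{Pointwise operators.} \nodename{ImageInvert} computes $1-x$, and \nodename{InvertMask} does the same on masks. Lemma~\ref{lem:f:pointwise} gives equal height, width, and channels. Convention~(v) says an output derived from one input keeps its modality, so the output inhabits $\sigma\tau'$.
\item \emph{Batch operators.} \nodename{ImageFromBatch} slices the batch axis. \nodename{ImageBatch} concatenates along the batch axis, and both inputs share $(h,w)$ by the schema. If the code resizes the second image when shapes differ, Lemma~\ref{lem:f:resize} still yields the first input's size. Concatenation keeps height and width by Lemma~\ref{lem:f:pointwise}.
\item \emph{Mask-to-image.} \nodename{MaskToImage} reshapes and expands a mask to three channels. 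Channel expansion is covered by Lemma~\ref{lem:f:pointwise}, and the modality is fresh.
\item \emph{Mask filters.} \nodename{GrowMask} and \nodename{FeatherMask} apply symmetrically padded filters or edge-local scaling. Both preserve size by Lemma~\ref{lem:f:pointwise}.
\item \emph{Sinks.} \nodename{SaveImage}, \nodename{PreviewImage}, and \nodename{PreviewBridge} serialize their input. They read no index that could be violated, and \nodename{PreviewBridge}'s passthrough output is the input itself or a stored copy of it.
\end{enumerate}

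\nodename{Canny} is the one case where the ghost changes. It is a preprocessor, so by Definition~\ref{def:values} and convention~(v) it stamps its output's modality with the edge class. The output then inhabits $\tycon{Image}(h,w,3,\brand{Edge})$. Its geometry needs a separate check: it converts to grayscale, runs a padded edge filter (Lemma~\ref{lem:f:pointwise}), and repeats the result to three channels. I will cite the exact code lines for each step and check that the output carries $c=3$ rather than $1$.

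No node in this proposition applies a control network, decodes a latent, or patches weights. The only possible compatibility errors are therefore shape errors, and they are excluded by the equal-size premise that the schemas of the binary nodes impose. I expect the main obstacle to be these code-path details rather than the type argument. One issue is whether \nodename{ImageBatch} silently resizes, and whether it requires equal sizes or equal channels. Another is whether \nodename{PreviewBridge} (Impact pack) can return a cached image of a different size. In that case I would make its output size fresh, or fall back to treating its output as passthrough only when the cache is keyed by the input.
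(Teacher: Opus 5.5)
\paragraph{Verdict.} Your route matches the paper's. Grouping the nodes, you use Lemma~\ref{lem:f:pointwise} for the pointwise maps, batch slicing, channel expansion and padded filters, and Lemma~\ref{lem:f:resize} for \nodename{ImageBatch}. \nodename{Canny} expands its edge map to three channels and stamps $\brand{Edge}$, and no node has a height, width or channel precondition that its typed inputs violate. There is, however, one concrete gap, at \nodename{PreviewBridge}.

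\paragraph{The gap: \nodename{PreviewBridge}'s mask output.} You file this node under the sinks. You also commit up front to $\sigma'=\sigma$ on all dimension indices, on the grounds that every node maps an image or mask to one of the same height and width. But \nodename{PreviewBridge} has a second output, a mask, which you never treat. That mask is either the mask drawn in the editor or a $64\times 64$ default, so its size is unrelated to $\sigma(h)\times\sigma(w)$. A schema that tied it to $(h,w)$ would be unfaithful. Your uniform $\sigma'=\sigma$ therefore cannot discharge the \emph{Outputs} step here. The paper's proof handles exactly this case: the mask's size is fresh in the schema, and $\sigma'$ assigns it the observed size. The image output, which you worried might come from a cache, is simply passed through with its modality, so $\sigma'=\sigma$ does work there. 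The output you needed to flag was the mask.

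\paragraph{A smaller point: \nodename{ImageBatch}.} Your \emph{No error} step attributes the absence of shape errors to an equal-size premise that the schema supposedly imposes. The paper's argument does not rely on such a premise. Both inputs are RGB, so the channel padding is skipped. The node then resizes the second image to the first before concatenating, so the output keeps the first input's height and width whatever the second's size is. That resize is the fact your \emph{No error} step should rest on.
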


\begin{proof}
\emph{Idea.} Each output keeps the height and width of the first input
and has the channel count its schema states.

\emph{Code path and outputs.} We go through the nodes in groups.
\begin{itemize}
\item \nodename{ImageInvert}, \nodename{ImageFromBatch},
  \nodename{SaveImage}, and \nodename{PreviewImage} apply a pointwise
  map, a batch slice, or return their input.
\item \nodename{MaskToImage}, \nodename{InvertMask},
  \nodename{GrowMask}, and \nodename{FeatherMask} apply a channel
  expansion to three, a pointwise map, a padded filter, or a clamped
  feathering (\texttt{nodes\_mask.py:124--384}).
\item \nodename{ImageBatch} takes two RGB images, so it skips its
  channel padding. It resizes the second to the first
  (Lemma~\ref{lem:f:resize}) and concatenates the batches
  (\texttt{nodes.py:1948--1957}).
\item \nodename{Canny} returns its edge map expanded to three channels
  and stamps the modality \brand{Edge} (\texttt{nodes\_canny.py:32--35}).
\item \nodename{PreviewBridge} passes its image and modality through.
  Its mask is the drawn mask or a $64 \times 64$ default, whose fresh
  size $\sigma'$ assigns (Impact \texttt{bridge\_nodes.py:71, 74,
  174}).
\end{itemize}
The first two groups keep height and width by
Lemma~\ref{lem:f:pointwise}.

\emph{No error.} None of the nodes has a precondition on height,
width, or channels that its typed inputs violate.
\end{proof}

\begin{proposition}[Resizing and cropping]
\label{prop:f:resize}
The schemas of \nodename{ImageScale}, \nodename{ImageScaleBy},
\nodename{ImageScaleToTotalPixels}, \nodename{ImageUpscaleWithModel},
\nodename{ImageResize+}, \nodename{ImageCrop},
\nodename{ImagePadForOutpaint}, \nodename{ImageStitch}, and
\nodename{CropMask} are faithful.
\end{proposition}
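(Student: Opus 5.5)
We follow the four-step proof pattern (\emph{Setting}, \emph{Code path}, \emph{No error}, \emph{Outputs}) and split the nine nodes into three groups by how their output size is determined. For each group we first fix $\sigma$ and inputs $w_i \models \sigma\tau_i$. We then follow the ComfyUI (or pack) code path to the call that produces the output tensor. Finally we exhibit $\sigma'$, equal to $\sigma$ except on fresh size variables. Modality is handled uniformly. By convention~(v), a resized or cropped image keeps its modality ghost, so the output modality index equals the input's. Channels are preserved by Lemma~\ref{lem:f:pointwise}, since resizing and cropping act only on the last two spatial axes.

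\emph{Widget-sized resizers.} \nodename{ImageScale} and \nodename{ImageResize+} with literal targets call \texttt{common\_upscale} with $(\nu_w,\nu_h)$. By Lemma~\ref{lem:f:resize}, the result has height $\nu_h$ and width $\nu_w$, and its other axes are those of the input. This matches the designated variables, provided the guarded bindings exclude the literal $0$ (aspect-preserving) mode. For that mode, and for wired widgets, the guard leaves the output size fresh, and $\sigma'$ assigns the observed size. No error can occur, because interpolation accepts any positive size.

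\emph{Ratio- and model-driven resizers.} \nodename{ImageScaleBy}, \nodename{ImageScaleToTotalPixels}, and \nodename{ImageUpscaleWithModel} compute sizes by floating-point rounding or by the upscale model's internal factor. Their schemas should therefore give fresh output sizes, as \S\ref{sec:calc:judgment} states for float-ratio operators. The proof then only needs two facts. First, the node raises no compatibility error. For the upscale model, this follows from the $\tycon{Upscale}$ interface of Definition~\ref{def:inhabitation}, which maps RGB images to RGB images, together with the three-channel input type. Second, the output keeps three channels and the input's modality.

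\emph{Croppers, padders, and stitchers.} \nodename{ImageCrop} and \nodename{CropMask} slice to $(\nu_h,\nu_w)$ from offsets, but the slice is clamped at the image border. So the claimed size holds only if the schema's side condition ensures $x+\nu_w \le w$ and $y+\nu_h \le h$. Otherwise the output size must be fresh. \nodename{ImagePadForOutpaint} produces $h+t+b$ by $w+l+r$, which is linear in the widget literals. \nodename{ImageStitch} concatenates along one axis after an optional match-resize. It yields a sum of widths (or heights), with the other extent taken from the first image.

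The main obstacle is that clamping, together with the linear-only fragment, cannot express the minimum function. For each of these nodes I will read the actual schema and confirm that it uses one of two forms. Either it constrains the crop to lie in bounds, which is stricter than the node and hence faithful, or it marks the affected extent as fresh. I expect \nodename{ImageStitch}'s \texttt{match\_image\_size} branch and its spacing widget to need the most care.
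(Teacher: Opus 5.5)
Your treatment of \nodename{ImageScale} matches the paper: literal size under the guard $\ge 1$, fresh otherwise. So does your treatment of the three ratio- and model-driven resizers (fresh sizes, with only channels and modality to check via the upscale model's RGB-to-RGB interface) and of \nodename{ImageCrop} and \nodename{CropMask} under your fresh-size option. In the paper every schema except \nodename{ImageScale}'s gives its outputs fresh sizes, so the clamping and $\min$ obstacle you single out never arises.

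The step that fails is grouping \nodename{ImageResize+} with \nodename{ImageScale}. The essentials node does not simply resize to its width and height literals. Its \texttt{method} widget (keep proportion, fill/crop, pad) rescales by the input's aspect ratio. Its \texttt{condition} widget can return the input unchanged. Its \texttt{multiple\_of} widget rounds the target down and crops the result. A schema binding the output to $(\nu_h,\nu_w)$ would therefore not be faithful. The actual schema instead gives the output image fresh extents $h_2, w_2$.

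The idea your plan is missing is that some of these schemas share a fresh variable between \emph{several} outputs. Condition~(b) of Definition~\ref{def:fidelity} asks for a single $\sigma'$ under which all outputs inhabit their types at once. Letting ``$\sigma'$ assign the observed size'' works only when each fresh variable occurs in one output.

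\begin{itemize}
\item For \nodename{ImageResize+}, $h_2, w_2$ also index its two $\tycon{Int}$ outputs. You must show that the returned integers are exactly the output image's width and height; the node returns \texttt{outputs.shape[2]} and \texttt{outputs.shape[1]}. Your proposal never mentions these outputs.
\item For \nodename{ImagePadForOutpaint}, the padded image and the mask share $oh, ow$. You must check that the mask is built with the image's extent, but you only treat the image.
\item For \nodename{ImageStitch}, the real obligation is on the no-error side. Concatenation would fail on mismatched extents along the other axis. It cannot fail here because the node first resizes (\texttt{match\_image\_size}) \emph{or pads} both inputs to a common extent. Your expected form, a sum of widths with the other extent taken from the first image, is wrong in the padding branch, where that extent is the larger of the two. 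Fresh output sizes plus the common-extent fact are what close this case.
\end{itemize}
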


\begin{proof}
\emph{Idea.} The output size of \nodename{ImageScale} is the widget
literal when the guard holds and fresh otherwise. The other nodes leave
output sizes fresh, so only channels, modality, and shared extents
need checking.

\emph{Setting.} Consider \nodename{ImageScale}, whose schema maps
$\tycon{Image}(h,w,3,m)$ to $\tycon{Image}(th,tw,3,m)$. It has guarded
bindings $\tycon{Int}(tw) \ge 1$ and $\tycon{Int}(th) \ge 1$.

\emph{Code path, no error, and outputs.} The argument splits on the
guard.
\begin{itemize}
\item \emph{Literals at or above the guard.} Suppose both widgets hold
  literals $\nu_w,\nu_h \ge 1$. The node calls
  \texttt{common\_upscale} with $(\nu_w,\nu_h)$
  (\texttt{nodes.py:1882--1895}). This returns size $\nu_h \times
  \nu_w$ with three channels (Lemma~\ref{lem:f:resize}), and it keeps
  the modality. The output inhabits
  $\tycon{Image}(\sigma(th),\sigma(tw),3,\sigma(m))$.
\item \emph{Literal $0$, or wired widgets.} A literal $0$ makes the
  node derive that side from the aspect ratio. When both are $0$ it
  returns the input. In these cases, and for a wired widget, the guard
  leaves the variable fresh. Then $\sigma'$ assigns it the resulting
  side.
\end{itemize}

\emph{Other nodes.} The other schemas give all output sizes fresh
variables, which $\sigma'$ sets to the observed sizes. Only channels,
modality, and shared extents remain to be checked.
\begin{itemize}
\item \nodename{ImageScaleBy} and \nodename{ImageScaleToTotalPixels}
  resize RGB images and keep the modality.
\item \nodename{ImageUpscaleWithModel} applies an upscale model. The
  model maps RGB images to RGB images by the interface of its label.
\item \nodename{ImageCrop} and \nodename{CropMask} slice.
\item \nodename{ImageStitch} resizes or pads its inputs to a common
  extent before concatenating them (\texttt{nodes\_images.py:318--373}).
\item \nodename{ImagePadForOutpaint} pads the image and builds the mask
  with one extent (\texttt{nodes.py:2004--2045}). This matches the
  shared variables $oh, ow$ of its two outputs.
\item \nodename{ImageResize+} returns its output's width and height as
  integers (essentials \texttt{image.py:458}). This matches the shared
  variables $h_2, w_2$.
\end{itemize}
\end{proof}

\begin{proposition}[Preprocessors]
\label{prop:f:preproc}
The schemas of \nodename{CannyEdgePreprocessor},
\nodename{DepthAnythingPreprocessor},
\nodename{DepthAnythingV2Preprocessor},
\nodename{MiDaS-DepthMapPreprocessor},
\nodename{Zoe-DepthMapPreprocessor}, \nodename{OpenposePreprocessor},
\nodename{DWPreprocessor}, \nodename{LineArtPreprocessor},
\nodename{AnimeLineArtPreprocessor}, \nodename{ScribblePreprocessor},
\nodename{SemSegPreprocessor}, and \nodename{M-LSDPreprocessor} are
faithful.
\end{proposition}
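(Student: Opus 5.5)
We follow the proof pattern of \S\ref{app:faithful} (Setting, Code path, No error, Outputs). First we fix the common schema shape. Every preprocessor maps an input of type $\tycon{Image}(h,w,3,m)$, with $m$ a free modality, to an output $\tycon{Image}(h',w',3,s_0)$. Here $s_0$ is a ground modality class fixed by the node: $\brand{Edge}$ for Canny, line-art, anime line-art, scribble and M-LSD; $\brand{Depth}$ for the four depth nodes; $\brand{Pose}$ for OpenPose and DW; and the segmentation class for SemSeg. The output sizes $h',w'$ are fresh, since the resolution widget of the ControlNet-Aux packs resamples the map. Fix $\sigma$ and an RGB input image. Under convention~(ii), optional inputs that the schema does not declare are absent, and a connected undeclared input would have made the node $\tyDyn$.

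\emph{Code path.} We treat the nodes in one group. Each pack node routes through the shared wrapper \texttt{common\_annotator\_call}. The wrapper converts every frame to a NumPy array and runs the annotator model at the requested resolution. It then resizes back and returns a batch stacked as $B\times H\times W\times 3$. We cite that wrapper once. For each node we then cite the line where its annotator is chosen, and we confirm that the result passes through \texttt{HWC3}, which forces three channels. \nodename{CannyEdgePreprocessor} follows the same path with OpenCV's Canny.

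\emph{No error.} The only preconditions are on the channel count of the input. An RGB input meets them, and the wrapper accepts any height and width because it resizes internally (Lemma~\ref{lem:f:resize}). No weight argument is processed. Annotator checkpoints are loaded by the pack itself and are not typed weights, so no error of Definition~\ref{def:errors} applies.

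\emph{Outputs.} The output has three channels by \texttt{HWC3}. Its size is whatever the wrapper returns, and $\sigma'$ assigns that size to the fresh $h',w'$. By Definition~\ref{def:values} and convention~(v), a preprocessor stamps the modality ghost of its output. We define that stamp to be the class of the annotator, which is $s_0$, so the output inhabits $\sigma'\tau'$.

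The main obstacle is nodes whose output may differ in channels or modality. The first candidate is \nodename{SemSegPreprocessor}, which produces a color-coded segmentation map and still needs its own class. The second is the pose nodes. OpenPose and DW may also emit a JSON keypoint output, which we treat as undeclared and hence $\tyDyn$. Their image output must be checked to be an RGB rendering. If any node can return a non-RGB map, we would make its channel index fresh, as \nodename{VHS\_LoadVideo} does in Proposition~\ref{prop:f:imgsrc}.
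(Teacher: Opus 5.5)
Your proposal is correct and follows essentially the paper's own argument. In both, any RGB input is accepted at any size, the shared controlnet\_aux path returns a three-channel map via \texttt{HWC3}, the node stamps a ground modality, and $\sigma'$ assigns the returned size to the fresh output variables. Two small discrepancies are harmless. The paper has the map returned at the detection resolution rather than resized back, and it cites the pack's own resizing code rather than Lemma~\ref{lem:f:resize}; neither matters because the output size is fresh. Your closing worry about non-RGB maps is already settled by the same three-channel normalization, so no schema change is needed.
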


\begin{proof}
\emph{Idea.} A preprocessor accepts any RGB image and stamps its
modality. The output size is left fresh.

\emph{Setting.} Each schema maps $\tycon{Image}(h,w,3)$ to
$\tycon{Image}(oh,ow,3,s)$ for a ground modality $s$ and fresh $oh,
ow$.

\emph{Code path.} The pack resizes the RGB input to the detection
resolution and runs the detector. It returns a three-channel result at
that resolution (controlnet\_aux \texttt{utils.py:64--92},
\texttt{util.py:155--171}).

\emph{No error.} The detectors accept any RGB input size, so no error
occurs.

\emph{Outputs.} The output's ghost modality is the stamped $s$. Then
$\sigma'$ assigns $oh$ and $ow$ the returned size.
\end{proof}

\subsubsection*{Sinks and pipes}
These nodes consume values or bundle them into a pipe.

\begin{proposition}[Sinks and the vision encoder]
\label{prop:f:sinks}
The schemas of \nodename{CLIPVisionEncode}, \nodename{CreateVideo},
\nodename{VHS\_VideoCombine}, \nodename{MaskPreview},
\nodename{ImageCompare}, \nodename{Image Comparer (rgthree)},
\nodename{CheckpointSave}, and \nodename{ModelSave} are faithful.
\end{proposition}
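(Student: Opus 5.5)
I will follow the four-step proof pattern of \S\ref{app:faithful} and treat the eight nodes in two groups. \nodename{CLIPVisionEncode} is the only node that produces an indexed value. The remaining seven are pure sinks with no outputs.

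\emph{Sinks.} For \nodename{CreateVideo}, \nodename{VHS\_VideoCombine}, \nodename{MaskPreview}, \nodename{ImageCompare}, \nodename{Image Comparer (rgthree)}, \nodename{CheckpointSave}, and \nodename{ModelSave}, I fix $\sigma$ and inputs that inhabit their schema types. I then check in each node's source that it only serializes, displays, or concatenates frames.
\begin{itemize}
\item \nodename{CreateVideo} and \nodename{VHS\_VideoCombine} stack frames along the batch/time axis. This keeps height and width (Lemma~\ref{lem:f:pointwise}). Their schemas require equal frame sizes, or the node resizes them first, so no shape error arises.
\item The two comparers only encode images for the UI.
\item \nodename{CheckpointSave} and \nodename{ModelSave} call ComfyUI's save routine on a model, and optionally on a CLIP and a VAE, all of a single architecture. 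The schema ties these to $\archfn(f)$. Saving reads state dicts and patches no weights, so no skipped-key or shape error can occur. Any pending patches are attributed to the nodes that added them by convention~(v).
\end{itemize}
Condition~(b) of Definition~\ref{def:fidelity} is vacuous for nodes with no outputs, or with only index-free or $\tyDyn$ outputs. The VHS combine returns a filename, which is index-free.

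\emph{\nodename{CLIPVisionEncode}.} Its schema takes a CLIP-vision model and an image $\tycon{Image}(h,w,3)$, and returns a CLIP-vision output whose embedding width is an elided index. The width is either designated by the encoder file through the database premise (convention~(iii)) or left fresh. On the code path, the encoder resizes or center-crops the RGB image to its own input resolution (Lemma~\ref{lem:f:resize}), so no size precondition can fail. The proof must then show that the output's width equals the width stored for the file, or that $\sigma'$ assigns it when fresh. Downstream consumers such as \nodename{WanImageToVideo} (Proposition~\ref{prop:f:wan}) then depend only on that index.

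\emph{Main obstacle.} The hard part is the width claim for \nodename{CLIPVisionEncode}. The encoder's output width is a fact about the loaded weights, not about the node's code. I would first try to discharge it directly from convention~(iii) plus the file's projection shape. If a file is unresolved, I would fall back on making the width fresh, so that the output becomes a frontier position under clause~(ii) of Definition~\ref{def:frontier}. A secondary check is the crop-mode widget: both crop options must still yield the encoder's fixed input size, which I expect to follow from Lemma~\ref{lem:f:resize}.
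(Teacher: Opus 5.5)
Your overall route is the paper's: none of these nodes has a precondition beyond its input labels, and the only indexed output is the CLIP-vision embedding width. The gap is in the video writers, where you argue against the wrong precondition. \nodename{CreateVideo} and \nodename{VHS\_VideoCombine} each receive a single \texttt{IMAGE} batch, so all frames already share one height and width, and ``equal frame sizes'' never arises. \nodename{CreateVideo} merely wraps the frames into a \texttt{VIDEO} object and encodes nothing. The precondition that can actually abort belongs to the encoder: chroma-subsampled pixel formats such as the \texttt{yuv420p} of the default H.264 format reject odd heights and widths. The schema admits odd $\sigma(h)$ and $\sigma(w)$, and your stacking argument says nothing about that case. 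The missing fact is the one the paper cites, that VHS pads odd sizes before encoding (VHS \texttt{nodes.py:445}). Relatedly, condition~(b) is not vacuous for index-free outputs such as \nodename{CreateVideo}'s \texttt{VIDEO}: by Definition~\ref{def:inhabitation} they must still meet their label's interface.

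Your ``main obstacle'' is illusory. The CLIP-vision input of \nodename{CLIPVisionEncode} comes from \nodename{CLIPVisionLoader}, whose schema is index-free (Proposition~\ref{prop:f:indexfree}). The node's only widget, \texttt{crop}, designates nothing. So the width occurs in no input type and no widget of the node designates it. By Definition~\ref{def:designation} it is therefore fresh for every file, not only for unresolved ones. The output is then a frontier position (clause~(ii) of Definition~\ref{def:frontier}), and $\sigma'$ simply takes the observed width; that is the paper's entire argument. Your primary route through convention~(iii) is not available under these schemas. Consistently with this, Issue~\#7022, a ViT-L encoder fed to Wan, goes undetected in RQ4. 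For the crop check, cite \texttt{clip\_vision.py:63--80} rather than Lemma~\ref{lem:f:resize}, because \texttt{clip\_preprocess} calls \texttt{interpolate} directly and not \texttt{common\_upscale}; both crop modes end at the encoder's square resolution. Finally, the \nodename{CheckpointSave} schema gives the model, CLIP, and VAE independent indices rather than tying them to $\archfn(f)$. Your no-error argument never uses such a tie, which is exactly the paper's point: the savers serialize whatever they receive. Your remark that patches applied at save time are charged to the nodes that added them (convention~(v)) is a correct detail that the paper leaves implicit.
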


\begin{proof}
\emph{Idea.} These nodes consume their inputs without shape
preconditions. The one indexed output gets a fresh width.

\emph{No error.} None of these nodes has a precondition beyond its
input labels.
\begin{itemize}
\item The CLIP-vision encoder resizes its RGB input to the model's
  resolution and emits an embedding (\texttt{clip\_vision.py:63--80}).
\item The video writers accept frames of any size, and VHS pads odd
  sizes (VHS \texttt{nodes.py:445}).
\item The previews and comparers display their inputs.
\item The savers serialize whatever they receive. The independent
  indices of \nodename{CheckpointSave} are therefore sound.
\end{itemize}

\emph{Outputs.} The embedding's width is fresh, and $\sigma'$ sets it
to the observed width. The other outputs are index-free.
\end{proof}

\begin{proposition}[Impact pipe nodes]
\label{prop:f:pipe}
The schemas of \nodename{ToBasicPipe}, \nodename{FromBasicPipe}, and
\nodename{FaceDetailer} are faithful.
\end{proposition}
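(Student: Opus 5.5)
The plan follows the four-step pattern of the conventions (Setting, Code path, No error, Outputs) and treats the three nodes in two groups. \nodename{ToBasicPipe} and \nodename{FromBasicPipe} form the first group, and \nodename{FaceDetailer} the second.

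\emph{Pipe nodes.} I take the pipe type to be a tuple constructor that carries the types of its five components: $\tycon{Model}(f,q)$, $\tycon{Clip}(\archfn(f))$, $\tycon{VAE}(\archfn(f))$, and two $\tycon{Cond}(\archfn(f))$. The inhabitation clause for a pipe then reads componentwise. \nodename{ToBasicPipe} only builds a Python tuple of its inputs (Impact \texttt{pipe\_nodes}), so it processes no tensor and raises no error. Each component keeps its ghost, so under $\sigma$ the tuple inhabits the pipe type. \nodename{FromBasicPipe} unpacks the tuple. Each output is literally the corresponding component, so it inhabits the component type under $\sigma' = \sigma$. Neither node has fresh variables.

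\emph{FaceDetailer.} I fix $\sigma$ and inputs for its schema. The inputs are an image $\tycon{Image}(h,w,3,m)$, a model, a CLIP, a VAE and conditionings sharing architecture $A = \archfn(f_0)$, a detector of index-free type, and an optional SAM model. The output is an image of size $h\times w$ with fresh modality, a fresh-size cropped batch, and a mask.

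\emph{Code path and no error.} I would follow the code path in stages.
\begin{enumerate}
\item The detector finds face boxes.
\item Each crop is resized with \texttt{common\_upscale} (Lemma~\ref{lem:f:resize}).
\item The crop is encoded with the VAE, which raises no error on RGB input (Lemma~\ref{lem:f:vae}).
\item The encoded latent is sampled by \texttt{common\_ksampler} with the given model and conditionings.
\item The result is decoded, resized back, and pasted into a copy of the original image.
\end{enumerate}
The sampling step is the substantive one. Here I reuse the argument of Proposition~\ref{prop:f:sampler}. The encoded latent is $\brand{Dense}$ of architecture $A$ with $\chfn(A)$ channels. The model has family $f_0$ with $\archfn(f_0)=A$, and the conditionings have architecture $A$. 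The schema's side conditions ($\mathit{ctrl\_ok}$, $\mathit{proto}$) are the same premises used there. So no latent-architecture, channel, or embedding error occurs, and decoding with a VAE of $A$ is likewise error-free (Proposition~\ref{prop:f:decode}). Pasting writes crops of matching extent into a tensor of size $h\times w$, which keeps height, width and channels (Lemma~\ref{lem:f:pointwise}).

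\emph{Outputs and the main obstacle.} The main image inhabits $\tycon{Image}(\sigma(h),\sigma(w),3,\cdot)$ with a fresh modality that $\sigma'$ sets to $\ast$. The cropped outputs and the mask get fresh sizes assigned by $\sigma'$. The step I expect to be hardest is bounding what \nodename{FaceDetailer} can patch internally. The \texttt{wildcard} widget can load LoRAs, which would add weight patches not covered by the schema. For this I invoke convention~(ii): a wildcard literal that loads a LoRA takes the node outside its schema, and the checker then types it $\tyDyn$. The proof therefore only covers wildcard text without LoRA directives, and under that restriction the model is used as given. I also have to check that the optional inputs with unmodelled behavior are undeclared, so that they are absent. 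These are the detailer hook, the SAM model, and the segm detector.
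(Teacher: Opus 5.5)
Your treatment of the pipe nodes as pack/unpack, and of \nodename{FaceDetailer}'s encode--sample--decode chain by reusing Proposition~\ref{prop:f:sampler} and Lemma~\ref{lem:f:vae}, matches the paper. The wildcard step, however, has a genuine gap.

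You restrict the proof to wildcard text without LoRA directives. That is a condition on the literal, but the node's wildcard processor can produce a \texttt{<lora:...>} tag from text the literal does not contain. It expands a wildcard-file reference \texttt{\_\_name\_\_} or an option group \texttt{\{a|b\}} before it extracts tags and loads them through \nodename{LoraLoader}. A tag-free literal can therefore still load a LoRA built for another architecture, and that LoRA's keys are skipped (Lemma~\ref{lem:f:weights}). The skipped-key error is attributed to \nodename{FaceDetailer} itself (Definition~\ref{def:errors}, convention~(v)). So clause~(a) of faithfulness fails for an occurrence that your criterion still types with the schema. A wired \texttt{wildcard} widget has the same problem: it designates nothing, and its runtime text is unknown.

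Invoking convention~(ii) therefore requires two things: a static, conservative $\tyDyn$ rule, and an argument that this rule excludes every LoRA-loading run. The paper supplies both. The checker types an occurrence as $\tyDyn$ when its wildcard is wired or contains a tag, a wildcard-file reference, or an option group. The remaining literals contain no LoRA tag, so the node adds no weight patch.

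Several of your schema guesses also differ from the library's, and the proof must follow the actual index lists.
\begin{itemize}
\item The pipe type is $\tycon{BasicPipe}(f,q,a)$, read as constraining only the model and CLIP components. \nodename{FromBasicPipe}'s VAE and conditioning outputs therefore do have fresh indices, which $\sigma'$ sets to their ghosts. Your five-component clause would hold for \nodename{ToBasicPipe}'s output only if its schema tied the VAE and conditioning inputs to $\archfn(f)$, which the library's pipe type does not track.
\item For \nodename{FaceDetailer}, the library leaves the output image's size fresh and keeps the input's modality, the reverse of your choice. Your pasting argument becomes unnecessary, but you must argue that the output keeps the input's modality ghost.
\item A fresh modality would in any case be set by $\sigma'$ to a ground brand, not to $\ast$, which is a ghost value.
\item The paper names \texttt{detailer\_hook} and \texttt{scheduler\_func\_opt} as the undeclared optional inputs. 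Your list omits the latter and includes the SAM model, which your own Setting declares.
\end{itemize}
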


\begin{proof}
\emph{Idea.} A pipe is a tuple, so packing and unpacking keep the
component types. \nodename{FaceDetailer} runs the encode, sample, and
decode steps whose faithfulness is already proved.

\emph{Setting.} A pipe is the tuple (model, CLIP, VAE, positive,
negative) (Impact \texttt{pipe.py:148, 164}). We read
$\tycon{BasicPipe}(f,q,a)$ as inhabited by a tuple whose first two
components inhabit $\tycon{Model}(f,q)$ and $\tycon{Clip}(a)$.

\emph{Pipe nodes.} The two pipe nodes pack and unpack the tuple.
\begin{itemize}
\item \nodename{ToBasicPipe} builds the tuple from its inputs, which
  inhabit these types.
\item \nodename{FromBasicPipe} returns the components. Its model and
  CLIP outputs inhabit $\tycon{Model}(f_0,\sigma(q))$ and
  $\tycon{Clip}(A)$. Its VAE and conditioning outputs have fresh
  indices, which $\sigma'$ sets to their ghosts.
\end{itemize}

\emph{\nodename{FaceDetailer}: setting and code path.} The node takes a
model of family $f$ and a CLIP, a VAE, and conditionings of
architecture $\archfn(f)$. It requires $\mathit{ctrl\_ok}$ for both
conditionings. For each detected crop it encodes the crop with its
VAE, samples it, and decodes it (Impact \texttt{core.py:352--397}).

\emph{\nodename{FaceDetailer}: no error and outputs.} Encoding raises no
error (Lemma~\ref{lem:f:vae}). The argument of
Proposition~\ref{prop:f:sampler} applies to the sampling step with
architecture $\archfn(f_0)$ for every weight. The decoded latent is
$\brand{Dense}$ with $\chfn(\archfn(f_0))$ channels, so decoding raises
no error (Lemma~\ref{lem:f:vae}). The output image keeps the input's
modality. Its size and the mask's size are fresh.

\emph{\nodename{FaceDetailer}: untyped occurrences.} The undeclared
optional inputs \texttt{detailer\_hook} and
\texttt{scheduler\_func\_opt} are absent by convention~(ii). The
\texttt{wildcard} widget may name LoRAs with \texttt{<lora:...>} tags.
The node loads them through \nodename{LoraLoader} (Impact
\texttt{core.py:267--268}, \texttt{wildcards.py:919--960}). A
wildcard-file reference \texttt{\_\_name\_\_} or an option group
\texttt{\{a|b\}} can produce such a tag from text the literal does not
contain (\texttt{wildcards.py:654, 728}). The checker types an
occurrence as $\tyDyn$ when its wildcard is wired or contains a tag, a
wildcard-file reference, or an option group. The schema is therefore
proved for the remaining wildcards. These contain no LoRA tag, and for
them the node patches no weights.
\end{proof}

\parh{Remark (nodes left untyped).} Some preconditions lie outside the
index language, so the following nodes have no schema.
\begin{itemize}
\item \nodename{StyleModelApply} and \nodename{unCLIPConditioning}
  attach image embeddings of unindexed width (\texttt{nodes.py:1159,
  1180}).
\item \nodename{ImageToMask} may select a missing alpha channel.
\item \nodename{MaskComposite} and \nodename{ImageBlur} need an
  inequality between a widget and an image side.
\item \nodename{DiffControlNetLoader} adds model weights whose input
  convolution may be wider.
\item \nodename{KSampler (Efficient)} runs scripts.
\item Audio latents have one axis and lie outside $\tycon{Latent}$.
\end{itemize}
These nodes are typed $\tyDyn$, and Theorem~\ref{thm:soundness} makes
no claim about them.
\fussy

%% file: sections/appendix-validity.tex
\section{Classifier Validity Study}
\label{sec:appendix-validity}

The failure study (\S\ref{sec:study}) answers RQ1 by counting failures in
ComfyUI's GitHub reports. For each of the 10{,}134 issues and discussions, it
asks whether the report describes a \emph{semantic composition failure}, i.e.\
a failure that passes ComfyUI's nominal label check because it depends on facts
the labels omit, such as a model's family. If so, it also asks
whether the failure is \emph{silent}, i.e.\ whether the workflow completes
without an exception but produces wrong output. Labeling this many reports by
hand is impractical. The study therefore lets a large language model label
every report, following one fixed written rubric that encodes the taxonomy of
\S\ref{sec:study:method}. We call this model, \texttt{deepseek-v4-flash}, the
\emph{classifier}.

Every percentage in \S\ref{sec:study} is computed from the classifier's labels.
If the classifier mislabels reports, these percentages are wrong. This appendix
therefore asks whether the classifier's labels can be trusted.

\parh{Quantities checked.} The validity check concerns two quantities.
\begin{itemize}
\item The \emph{prevalence} is the fraction of reports labeled as semantic
  composition failures. The census, for example, finds 8.3\% of issues. We call
  the per-report yes/no decision behind it the \emph{semantic label}.
\item The \emph{silent fraction} is the fraction of these failures labeled
  silent. The census finds 31.5\%.
\end{itemize}

\parh{Checks.} Without human labels, we test the classifier by comparing it
with other language models. We run two checks.
\begin{enumerate}
\item \emph{Agreement across model developers.} Do models from other developers,
  applying the same rubric to the same reports, reach the same labels and the
  same percentages?
\item \emph{Two-rater consensus.} Does combining the classifier with a second
  rater, and resolving their disagreements, change the prevalence?
\end{enumerate}
All these checks support the classifier. In Check~1, four further models agree
with it at $\kappa=0.71$--$0.81$, and their own prevalences lie on both sides of
its estimate. In Check~2, the consensus prevalence lies within one percentage
point of the classifier's. Because these checks compare models with models, they
establish reliability but not accuracy against human judgment. We return to this
limitation at the end.

\parh{Validation subsample.} Re-labeling all 10{,}134 reports with further
models is unnecessary, because a uniform random subsample suffices to estimate
how well raters agree. We drew a seeded subsample of 1{,}583 reports (985 issues
and 598 discussions), and all results below refer to them.

On this subsample, the classifier labels 9.8\% of the issues as semantic
composition failures. This is slightly above its census estimate of 8.3\%, a
difference within sampling variation (95\% confidence interval 8.1\%--11.9\%).
The difference does not affect the checks, because they compare raters on the
same items.

\parh{Check 1: agreement across model developers.} Suppose the classifier's labels
reflected a bias of its model and not the content of the reports. Then a
model from another developer would disagree with them and would give
different percentages. To test this, four further models independently re-labeled
every subsample item under the identical rubric: \texttt{deepseek-v4-pro} from
the same developer, and \texttt{glm-5.2} (Zhipu), \texttt{minimax-m3} (MiniMax),
and \texttt{hy3-preview} (Tencent) from three other developers. We report results
on the 985 issues. The 598 discussions contain only 11 of the classifier's 108
semantic labels in the subsample, too few for a stable estimate of agreement.

Table~\ref{tab:validity} reports two kinds of evidence for each rater.
\begin{itemize}
\item \emph{Agreement with the classifier} on the semantic label shows whether
  the labels match item by item. We report raw agreement and Cohen's $\kappa$,
  which corrects raw agreement for agreement by chance. The correction matters
  here because semantic composition failures are rare. A rater that never assigns the
  semantic label would still agree with the classifier on 90\% of the issues.
\item \emph{The rater's own estimates}, i.e.\ the prevalence and the silent
  fraction computed from its labels alone, show whether the numbers of
  \S\ref{sec:study} still hold under a different model. This is the more important
  evidence.
\end{itemize}

\begin{table}[h]
\caption{Agreement of four further raters with the classifier
(\texttt{deepseek-v4-flash}) on the 985 issues of the validation subsample, and
each rater's own semantic prevalence and silent fraction. $\kappa$ and
agreement refer to the binary semantic label.}
\label{tab:validity}
\begin{tabular}{llrrrr}
\toprule
Rater & Developer & $\kappa$ & Agreement & Prevalence & Silent \\
\midrule
\texttt{deepseek-v4-flash} & DeepSeek & --- & --- & 9.8\% & 31\% \\
\midrule
\texttt{deepseek-v4-pro} & DeepSeek & 0.80 & 96.8\% & 8.2\% & 30\% \\
\texttt{glm-5.2} & Zhipu & 0.74 & 96.0\% & 6.7\% & 26\% \\
\texttt{minimax-m3} & MiniMax & 0.71 & 95.1\% & 8.4\% & 31\% \\
\texttt{hy3-preview} & Tencent & 0.81 & 96.7\% & 9.9\% & 29\% \\
\bottomrule
\end{tabular}
\end{table}

We identify two kinds of evidence supporting the classifier. First, all four
raters agree with it substantially, at $\kappa=0.71$--$0.81$ and raw agreement
of $95$--$97\%$. Second, their own prevalences ($6.7\%$--$9.9\%$) lie on both
sides of the classifier's $9.8\%$, and their silent fractions ($26$--$31\%$)
match its $31\%$. This agreement is meaningful because three of the raters are
trained by other organizations. Correlated errors are less plausible across such
models than between two models from one lab. We therefore conclude that the
percentages of \S\ref{sec:study} reflect the reports, not the developer of the
classifier's model.

\parh{Check 2: two-rater consensus.} Check~1 compares each rater with the
classifier one at a time. A complementary question is whether a consensus of
two raters, which can correct some individual errors, would arrive at a
different prevalence. We therefore ran a dual-rater protocol on the subsample.
\texttt{deepseek-v4-flash} and \texttt{deepseek-v4-pro} label each item
independently. Where they disagree, an adjudication pass shows a model the
report and asks it for the final label.

Over all 1{,}583 items, the two raters agree substantially on every part of
the label.
\begin{itemize}
\item $\kappa=0.79$ on the semantic label.
\item $\kappa=0.81$ on the seven-way primary class of the taxonomy.
\item $\kappa=0.88$ on whether the report describes a failure at run time. This
  flag defines the denominator of the runtime percentages in \S\ref{sec:study},
  e.g.\ the share of semantic composition failures among issues that report a runtime
  failure (20.9\%).
\item $\kappa=0.97$ on silent versus crashing, among the items that both raters
  label semantic.
\end{itemize}

The consensus also leaves the issue prevalence almost unchanged. Adjudication
gives 9.0\%. Two simpler consensus rules bound the effect of adjudication. The
strict rule counts an issue only when both raters label it semantic and gives
7.4\%. The lenient rule counts it when either rater does and gives 10.7\%. The
classifier's 9.8\% lies inside this range and within one percentage point of
the adjudicated 9.0\%. The classifier is thus marginally more liberal than the
consensus, but close to it.

%% file: sections/appendix-evaluation.tex
\section{Evaluation Details}
\label{app:eval}

This appendix gives the details needed to interpret and reproduce results for
RQ2--RQ4. Appendix~\ref{app:user-study} gives the details of the user study (RQ5).

\subsection{Fault-Injection Benchmark}
\label{app:eval:injection}

Measuring detection requires faulty workflows whose ground truth is known. Real
failure reports rarely provide them, because most lack an executable graph or a
confirmed diagnosis. We therefore inject faults into workflows that are known to
run, so that each injected change is the only difference between a faulty graph
and a working one. Below we describe the operators and their selection rules.

\parh{Seed workflow.} The seed workflows come from ComfyUI's 765 official
workflows: the 673 official templates and 92 workflows from the official
\texttt{ComfyUI\_examples} gallery. They span SD1.5, SDXL, SD3.5, Flux, Wan,
Hunyuan-video, LTXV, and other families. A seed workflow must be one that we can
execute: every node is a built-in node, every model file has a public download
URL, the model files total at most 48\,GiB, the workflow contains a sampler, and
it has no audio or 3D node. These rules select 157 workflows. One of them yields
no graph to test: all its model nodes are switched off (i.e., bypassed) in the
saved file. Of the other 156, 24 fail to run unmodified, which leaves 132. Most
of these 24 files are subgraph blueprints, i.e.\ reusable components without an
output node, which ComfyUI does not run on their own. We convert every workflow
with ComfyUI's own frontend, the code path behind its \emph{Queue} button, so
that each graph is executed exactly as a user would submit it.

\parh{Operators.} The operators cover the dimensions of the census
(\S\ref{sec:study}) and create both candidate faults and benign controls
(Table~\ref{tab:injection-design}).
\begin{itemize}
\item \emph{File swap.} Replace a loaded model file (checkpoint, diffusion
  model, VAE, text encoder, LoRA, or ControlNet) by another public file of the
  same role.
\item \emph{Rewiring.} Reconnect an input to another output of the same nominal
  type in the same workflow, as a user does when dragging a wire.
\item \emph{Grafts.} Combine two workflows into one graph. A latent graft feeds
  the sampled latent of a generator to the image-conditioned sampler of another
  workflow, and an image graft feeds the decoded image instead. Image grafts
  are benign by construction and serve as the baselines of latent grafts.
\item \emph{Dimension, quantization, and modality.} Change a widget or a file so
  that a size, divisibility, quantization, or hint-modality requirement fails or
  holds. Each operator applies only where ComfyUI's source at version 0.27.0
  raises, or documents a wrong result, when the requirement fails.
\end{itemize}

\begin{table}[t]
\caption{Operator families of the fault-injection benchmark and the graphs they
contribute. A benign change keeps the documented property that its candidate
fault breaks.}
\label{tab:injection-design}
\centering\small
\begin{tabular}{@{}p{0.17\linewidth}p{0.53\linewidth}rr@{}}
\toprule
Family & Injected change & Candidate faults & Benign \\
\midrule
File swap & A model file of another architecture, latent format, or modality & 308 & 95 \\
Rewiring & An output of the same nominal type but another provenance & 48 & 446 \\
Latent graft & Another workflow's latent fed to a sampler & 175 & 150 \\
Image graft & Another workflow's decoded image fed to the image input & -- & 325 \\
Dimension & A size or frame count that breaks a divisibility or shape rule & 54 & 46 \\
Quantization & An fp8 model without its required scale file & 1 & 1 \\
Modality & A ControlNet or hint of another modality & 33 & 23 \\
\midrule
Total & With the 132 originals, 1,837 graphs & 619 & 1,086 \\
\bottomrule
\end{tabular}
\end{table}

\parh{Labels.} The expected verdict of each graph must not come from \tool: if
it did, the checker would agree with it by construction. We label file
properties from public documentation only. A separate labeling procedure, run by
an LLM agent in a sandbox that exposes the file list, the public model
repositories, and model cards, but never the checker, its schemas, its model
database, or any verdict, records each file's role, architecture, latent format,
and, for ControlNets, modality. A swap, rewiring, or graft is a candidate fault
if the documented property of the new value differs from that of the old one,
and benign if they agree. The dimension, quantization, and modality operators
take their labels from the cited code condition in ComfyUI's source.

\parh{Ground truth.} Labels come from documentation, which can be wrong, so we
execute all 1,837 graphs. Each graph runs once on ComfyUI~v0.27.0 with a 900\,s
timeout, on a server with one 48\,GB NVIDIA GPU. Of the 619 candidate faults, 12
count in no measure because their baseline does not run correctly. Of the other
607, 417 raise an error, 188 complete, and 2 time out. A candidate fault that
completes is a silent fault if its output is wrong. We therefore check the 188
completed candidate faults in two steps. We use an image oracle to compare each
output with its baseline's output through CLIP ViT-L/14 embeddings, followed by
a visual review to confirm image oracle's flagged candidate faults. It confirms
42 candidate faults whose output is blank or distorted. We then judge the other
146 by labor, confirming 41 additional silent faults. The remaining 107
candidate faults stay unconfirmed and do not count in the benchmark.

\subsection{Schema Availability in RQ2}
\label{app:eval:detection}

In practice, many nodes lack schemas. This experiment asks how detection changes
as schemas are removed. So that only the schemas we remove are missing. Each
workflow of the fault-injection benchmark yields many graphs: its unmodified
original and the derived graphs we inject changes into, 1,837 graphs from 132
workflows in total. We take the workflows whose original workflow has a schema
for every node, and from their derived graphs we keep those in which every node
still has a schema, and \tool already detects those injected graphs as faulty
graphs. This yields 169 faulty graphs.

The experiment removes schemas in random orders, so that no particular choice of
missing classes drives the result. We draw 100 orders. For order $s$, we hash
the string ``$s$/\textit{class}'' with SHA-256 and map its first 32 bits to
$[0,1)$. At removal level $t \in \{0, 0.1, \ldots, 1\}$, we remove the schema of
every class whose value is below $t$. Each class is thus removed with
probability $t$. Because the hash value of a class is fixed within an order, the
classes removed at a lower level are also removed at every higher level. Within
an order, differences between levels are therefore due only to the additional
removals. At every level of every order, we check all graphs and count a fault
as detected only if \tool rejects it and accepts its baseline at that level.
Execution outcomes do not depend on schemas, so only the static verdicts are
recomputed, and with no class removed they equal the verdicts of
RQ2. Figure~\ref{fig:synthetic-detection}b and
Table~\ref{tab:synthetic-erasure} report the detections kept and the mean
fraction of nodes that keep their schemas. No run rejects any of the 82 baselines of these faults.

Random removal shows how detection degrades on average, but not which classes it
depends on. We therefore also remove the schema of each of the node classes
alone (Table~\ref{tab:synthetic-single-class}). Only 29 classes carry
detections. They introduce a model's family or latent format, such as the model
and VAE loaders, or require two components to agree on it, such as the sampler,
the VAE decoder, and the reference-latent and video-conditioning nodes.

\input{study/data/rq2_erasure_real}

\subsection{Time, Constraints, Solver Work, and Memory}
\label{app:eval:resources}

RQ3 asks whether checking is cheap enough to run before every execution. We
answer it with four measures, each defined below.

\parh{Time.} We time the public checking pipeline from parsing and normalization
through constraint solving and diagnostic generation. After one warm-up check,
we time 25 checks per graph and take the graph's median. Execution time starts
at submission and includes model loading, computation, and client polling until
completion or a runtime error. We compute the ratio of execution to checking
time for each graph before taking the median over graphs. Execution times come
from the GPU run of \S\ref{app:eval:injection}, in which the server caches no
node outputs, so each run includes loading its models.
Table~\ref{tab:synthetic-cost} breaks the times down by operator family.

\parh{Constraint size.} Constraint size indicates how much work each graph gives
the solver. We count the constraints generated before solving: brand equations,
dimension equations, divisibility atoms, and links to model properties such as
channel counts and VAE scale factors. Variables are the distinct brand and
dimension indices that these constraints reference. An interpreted link is
counted once, when it is generated. The equations it later releases are not
counted again. Figure~\ref{fig:synthetic-cost}b groups the counts by operator
family and reports the mean over the graphs in each group.

\parh{Solver work.} Constraint size alone does not show how often the solver
runs, because diagnosis calls it repeatedly. A separate instrumented check
therefore records every call to the solver, including the calls made by core
minimization and repair-hint generation. The instrumentation observes the public
pipeline without changing its inputs or solver decisions. As a safeguard, the
instrumented verdict must match the independently measured verdict for every
graph. In Figure~\ref{fig:synthetic-cost}c, each point is one graph.

\parh{Peak memory.} We measure the memory that a complete checking process
needs, since this is what a user would run.
\begin{itemize}
\item \emph{Process.} For each graph, we launch three fresh Python processes in
  sequence. Each process loads the schema registry, checks the graph, and
  generates its diagnostics. It then reads \texttt{VmHWM} from Linux
  \texttt{/proc/self/status}, the resident-memory high-water mark of its
  address space. This peak includes Python, the imported modules, and the
  schema registry, in addition to checking itself.
\item \emph{Reported values.} We convert KiB to MiB and plot the median of the
  three peaks per graph. The reported maximum is the largest peak over all
  processes. The memory workers run without tracing, so instrumentation does
  not inflate the peaks.
\item \emph{Choice of metric.} We use \texttt{VmHWM} because
  \texttt{ru\_maxrss} reports the same startup high-water mark for every worker
  and thus cannot distinguish graphs.
\item \emph{Environment.} The cost run uses the 1,837 graphs and the checker of
  RQ2. It uses Python~3.12 on an exclusively allocated node with an AMD
  EPYC~9754 CPU. The released report records the checker version, CPU model, and
  Python version.
\end{itemize}

\input{study/data/rq3_real_cost}

\subsection{Effectiveness Study on Real-World Workflows}
\label{app:eval:real}

RQ4 examines whether the results carry over to real workflows and shows
representative case studies. In this appendix, we give the full list.
Table~\ref{tab:rq4-cases} lists all 108 reported failures, each linked to its
upstream ComfyUI thread, with its failure mode and whether \tool detects it.
\tool detects 72 of them, 64 of the 88 crashing and 8 of the 20 silent ones.

\input{study/data/rq4_cases_table}

%% file: study/data/rq2_erasure_real.tex
\begin{table}[t]
\caption{Detection as node-class schemas are removed in 100 random orders from the 169 fully typed faulty graphs that the full library detects. Typed nodes refers to these graphs, and detections kept is relative to them. Cells give the mean over orders and, in parentheses, the 10th and 90th percentiles. False alarms is the maximum over orders, out of the 82 baselines of these faults.}
\label{tab:synthetic-erasure}
\centering\small
\begin{tabular}{@{}rrrr@{}}
\toprule
Classes removed & Typed nodes & Detections kept & False alarms \\
\midrule
0\% & 100.0\% (100.0--100.0) & 100.0\% (100.0--100.0) & 0/82 \\
10\% & 89.8\% (80.3--97.8) & 77.6\% (43.8--97.6) & 0/82 \\
20\% & 79.2\% (66.6--91.0) & 56.3\% (27.8--82.2) & 0/82 \\
30\% & 69.3\% (55.1--83.6) & 38.9\% (16.6--69.2) & 0/82 \\
40\% & 58.2\% (43.3--73.0) & 25.8\% (9.5--49.1) & 0/82 \\
50\% & 48.0\% (32.5--63.6) & 17.7\% (7.1--30.2) & 0/82 \\
60\% & 37.8\% (24.5--54.1) & 11.7\% (0.6--21.3) & 0/82 \\
70\% & 28.5\% (17.0--41.2) & 8.3\% (0.0--17.2) & 0/82 \\
80\% & 18.7\% (8.1--30.7) & 5.7\% (0.0--16.6) & 0/82 \\
90\% & 8.9\% (2.5--18.9) & 2.9\% (0.0--9.5) & 0/82 \\
100\% & 0.0\% (0.0--0.0) & 0.0\% (0.0--0.0) & 0/82 \\
\bottomrule
\end{tabular}
\end{table}
\begin{table}[t]
\caption{Detections lost, out of 169, when a single node class loses its schema. Of the 123 classes in the subset, the 29 listed carry all detections. No removal causes a false alarm.}
\label{tab:synthetic-single-class}
\centering\small
\begin{tabular}{@{}lr@{\qquad}lr@{}}
\toprule
Class & Lost & Class & Lost \\
\midrule
\texttt{KSampler} & 93 & \texttt{ModelSamplingAuraFlow} & 7 \\
\texttt{UNETLoader} & 53 & \texttt{BasicGuider} & 6 \\
\texttt{VAELoader} & 46 & \texttt{LoraLoaderModelOnly} & 6 \\
\texttt{VAEDecode} & 34 & \texttt{DualCFGGuider} & 4 \\
\texttt{CLIPTextEncode} & 28 & \texttt{CFGGuider} & 3 \\
\texttt{CLIPLoader} & 22 & \texttt{Canny} & 3 \\
\texttt{DualCLIPLoader} & 20 & \texttt{SamplerCustom} & 3 \\
\texttt{ReferenceLatent} & 19 & \texttt{DifferentialDiffusion} & 2 \\
\texttt{CheckpointLoaderSimple} & 17 & \texttt{EmptySD3LatentImage} & 2 \\
\texttt{WanVaceToVideo} & 16 & \texttt{TextEncodeQwenImageEditPlus} & 2 \\
\texttt{QuadrupleCLIPLoader} & 14 & \texttt{CFGNorm} & 1 \\
\texttt{ModelSamplingSD3} & 13 & \texttt{EmptyLatentImage} & 1 \\
\texttt{SamplerCustomAdvanced} & 8 & \texttt{FluxGuidance} & 1 \\
\texttt{ControlNetApplyAdvanced} & 7 & \texttt{KSamplerAdvanced} & 1 \\
\texttt{ControlNetLoader} & 7 &  &  \\
\bottomrule
\end{tabular}
\end{table}

%% file: study/data/rq3_real_cost.tex
\begin{table}[t]
\caption{Checking and execution time on the graphs of RQ2, by operator family. Checking statistics summarize each graph's median over 25 checks, including diagnostics. Execution time covers submission through completion or a runtime error, with each workflow's own settings and model loading.}
\label{tab:synthetic-cost}
\centering\small
\begin{tabular}{@{}lrrrrr@{}}
\toprule
Graphs & $n$ & Check med.\ (ms) & p95 (ms) & Max (ms) & Exec.\ med.\ (s) \\
\midrule
Original & 132 & 2.51 & 5.89 & 47.33 & 16.0 \\
File swap & 403 & 3.03 & 22.76 & 58.42 & 8.0 \\
Rewiring & 494 & 3.15 & 47.93 & 128.37 & 14.0 \\
Latent graft & 325 & 32.23 & 191.92 & 4892.11 & 26.0 \\
Image graft & 325 & 7.44 & 56.82 & 67.45 & 38.0 \\
Dimension & 100 & 5.10 & 62.86 & 156.96 & 6.0 \\
Modality & 56 & 3.20 & 9.04 & 15.45 & 15.1 \\
Quantization & 2 & 0.90 & 1.46 & 1.46 & 35.3 \\
\midrule
All & 1,837 & 5.46 & 83.72 & 4892.11 & 16.0 \\
\bottomrule
\end{tabular}
\end{table}

%% file: study/data/rq4_cases_table.tex
\begingroup\footnotesize
\begin{longtable}{@{}llp{0.6\linewidth}c@{}}
\caption{The 108 reported failures of RQ4. ComfyUI's validator admits every faulty graph. A check mark means that \tool rejects the faulty graph (and accepts its fixed version).}\label{tab:rq4-cases}\\
\toprule
Report & Mode & Fault & \tool \\
\midrule
\endfirsthead
\toprule
Report & Mode & Fault & \tool \\
\midrule
\endhead
\bottomrule
\endlastfoot
\href{https://github.com/Comfy-Org/ComfyUI/discussions/1098}{Discussion~\#1098} & silent & SDXL latent sampled by an SD1.5 model & \checkmark \\
\href{https://github.com/Comfy-Org/ComfyUI/discussions/2115}{Discussion~\#2115} & crash & LoRA file loaded as a checkpoint & \checkmark \\
\href{https://github.com/Comfy-Org/ComfyUI/discussions/11477}{Discussion~\#11477} & silent & Flux.2 text encoder in a Flux.1 encoder slot & \checkmark \\
\href{https://github.com/Comfy-Org/ComfyUI/issues/65}{Issue~\#65} & crash & ViT-H CLIP vision fed to a ViT-L style adapter & $\times$ \\
\href{https://github.com/Comfy-Org/ComfyUI/issues/424}{Issue~\#424} & crash & Plain checkpoint loaded with the unCLIP loader & $\times$ \\
\href{https://github.com/Comfy-Org/ComfyUI/issues/554}{Issue~\#554} & crash & Plain checkpoint loaded with the unCLIP loader & $\times$ \\
\href{https://github.com/Comfy-Org/ComfyUI/issues/817}{Issue~\#817} & crash & SDXL base text encoder used with the refiner & \checkmark \\
\href{https://github.com/Comfy-Org/ComfyUI/issues/880}{Issue~\#880} & silent & SD1.5 LoRA on an SDXL checkpoint & $\times$ \\
\href{https://github.com/Comfy-Org/ComfyUI/issues/932}{Issue~\#932} & crash & SD1.5 ControlNet on an SDXL checkpoint & \checkmark \\
\href{https://github.com/Comfy-Org/ComfyUI/issues/1064}{Issue~\#1064} & crash & Plain checkpoint loaded with the unCLIP loader & $\times$ \\
\href{https://github.com/Comfy-Org/ComfyUI/issues/1316}{Issue~\#1316} & crash & SDXL base ControlNet with the SDXL refiner & \checkmark \\
\href{https://github.com/Comfy-Org/ComfyUI/issues/1453}{Issue~\#1453} & crash & SD1.5 ControlNet on an SDXL checkpoint & \checkmark \\
\href{https://github.com/Comfy-Org/ComfyUI/issues/1461}{Issue~\#1461} & silent & SD1.5 LoRA on an SDXL checkpoint & \checkmark \\
\href{https://github.com/Comfy-Org/ComfyUI/issues/1732}{Issue~\#1732} & crash & Style adapter loaded as a ControlNet & \checkmark \\
\href{https://github.com/Comfy-Org/ComfyUI/issues/1835}{Issue~\#1835} & crash & SDXL ControlNet on an SD1.5 checkpoint & \checkmark \\
\href{https://github.com/Comfy-Org/ComfyUI/issues/2084}{Issue~\#2084} & crash & SVD checkpoint without text encoder used for prompts & \checkmark \\
\href{https://github.com/Comfy-Org/ComfyUI/issues/2119}{Issue~\#2119} & crash & SVD checkpoint without text encoder used for prompts & \checkmark \\
\href{https://github.com/Comfy-Org/ComfyUI/issues/2224}{Issue~\#2224} & crash & Full checkpoint loaded as a ControlNet & \checkmark \\
\href{https://github.com/Comfy-Org/ComfyUI/issues/2440}{Issue~\#2440} & crash & IP-Adapter file loaded as a ControlNet & \checkmark \\
\href{https://github.com/Comfy-Org/ComfyUI/issues/2535}{Issue~\#2535} & silent & Inpaint gray fill never denoised by a regular model & $\times$ \\
\href{https://github.com/Comfy-Org/ComfyUI/issues/2594}{Issue~\#2594} & crash & Image batches of different lengths blended & $\times$ \\
\href{https://github.com/Comfy-Org/ComfyUI/issues/2639}{Issue~\#2639} & crash & PhotoMaker (SDXL only) with an SD1.5 text encoder & $\times$ \\
\href{https://github.com/Comfy-Org/ComfyUI/issues/2670}{Issue~\#2670} & crash & T2I adapter loaded as a style model & $\times$ \\
\href{https://github.com/Comfy-Org/ComfyUI/issues/2897}{Issue~\#2897} & crash & SD1.5 text encoder in the SDXL text-encode node & \checkmark \\
\href{https://github.com/Comfy-Org/ComfyUI/issues/2922}{Issue~\#2922} & crash & Control-LoRA on a 9-channel SDXL inpaint UNet & \checkmark \\
\href{https://github.com/Comfy-Org/ComfyUI/issues/3022}{Issue~\#3022} & silent & ControlNet with a deep-shrink patched model & $\times$ \\
\href{https://github.com/Comfy-Org/ComfyUI/issues/3210}{Issue~\#3210} & crash & SD1.5 ControlNet on an SDXL checkpoint & \checkmark \\
\href{https://github.com/Comfy-Org/ComfyUI/issues/3372}{Issue~\#3372} & silent & ControlNet with a deep-shrink patched model & $\times$ \\
\href{https://github.com/Comfy-Org/ComfyUI/issues/3690}{Issue~\#3690} & crash & SD3 checkpoint without text encoders used for prompts & \checkmark \\
\href{https://github.com/Comfy-Org/ComfyUI/issues/3714}{Issue~\#3714} & crash & SD1.5 ControlNet on an SDXL checkpoint & \checkmark \\
\href{https://github.com/Comfy-Org/ComfyUI/issues/3726}{Issue~\#3726} & crash & SD3 checkpoint without text encoders used for prompts & \checkmark \\
\href{https://github.com/Comfy-Org/ComfyUI/issues/3950}{Issue~\#3950} & crash & SD3 checkpoint without text encoders used for prompts & \checkmark \\
\href{https://github.com/Comfy-Org/ComfyUI/issues/4049}{Issue~\#4049} & crash & SD1.5 ControlNet on an SDXL checkpoint & \checkmark \\
\href{https://github.com/Comfy-Org/ComfyUI/issues/4250}{Issue~\#4250} & crash & 16-channel and 4-channel latents interpolated & $\times$ \\
\href{https://github.com/Comfy-Org/ComfyUI/issues/4329}{Issue~\#4329} & crash & Flux diffusion-only file loaded as a checkpoint & \checkmark \\
\href{https://github.com/Comfy-Org/ComfyUI/issues/4428}{Issue~\#4428} & crash & LoRA file loaded as a checkpoint & \checkmark \\
\href{https://github.com/Comfy-Org/ComfyUI/issues/4530}{Issue~\#4530} & crash & Flux diffusion-only file loaded as a checkpoint & \checkmark \\
\href{https://github.com/Comfy-Org/ComfyUI/issues/4611}{Issue~\#4611} & silent & Flux text encoders given two T5 files and no CLIP-L & \checkmark \\
\href{https://github.com/Comfy-Org/ComfyUI/issues/4624}{Issue~\#4624} & crash & Flux diffusion-only file loaded as a checkpoint & \checkmark \\
\href{https://github.com/Comfy-Org/ComfyUI/issues/4669}{Issue~\#4669} & crash & InstantX Flux ControlNet applied without a VAE & $\times$ \\
\href{https://github.com/Comfy-Org/ComfyUI/issues/4699}{Issue~\#4699} & crash & Flux text encoders loaded with type \texttt{sdxl} & \checkmark \\
\href{https://github.com/Comfy-Org/ComfyUI/issues/4740}{Issue~\#4740} & crash & 4-channel latent sampled by a 16-channel SD3 model & $\times$ \\
\href{https://github.com/Comfy-Org/ComfyUI/issues/4866}{Issue~\#4866} & crash & Flux diffusion-only file loaded as a checkpoint & \checkmark \\
\href{https://github.com/Comfy-Org/ComfyUI/issues/4875}{Issue~\#4875} & crash & Full checkpoint loaded as a VAE & \checkmark \\
\href{https://github.com/Comfy-Org/ComfyUI/issues/4884}{Issue~\#4884} & crash & Flux VAE loaded as the CLIP-L text encoder & \checkmark \\
\href{https://github.com/Comfy-Org/ComfyUI/issues/4892}{Issue~\#4892} & crash & Flux text-encoder conditioning with an SD1.5 model & \checkmark \\
\href{https://github.com/Comfy-Org/ComfyUI/issues/4900}{Issue~\#4900} & crash & Flux latent decoded with the SDXL VAE & \checkmark \\
\href{https://github.com/Comfy-Org/ComfyUI/issues/4948}{Issue~\#4948} & crash & Flux text encoders loaded with type \texttt{sdxl} & \checkmark \\
\href{https://github.com/Comfy-Org/ComfyUI/issues/5098}{Issue~\#5098} & crash & SD1.5 ControlNet on an SDXL checkpoint & \checkmark \\
\href{https://github.com/Comfy-Org/ComfyUI/issues/5149}{Issue~\#5149} & crash & SD1.5 ControlNet on an SDXL checkpoint & \checkmark \\
\href{https://github.com/Comfy-Org/ComfyUI/issues/5239}{Issue~\#5239} & crash & LoRA file loaded as a checkpoint & \checkmark \\
\href{https://github.com/Comfy-Org/ComfyUI/issues/5359}{Issue~\#5359} & silent & Flux LoRA on an SD1.5 checkpoint & \checkmark \\
\href{https://github.com/Comfy-Org/ComfyUI/issues/5388}{Issue~\#5388} & crash & SD3.5 checkpoint without text encoders used for prompts & \checkmark \\
\href{https://github.com/Comfy-Org/ComfyUI/issues/5492}{Issue~\#5492} & silent & XLabs-format Flux LoRA whose keys are ignored & $\times$ \\
\href{https://github.com/Comfy-Org/ComfyUI/issues/5651}{Issue~\#5651} & crash & SDXL ControlNet on an SD1.5 checkpoint & $\times$ \\
\href{https://github.com/Comfy-Org/ComfyUI/issues/5655}{Issue~\#5655} & silent & SDXL LoRA on a Flux model & \checkmark \\
\href{https://github.com/Comfy-Org/ComfyUI/issues/5783}{Issue~\#5783} & crash & Flux depth control LoRA on the Flux Fill model & \checkmark \\
\href{https://github.com/Comfy-Org/ComfyUI/issues/5820}{Issue~\#5820} & crash & Flux diffusion-only file loaded as a checkpoint & \checkmark \\
\href{https://github.com/Comfy-Org/ComfyUI/issues/5824}{Issue~\#5824} & crash & Flux diffusion-only file loaded as a checkpoint & \checkmark \\
\href{https://github.com/Comfy-Org/ComfyUI/issues/5880}{Issue~\#5880} & crash & PhotoMaker V2 file with the V1 loader & $\times$ \\
\href{https://github.com/Comfy-Org/ComfyUI/issues/6194}{Issue~\#6194} & crash & ControlNet-LLLite file loaded as a ControlNet & \checkmark \\
\href{https://github.com/Comfy-Org/ComfyUI/issues/6241}{Issue~\#6241} & crash & CLIP-L text encoder used for Mochi & \checkmark \\
\href{https://github.com/Comfy-Org/ComfyUI/issues/6380}{Issue~\#6380} & crash & Flux ControlNet on an SD1.5 checkpoint & \checkmark \\
\href{https://github.com/Comfy-Org/ComfyUI/issues/6392}{Issue~\#6392} & crash & SD1.5 checkpoint loaded for SVD image-to-video & $\times$ \\
\href{https://github.com/Comfy-Org/ComfyUI/issues/6419}{Issue~\#6419} & crash & Flux latent decoded with the SD1.5 TAESD VAE & \checkmark \\
\href{https://github.com/Comfy-Org/ComfyUI/issues/6468}{Issue~\#6468} & crash & Flux text encoders loaded with type \texttt{sdxl} & \checkmark \\
\href{https://github.com/Comfy-Org/ComfyUI/issues/6575}{Issue~\#6575} & crash & SDXL ControlNet on an SD1.5 checkpoint & \checkmark \\
\href{https://github.com/Comfy-Org/ComfyUI/issues/6588}{Issue~\#6588} & crash & Flux diffusion-only file loaded as a checkpoint & \checkmark \\
\href{https://github.com/Comfy-Org/ComfyUI/issues/6590}{Issue~\#6590} & crash & SDXL ControlNet on an SD1.5 checkpoint & \checkmark \\
\href{https://github.com/Comfy-Org/ComfyUI/issues/6609}{Issue~\#6609} & crash & NF4-quantized Flux checkpoint with the core loader & $\times$ \\
\href{https://github.com/Comfy-Org/ComfyUI/issues/6774}{Issue~\#6774} & silent & CFG-distilled Flux sampled with classic CFG & $\times$ \\
\href{https://github.com/Comfy-Org/ComfyUI/issues/6813}{Issue~\#6813} & silent & Scaled-fp8 HunyuanVideo weights without their scales & \checkmark \\
\href{https://github.com/Comfy-Org/ComfyUI/issues/6880}{Issue~\#6880} & crash & SD3.5 checkpoint without text encoders used for prompts & \checkmark \\
\href{https://github.com/Comfy-Org/ComfyUI/issues/6893}{Issue~\#6893} & crash & SD3.5 Large ControlNet on SD3.5 Medium & $\times$ \\
\href{https://github.com/Comfy-Org/ComfyUI/issues/6940}{Issue~\#6940} & crash & Flux Redux style model loaded as a diffusion model & \checkmark \\
\href{https://github.com/Comfy-Org/ComfyUI/issues/7022}{Issue~\#7022} & crash & ViT-L CLIP vision used for Wan image-to-video & $\times$ \\
\href{https://github.com/Comfy-Org/ComfyUI/issues/7053}{Issue~\#7053} & crash & Wrapper-format Wan VAE with the core loader & $\times$ \\
\href{https://github.com/Comfy-Org/ComfyUI/issues/7058}{Issue~\#7058} & crash & HunyuanVideo text encoders loaded with type \texttt{sdxl} & \checkmark \\
\href{https://github.com/Comfy-Org/ComfyUI/issues/7080}{Issue~\#7080} & crash & SD3.5 checkpoint without text encoders used for prompts & \checkmark \\
\href{https://github.com/Comfy-Org/ComfyUI/issues/7095}{Issue~\#7095} & crash & Conditioning area outside the latent & $\times$ \\
\href{https://github.com/Comfy-Org/ComfyUI/issues/7114}{Issue~\#7114} & crash & Wrapper-format Wan text encoder with the core loader & $\times$ \\
\href{https://github.com/Comfy-Org/ComfyUI/issues/7240}{Issue~\#7240} & crash & Wan latent with fewer frames than the conditioning & $\times$ \\
\href{https://github.com/Comfy-Org/ComfyUI/issues/7335}{Issue~\#7335} & crash & Control-LoRA on a 9-channel SDXL inpaint UNet & \checkmark \\
\href{https://github.com/Comfy-Org/ComfyUI/issues/7407}{Issue~\#7407} & silent & Flux depth model combined with a depth LoRA & \checkmark \\
\href{https://github.com/Comfy-Org/ComfyUI/issues/7649}{Issue~\#7649} & silent & HiDream with regular instead of long CLIP encoders & $\times$ \\
\href{https://github.com/Comfy-Org/ComfyUI/issues/7715}{Issue~\#7715} & silent & HiDream with the Flux CLIP-L text encoder & $\times$ \\
\href{https://github.com/Comfy-Org/ComfyUI/issues/8362}{Issue~\#8362} & crash & Wan VACE width not divisible by 16 & \checkmark \\
\href{https://github.com/Comfy-Org/ComfyUI/issues/8463}{Issue~\#8463} & crash & SD1.5 ControlNet on an SDXL checkpoint & \checkmark \\
\href{https://github.com/Comfy-Org/ComfyUI/issues/8697}{Issue~\#8697} & crash & Flux Kontext text encoders loaded with type \texttt{sdxl} & \checkmark \\
\href{https://github.com/Comfy-Org/ComfyUI/issues/9091}{Issue~\#9091} & silent & Wan 2.2 14B sampled with the 5B model's frame count & $\times$ \\
\href{https://github.com/Comfy-Org/ComfyUI/issues/9095}{Issue~\#9095} & crash & Flux diffusion-only file loaded as a checkpoint & \checkmark \\
\href{https://github.com/Comfy-Org/ComfyUI/issues/9190}{Issue~\#9190} & silent & Qwen-Image with the \texttt{fp8\_e4m3fn\_fast} dtype & $\times$ \\
\href{https://github.com/Comfy-Org/ComfyUI/issues/9204}{Issue~\#9204} & crash & Qwen-Image with the 3B instead of the 7B text encoder & \checkmark \\
\href{https://github.com/Comfy-Org/ComfyUI/issues/9248}{Issue~\#9248} & crash & Wan text encoder loaded with the wrong loader type & \checkmark \\
\href{https://github.com/Comfy-Org/ComfyUI/issues/9318}{Issue~\#9318} & crash & Wan 2.2 VAE latent sampled by the 14B model & \checkmark \\
\href{https://github.com/Comfy-Org/ComfyUI/issues/9479}{Issue~\#9479} & crash & Wan text encoder loaded with the wrong loader type & \checkmark \\
\href{https://github.com/Comfy-Org/ComfyUI/issues/9596}{Issue~\#9596} & crash & Wan 2.2 14B latent decoded with the 5B model's VAE & \checkmark \\
\href{https://github.com/Comfy-Org/ComfyUI/issues/9877}{Issue~\#9877} & crash & Chroma diffusion-only file loaded as a checkpoint & \checkmark \\
\href{https://github.com/Comfy-Org/ComfyUI/issues/11255}{Issue~\#11255} & silent & Qwen Image Edit with the \texttt{fp8\_e4m3fn\_fast} dtype & $\times$ \\
\href{https://github.com/Comfy-Org/ComfyUI/issues/11262}{Issue~\#11262} & crash & Qwen text encoder with a Wan-based model & \checkmark \\
\href{https://github.com/Comfy-Org/ComfyUI/issues/11690}{Issue~\#11690} & crash & LTX-2 audio VAE file without its config metadata & $\times$ \\
\href{https://github.com/Comfy-Org/ComfyUI/issues/12132}{Issue~\#12132} & crash & Flux.2 Klein with a Z-Image text encoder & \checkmark \\
\href{https://github.com/Comfy-Org/ComfyUI/issues/12161}{Issue~\#12161} & crash & Video and audio latents of different batch sizes & $\times$ \\
\href{https://github.com/Comfy-Org/ComfyUI/issues/12177}{Issue~\#12177} & silent & UNet attention guidance on the Chroma DiT & $\times$ \\
\href{https://github.com/Comfy-Org/ComfyUI/issues/12278}{Issue~\#12278} & crash & ACE-Step LM encoder loaded without its partner & $\times$ \\
\href{https://github.com/Comfy-Org/ComfyUI/issues/13833}{Issue~\#13833} & crash & SD1.5 ControlNet on a Lumina2 checkpoint & \checkmark \\
\href{https://github.com/Comfy-Org/ComfyUI/issues/14154}{Issue~\#14154} & crash & Pixel-space Flux.2 output decoded with the Flux.2 VAE & \checkmark \\
\href{https://github.com/Comfy-Org/ComfyUI/issues/14339}{Issue~\#14339} & crash & 4-channel RGBA images saved as RGB video & $\times$ \\
\end{longtable}
\endgroup

%% file: sections/appendix-user-study.tex
\section{\texorpdfstring{User Study Details}{User Study Details}}
\label{app:user-study}

This appendix provides the full setup and procedure for the RQ5 study
in \S\mbox{\ref{sec:eval:users}}.

\subsection{Study Interface}
\label{app:user-study:interface}

\input{figures/fig-user-study-interface}

Figure~\ref{fig:user-study-interface} shows the web interface during a task. The
left pane shows the workflow graph in the ComfyUI editor, which participants can
inspect but not edit. A tag above each node names its pipeline stage. The
sidebar on the right presents the task information in a fixed order: the
intended behavior of the workflow, facts about each model file, and the outcome
of the run, which is either the output image or the error trace. In the
treatment version, the \tool report follows. The answer form comes last. It
first asks whether the workflow is faulty. A \emph{yes} answer opens the
localization questions, which ask for a region and then a node, followed by the
repair questions (Appendix~\ref{app:user-study:tasks}). Choosing a region
highlights its nodes in the graph, and each node reference (\#$N$) in the report
or the choices centers the graph on that node. The top bar shows the time left
in the task's 8-minute budget, a quick reference to the tutorial, a pause button
that stops the clock, and the participant's progress.

\subsection{Reading the Report}
\label{app:user-study:report}

The report uses a few fixed terms. Its first line gives the verdict. \texttt{OK}
means that the workflow is well-typed. \texttt{ILL-TYPED} means that it is
rejected, and the tag names the kind of conflict: \texttt{brand} for components
that disagree on a model family, control modality, or quantization level,
\texttt{dim} for conflicting tensor dimensions, and \texttt{clash} for
incompatible types or a model-database fact. The bracket gives the share of
nodes that have a schema. The indented lines explain the conflict in words. The
list \emph{Involved (minimal core)} names the nodes and edges whose constraints
conflict, and changing any one of them can resolve the conflict. A parenthesis
after a loader gives the fact that the model database derived for its file and
its source, which is the filename or the file's header (\texttt{headerfacts}).
\emph{Possible fixes}, when present, name a change that restores consistency.
The reports in Appendix~\ref{app:user-study:tasks} reproduce what participants
saw. In this prototype, the two rejections that depend on model-database facts
(T03 and T04) print a generic conflict message and omit the explanation and fix
that the fact carries.

\subsection{Participants}
\label{app:user-study:participants}

We recruited 10 students from a university. All of them use generative-AI tools
regularly. Seven of them report hands-on experience with image or video
generation, including Stable Diffusion. The other three work mainly with
language models, for coding purposes. Before the scored tasks, every participant
completed the same tutorial and guided practice (\S\mbox{\ref{sec:eval:users}}),
which introduce ComfyUI workflows and a debugging procedure. Participants
received a gift card for their time.

\subsection{\texorpdfstring{Task Definitions}{Task Definitions}}
\label{app:user-study:tasks}

\input{study/data/user_study_tasks}

\subsection{\texorpdfstring{Detailed Study Results}{Detailed Study Results}}
\label{app:user-study:results}

Table~\ref{tab:user-study} gives the scored outcome of every trial
summarized in Figure~\ref{fig:user-study}.

\input{study/data/user_study_results}

\input{study/data/user_study_details}

%% file: figures/fig-user-study-interface.tex
\begin{figure}[!htb]
\centering
\begin{tikzpicture}[x=0.005375cm, y=-0.005375cm,
    lbl/.style={font=\sffamily\scriptsize, text width=2.4cm,
      inner sep=0pt, align=flush left},
    lead/.style={line width=0.5pt, black!80, preaction={draw=white,
      line width=1.6pt}},
    dot/.style={circle, fill=black!80, draw=white, line width=0.5pt,
      inner sep=0pt, minimum size=3.2pt}]
  \node[anchor=north west, inner sep=0pt, draw=black!35, line width=0.4pt]
    at (0,0) {\includegraphics[width=8.6cm]{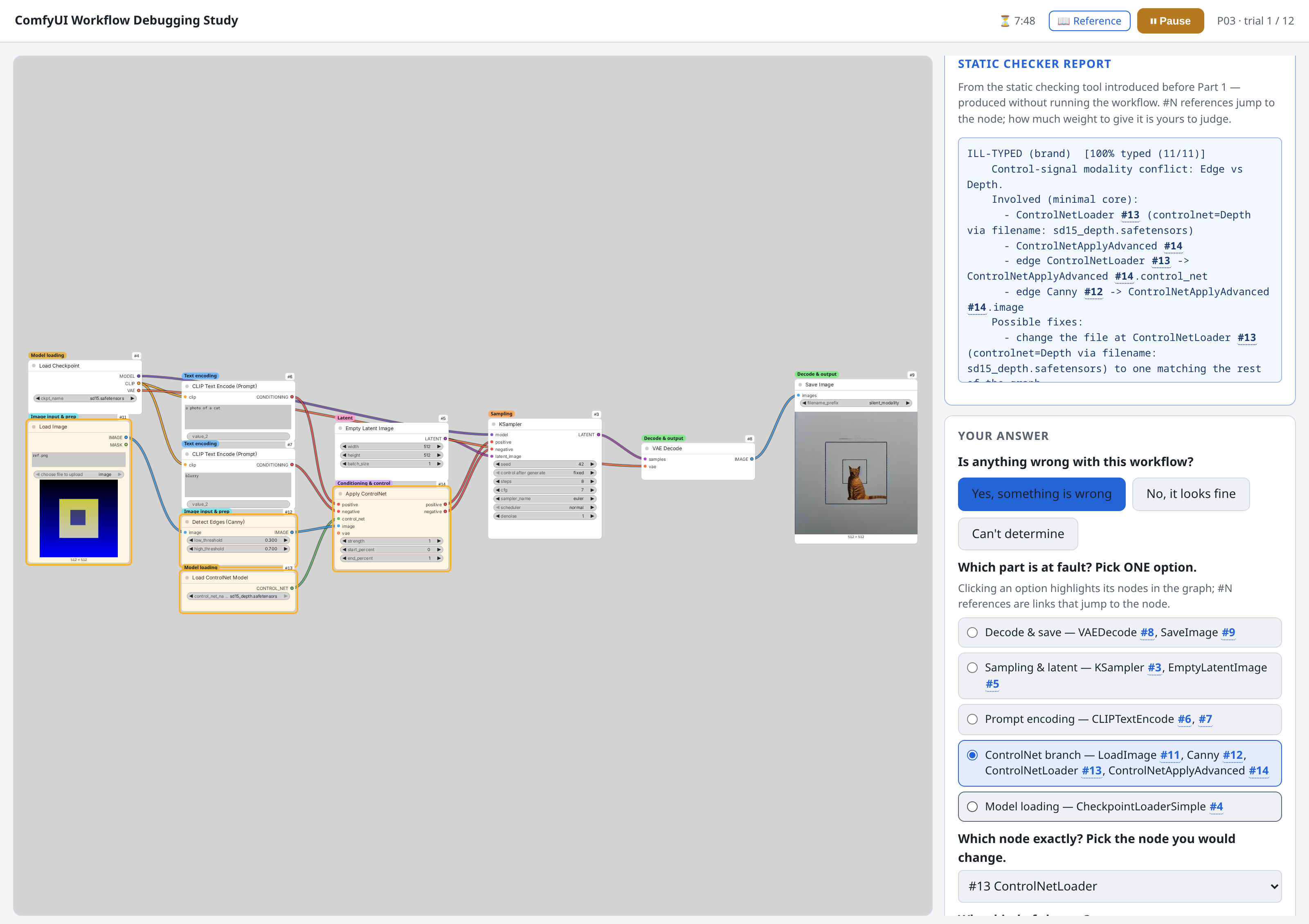}};
  \node[lbl, anchor=east, align=flush right] (la) at (-46,150)
    {Output image of the run, also shown in the sidebar};
  \draw[lead] (la.east) -- (1045,150) -- (1045,505);
  \node[dot] at (1045,505) {};
  \node[lbl, anchor=east, align=flush right] (lb) at (-46,400)
    {Workflow graph in the read-only ComfyUI editor. Tags name each node's
     pipeline stage.};
  \draw[lead] (lb.east) -- (40,434);
  \node[dot] at (40,434) {};
  \node[lbl, anchor=east, align=flush right] (lc) at (-46,690)
    {Nodes of the chosen region, highlighted};
  \draw[lead] (lc.east) -- (33,640);
  \node[dot] at (33,640) {};
  \draw[black!80, line width=0.5pt] (1222,-6) -- (1222,-14) -- (1590,-14)
    -- (1590,-6);
  \node[lbl, anchor=west] (r1) at (1646,-90)
    {Time left, quick reference, pause, and progress};
  \draw[lead] (r1.west) -- (1590,-90) -- (1590,-14);
  \node[lbl, anchor=west] (r2) at (1646,150)
    {Goal, model-file facts, and run outcome, scrolled out of view above};
  \draw[lead] (r2.west) -- (1572,78);
  \node[dot] at (1572,78) {};
  \node[lbl, anchor=west] (r3) at (1646,350)
    {\tool report, shown only in the treatment version};
  \draw[lead] (r3.west) -- (1566,350);
  \node[dot] at (1566,350) {};
  \node[lbl, anchor=west] (r4) at (1646,604)
    {Q1: Is anything wrong with the workflow?};
  \draw[lead] (r4.west) -- (1527,604);
  \node[dot] at (1527,604) {};
  \node[lbl, anchor=west] (r5) at (1646,860)
    {Q2: region at fault, as a multiple choice};
  \draw[lead] (r5.west) -- (1566,860);
  \node[dot] at (1566,860) {};
  \node[lbl, anchor=west] (r6) at (1646,1083)
    {Q3: node at fault. The repair questions and ratings follow below.};
  \draw[lead] (r6.west) -- (1566,1083);
  \node[dot] at (1566,1083) {};
\end{tikzpicture}
\caption{The study interface during a treatment task. The participant has
judged the workflow faulty and chosen a region and a node.}
\Description{A screenshot of the study web page. The left pane shows a
ComfyUI workflow graph with four ControlNet nodes highlighted. The right
sidebar shows the ComfyTy report and the answer form with the first three
questions answered. Labels on both sides explain each part.}
\label{fig:user-study-interface}
\end{figure}

%% file: study/data/user_study_tasks.tex
\begingroup
\input{sections/user-study-format}
Task numbers T01--T12 match the results table. Each task entry provides its source or construction, workflow, questions, answer key, and diagnostic report. The notation \#N identifies node N in the workflow graph. Region options use labels R1, R2, and so on. Repair categories use A--E, and labels such as A1 identify individual options within a category. These labels support cross-referencing within each entry; the study interface shuffles option order.\par
Baseline tasks show the workflow, intended behavior, model-file facts, and observed output or error trace; treatment tasks additionally show ComfyTy's report. Bold Accepted labels identify reference answers for readers. Correct diagnosis requires the accepted fault decision, region, and node within the recorded task time limit; for a valid workflow, the correct answer is \emph{No, it looks fine}. Repair success additionally requires an accepted repair category and specific repair. A timeout or give-up response counts as unsuccessful.\par

\ustask{T01\quad\texttt{s-cn-modality}}
\usfield{Task class}{Silent compatibility fault}
\usfield{Source and construction}{Controlled modality-mismatch example, validated by GPU execution. An SD1.5 text-to-image workflow pairs canny-edge conditioning with a depth ControlNet.}
\usfield{Upstream basis}{\href{https://github.com/comfyanonymous/ComfyUI_examples/tree/master/controlnet}{The official ComfyUI ControlNet examples} specify that the conditioning image format must match the ControlNet model, including depth and canny maps. This task instantiates that requirement with canny maps and a depth model.}
\usfield{Upstream basis}{\href{https://github.com/Comfy-Org/ComfyUI/issues/5070\#issuecomment-2378769137}{A discussion in ComfyUI issue \#5070} describes a related guidance-mode mismatch: a workflow produces an image while its ControlNet uses a different mode from the intended depth conditioning.}
\usfield{Task description}{Generate a cat photo whose composition follows the edge map extracted from a reference image (ControlNet).}
\usfield{Reference diagnosis}{A *depth* ControlNet (sd15\_depth) is fed *canny edge* maps; guidance silently collapses.}

\ussection{Answer key}
\usfield{Q1. Decision}{Yes, something is wrong}
\usfield{Q2. Region}{R2: ControlNet branch --- LoadImage \#11, Canny \#12, ControlNetLoader \#13, ControlNetApplyAdvanced \#14}
\usfield{Q3. Node}{\#12 Canny; \#13 ControlNetLoader}
\usfield{Q4. Category}{A: Swap the model/file it loads}
\usfield{Q5. Specific repair}{A1: change \#13 sd15\_depth.safetensors to an SD1.5 *edge (canny)* ControlNet}

\ussection{Workflow and observed behavior}
\begin{center}
\includegraphics[width=\linewidth,height=0.38\textheight,keepaspectratio]{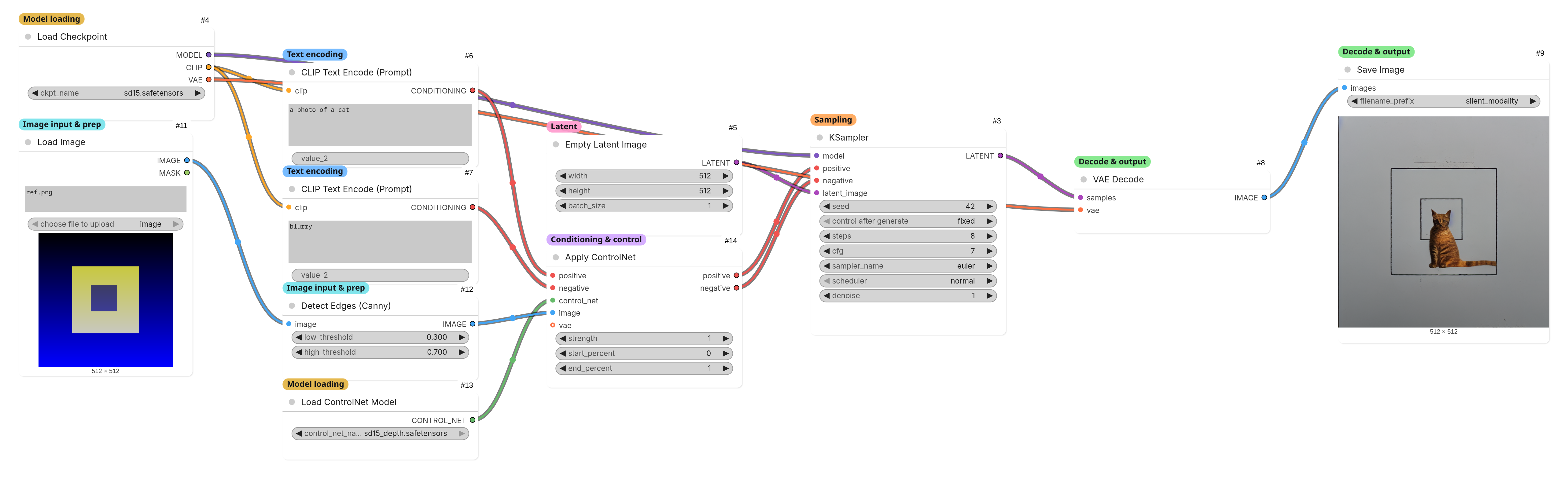}
\end{center}
\usfield{Graph}{11 nodes; 14 links. Node numbers connect the graph to the choices and diagnostic report.}
\usfield{Observed output}{The run completes and produces the following image. Participants inspect this output.}
\begin{center}
\includegraphics[width=0.25\linewidth,height=0.13\textheight,keepaspectratio]{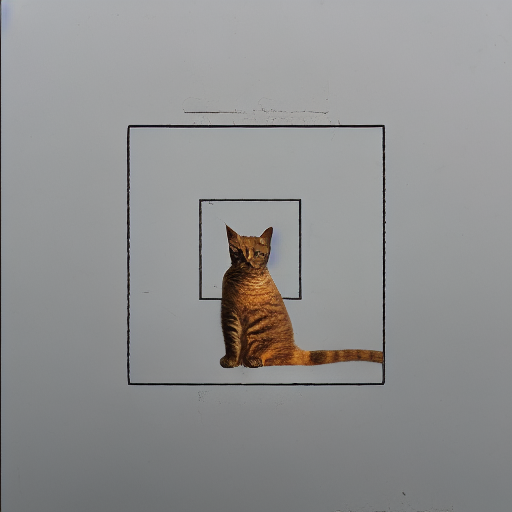}
\end{center}

\ussection{Model-file facts (shown in both versions)}
\usfield{Node \#4}{\textbf{sd15.safetensors}\newline Stable Diffusion 1.5. The classic text-to-image checkpoint; one file holding the UNet, the CLIP-L text encoder and the VAE. Trained at 512x512, and the base that most community 1.5 finetunes and LoRAs build on.}
\usfield{Node \#13}{\textbf{sd15\_depth.safetensors}\newline ControlNet 1.1, depth conditioning. Trained against Stable Diffusion 1.5 on roughly 3M depth-map and caption pairs. The control image it expects is a depth map, such as the output of the MiDaS or Zoe estimators.}

\ussection{Q1. Fault decision}
\usfield{Question}{Is anything wrong with this workflow?}
\begin{itemize}
\item[$\bullet$] Yes, something is wrong\quad\textbf{(Accepted)}
\item[$\bullet$] No, it looks fine
\item[$\bullet$] Can't determine
\end{itemize}
\noindent A Yes answer opens Q2--Q5. The other answers lead to the participant ratings.\par\smallskip

\ussection{Q2--Q3. Localization choices}
\usfield{Q2. Region}{Which part is at fault? Pick ONE option.}
\usfield{Q3. Node}{Which node exactly? Pick the node you would change.}
\noindent Each region below lists its Q2 option followed by the complete Q3 node menu. Select one node within the selected region.\par\smallskip
\begin{enumerate}
\item[\texttt{R1}] Model loading --- CheckpointLoaderSimple \#4
\begin{itemize}
\item \#4 CheckpointLoaderSimple
\item I don't know
\end{itemize}
\item[\texttt{R2}] ControlNet branch --- LoadImage \#11, Canny \#12, ControlNetLoader \#13, ControlNetApplyAdvanced \#14\quad\textbf{(Accepted)}
\begin{itemize}
\item \#11 LoadImage
\item \#12 Canny
\item \#13 ControlNetLoader
\item \#14 ControlNetApplyAdvanced
\item I don't know
\end{itemize}
\item[\texttt{R3}] Prompt encoding --- CLIPTextEncode \#6, \#7
\begin{itemize}
\item \#6 CLIPTextEncode
\item \#7 CLIPTextEncode
\item I don't know
\end{itemize}
\item[\texttt{R4}] Sampling \& latent --- KSampler \#3, EmptyLatentImage \#5
\begin{itemize}
\item \#3 KSampler
\item \#5 EmptyLatentImage
\item I don't know
\end{itemize}
\item[\texttt{R5}] Decode \& save --- VAEDecode \#8, SaveImage \#9
\begin{itemize}
\item \#8 VAEDecode
\item \#9 SaveImage
\item I don't know
\end{itemize}
\end{enumerate}

\ussection{Q4--Q5. Repair choices}
\usfield{Q4. Question}{What kind of change?}
\begin{enumerate}
\item[\texttt{A}] Swap the model/file it loads\quad\textbf{(Accepted)}
\item[\texttt{B}] Change a setting/widget value
\item[\texttt{C}] Rewire connections
\item[\texttt{D}] Replace with a different node
\item[\texttt{E}] Other
\item[$\bullet$] I don't know
\end{enumerate}
\noindent Q5 displays the menu for the selected category A--D. Each menu also offers I don't know, which counts as an unsuccessful repair selection. Category E (Other) requires a free-text explanation. Selecting I don't know for Q4 proceeds directly to the participant ratings.\par\smallskip

\ussection{A. Swap the model/file it loads}
\usfield{Question}{Change the file to what?}
\begin{enumerate}
\item[\texttt{A1}] change \#13 sd15\_depth.safetensors to an SD1.5 *edge (canny)* ControlNet\quad\textbf{(Accepted)}
\item[\texttt{A2}] change \#13 sd15\_depth.safetensors to an SDXL depth ControlNet
\item[\texttt{A3}] change \#13 sd15\_depth.safetensors to an SD1.5 *pose* ControlNet
\item[\texttt{A4}] change \#4 sd15.safetensors to an SDXL checkpoint
\item[\texttt{A5}] change \#4 sd15.safetensors to a depth-specialized finetune
\item[$\bullet$] I don't know
\end{enumerate}

\ussection{B. Change a setting/widget value}
\usfield{Question}{Which setting --- and to what?}
\begin{enumerate}
\item[\texttt{B1}] raise `strength` on ControlNetApplyAdvanced \#14 (now 1.0)
\item[\texttt{B2}] extend `end\_percent` on \#14 (now 1.0)
\item[\texttt{B3}] tune Canny \#12 thresholds (0.3 / 0.7)
\item[\texttt{B4}] increase sampler `steps`
\item[$\bullet$] I don't know
\end{enumerate}

\ussection{C. Rewire connections}
\usfield{Question}{Which connection would you change?}
\begin{enumerate}
\item[\texttt{C1}] connect LoadImage \#11 directly into \#14 `image` (skip Canny \#12)
\item[\texttt{C2}] swap the positive/negative conditioning inputs of \#14
\item[\texttt{C3}] apply the ControlNet after sampling instead of before
\item[$\bullet$] I don't know
\end{enumerate}

\ussection{D. Replace with a different node}
\usfield{Question}{Replace it with what?}
\begin{enumerate}
\item[\texttt{D1}] replace Canny \#12 with a depth-map preprocessor (match the depth ControlNet)
\item[\texttt{D2}] replace \#14 with the basic ControlNetApply node
\item[\texttt{D3}] replace ControlNetLoader \#13 with a T2I-Adapter loader
\item[$\bullet$] I don't know
\end{enumerate}

\ussection{Q6--Q8. Participant ratings}
\usfield{Q6. Confidence}{"How confident are you?" Options: 1, 2, 3, 4, 5 (1: not at all; 5: very).}
\usfield{Q7. Report helpfulness}{"How helpful was the checker report on this task?" Options: 1, 2, 3, 4, 5 (1: not helpful; 5: very helpful). Appears in the treatment version.}
\usfield{Q8. Mental effort}{"How much mental effort did this task take?" Options: 1, 2, 3, 4, 5, 6, 7, 8, 9 (1: very, very low; 9: very, very high).}
\noindent These ratings record participants' subjective judgments and remain separate from accuracy scoring. The rating fields appear for every Q1 answer.\par\smallskip
\usfield{Optional comment}{"Anything else about your answer? (optional)" Participants may add a free-text comment.}

\ussection{ComfyTy diagnosis (treatment only)}
\begin{lstlisting}[style=usreport]
ILL-TYPED (brand)  [100% typed (11/11)]
    Control-signal modality conflict: Edge vs Depth.
    Involved (minimal core):
      - ControlNetLoader #13 (controlnet=Depth via filename: sd15_depth.safetensors)
      - ControlNetApplyAdvanced #14
      - edge ControlNetLoader #13 -> ControlNetApplyAdvanced #14.control_net
      - edge Canny #12 -> ControlNetApplyAdvanced #14.image
    Possible fixes:
      - change the file at ControlNetLoader #13 (controlnet=Depth via filename: sd15_depth.safetensors) to one matching the rest of the graph
\end{lstlisting}

\ustask{T02\quad\texttt{s-latent-blend}}
\usfield{Task class}{Silent compatibility fault}
\usfield{Source and construction}{Controlled resolution-mismatch example, validated by GPU execution. Two EmptyLatentImage nodes use resolution settings of 512x512 and 768x768; their sampled outputs feed a LatentBlend node.}
\usfield{Upstream basis}{\href{https://github.com/Comfy-Org/ComfyUI/pull/1058}{ComfyUI pull request \#1058} introduces LatentBlend and explicitly discusses different-sized inputs; its implementation resizes one input before blending. The 512x512/768x768 mutation and its observed output are specific to this study.}
\usfield{Task description}{Generate a detailed photo of a golden retriever dog in a park (text-to-image).}
\usfield{Reference diagnosis}{LatentBlend \#16 blends a 512 and a 768 latent (EmptyLatentImage \#11 vs \#12); ComfyUI silently rescales one branch and the blend ghosts.}

\ussection{Answer key}
\usfield{Q1. Decision}{Yes, something is wrong}
\usfield{Q2. Region}{R2: Latent sources \& blend --- EmptyLatentImage \#11, \#12, LatentBlend \#16}
\usfield{Q3. Node}{\#11 EmptyLatentImage; \#12 EmptyLatentImage; \#16 LatentBlend}
\usfield{Q4. Category}{B: Change a setting/widget value}
\usfield{Q5. Specific repair}{B1: set EmptyLatentImage \#11 and \#12 to the SAME width/height (now 512 vs 768)}

\ussection{Workflow and observed behavior}
\begin{center}
\includegraphics[width=\linewidth,height=0.38\textheight,keepaspectratio]{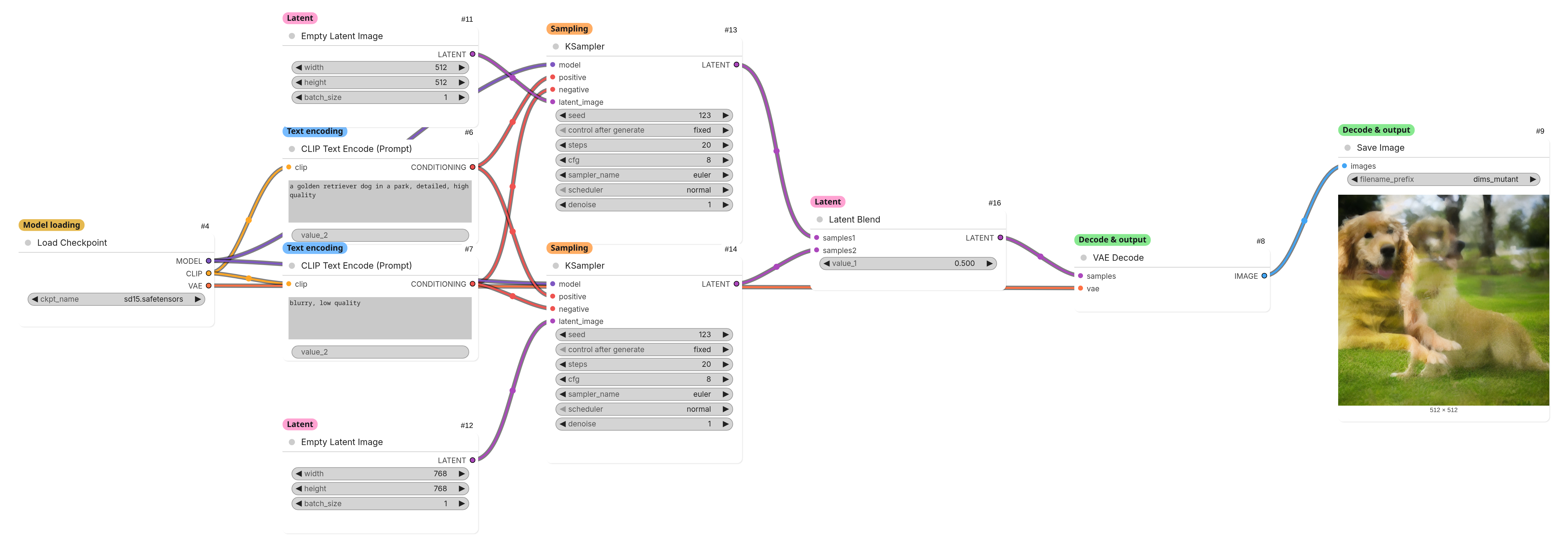}
\end{center}
\usfield{Graph}{10 nodes; 15 links. Node numbers connect the graph to the choices and diagnostic report.}
\usfield{Observed output}{The run completes and produces the following image. Participants inspect this output.}
\begin{center}
\includegraphics[width=0.25\linewidth,height=0.13\textheight,keepaspectratio]{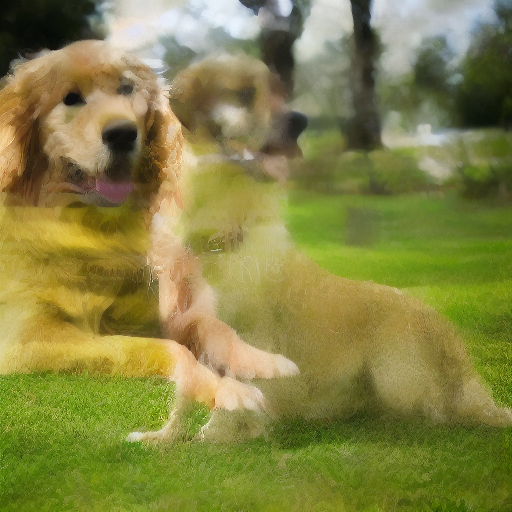}
\end{center}

\ussection{Model-file facts (shown in both versions)}
\usfield{Node \#4}{\textbf{sd15.safetensors}\newline Stable Diffusion 1.5. The classic text-to-image checkpoint; one file holding the UNet, the CLIP-L text encoder and the VAE. Trained at 512x512, and the base that most community 1.5 finetunes and LoRAs build on.}

\ussection{Q1. Fault decision}
\usfield{Question}{Is anything wrong with this workflow?}
\begin{itemize}
\item[$\bullet$] Yes, something is wrong\quad\textbf{(Accepted)}
\item[$\bullet$] No, it looks fine
\item[$\bullet$] Can't determine
\end{itemize}
\noindent A Yes answer opens Q2--Q5. The other answers lead to the participant ratings.\par\smallskip

\ussection{Q2--Q3. Localization choices}
\usfield{Q2. Region}{Which part is at fault? Pick ONE option.}
\usfield{Q3. Node}{Which node exactly? Pick the node you would change.}
\noindent Each region below lists its Q2 option followed by the complete Q3 node menu. Select one node within the selected region.\par\smallskip
\begin{enumerate}
\item[\texttt{R1}] Model loading --- CheckpointLoaderSimple \#4
\begin{itemize}
\item \#4 CheckpointLoaderSimple
\item I don't know
\end{itemize}
\item[\texttt{R2}] Latent sources \& blend --- EmptyLatentImage \#11, \#12, LatentBlend \#16\quad\textbf{(Accepted)}
\begin{itemize}
\item \#11 EmptyLatentImage
\item \#12 EmptyLatentImage
\item \#16 LatentBlend
\item I don't know
\end{itemize}
\item[\texttt{R3}] The two samplers --- KSampler \#13, \#14
\begin{itemize}
\item \#13 KSampler
\item \#14 KSampler
\item I don't know
\end{itemize}
\item[\texttt{R4}] Prompt encoding --- CLIPTextEncode \#6, \#7
\begin{itemize}
\item \#6 CLIPTextEncode
\item \#7 CLIPTextEncode
\item I don't know
\end{itemize}
\item[\texttt{R5}] Decode \& save --- VAEDecode \#8, SaveImage \#9
\begin{itemize}
\item \#8 VAEDecode
\item \#9 SaveImage
\item I don't know
\end{itemize}
\end{enumerate}

\ussection{Q4--Q5. Repair choices}
\usfield{Q4. Question}{What kind of change?}
\begin{enumerate}
\item[\texttt{A}] Swap the model/file it loads
\item[\texttt{B}] Change a setting/widget value\quad\textbf{(Accepted)}
\item[\texttt{C}] Rewire connections
\item[\texttt{D}] Replace with a different node
\item[\texttt{E}] Other
\item[$\bullet$] I don't know
\end{enumerate}
\noindent Q5 displays the menu for the selected category A--D. Each menu also offers I don't know, which counts as an unsuccessful repair selection. Category E (Other) requires a free-text explanation. Selecting I don't know for Q4 proceeds directly to the participant ratings.\par\smallskip

\ussection{A. Swap the model/file it loads}
\usfield{Question}{Change the file to what?}
\begin{enumerate}
\item[\texttt{A1}] change \#4 sd15.safetensors to an SDXL checkpoint (1024-native)
\item[\texttt{A2}] change \#4 sd15.safetensors to a high-res SD1.5 finetune
\item[\texttt{A3}] change \#4 sd15.safetensors to an SD2.x (768) checkpoint
\item[$\bullet$] I don't know
\end{enumerate}

\ussection{B. Change a setting/widget value}
\usfield{Question}{Which setting --- and to what?}
\begin{enumerate}
\item[\texttt{B1}] set EmptyLatentImage \#11 and \#12 to the SAME width/height (now 512 vs 768)\quad\textbf{(Accepted)}
\item[\texttt{B2}] change the blend factor on LatentBlend \#16 (now 0.5)
\item[\texttt{B3}] increase `steps` on KSampler \#13/\#14 (now 20)
\item[\texttt{B4}] lower `cfg` on both samplers (now 7.5)
\item[\texttt{B5}] use different seeds on the two samplers (now both 123)
\item[$\bullet$] I don't know
\end{enumerate}

\ussection{C. Rewire connections}
\usfield{Question}{Which connection would you change?}
\begin{enumerate}
\item[\texttt{C1}] swap `samples1`/`samples2` into LatentBlend \#16
\item[\texttt{C2}] feed \#16 from only one sampler (drop the other branch)
\item[\texttt{C3}] blend the empty latents BEFORE sampling instead of after
\item[$\bullet$] I don't know
\end{enumerate}

\ussection{D. Replace with a different node}
\usfield{Question}{Replace it with what?}
\begin{enumerate}
\item[\texttt{D1}] replace LatentBlend \#16 with LatentComposite
\item[\texttt{D2}] replace the EmptyLatentImages with a VAEEncode of a reference image
\item[\texttt{D3}] decode both branches and average the pixel images instead
\item[$\bullet$] I don't know
\end{enumerate}

\ussection{Q6--Q8. Participant ratings}
\usfield{Q6. Confidence}{"How confident are you?" Options: 1, 2, 3, 4, 5 (1: not at all; 5: very).}
\usfield{Q7. Report helpfulness}{"How helpful was the checker report on this task?" Options: 1, 2, 3, 4, 5 (1: not helpful; 5: very helpful). Appears in the treatment version.}
\usfield{Q8. Mental effort}{"How much mental effort did this task take?" Options: 1, 2, 3, 4, 5, 6, 7, 8, 9 (1: very, very low; 9: very, very high).}
\noindent These ratings record participants' subjective judgments and remain separate from accuracy scoring. The rating fields appear for every Q1 answer.\par\smallskip
\usfield{Optional comment}{"Anything else about your answer? (optional)" Participants may add a free-text comment.}

\ussection{ComfyTy diagnosis (treatment only)}
\begin{lstlisting}[style=usreport]
ILL-TYPED (dim)  [100% typed (10/10)]
    Dimension/channel conflict: constraints are unsatisfiable.
    Involved (minimal core):
      - widget height of EmptyLatentImage #11
      - widget height of EmptyLatentImage #12
      - edge EmptyLatentImage #11 -> KSampler #13.latent_image
      - edge EmptyLatentImage #12 -> KSampler #14.latent_image
      - edge KSampler #13 -> LatentBlend #16.samples1
      - edge KSampler #14 -> LatentBlend #16.samples2
\end{lstlisting}

\ustask{T03\quad\texttt{s-6813-fp8noise}}
\usfield{Task class}{Silent compatibility fault}
\usfield{Source}{Adapted from \href{https://github.com/Comfy-Org/ComfyUI/issues/6813}{ComfyUI issue \#6813}.}
\usfield{Task description}{Text-to-video with HunyuanVideo (official fp8 release). The produced frames look like the image shown.}
\usfield{Reference diagnosis}{The scaled-fp8 checkpoint is loaded without its scaled-fp8 weight\_dtype (UNETLoader \#12); sampling silently produces pure noise.}

\ussection{Answer key}
\usfield{Q1. Decision}{Yes, something is wrong}
\usfield{Q2. Region}{R1: Diffusion model loading --- UNETLoader \#12, ModelSamplingSD3 \#67}
\usfield{Q3. Node}{\#12 UNETLoader}
\usfield{Q4. Category}{B: Change a setting/widget value}
\usfield{Q5. Specific repair}{B1: set `weight\_dtype` on UNETLoader \#12 to the scaled-fp8 mode (now 'default')}

\ussection{Workflow and observed behavior}
\begin{center}
\includegraphics[width=\linewidth,height=0.38\textheight,keepaspectratio]{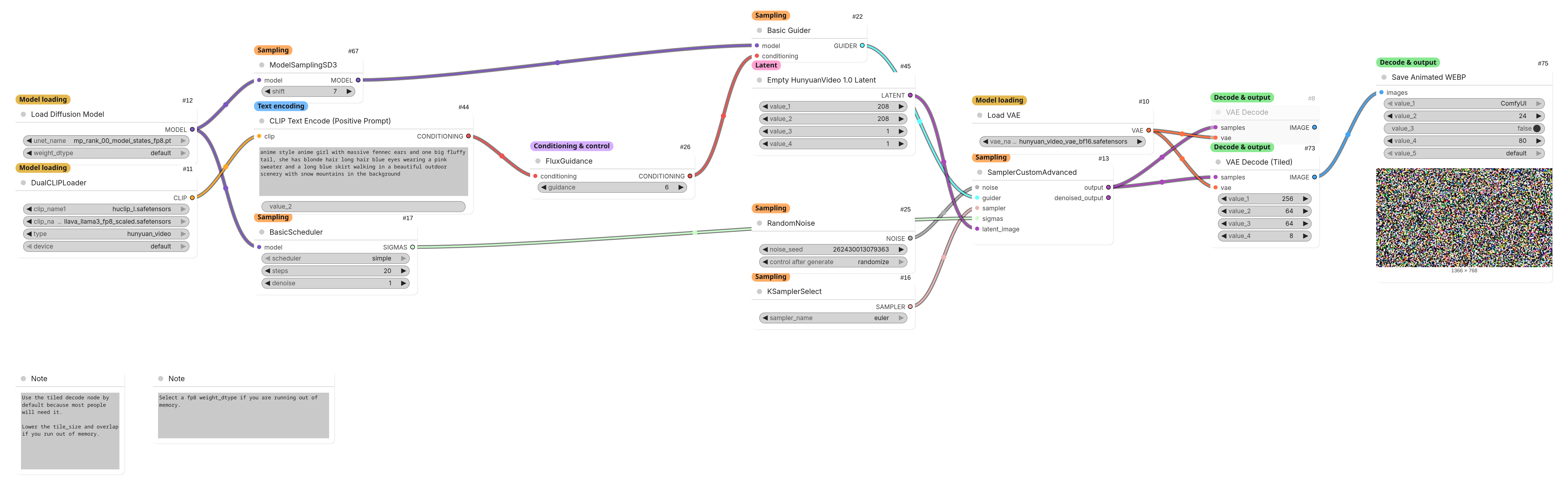}
\end{center}
\usfield{Graph}{17 nodes; 16 links. Node numbers connect the graph to the choices and diagnostic report.}
\usfield{Observed output}{The run completes and produces the following image. Participants inspect this output.}
\begin{center}
\includegraphics[width=0.25\linewidth,height=0.13\textheight,keepaspectratio]{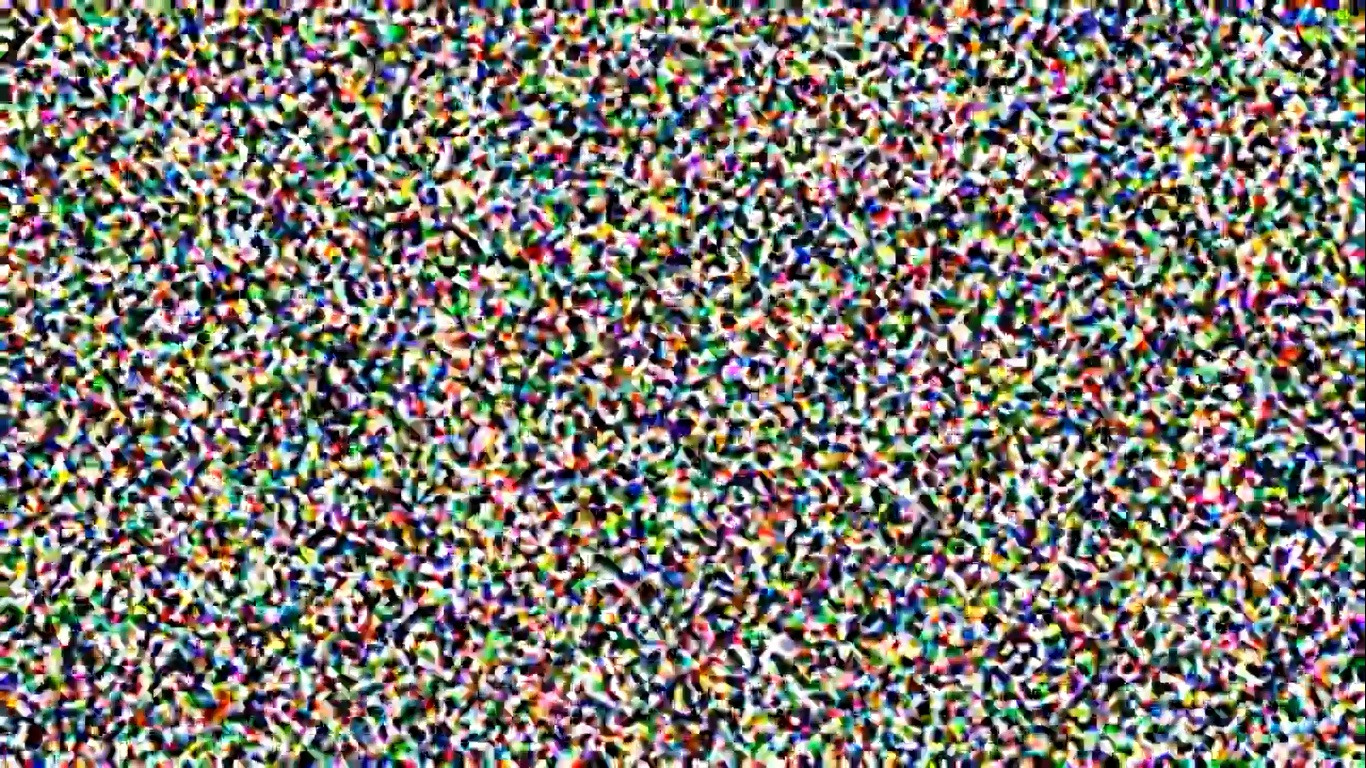}
\end{center}

\ussection{Model-file facts (shown in both versions)}
\usfield{Node \#12}{\textbf{mp\_rank\_00\_model\_states\_fp8.pt}\newline HunyuanVideo transformer weights, official fp8 release. The tensors are stored in fp8 together with per-tensor scale factors, a layout usually called scaled fp8, which halves VRAM against the bf16 release. 13B parameters, up to 720p at 129 frames.}
\usfield{Node \#11}{\textbf{huclip\_l.safetensors}\newline CLIP-L/14 text encoder as packaged for HunyuanVideo. One of the two text towers this model expects; the other is a LLaVA-Llama-3 encoder. Same architecture as the standard CLIP-L release.}
\usfield{Node \#11}{\textbf{llava\_llama3\_fp8\_scaled.safetensors}\newline LLaVA-Llama-3 8B text encoder, repackaged for HunyuanVideo. Stored in fp8 with per-tensor scale factors so it fits alongside the transformer on consumer cards.}
\usfield{Node \#10}{\textbf{hunyuan\_video\_vae\_bf16.safetensors}\newline HunyuanVideo autoencoder, bf16 build. Compresses video four times temporally and eight times spatially into its latent space. Published alongside the transformer weights and loaded with the standard VAE loader.}

\ussection{Q1. Fault decision}
\usfield{Question}{Is anything wrong with this workflow?}
\begin{itemize}
\item[$\bullet$] Yes, something is wrong\quad\textbf{(Accepted)}
\item[$\bullet$] No, it looks fine
\item[$\bullet$] Can't determine
\end{itemize}
\noindent A Yes answer opens Q2--Q5. The other answers lead to the participant ratings.\par\smallskip

\ussection{Q2--Q3. Localization choices}
\usfield{Q2. Region}{Which part is at fault? Pick ONE option.}
\usfield{Q3. Node}{Which node exactly? Pick the node you would change.}
\noindent Each region below lists its Q2 option followed by the complete Q3 node menu. Select one node within the selected region.\par\smallskip
\begin{enumerate}
\item[\texttt{R1}] Diffusion model loading --- UNETLoader \#12, ModelSamplingSD3 \#67\quad\textbf{(Accepted)}
\begin{itemize}
\item \#12 UNETLoader
\item \#67 ModelSamplingSD3
\item I don't know
\end{itemize}
\item[\texttt{R2}] Text encoders \& guidance --- DualCLIPLoader \#11, CLIPTextEncode \#44, FluxGuidance \#26
\begin{itemize}
\item \#11 DualCLIPLoader
\item \#44 CLIPTextEncode
\item \#26 FluxGuidance
\item I don't know
\end{itemize}
\item[\texttt{R3}] Sampler stack --- SamplerCustomAdvanced \#13, KSamplerSelect \#16, BasicScheduler \#17, BasicGuider \#22, RandomNoise \#25
\begin{itemize}
\item \#13 SamplerCustomAdvanced
\item \#16 KSamplerSelect
\item \#17 BasicScheduler
\item \#22 BasicGuider
\item \#25 RandomNoise
\item I don't know
\end{itemize}
\item[\texttt{R4}] Latent source --- EmptyHunyuanLatentVideo \#45
\begin{itemize}
\item \#45 EmptyHunyuanLatentVideo
\item I don't know
\end{itemize}
\item[\texttt{R5}] VAE \& output --- VAELoader \#10, VAEDecode \#8, VAEDecodeTiled \#73, SaveAnimatedWEBP \#75
\begin{itemize}
\item \#10 VAELoader
\item \#8 VAEDecode
\item \#73 VAEDecodeTiled
\item \#75 SaveAnimatedWEBP
\item I don't know
\end{itemize}
\end{enumerate}

\ussection{Q4--Q5. Repair choices}
\usfield{Q4. Question}{What kind of change?}
\begin{enumerate}
\item[\texttt{A}] Swap the model/file it loads
\item[\texttt{B}] Change a setting/widget value\quad\textbf{(Accepted)}
\item[\texttt{C}] Rewire connections
\item[\texttt{D}] Replace with a different node
\item[\texttt{E}] Other
\item[$\bullet$] I don't know
\end{enumerate}
\noindent Q5 displays the menu for the selected category A--D. Each menu also offers I don't know, which counts as an unsuccessful repair selection. Category E (Other) requires a free-text explanation. Selecting I don't know for Q4 proceeds directly to the participant ratings.\par\smallskip

\ussection{A. Swap the model/file it loads}
\usfield{Question}{Change the file to what?}
\begin{enumerate}
\item[\texttt{A1}] change \#10 hunyuan\_video\_vae\_bf16 to an fp16 VAE build
\item[\texttt{A2}] change \#11's llava\_llama3\_fp8\_scaled to a non-fp8 build
\item[\texttt{A3}] change \#12 mp\_rank\_00\_model\_states\_fp8.pt to a GGUF build
\item[\texttt{A4}] change \#12 to a Stable-Diffusion-family video model
\item[$\bullet$] I don't know
\end{enumerate}

\ussection{B. Change a setting/widget value}
\usfield{Question}{Which setting --- and to what?}
\begin{enumerate}
\item[\texttt{B1}] set `weight\_dtype` on UNETLoader \#12 to the scaled-fp8 mode (now 'default')\quad\textbf{(Accepted)}
\item[\texttt{B2}] change `shift` on ModelSamplingSD3 \#67 (now 7)
\item[\texttt{B3}] change the `type` on DualCLIPLoader \#11 (now hunyuan\_video)
\item[\texttt{B4}] increase sampling steps
\item[\texttt{B5}] lower the guidance/cfg
\item[$\bullet$] I don't know
\end{enumerate}

\ussection{C. Rewire connections}
\usfield{Question}{Which connection would you change?}
\begin{enumerate}
\item[\texttt{C1}] bypass ModelSamplingSD3 \#67 (wire \#12 straight to the sampler)
\item[\texttt{C2}] decode with VAEDecode \#8 instead of VAEDecodeTiled \#73
\item[\texttt{C3}] feed only one text encoder into the guider
\item[$\bullet$] I don't know
\end{enumerate}

\ussection{D. Replace with a different node}
\usfield{Question}{Replace it with what?}
\begin{enumerate}
\item[\texttt{D1}] replace UNETLoader \#12 with UnetLoaderGGUF
\item[\texttt{D2}] replace UNETLoader \#12 with CheckpointLoaderSimple
\item[\texttt{D3}] replace VAEDecodeTiled \#73 with plain VAEDecode
\item[$\bullet$] I don't know
\end{enumerate}

\ussection{Q6--Q8. Participant ratings}
\usfield{Q6. Confidence}{"How confident are you?" Options: 1, 2, 3, 4, 5 (1: not at all; 5: very).}
\usfield{Q7. Report helpfulness}{"How helpful was the checker report on this task?" Options: 1, 2, 3, 4, 5 (1: not helpful; 5: very helpful). Appears in the treatment version.}
\usfield{Q8. Mental effort}{"How much mental effort did this task take?" Options: 1, 2, 3, 4, 5, 6, 7, 8, 9 (1: very, very low; 9: very, very high).}
\noindent These ratings record participants' subjective judgments and remain separate from accuracy scoring. The rating fields appear for every Q1 answer.\par\smallskip
\usfield{Optional comment}{"Anything else about your answer? (optional)" Participants may add a free-text comment.}

\ussection{ComfyTy diagnosis (treatment only)}
\begin{lstlisting}[style=usreport]
ILL-TYPED (clash)  [100% typed (14/14)]
    Type mismatch: incompatible constructors across an edge.
    Involved (minimal core):
      - UNETLoader #12 (fp8_scales_external via headerfacts: mp_rank_00_model_states_fp8.pt)
\end{lstlisting}

\ustask{T04\quad\texttt{s-13271-fluxblack}}
\usfield{Task class}{Silent compatibility fault}
\usfield{Source}{Adapted from \href{https://github.com/Comfy-Org/ComfyUI/discussions/13271}{ComfyUI discussion \#13271}.}
\usfield{Task description}{Text-to-image with Flux schnell. The run completes without any error; the saved image is shown.}
\usfield{Reference diagnosis}{flux1-schnell (a bare diffusion model) is loaded via CheckpointLoaderSimple \#4; the pipeline silently produces a black image.}

\ussection{Answer key}
\usfield{Q1. Decision}{Yes, something is wrong}
\usfield{Q2. Region}{R1: Checkpoint loading --- CheckpointLoaderSimple \#4}
\usfield{Q3. Node}{\#4 CheckpointLoaderSimple}
\usfield{Q4. Category}{A: Swap the model/file it loads; D: Replace with a different node}
\usfield{Q5. Specific repair}{A1: change \#4 flux1-schnell.safetensors to an all-in-one Flux *checkpoint* release (bundles CLIP+VAE); D1: replace CheckpointLoaderSimple \#4 with UNETLoader (keep the separate CLIP/VAE loaders)}

\ussection{Workflow and observed behavior}
\begin{center}
\includegraphics[width=\linewidth,height=0.38\textheight,keepaspectratio]{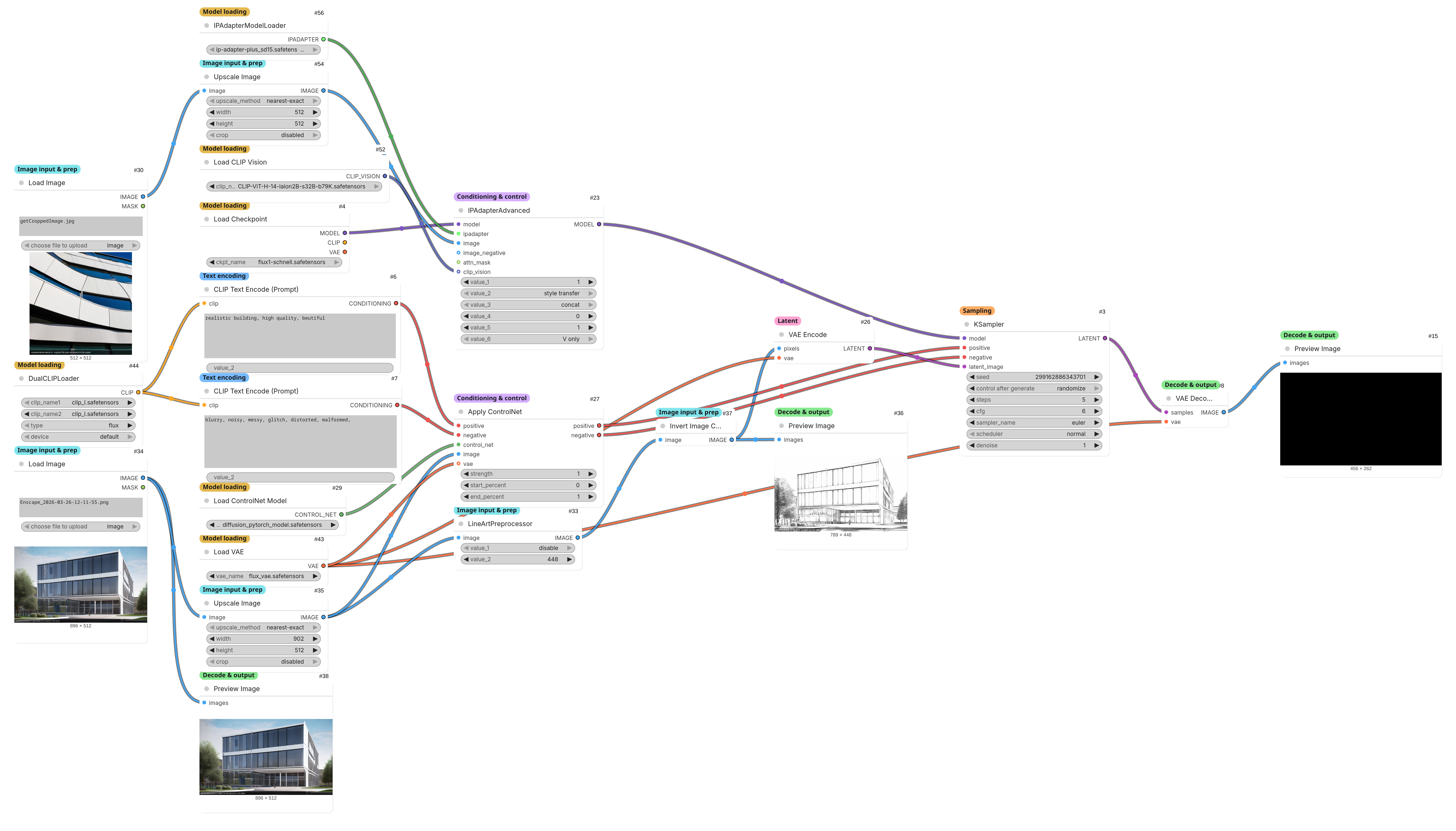}
\end{center}
\usfield{Graph}{22 nodes; 26 links. Node numbers connect the graph to the choices and diagnostic report.}
\usfield{Observed output}{The run completes and produces the following image. Participants inspect this output.}
\begin{center}
\includegraphics[width=0.25\linewidth,height=0.13\textheight,keepaspectratio]{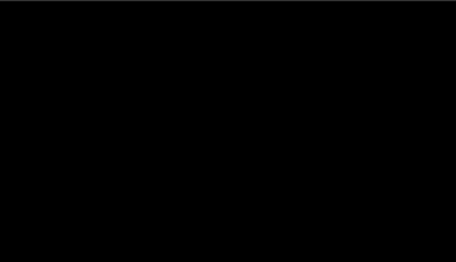}
\end{center}

\ussection{Model-file facts (shown in both versions)}
\usfield{Node \#4}{\textbf{flux1-schnell.safetensors}\newline FLUX.1 [schnell]. The diffusion transformer weights on their own; the text encoders (CLIP-L and T5-XXL) and the autoencoder are published as separate files in the same repository. 12B parameters, timestep-distilled for 1-4 step sampling.}
\usfield{Node \#43}{\textbf{flux\_vae.safetensors}\newline Autoencoder from the FLUX.1 release. Encodes images to sixteen-channel latents, four times the channel count of the Stable Diffusion 1.5 era autoencoders. Required by both the dev and the schnell transformers.}
\usfield{Node \#52}{\textbf{CLIP-ViT-H-14-laion2B-s32B-b79K.safetensors}\newline OpenCLIP ViT-H/14, trained on LAION-2B. A vision tower producing 1280-dimensional image embeddings, larger than the ViT-L/14 encoder. Commonly paired with the image adapters of the SD1.5 era.}
\usfield{Node \#44}{\textbf{clip\_l.safetensors}\newline CLIP-L/14 text encoder. The standard text tower of Stable Diffusion 1.5, and one of the two or three towers used by SDXL, SD3.5 and FLUX. 77-token context, 768-dimensional embeddings.}
\usfield{Node \#44}{\textbf{clip\_l.safetensors}\newline CLIP-L/14 text encoder. The standard text tower of Stable Diffusion 1.5, and one of the two or three towers used by SDXL, SD3.5 and FLUX. 77-token context, 768-dimensional embeddings.}
\usfield{Node \#29}{\textbf{diffusion\_pytorch\_model.safetensors}\newline The default file name diffusers writes when a component is saved. Many unrelated models ship under exactly this name, so the file name alone does not identify the architecture or the base model.}
\usfield{Node \#56}{\textbf{ip-adapter-plus\_sd15.safetensors}\newline IP-Adapter Plus. An image-prompt adapter trained for Stable Diffusion 1.5, using a ViT-H/14 CLIP vision tower to turn a reference image into conditioning. The Plus variant carries more image tokens than the base IP-Adapter.}

\ussection{Q1. Fault decision}
\usfield{Question}{Is anything wrong with this workflow?}
\begin{itemize}
\item[$\bullet$] Yes, something is wrong\quad\textbf{(Accepted)}
\item[$\bullet$] No, it looks fine
\item[$\bullet$] Can't determine
\end{itemize}
\noindent A Yes answer opens Q2--Q5. The other answers lead to the participant ratings.\par\smallskip

\ussection{Q2--Q3. Localization choices}
\usfield{Q2. Region}{Which part is at fault? Pick ONE option.}
\usfield{Q3. Node}{Which node exactly? Pick the node you would change.}
\noindent Each region below lists its Q2 option followed by the complete Q3 node menu. Select one node within the selected region.\par\smallskip
\begin{enumerate}
\item[\texttt{R1}] Checkpoint loading --- CheckpointLoaderSimple \#4\quad\textbf{(Accepted)}
\begin{itemize}
\item \#4 CheckpointLoaderSimple
\item I don't know
\end{itemize}
\item[\texttt{R2}] Text encoding --- DualCLIPLoader \#44, CLIPTextEncode \#6, \#7
\begin{itemize}
\item \#44 DualCLIPLoader
\item \#6 CLIPTextEncode
\item \#7 CLIPTextEncode
\item I don't know
\end{itemize}
\item[\texttt{R3}] IPAdapter branch --- IPAdapterModelLoader \#56, CLIPVisionLoader \#52, IPAdapterAdvanced \#23, LoadImage \#34, ImageScale \#54
\begin{itemize}
\item \#56 IPAdapterModelLoader
\item \#52 CLIPVisionLoader
\item \#23 IPAdapterAdvanced
\item \#34 LoadImage
\item \#54 ImageScale
\item I don't know
\end{itemize}
\item[\texttt{R4}] ControlNet branch --- ControlNetLoader \#29, ControlNetApplyAdvanced \#27, LineArtPreprocessor \#33, ImageInvert \#37, LoadImage \#30, ImageScale \#35, PreviewImage \#36, \#38
\begin{itemize}
\item \#29 ControlNetLoader
\item \#27 ControlNetApplyAdvanced
\item \#33 LineArtPreprocessor
\item \#37 ImageInvert
\item \#30 LoadImage
\item \#35 ImageScale
\item \#36 PreviewImage
\item \#38 PreviewImage
\item I don't know
\end{itemize}
\item[\texttt{R5}] Sampling \& img2img latent --- KSampler \#3, VAEEncode \#26
\begin{itemize}
\item \#3 KSampler
\item \#26 VAEEncode
\item I don't know
\end{itemize}
\item[\texttt{R6}] VAE \& output --- VAELoader \#43, VAEDecode \#8, PreviewImage \#15
\begin{itemize}
\item \#43 VAELoader
\item \#8 VAEDecode
\item \#15 PreviewImage
\item I don't know
\end{itemize}
\end{enumerate}

\ussection{Q4--Q5. Repair choices}
\usfield{Q4. Question}{What kind of change?}
\begin{enumerate}
\item[\texttt{A}] Swap the model/file it loads\quad\textbf{(Accepted)}
\item[\texttt{B}] Change a setting/widget value
\item[\texttt{C}] Rewire connections
\item[\texttt{D}] Replace with a different node\quad\textbf{(Accepted)}
\item[\texttt{E}] Other
\item[$\bullet$] I don't know
\end{enumerate}
\noindent Q5 displays the menu for the selected category A--D. Each menu also offers I don't know, which counts as an unsuccessful repair selection. Category E (Other) requires a free-text explanation. Selecting I don't know for Q4 proceeds directly to the participant ratings.\par\smallskip

\ussection{A. Swap the model/file it loads}
\usfield{Question}{Change the file to what?}
\begin{enumerate}
\item[\texttt{A1}] change \#4 flux1-schnell.safetensors to an all-in-one Flux *checkpoint* release (bundles CLIP+VAE)\quad\textbf{(Accepted)}
\item[\texttt{A2}] change \#4 flux1-schnell.safetensors to flux1-dev.safetensors
\item[\texttt{A3}] change \#29 diffusion\_pytorch\_model.safetensors to a Flux ControlNet
\item[\texttt{A4}] change \#44's second clip\_l.safetensors to t5xxl
\item[\texttt{A5}] change \#56 ip-adapter-plus\_sd15 to a Flux IPAdapter
\item[$\bullet$] I don't know
\end{enumerate}

\ussection{B. Change a setting/widget value}
\usfield{Question}{Which setting --- and to what?}
\begin{enumerate}
\item[\texttt{B1}] set the sampler `cfg` to 1 (Flux guidance style)
\item[\texttt{B2}] change the `type` on DualCLIPLoader \#44 (now flux)
\item[\texttt{B3}] lower `denoise` on KSampler \#3
\item[\texttt{B4}] lower the ControlNet strength on \#27
\item[$\bullet$] I don't know
\end{enumerate}

\ussection{C. Rewire connections}
\usfield{Question}{Which connection would you change?}
\begin{enumerate}
\item[\texttt{C1}] bypass the ControlNet branch (\#27-\#38)
\item[\texttt{C2}] bypass the IPAdapter branch (\#23/\#56)
\item[\texttt{C3}] take VAEDecode \#8's `vae` from checkpoint \#4 instead of VAELoader \#43
\item[$\bullet$] I don't know
\end{enumerate}

\ussection{D. Replace with a different node}
\usfield{Question}{Replace it with what?}
\begin{enumerate}
\item[\texttt{D1}] replace CheckpointLoaderSimple \#4 with UNETLoader (keep the separate CLIP/VAE loaders)\quad\textbf{(Accepted)}
\item[\texttt{D2}] replace ControlNetLoader \#29 with an SD1.5-specific loader
\item[\texttt{D3}] replace DualCLIPLoader \#44 with a single CLIPLoader
\item[\texttt{D4}] replace the IPAdapter stack with Flux-native style nodes
\item[$\bullet$] I don't know
\end{enumerate}

\ussection{Q6--Q8. Participant ratings}
\usfield{Q6. Confidence}{"How confident are you?" Options: 1, 2, 3, 4, 5 (1: not at all; 5: very).}
\usfield{Q7. Report helpfulness}{"How helpful was the checker report on this task?" Options: 1, 2, 3, 4, 5 (1: not helpful; 5: very helpful). Appears in the treatment version.}
\usfield{Q8. Mental effort}{"How much mental effort did this task take?" Options: 1, 2, 3, 4, 5, 6, 7, 8, 9 (1: very, very low; 9: very, very high).}
\noindent These ratings record participants' subjective judgments and remain separate from accuracy scoring. The rating fields appear for every Q1 answer.\par\smallskip
\usfield{Optional comment}{"Anything else about your answer? (optional)" Participants may add a free-text comment.}

\ussection{ComfyTy diagnosis (treatment only)}
\begin{lstlisting}[style=usreport]
ILL-TYPED (clash)  [91% typed (20/22)]
    Type mismatch: incompatible constructors across an edge.
    Involved (minimal core):
      - CheckpointLoaderSimple #4 (file-kind=diffusion_model via headerfacts: flux1-schnell.safetensors)
\end{lstlisting}

\ustask{T05\quad\texttt{s-lora-fam}}
\usfield{Task class}{Silent compatibility fault}
\usfield{Source and construction}{Controlled example of silent LoRA loading failures. This task instantiates a model-family mismatch by applying an SDXL style LoRA to an SD1.5 text-to-image workflow; GPU execution validates the observed behavior.}
\usfield{Upstream basis}{\href{https://github.com/Comfy-Org/ComfyUI/issues/3194\#issuecomment-2041488007}{A comment on ComfyUI issue \#3194} describes an SDXL LoRA applied to an SD1.5 model: tensor-size warnings appear, while the workflow continues and the LoRA affects the result. This task instantiates the same model-family pairing.}
\usfield{Task description}{Generate a detailed cat photo with a style LoRA applied.}
\usfield{Reference diagnosis}{SDXL LoRA on an SD1.5 checkpoint: most LoRA keys do not bind; the style is silently absent/degraded.}

\ussection{Answer key}
\usfield{Q1. Decision}{Yes, something is wrong}
\usfield{Q2. Region}{R1: Checkpoint loading --- CheckpointLoaderSimple \#4; R2: LoRA loading --- LoraLoader \#10}
\usfield{Q3. Node}{\#4 CheckpointLoaderSimple; \#10 LoraLoader}
\usfield{Q4. Category}{A: Swap the model/file it loads}
\usfield{Q5. Specific repair}{A1: change \#10 sdxl\_lora.safetensors to an SD1.5 build of the LoRA; A2: change \#4 sd15.safetensors to an SDXL checkpoint}

\ussection{Workflow and observed behavior}
\begin{center}
\includegraphics[width=\linewidth,height=0.38\textheight,keepaspectratio]{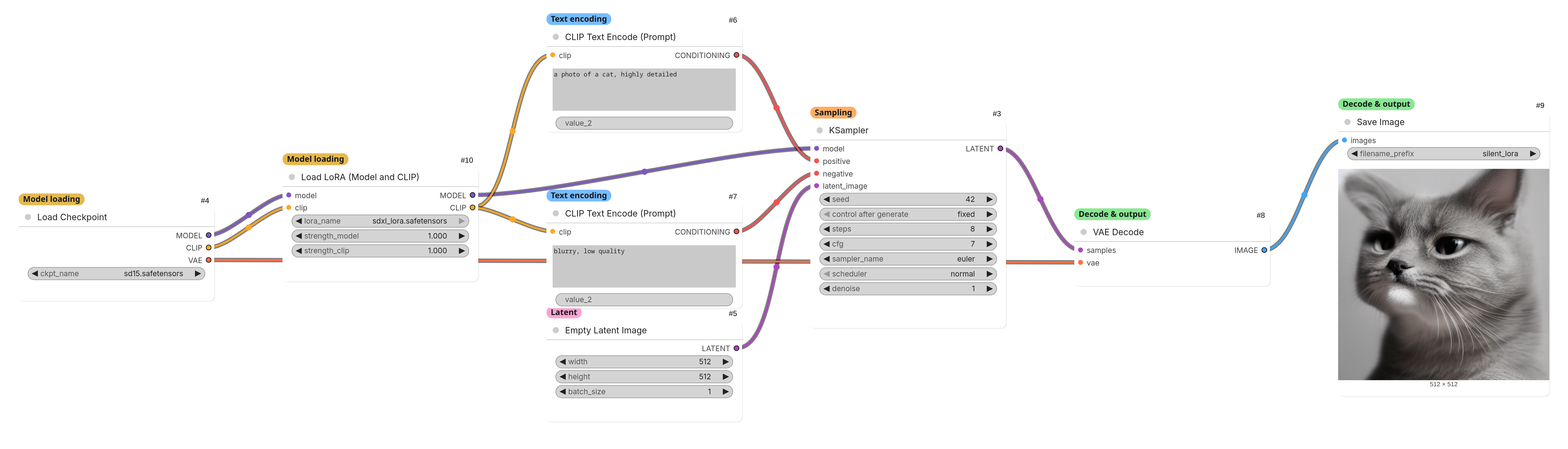}
\end{center}
\usfield{Graph}{8 nodes; 11 links. Node numbers connect the graph to the choices and diagnostic report.}
\usfield{Observed output}{The run completes and produces the following image. Participants inspect this output.}
\begin{center}
\includegraphics[width=0.25\linewidth,height=0.13\textheight,keepaspectratio]{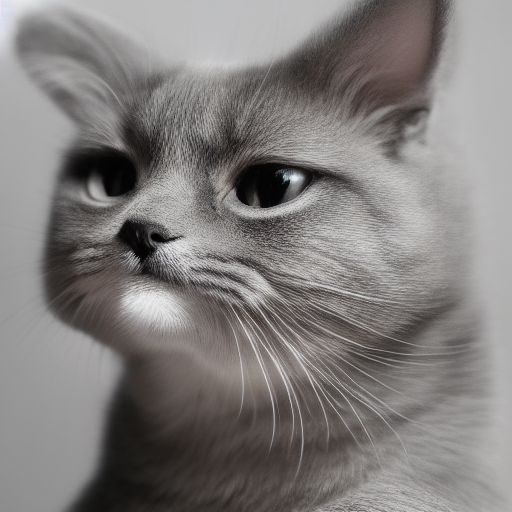}
\end{center}

\ussection{Model-file facts (shown in both versions)}
\usfield{Node \#4}{\textbf{sd15.safetensors}\newline Stable Diffusion 1.5. The classic text-to-image checkpoint; one file holding the UNet, the CLIP-L text encoder and the VAE. Trained at 512x512, and the base that most community 1.5 finetunes and LoRAs build on.}
\usfield{Node \#10}{\textbf{sdxl\_lora.safetensors}\newline Style LoRA. Trained on SDXL Base 1.0 at 1024x1024 with rank 32. Recommended strength 0.6-1.0; no trigger word is needed, the look is carried by the UNet blocks.}

\ussection{Q1. Fault decision}
\usfield{Question}{Is anything wrong with this workflow?}
\begin{itemize}
\item[$\bullet$] Yes, something is wrong\quad\textbf{(Accepted)}
\item[$\bullet$] No, it looks fine
\item[$\bullet$] Can't determine
\end{itemize}
\noindent A Yes answer opens Q2--Q5. The other answers lead to the participant ratings.\par\smallskip

\ussection{Q2--Q3. Localization choices}
\usfield{Q2. Region}{Which part is at fault? Pick ONE option.}
\usfield{Q3. Node}{Which node exactly? Pick the node you would change.}
\noindent Each region below lists its Q2 option followed by the complete Q3 node menu. Select one node within the selected region.\par\smallskip
\begin{enumerate}
\item[\texttt{R1}] Checkpoint loading --- CheckpointLoaderSimple \#4\quad\textbf{(Accepted)}
\begin{itemize}
\item \#4 CheckpointLoaderSimple
\item I don't know
\end{itemize}
\item[\texttt{R2}] LoRA loading --- LoraLoader \#10\quad\textbf{(Accepted)}
\begin{itemize}
\item \#10 LoraLoader
\item I don't know
\end{itemize}
\item[\texttt{R3}] Prompt encoding --- CLIPTextEncode \#6, \#7
\begin{itemize}
\item \#6 CLIPTextEncode
\item \#7 CLIPTextEncode
\item I don't know
\end{itemize}
\item[\texttt{R4}] Sampling \& latent --- KSampler \#3, EmptyLatentImage \#5
\begin{itemize}
\item \#3 KSampler
\item \#5 EmptyLatentImage
\item I don't know
\end{itemize}
\item[\texttt{R5}] Decode \& save --- VAEDecode \#8, SaveImage \#9
\begin{itemize}
\item \#8 VAEDecode
\item \#9 SaveImage
\item I don't know
\end{itemize}
\end{enumerate}

\ussection{Q4--Q5. Repair choices}
\usfield{Q4. Question}{What kind of change?}
\begin{enumerate}
\item[\texttt{A}] Swap the model/file it loads\quad\textbf{(Accepted)}
\item[\texttt{B}] Change a setting/widget value
\item[\texttt{C}] Rewire connections
\item[\texttt{D}] Replace with a different node
\item[\texttt{E}] Other
\item[$\bullet$] I don't know
\end{enumerate}
\noindent Q5 displays the menu for the selected category A--D. Each menu also offers I don't know, which counts as an unsuccessful repair selection. Category E (Other) requires a free-text explanation. Selecting I don't know for Q4 proceeds directly to the participant ratings.\par\smallskip

\ussection{A. Swap the model/file it loads}
\usfield{Question}{Change the file to what?}
\begin{enumerate}
\item[\texttt{A1}] change \#10 sdxl\_lora.safetensors to an SD1.5 build of the LoRA\quad\textbf{(Accepted)}
\item[\texttt{A2}] change \#4 sd15.safetensors to an SDXL checkpoint\quad\textbf{(Accepted)}
\item[\texttt{A3}] change \#10 sdxl\_lora.safetensors to a Flux build of the LoRA
\item[\texttt{A4}] change \#4 sd15.safetensors to an SD2.x checkpoint
\item[\texttt{A5}] change \#10 to a different style LoRA of the same family
\item[$\bullet$] I don't know
\end{enumerate}

\ussection{B. Change a setting/widget value}
\usfield{Question}{Which setting --- and to what?}
\begin{enumerate}
\item[\texttt{B1}] raise the LoRA strengths on \#10 (now 1.0 / 1.0)
\item[\texttt{B2}] raise the sampler `cfg`
\item[\texttt{B3}] increase `steps`
\item[$\bullet$] I don't know
\end{enumerate}

\ussection{C. Rewire connections}
\usfield{Question}{Which connection would you change?}
\begin{enumerate}
\item[\texttt{C1}] take the CLIP for the text encoders from \#4 directly (bypass \#10's CLIP)
\item[\texttt{C2}] route the MODEL around LoraLoader \#10
\item[\texttt{C3}] apply the LoRA after sampling
\item[$\bullet$] I don't know
\end{enumerate}

\ussection{D. Replace with a different node}
\usfield{Question}{Replace it with what?}
\begin{enumerate}
\item[\texttt{D1}] replace LoraLoader \#10 with LoraLoaderModelOnly
\item[\texttt{D2}] replace the LoRA with an IPAdapter style reference
\item[\texttt{D3}] replace \#4 with a checkpoint that has the style merged in
\item[$\bullet$] I don't know
\end{enumerate}

\ussection{Q6--Q8. Participant ratings}
\usfield{Q6. Confidence}{"How confident are you?" Options: 1, 2, 3, 4, 5 (1: not at all; 5: very).}
\usfield{Q7. Report helpfulness}{"How helpful was the checker report on this task?" Options: 1, 2, 3, 4, 5 (1: not helpful; 5: very helpful). Appears in the treatment version.}
\usfield{Q8. Mental effort}{"How much mental effort did this task take?" Options: 1, 2, 3, 4, 5, 6, 7, 8, 9 (1: very, very low; 9: very, very high).}
\noindent These ratings record participants' subjective judgments and remain separate from accuracy scoring. The rating fields appear for every Q1 answer.\par\smallskip
\usfield{Optional comment}{"Anything else about your answer? (optional)" Participants may add a free-text comment.}

\ussection{ComfyTy diagnosis (treatment only)}
\begin{lstlisting}[style=usreport]
ILL-TYPED (brand)  [100% typed (8/8)]
    Model family/architecture conflict: SD15 vs SDXL.
    Involved (minimal core):
      - CheckpointLoaderSimple #4 (ckpt=SD15 via headerfacts: sd15.safetensors)
      - LoraLoader #10 (lora=SDXL via headerfacts: sdxl_lora.safetensors)
      - LoraLoader #10
      - edge CheckpointLoaderSimple #4 -> LoraLoader #10.model
    Possible fixes:
      - change the file at CheckpointLoaderSimple #4 (ckpt=SD15 via headerfacts: sd15.safetensors) to one matching the rest of the graph
      - change the file at LoraLoader #10 (lora=SDXL via headerfacts: sdxl_lora.safetensors) to one matching the rest of the graph
\end{lstlisting}

\ustask{T06\quad\texttt{s-wrong-vae}}
\usfield{Task class}{Silent compatibility fault}
\usfield{Source and construction}{Controlled example of cross-family VAE decoding. This task routes an SD1.5 sampled latent through the VAE supplied by an SDXL checkpoint; GPU execution validates the observed behavior.}
\usfield{Upstream basis}{\href{https://github.com/Comfy-Org/ComfyUI/issues/1602\#issuecomment-1741763137}{A comment on ComfyUI issue \#1602} describes noisy output when an SD1.4 latent is decoded with an SD2.1/XL VAE. This task instantiates the cross-family decoding pattern with SD1.5 and SDXL.}
\usfield{Task description}{Generate a detailed photo of a cat (text-to-image).}
\usfield{Reference diagnosis}{The SD1.5 latent is decoded with the SDXL checkpoint's VAE (from \#20); output is silently corrupted.}

\ussection{Answer key}
\usfield{Q1. Decision}{Yes, something is wrong}
\usfield{Q2. Region}{R2: Second checkpoint (supplies the VAE) --- CheckpointLoaderSimple \#20; R5: Decode \& save --- VAEDecode \#8, SaveImage \#9}
\usfield{Q3. Node}{\#20 CheckpointLoaderSimple; \#8 VAEDecode}
\usfield{Q4. Category}{A: Swap the model/file it loads; C: Rewire connections}
\usfield{Q5. Specific repair}{A1: change \#20 sdxl.safetensors to an SD1.5 checkpoint (its VAE then matches); C1: take VAEDecode \#8's `vae` input from checkpoint \#4 instead of checkpoint \#20}

\ussection{Workflow and observed behavior}
\begin{center}
\includegraphics[width=\linewidth,height=0.38\textheight,keepaspectratio]{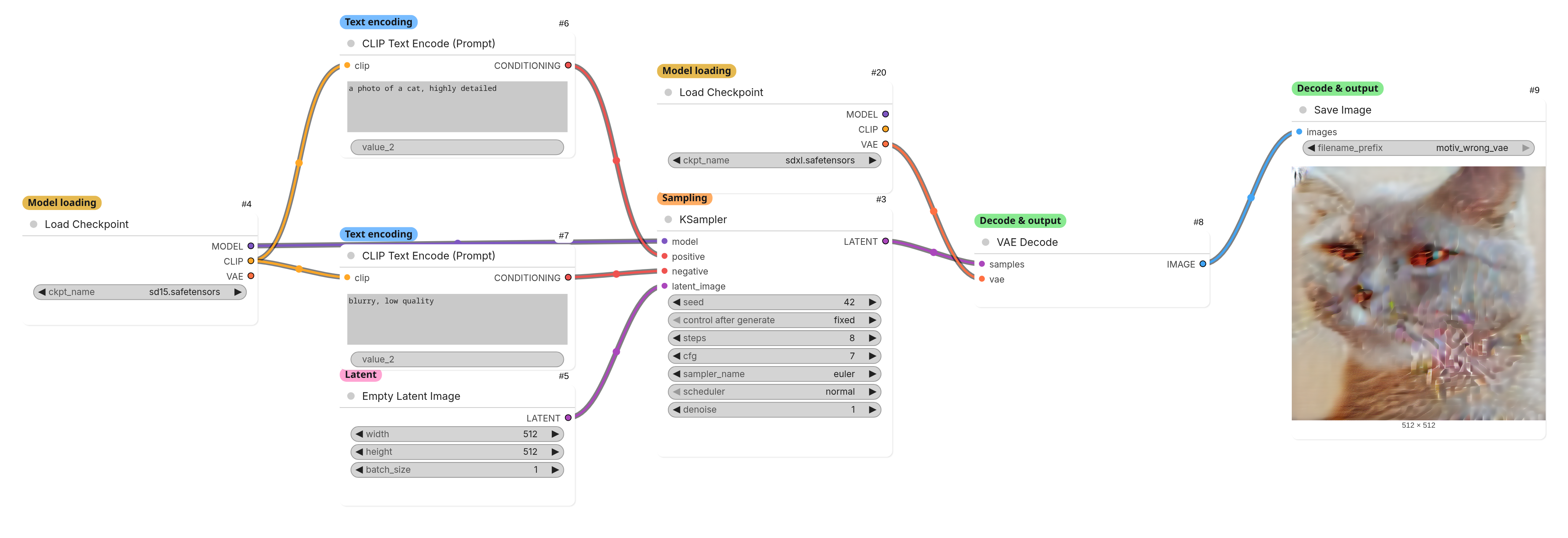}
\end{center}
\usfield{Graph}{8 nodes; 9 links. Node numbers connect the graph to the choices and diagnostic report.}
\usfield{Observed output}{The run completes and produces the following image. Participants inspect this output.}
\begin{center}
\includegraphics[width=0.25\linewidth,height=0.13\textheight,keepaspectratio]{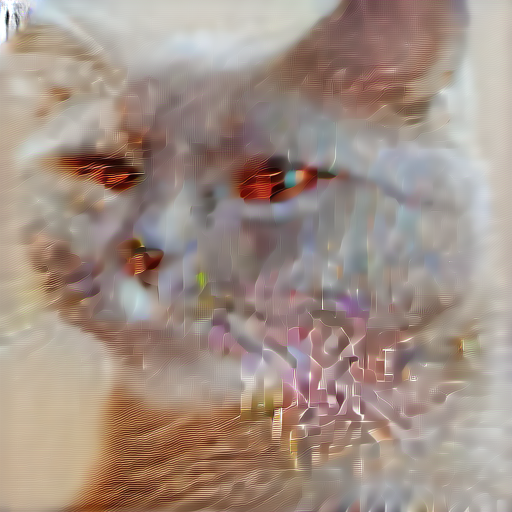}
\end{center}

\ussection{Model-file facts (shown in both versions)}
\usfield{Node \#4}{\textbf{sd15.safetensors}\newline Stable Diffusion 1.5. The classic text-to-image checkpoint; one file holding the UNet, the CLIP-L text encoder and the VAE. Trained at 512x512, and the base that most community 1.5 finetunes and LoRAs build on.}
\usfield{Node \#20}{\textbf{sdxl.safetensors}\newline SDXL Base 1.0. All-in-one checkpoint holding the UNet, both text encoders (CLIP-L and OpenCLIP bigG) and the VAE. Trained at 1024x1024; a separate refiner model is published alongside it.}

\ussection{Q1. Fault decision}
\usfield{Question}{Is anything wrong with this workflow?}
\begin{itemize}
\item[$\bullet$] Yes, something is wrong\quad\textbf{(Accepted)}
\item[$\bullet$] No, it looks fine
\item[$\bullet$] Can't determine
\end{itemize}
\noindent A Yes answer opens Q2--Q5. The other answers lead to the participant ratings.\par\smallskip

\ussection{Q2--Q3. Localization choices}
\usfield{Q2. Region}{Which part is at fault? Pick ONE option.}
\usfield{Q3. Node}{Which node exactly? Pick the node you would change.}
\noindent Each region below lists its Q2 option followed by the complete Q3 node menu. Select one node within the selected region.\par\smallskip
\begin{enumerate}
\item[\texttt{R1}] Main checkpoint --- CheckpointLoaderSimple \#4
\begin{itemize}
\item \#4 CheckpointLoaderSimple
\item I don't know
\end{itemize}
\item[\texttt{R2}] Second checkpoint (supplies the VAE) --- CheckpointLoaderSimple \#20\quad\textbf{(Accepted)}
\begin{itemize}
\item \#20 CheckpointLoaderSimple
\item I don't know
\end{itemize}
\item[\texttt{R3}] Prompt encoding --- CLIPTextEncode \#6, \#7
\begin{itemize}
\item \#6 CLIPTextEncode
\item \#7 CLIPTextEncode
\item I don't know
\end{itemize}
\item[\texttt{R4}] Sampling \& latent --- KSampler \#3, EmptyLatentImage \#5
\begin{itemize}
\item \#3 KSampler
\item \#5 EmptyLatentImage
\item I don't know
\end{itemize}
\item[\texttt{R5}] Decode \& save --- VAEDecode \#8, SaveImage \#9\quad\textbf{(Accepted)}
\begin{itemize}
\item \#8 VAEDecode
\item \#9 SaveImage
\item I don't know
\end{itemize}
\end{enumerate}

\ussection{Q4--Q5. Repair choices}
\usfield{Q4. Question}{What kind of change?}
\begin{enumerate}
\item[\texttt{A}] Swap the model/file it loads\quad\textbf{(Accepted)}
\item[\texttt{B}] Change a setting/widget value
\item[\texttt{C}] Rewire connections\quad\textbf{(Accepted)}
\item[\texttt{D}] Replace with a different node
\item[\texttt{E}] Other
\item[$\bullet$] I don't know
\end{enumerate}
\noindent Q5 displays the menu for the selected category A--D. Each menu also offers I don't know, which counts as an unsuccessful repair selection. Category E (Other) requires a free-text explanation. Selecting I don't know for Q4 proceeds directly to the participant ratings.\par\smallskip

\ussection{A. Swap the model/file it loads}
\usfield{Question}{Change the file to what?}
\begin{enumerate}
\item[\texttt{A1}] change \#20 sdxl.safetensors to an SD1.5 checkpoint (its VAE then matches)\quad\textbf{(Accepted)}
\item[\texttt{A2}] change \#4 sd15.safetensors to a high-res SD1.5 finetune
\item[\texttt{A3}] change \#20 sdxl.safetensors to the SDXL inpaint variant
\item[\texttt{A4}] change \#4 sd15.safetensors to an SD2.x checkpoint
\item[$\bullet$] I don't know
\end{enumerate}

\ussection{B. Change a setting/widget value}
\usfield{Question}{Which setting --- and to what?}
\begin{enumerate}
\item[\texttt{B1}] change `denoise` on the sampler
\item[\texttt{B2}] change the sampler `cfg`
\item[\texttt{B3}] change the latent batch size
\item[$\bullet$] I don't know
\end{enumerate}

\ussection{C. Rewire connections}
\usfield{Question}{Which connection would you change?}
\begin{enumerate}
\item[\texttt{C1}] take VAEDecode \#8's `vae` input from checkpoint \#4 instead of checkpoint \#20\quad\textbf{(Accepted)}
\item[\texttt{C2}] take the text encoders' CLIP from \#20 instead of \#4
\item[\texttt{C3}] take the sampler's MODEL from \#20 instead of \#4
\item[$\bullet$] I don't know
\end{enumerate}

\ussection{D. Replace with a different node}
\usfield{Question}{Replace it with what?}
\begin{enumerate}
\item[\texttt{D1}] insert a dedicated VAELoader (loading an SDXL VAE) to feed VAEDecode \#8
\item[\texttt{D2}] replace VAEDecode \#8 with VAEDecodeTiled
\item[\texttt{D3}] delete checkpoint \#20 and its outputs
\item[$\bullet$] I don't know
\end{enumerate}

\ussection{Q6--Q8. Participant ratings}
\usfield{Q6. Confidence}{"How confident are you?" Options: 1, 2, 3, 4, 5 (1: not at all; 5: very).}
\usfield{Q7. Report helpfulness}{"How helpful was the checker report on this task?" Options: 1, 2, 3, 4, 5 (1: not helpful; 5: very helpful). Appears in the treatment version.}
\usfield{Q8. Mental effort}{"How much mental effort did this task take?" Options: 1, 2, 3, 4, 5, 6, 7, 8, 9 (1: very, very low; 9: very, very high).}
\noindent These ratings record participants' subjective judgments and remain separate from accuracy scoring. The rating fields appear for every Q1 answer.\par\smallskip
\usfield{Optional comment}{"Anything else about your answer? (optional)" Participants may add a free-text comment.}

\ussection{ComfyTy diagnosis (treatment only)}
\begin{lstlisting}[style=usreport]
ILL-TYPED (brand)  [100% typed (8/8)]
    Model family/architecture conflict: SDXL vs SD15.
    Involved (minimal core):
      - CheckpointLoaderSimple #4 (ckpt=SD15 via headerfacts: sd15.safetensors)
      - CheckpointLoaderSimple #20 (ckpt=SDXL via headerfacts: sdxl.safetensors)
      - edge CheckpointLoaderSimple #4 -> CLIPTextEncode #7.clip
      - edge CLIPTextEncode #7 -> KSampler #3.negative
      - edge KSampler #3 -> VAEDecode #8.samples
      - edge CheckpointLoaderSimple #20 -> VAEDecode #8.vae
    Possible fixes:
      - change the file at CheckpointLoaderSimple #4 (ckpt=SD15 via headerfacts: sd15.safetensors) to one matching the rest of the graph
      - change the file at CheckpointLoaderSimple #20 (ckpt=SDXL via headerfacts: sdxl.safetensors) to one matching the rest of the graph
\end{lstlisting}

\ustask{T07\quad\texttt{g-fox-ok}}
\usfield{Task class}{Valid workflow}
\usfield{Source and construction}{Valid text-to-image control, validated by GPU execution. A standard SD1.5 pipeline generates a fox image at 512x512 resolution.}
\usfield{Upstream basis}{\href{https://github.com/Comfy-Org/ComfyUI/blob/11ad6060fcc79e2a0088fc5f1eb906ac445755a1/web/scripts/defaultGraph.js}{The official ComfyUI default workflow} provides the same seven-node text-to-image topology, node identifiers, and connections. This study supplies the fox prompt and generation settings.}
\usfield{Task description}{Generate a photo of a red fox in an autumn forest (text-to-image).}
\usfield{Reference diagnosis}{The workflow is valid and its components are compatible.}

\ussection{Answer key}
\usfield{Q1. Decision}{No, it looks fine}
\usfield{Q2. Region}{The correct Q1 answer completes diagnosis.}
\usfield{Q3. Node}{The workflow's nodes are valid.}
\usfield{Q4--Q5. Repair}{Keep the valid workflow unchanged. Repair scoring applies to faulty workflows.}

\ussection{Workflow and observed behavior}
\begin{center}
\includegraphics[width=\linewidth,height=0.38\textheight,keepaspectratio]{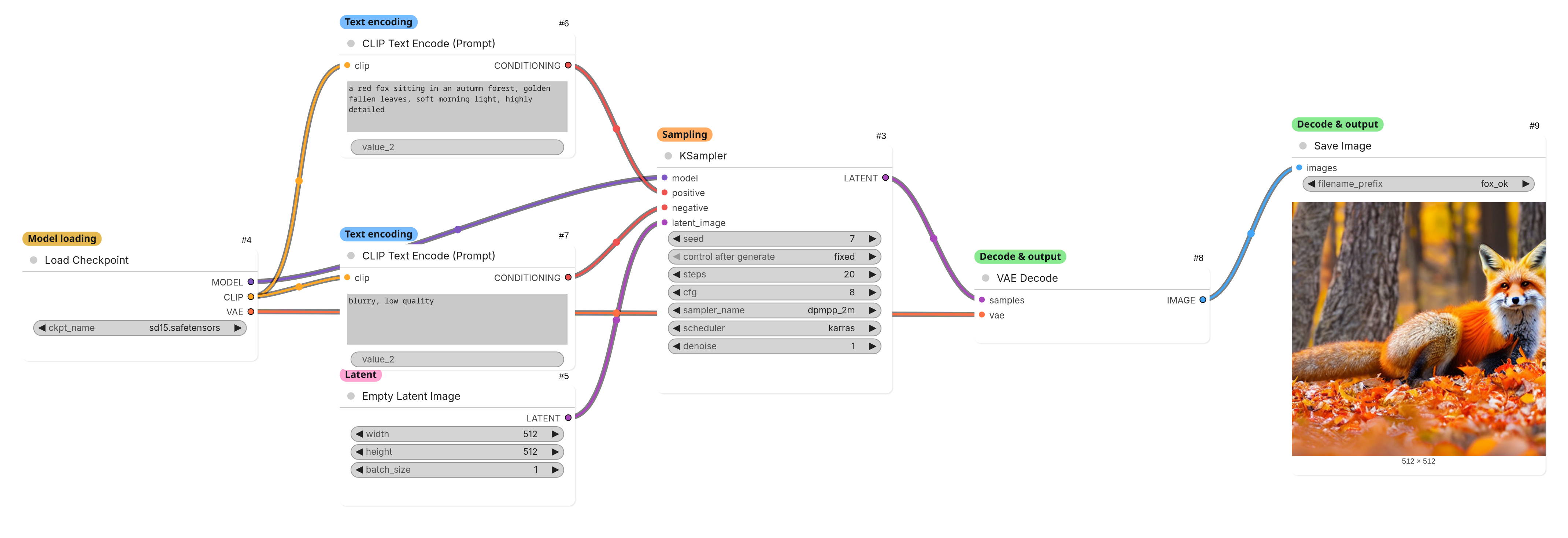}
\end{center}
\usfield{Graph}{7 nodes; 9 links. Node numbers connect the graph to the choices and diagnostic report.}
\usfield{Observed output}{The run completes and produces the following image. Participants inspect this output.}
\begin{center}
\includegraphics[width=0.25\linewidth,height=0.13\textheight,keepaspectratio]{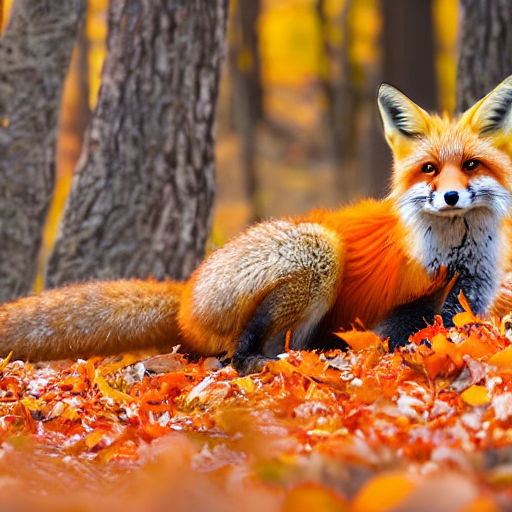}
\end{center}

\ussection{Model-file facts (shown in both versions)}
\usfield{Node \#4}{\textbf{sd15.safetensors}\newline Stable Diffusion 1.5. The classic text-to-image checkpoint; one file holding the UNet, the CLIP-L text encoder and the VAE. Trained at 512x512, and the base that most community 1.5 finetunes and LoRAs build on.}

\ussection{Q1. Fault decision}
\usfield{Question}{Is anything wrong with this workflow?}
\begin{itemize}
\item[$\bullet$] Yes, something is wrong
\item[$\bullet$] No, it looks fine\quad\textbf{(Accepted)}
\item[$\bullet$] Can't determine
\end{itemize}
\noindent A Yes answer opens Q2--Q5. The other answers lead to the participant ratings.\par\smallskip

\ussection{Q2--Q3. Localization choices}
\usfield{Q2. Region}{Which part is at fault? Pick ONE option.}
\usfield{Q3. Node}{Which node exactly? Pick the node you would change.}
\noindent Each region below lists its Q2 option followed by the complete Q3 node menu. Select one node within the selected region.\par\smallskip
\begin{enumerate}
\item[\texttt{R1}] Model loading --- CheckpointLoaderSimple \#4
\begin{itemize}
\item \#4 CheckpointLoaderSimple
\item I don't know
\end{itemize}
\item[\texttt{R2}] Prompt encoding --- CLIPTextEncode \#6, \#7
\begin{itemize}
\item \#6 CLIPTextEncode
\item \#7 CLIPTextEncode
\item I don't know
\end{itemize}
\item[\texttt{R3}] Sampling \& latent --- KSampler \#3, EmptyLatentImage \#5
\begin{itemize}
\item \#3 KSampler
\item \#5 EmptyLatentImage
\item I don't know
\end{itemize}
\item[\texttt{R4}] Decode \& save --- VAEDecode \#8, SaveImage \#9
\begin{itemize}
\item \#8 VAEDecode
\item \#9 SaveImage
\item I don't know
\end{itemize}
\end{enumerate}
\noindent A Yes answer opens these region and node choices and counts as a false alarm on this valid workflow.\par\smallskip

\ussection{Q4--Q5. Repair choices}
\usfield{Q4. Question}{What kind of change?}
\begin{enumerate}
\item[\texttt{A}] Swap the model/file it loads
\item[\texttt{B}] Change a setting/widget value
\item[\texttt{C}] Rewire connections
\item[\texttt{D}] Replace with a different node
\item[\texttt{E}] Other
\item[$\bullet$] I don't know
\end{enumerate}
\usfield{Q5. Branch behavior}{A Yes answer exposes Q4. Categories A--D and I don't know lead directly to the ratings; category E (Other) requires a free-text explanation. Q5 is omitted for this valid workflow. The correct response preserves the workflow unchanged.}

\ussection{Q6--Q8. Participant ratings}
\usfield{Q6. Confidence}{"How confident are you?" Options: 1, 2, 3, 4, 5 (1: not at all; 5: very).}
\usfield{Q7. Report helpfulness}{"How helpful was the checker report on this task?" Options: 1, 2, 3, 4, 5 (1: not helpful; 5: very helpful). Appears in the treatment version.}
\usfield{Q8. Mental effort}{"How much mental effort did this task take?" Options: 1, 2, 3, 4, 5, 6, 7, 8, 9 (1: very, very low; 9: very, very high).}
\noindent These ratings record participants' subjective judgments and remain separate from accuracy scoring. The rating fields appear for every Q1 answer.\par\smallskip
\usfield{Optional comment}{"Anything else about your answer? (optional)" Participants may add a free-text comment.}

\ussection{ComfyTy diagnosis (treatment only)}
\begin{lstlisting}[style=usreport]
OK  [100% typed (7/7)]
\end{lstlisting}

\ustask{T08\quad\texttt{g-lora-ok}}
\usfield{Task class}{Valid workflow}
\usfield{Source and construction}{Valid control for the LoRA compatibility pattern in T05, validated by GPU execution. An SDXL text-to-image pipeline uses a compatible SDXL pixel-art LoRA at 1024x1024 resolution.}
\usfield{Upstream basis}{\href{https://github.com/comfyanonymous/ComfyUI_examples/tree/master/lora}{The official ComfyUI LoRA examples} show a LoraLoader applied to the checkpoint's MODEL and CLIP outputs, as in this task.}
\usfield{Upstream basis}{\href{https://github.com/Comfy-Org/ComfyUI/issues/1461\#issuecomment-1712399049}{A discussion in ComfyUI issue \#1461} explains that an SDXL model requires an SDXL LoRA. This valid control uses that compatible pairing.}
\usfield{Task description}{Generate a corgi photo in pixel-art style via a pixel-art LoRA.}
\usfield{Reference diagnosis}{The workflow is valid and its components are compatible.}

\ussection{Answer key}
\usfield{Q1. Decision}{No, it looks fine}
\usfield{Q2. Region}{The correct Q1 answer completes diagnosis.}
\usfield{Q3. Node}{The workflow's nodes are valid.}
\usfield{Q4--Q5. Repair}{Keep the valid workflow unchanged. Repair scoring applies to faulty workflows.}

\ussection{Workflow and observed behavior}
\begin{center}
\includegraphics[width=\linewidth,height=0.38\textheight,keepaspectratio]{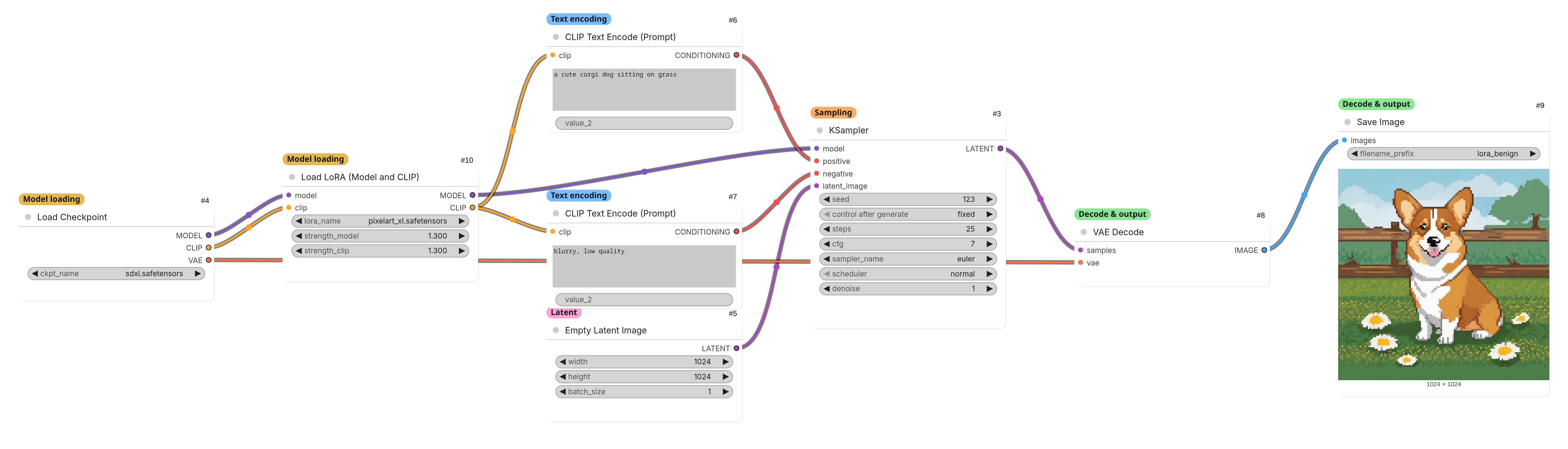}
\end{center}
\usfield{Graph}{8 nodes; 11 links. Node numbers connect the graph to the choices and diagnostic report.}
\usfield{Observed output}{The run completes and produces the following image. Participants inspect this output.}
\begin{center}
\includegraphics[width=0.25\linewidth,height=0.13\textheight,keepaspectratio]{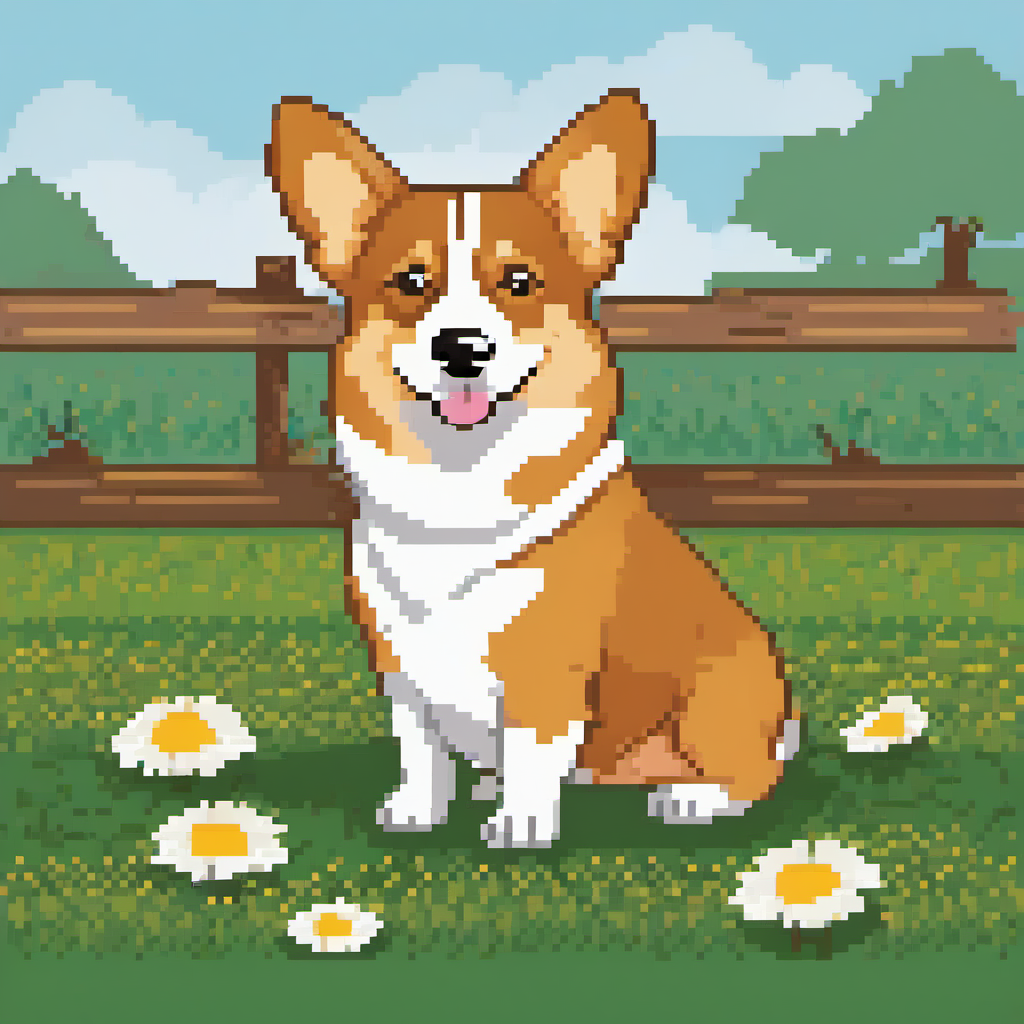}
\end{center}

\ussection{Model-file facts (shown in both versions)}
\usfield{Node \#4}{\textbf{sdxl.safetensors}\newline SDXL Base 1.0. All-in-one checkpoint holding the UNet, both text encoders (CLIP-L and OpenCLIP bigG) and the VAE. Trained at 1024x1024; a separate refiner model is published alongside it.}
\usfield{Node \#10}{\textbf{pixelart\_xl.safetensors}\newline Pixel-art style LoRA. Trained on SDXL Base 1.0; renders blocky, limited-palette images. Works best around strength 0.8-1.2 with 1024x1024 output, optionally downscaled afterwards to sharpen the grid.}

\ussection{Q1. Fault decision}
\usfield{Question}{Is anything wrong with this workflow?}
\begin{itemize}
\item[$\bullet$] Yes, something is wrong
\item[$\bullet$] No, it looks fine\quad\textbf{(Accepted)}
\item[$\bullet$] Can't determine
\end{itemize}
\noindent A Yes answer opens Q2--Q5. The other answers lead to the participant ratings.\par\smallskip

\ussection{Q2--Q3. Localization choices}
\usfield{Q2. Region}{Which part is at fault? Pick ONE option.}
\usfield{Q3. Node}{Which node exactly? Pick the node you would change.}
\noindent Each region below lists its Q2 option followed by the complete Q3 node menu. Select one node within the selected region.\par\smallskip
\begin{enumerate}
\item[\texttt{R1}] Checkpoint loading --- CheckpointLoaderSimple \#4
\begin{itemize}
\item \#4 CheckpointLoaderSimple
\item I don't know
\end{itemize}
\item[\texttt{R2}] LoRA loading --- LoraLoader \#10
\begin{itemize}
\item \#10 LoraLoader
\item I don't know
\end{itemize}
\item[\texttt{R3}] Prompt encoding --- CLIPTextEncode \#6, \#7
\begin{itemize}
\item \#6 CLIPTextEncode
\item \#7 CLIPTextEncode
\item I don't know
\end{itemize}
\item[\texttt{R4}] Sampling \& latent --- KSampler \#3, EmptyLatentImage \#5
\begin{itemize}
\item \#3 KSampler
\item \#5 EmptyLatentImage
\item I don't know
\end{itemize}
\item[\texttt{R5}] Decode \& save --- VAEDecode \#8, SaveImage \#9
\begin{itemize}
\item \#8 VAEDecode
\item \#9 SaveImage
\item I don't know
\end{itemize}
\end{enumerate}
\noindent A Yes answer opens these region and node choices and counts as a false alarm on this valid workflow.\par\smallskip

\ussection{Q4--Q5. Repair choices}
\usfield{Q4. Question}{What kind of change?}
\begin{enumerate}
\item[\texttt{A}] Swap the model/file it loads
\item[\texttt{B}] Change a setting/widget value
\item[\texttt{C}] Rewire connections
\item[\texttt{D}] Replace with a different node
\item[\texttt{E}] Other
\item[$\bullet$] I don't know
\end{enumerate}
\usfield{Q5. Branch behavior}{A Yes answer exposes Q4. Categories A--D and I don't know lead directly to the ratings; category E (Other) requires a free-text explanation. Q5 is omitted for this valid workflow. The correct response preserves the workflow unchanged.}

\ussection{Q6--Q8. Participant ratings}
\usfield{Q6. Confidence}{"How confident are you?" Options: 1, 2, 3, 4, 5 (1: not at all; 5: very).}
\usfield{Q7. Report helpfulness}{"How helpful was the checker report on this task?" Options: 1, 2, 3, 4, 5 (1: not helpful; 5: very helpful). Appears in the treatment version.}
\usfield{Q8. Mental effort}{"How much mental effort did this task take?" Options: 1, 2, 3, 4, 5, 6, 7, 8, 9 (1: very, very low; 9: very, very high).}
\noindent These ratings record participants' subjective judgments and remain separate from accuracy scoring. The rating fields appear for every Q1 answer.\par\smallskip
\usfield{Optional comment}{"Anything else about your answer? (optional)" Participants may add a free-text comment.}

\ussection{ComfyTy diagnosis (treatment only)}
\begin{lstlisting}[style=usreport]
OK  [100% typed (8/8)]
\end{lstlisting}

\ustask{T09\quad\texttt{g-blend-ok}}
\usfield{Task class}{Valid workflow}
\usfield{Source and construction}{Valid control for the resolution compatibility pattern in T02, validated by GPU execution. Two EmptyLatentImage nodes both use 512x512 resolution; their sampled outputs feed a LatentBlend node.}
\usfield{Upstream basis}{\href{https://github.com/Comfy-Org/ComfyUI/pull/1058}{ComfyUI pull request \#1058} introduces the two-input latent blending operation. This valid control uses two 512x512 inputs, so blending preserves their spatial dimensions.}
\usfield{Task description}{Generate a detailed photo of a golden retriever dog in a park (text-to-image).}
\usfield{Reference diagnosis}{The workflow is valid and its components are compatible.}

\ussection{Answer key}
\usfield{Q1. Decision}{No, it looks fine}
\usfield{Q2. Region}{The correct Q1 answer completes diagnosis.}
\usfield{Q3. Node}{The workflow's nodes are valid.}
\usfield{Q4--Q5. Repair}{Keep the valid workflow unchanged. Repair scoring applies to faulty workflows.}

\ussection{Workflow and observed behavior}
\begin{center}
\includegraphics[width=\linewidth,height=0.38\textheight,keepaspectratio]{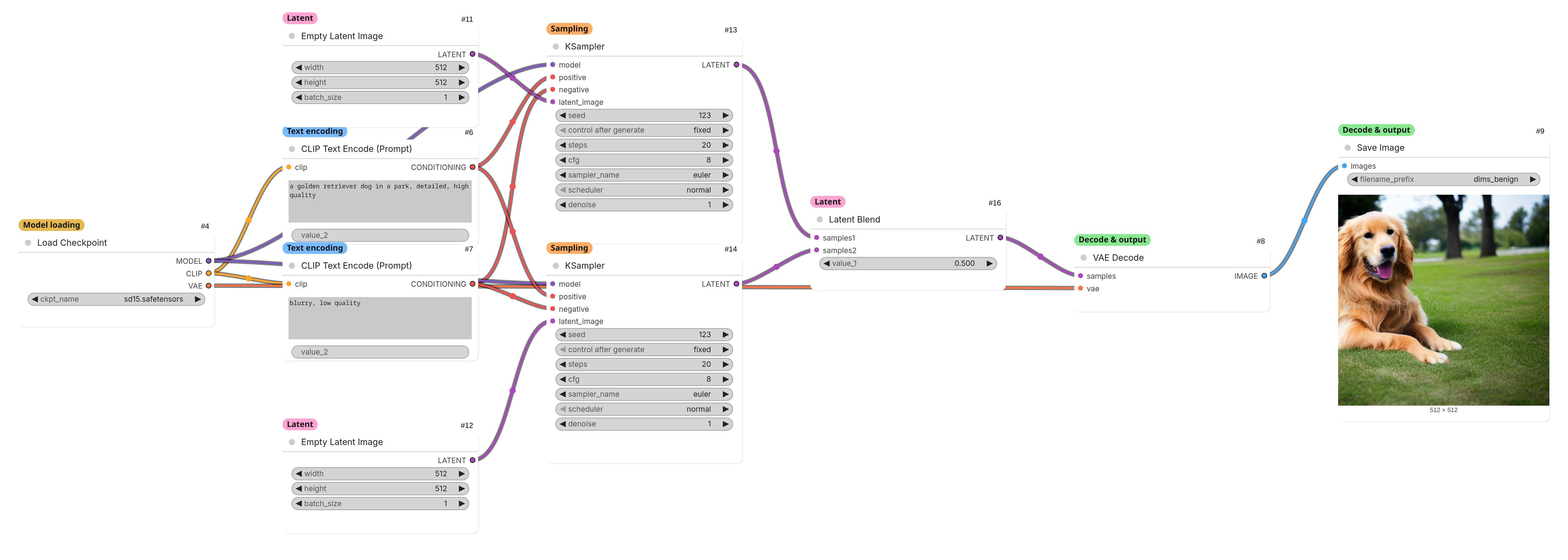}
\end{center}
\usfield{Graph}{10 nodes; 15 links. Node numbers connect the graph to the choices and diagnostic report.}
\usfield{Observed output}{The run completes and produces the following image. Participants inspect this output.}
\begin{center}
\includegraphics[width=0.25\linewidth,height=0.13\textheight,keepaspectratio]{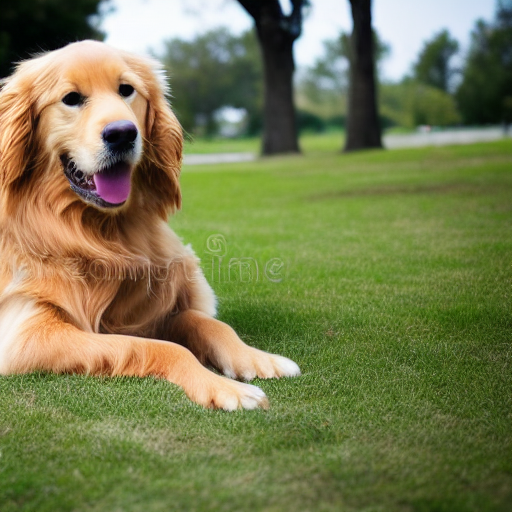}
\end{center}

\ussection{Model-file facts (shown in both versions)}
\usfield{Node \#4}{\textbf{sd15.safetensors}\newline Stable Diffusion 1.5. The classic text-to-image checkpoint; one file holding the UNet, the CLIP-L text encoder and the VAE. Trained at 512x512, and the base that most community 1.5 finetunes and LoRAs build on.}

\ussection{Q1. Fault decision}
\usfield{Question}{Is anything wrong with this workflow?}
\begin{itemize}
\item[$\bullet$] Yes, something is wrong
\item[$\bullet$] No, it looks fine\quad\textbf{(Accepted)}
\item[$\bullet$] Can't determine
\end{itemize}
\noindent A Yes answer opens Q2--Q5. The other answers lead to the participant ratings.\par\smallskip

\ussection{Q2--Q3. Localization choices}
\usfield{Q2. Region}{Which part is at fault? Pick ONE option.}
\usfield{Q3. Node}{Which node exactly? Pick the node you would change.}
\noindent Each region below lists its Q2 option followed by the complete Q3 node menu. Select one node within the selected region.\par\smallskip
\begin{enumerate}
\item[\texttt{R1}] Model loading --- CheckpointLoaderSimple \#4
\begin{itemize}
\item \#4 CheckpointLoaderSimple
\item I don't know
\end{itemize}
\item[\texttt{R2}] Latent sources \& blend --- EmptyLatentImage \#11, \#12, LatentBlend \#16
\begin{itemize}
\item \#11 EmptyLatentImage
\item \#12 EmptyLatentImage
\item \#16 LatentBlend
\item I don't know
\end{itemize}
\item[\texttt{R3}] The two samplers --- KSampler \#13, \#14
\begin{itemize}
\item \#13 KSampler
\item \#14 KSampler
\item I don't know
\end{itemize}
\item[\texttt{R4}] Prompt encoding --- CLIPTextEncode \#6, \#7
\begin{itemize}
\item \#6 CLIPTextEncode
\item \#7 CLIPTextEncode
\item I don't know
\end{itemize}
\item[\texttt{R5}] Decode \& save --- VAEDecode \#8, SaveImage \#9
\begin{itemize}
\item \#8 VAEDecode
\item \#9 SaveImage
\item I don't know
\end{itemize}
\end{enumerate}
\noindent A Yes answer opens these region and node choices and counts as a false alarm on this valid workflow.\par\smallskip

\ussection{Q4--Q5. Repair choices}
\usfield{Q4. Question}{What kind of change?}
\begin{enumerate}
\item[\texttt{A}] Swap the model/file it loads
\item[\texttt{B}] Change a setting/widget value
\item[\texttt{C}] Rewire connections
\item[\texttt{D}] Replace with a different node
\item[\texttt{E}] Other
\item[$\bullet$] I don't know
\end{enumerate}
\usfield{Q5. Branch behavior}{A Yes answer exposes Q4. Categories A--D and I don't know lead directly to the ratings; category E (Other) requires a free-text explanation. Q5 is omitted for this valid workflow. The correct response preserves the workflow unchanged.}

\ussection{Q6--Q8. Participant ratings}
\usfield{Q6. Confidence}{"How confident are you?" Options: 1, 2, 3, 4, 5 (1: not at all; 5: very).}
\usfield{Q7. Report helpfulness}{"How helpful was the checker report on this task?" Options: 1, 2, 3, 4, 5 (1: not helpful; 5: very helpful). Appears in the treatment version.}
\usfield{Q8. Mental effort}{"How much mental effort did this task take?" Options: 1, 2, 3, 4, 5, 6, 7, 8, 9 (1: very, very low; 9: very, very high).}
\noindent These ratings record participants' subjective judgments and remain separate from accuracy scoring. The rating fields appear for every Q1 answer.\par\smallskip
\usfield{Optional comment}{"Anything else about your answer? (optional)" Participants may add a free-text comment.}

\ussection{ComfyTy diagnosis (treatment only)}
\begin{lstlisting}[style=usreport]
OK  [100% typed (10/10)]
\end{lstlisting}

\ustask{T10\quad\texttt{g-modality-ok}}
\usfield{Task class}{Valid workflow}
\usfield{Source and construction}{Valid control for the modality compatibility pattern in T01, validated by GPU execution. An SD1.5 text-to-image pipeline pairs canny-edge conditioning with an SD1.5 canny ControlNet.}
\usfield{Upstream basis}{\href{https://github.com/comfyanonymous/ComfyUI_examples/tree/master/controlnet}{The official ComfyUI ControlNet examples} specify that the conditioning image format must match the ControlNet model. This valid control follows that requirement with canny maps and a canny ControlNet.}
\usfield{Task description}{Generate a dog photo whose composition follows the edge map extracted from a reference image (ControlNet).}
\usfield{Reference diagnosis}{The workflow is valid and its components are compatible.}

\ussection{Answer key}
\usfield{Q1. Decision}{No, it looks fine}
\usfield{Q2. Region}{The correct Q1 answer completes diagnosis.}
\usfield{Q3. Node}{The workflow's nodes are valid.}
\usfield{Q4--Q5. Repair}{Keep the valid workflow unchanged. Repair scoring applies to faulty workflows.}

\ussection{Workflow and observed behavior}
\begin{center}
\includegraphics[width=\linewidth,height=0.38\textheight,keepaspectratio]{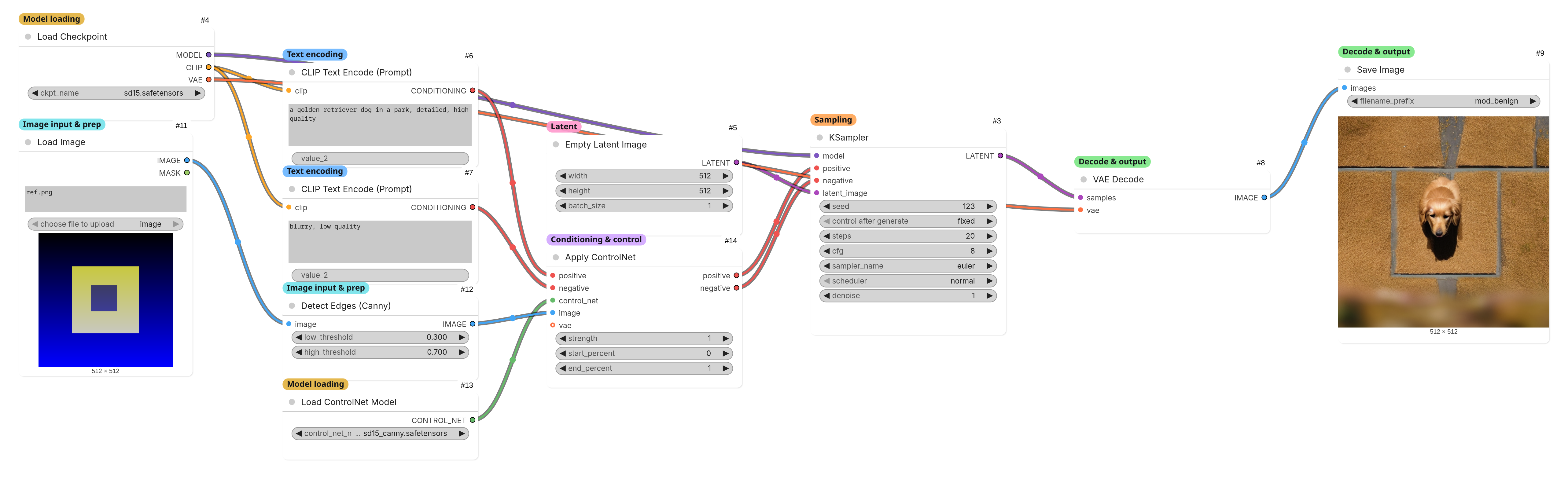}
\end{center}
\usfield{Graph}{11 nodes; 14 links. Node numbers connect the graph to the choices and diagnostic report.}
\usfield{Observed output}{The run completes and produces the following image. Participants inspect this output.}
\begin{center}
\includegraphics[width=0.25\linewidth,height=0.13\textheight,keepaspectratio]{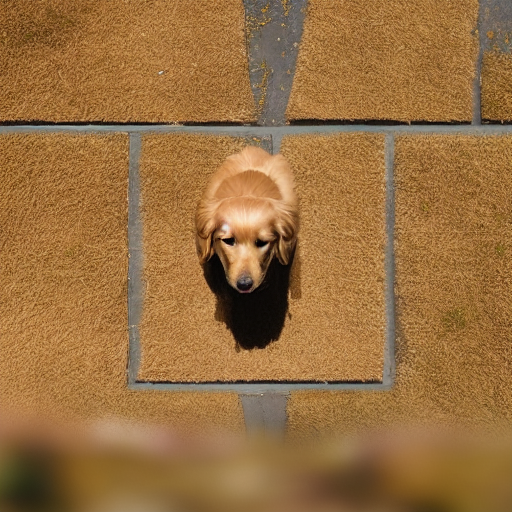}
\end{center}

\ussection{Model-file facts (shown in both versions)}
\usfield{Node \#4}{\textbf{sd15.safetensors}\newline Stable Diffusion 1.5. The classic text-to-image checkpoint; one file holding the UNet, the CLIP-L text encoder and the VAE. Trained at 512x512, and the base that most community 1.5 finetunes and LoRAs build on.}
\usfield{Node \#13}{\textbf{sd15\_canny.safetensors}\newline ControlNet 1.1, canny conditioning. Trained against Stable Diffusion 1.5 on roughly 3M edge-map and caption pairs. The control image it expects is a canny edge map.}

\ussection{Q1. Fault decision}
\usfield{Question}{Is anything wrong with this workflow?}
\begin{itemize}
\item[$\bullet$] Yes, something is wrong
\item[$\bullet$] No, it looks fine\quad\textbf{(Accepted)}
\item[$\bullet$] Can't determine
\end{itemize}
\noindent A Yes answer opens Q2--Q5. The other answers lead to the participant ratings.\par\smallskip

\ussection{Q2--Q3. Localization choices}
\usfield{Q2. Region}{Which part is at fault? Pick ONE option.}
\usfield{Q3. Node}{Which node exactly? Pick the node you would change.}
\noindent Each region below lists its Q2 option followed by the complete Q3 node menu. Select one node within the selected region.\par\smallskip
\begin{enumerate}
\item[\texttt{R1}] Model loading --- CheckpointLoaderSimple \#4
\begin{itemize}
\item \#4 CheckpointLoaderSimple
\item I don't know
\end{itemize}
\item[\texttt{R2}] ControlNet branch --- LoadImage \#11, Canny \#12, ControlNetLoader \#13, ControlNetApplyAdvanced \#14
\begin{itemize}
\item \#11 LoadImage
\item \#12 Canny
\item \#13 ControlNetLoader
\item \#14 ControlNetApplyAdvanced
\item I don't know
\end{itemize}
\item[\texttt{R3}] Prompt encoding --- CLIPTextEncode \#6, \#7
\begin{itemize}
\item \#6 CLIPTextEncode
\item \#7 CLIPTextEncode
\item I don't know
\end{itemize}
\item[\texttt{R4}] Sampling \& latent --- KSampler \#3, EmptyLatentImage \#5
\begin{itemize}
\item \#3 KSampler
\item \#5 EmptyLatentImage
\item I don't know
\end{itemize}
\item[\texttt{R5}] Decode \& save --- VAEDecode \#8, SaveImage \#9
\begin{itemize}
\item \#8 VAEDecode
\item \#9 SaveImage
\item I don't know
\end{itemize}
\end{enumerate}
\noindent A Yes answer opens these region and node choices and counts as a false alarm on this valid workflow.\par\smallskip

\ussection{Q4--Q5. Repair choices}
\usfield{Q4. Question}{What kind of change?}
\begin{enumerate}
\item[\texttt{A}] Swap the model/file it loads
\item[\texttt{B}] Change a setting/widget value
\item[\texttt{C}] Rewire connections
\item[\texttt{D}] Replace with a different node
\item[\texttt{E}] Other
\item[$\bullet$] I don't know
\end{enumerate}
\usfield{Q5. Branch behavior}{A Yes answer exposes Q4. Categories A--D and I don't know lead directly to the ratings; category E (Other) requires a free-text explanation. Q5 is omitted for this valid workflow. The correct response preserves the workflow unchanged.}

\ussection{Q6--Q8. Participant ratings}
\usfield{Q6. Confidence}{"How confident are you?" Options: 1, 2, 3, 4, 5 (1: not at all; 5: very).}
\usfield{Q7. Report helpfulness}{"How helpful was the checker report on this task?" Options: 1, 2, 3, 4, 5 (1: not helpful; 5: very helpful). Appears in the treatment version.}
\usfield{Q8. Mental effort}{"How much mental effort did this task take?" Options: 1, 2, 3, 4, 5, 6, 7, 8, 9 (1: very, very low; 9: very, very high).}
\noindent These ratings record participants' subjective judgments and remain separate from accuracy scoring. The rating fields appear for every Q1 answer.\par\smallskip
\usfield{Optional comment}{"Anything else about your answer? (optional)" Participants may add a free-text comment.}

\ussection{ComfyTy diagnosis (treatment only)}
\begin{lstlisting}[style=usreport]
OK  [100% typed (11/11)]
\end{lstlisting}

\ustask{T11\quad\texttt{c-5098-cnfam}}
\usfield{Task class}{Crash}
\usfield{Source}{Adapted from \href{https://github.com/Comfy-Org/ComfyUI/issues/5098}{ComfyUI issue \#5098}.}
\usfield{Task description}{Generate an image with canny-edge ControlNet guidance.}
\usfield{Reference diagnosis}{SD1.5 ControlNet (control\_v11p\_sd15\_canny) used with an SDXL checkpoint (juggernautXL); swap either the ControlNet to an SDXL one or the checkpoint to SD1.5 (maintainer-confirmed).}

\ussection{Answer key}
\usfield{Q1. Decision}{The task supplies the runtime failure; diagnosis starts at Q2.}
\usfield{Q2. Region}{R1: Checkpoint loading --- CheckpointLoaderSimple \#4; R2: ControlNet loading --- ControlNetLoader \#14}
\usfield{Q3. Node}{\#4 CheckpointLoaderSimple; \#14 ControlNetLoader}
\usfield{Q4. Category}{A: Swap the model/file it loads}
\usfield{Q5. Specific repair}{A1: change \#14 control\_v11p\_sd15\_canny\_fp16 to an SDXL canny ControlNet; A2: change \#4 juggernautXL to an SD1.5 checkpoint}

\ussection{Workflow and observed behavior}
\begin{center}
\includegraphics[width=\linewidth,height=0.38\textheight,keepaspectratio]{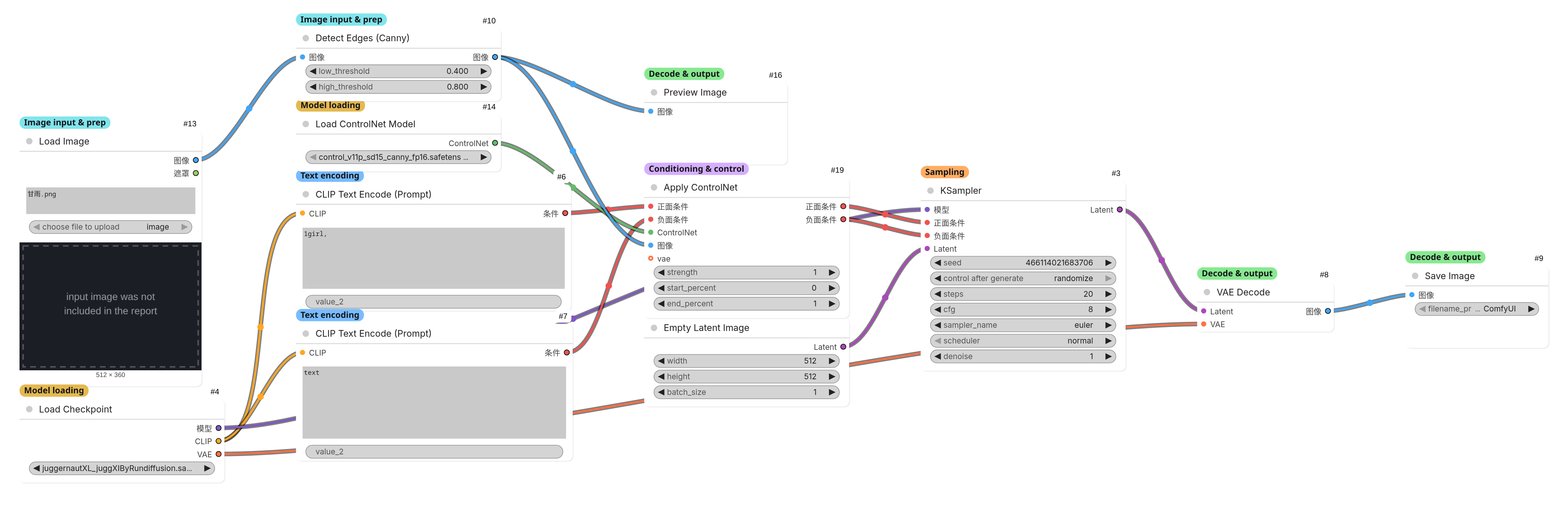}
\end{center}
\usfield{Graph}{12 nodes; 15 links. Node numbers connect the graph to the choices and diagnostic report.}
\usfield{Observed behavior}{ComfyUI reports the following runtime failure in both versions.}
\begin{lstlisting}[style=usreport]
# ComfyUI Error Report
## Error Details
- **Node ID:** 3
- **Node Type:** KSampler
- **Exception Type:** RuntimeError
- **Exception Message:** mat1 and mat2 shapes cannot be multiplied (154x2048 and 768x320)
\end{lstlisting}

\ussection{Model-file facts (shown in both versions)}
\usfield{Node \#4}{\textbf{juggernautXL\_juggXIByRundiffusion.safetensors}\newline Juggernaut XL v11, by RunDiffusion. Photorealistic community finetune of SDXL Base 1.0, tuned for portraits and product shots. Suggested settings: 832x1216 or 1024x1024, DPM++ 2M Karras, CFG 3-6.}
\usfield{Node \#14}{\textbf{control\_v11p\_sd15\_canny\_fp16.safetensors}\newline ControlNet 1.1 canny, fp16 build. The same weights as the full-precision release, stored at half precision to save VRAM. Base model: Stable Diffusion 1.5; the control image it expects is a canny edge map.}
\usfield{Input-image availability}{The study uses marked representative input images at nodes \#13 because the original inputs are unavailable in the source report.}

\ussection{Q1. Fault decision}
\noindent The error trace establishes the failure. Participants begin at Q2 and may select "Couldn't determine" to end the task.\par\smallskip

\ussection{Q2--Q3. Localization choices}
\usfield{Q2. Region}{1. Which part of the workflow caused the failure? Pick ONE option.}
\usfield{Q3. Node}{2. Your fix --- which node would you change?}
\noindent Each region below lists its Q2 option followed by the complete Q3 node menu. Select one node within the selected region.\par\smallskip
\begin{enumerate}
\item[\texttt{R1}] Checkpoint loading --- CheckpointLoaderSimple \#4\quad\textbf{(Accepted)}
\begin{itemize}
\item \#4 CheckpointLoaderSimple
\item I don't know
\end{itemize}
\item[\texttt{R2}] ControlNet loading --- ControlNetLoader \#14\quad\textbf{(Accepted)}
\begin{itemize}
\item \#14 ControlNetLoader
\item I don't know
\end{itemize}
\item[\texttt{R3}] Control image prep \& apply --- LoadImage \#13, Canny \#10, PreviewImage \#16, ControlNetApplyAdvanced \#19
\begin{itemize}
\item \#13 LoadImage
\item \#10 Canny
\item \#16 PreviewImage
\item \#19 ControlNetApplyAdvanced
\item I don't know
\end{itemize}
\item[\texttt{R4}] Prompt encoding --- CLIPTextEncode \#6, \#7
\begin{itemize}
\item \#6 CLIPTextEncode
\item \#7 CLIPTextEncode
\item I don't know
\end{itemize}
\item[\texttt{R5}] Sampling \& latent --- KSampler \#3, EmptyLatentImage \#5
\begin{itemize}
\item \#3 KSampler
\item \#5 EmptyLatentImage
\item I don't know
\end{itemize}
\item[\texttt{R6}] Decode \& save --- VAEDecode \#8, SaveImage \#9
\begin{itemize}
\item \#8 VAEDecode
\item \#9 SaveImage
\item I don't know
\end{itemize}
\end{enumerate}

\ussection{Q4--Q5. Repair choices}
\usfield{Q4. Question}{What kind of change?}
\begin{enumerate}
\item[\texttt{A}] Swap the model/file it loads\quad\textbf{(Accepted)}
\item[\texttt{B}] Change a setting/widget value
\item[\texttt{C}] Rewire connections
\item[\texttt{D}] Replace with a different node
\item[\texttt{E}] Other
\item[$\bullet$] I don't know
\end{enumerate}
\noindent Q5 displays the menu for the selected category A--D. Each menu also offers I don't know, which counts as an unsuccessful repair selection. Category E (Other) requires a free-text explanation. Selecting I don't know for Q4 proceeds directly to the participant ratings.\par\smallskip

\ussection{A. Swap the model/file it loads}
\usfield{Question}{Change the file to what?}
\begin{enumerate}
\item[\texttt{A1}] change \#14 control\_v11p\_sd15\_canny\_fp16 to an SDXL canny ControlNet\quad\textbf{(Accepted)}
\item[\texttt{A2}] change \#4 juggernautXL to an SD1.5 checkpoint\quad\textbf{(Accepted)}
\item[\texttt{A3}] change \#14 to an SD1.5 depth ControlNet
\item[\texttt{A4}] change \#14 to the full-precision build of the same ControlNet
\item[\texttt{A5}] change \#4 to a different SDXL checkpoint
\item[\texttt{A6}] change \#4 to the SDXL inpainting variant
\item[$\bullet$] I don't know
\end{enumerate}

\ussection{B. Change a setting/widget value}
\usfield{Question}{Which setting --- and to what?}
\begin{enumerate}
\item[\texttt{B1}] lower `strength` on ControlNetApplyAdvanced \#19 (now 1.0)
\item[\texttt{B2}] shorten `end\_percent` on \#19 (now 1.0)
\item[\texttt{B3}] tune the Canny \#10 thresholds (0.4 / 0.8)
\item[\texttt{B4}] raise EmptyLatentImage \#5 from 512 to 1024
\item[\texttt{B5}] increase `steps` on KSampler \#3 (now 20)
\item[$\bullet$] I don't know
\end{enumerate}

\ussection{C. Rewire connections}
\usfield{Question}{Which connection would you change?}
\begin{enumerate}
\item[\texttt{C1}] connect LoadImage \#13 straight into \#19 `image` (skip Canny \#10)
\item[\texttt{C2}] swap the positive/negative conditioning inputs of \#19
\item[\texttt{C3}] bypass the ControlNet branch entirely
\item[$\bullet$] I don't know
\end{enumerate}

\ussection{D. Replace with a different node}
\usfield{Question}{Replace it with what?}
\begin{enumerate}
\item[\texttt{D1}] replace Canny \#10 with a depth-map preprocessor
\item[\texttt{D2}] replace \#19 with the basic ControlNetApply node
\item[\texttt{D3}] replace ControlNetLoader \#14 with a T2I-Adapter loader
\item[$\bullet$] I don't know
\end{enumerate}

\ussection{Q6--Q8. Participant ratings}
\usfield{Q6. Confidence}{"How confident are you in this fix?" Options: 1, 2, 3, 4, 5 (1: not at all; 5: very).}
\usfield{Q7. Report helpfulness}{"How helpful was the checker report on this task?" Options: 1, 2, 3, 4, 5 (1: not helpful; 5: very helpful). Appears in the treatment version.}
\usfield{Q8. Mental effort}{"How much mental effort did this task take?" Options: 1, 2, 3, 4, 5, 6, 7, 8, 9 (1: very, very low; 9: very, very high).}
\noindent These ratings record participants' subjective judgments and remain separate from accuracy scoring. For crashes, the rating fields appear after region selection.\par\smallskip
\usfield{Optional comment}{"Anything else about your answer? (optional)" Participants may add a free-text comment.}

\ussection{ComfyTy diagnosis (treatment only)}
\begin{lstlisting}[style=usreport]
ILL-TYPED (brand)  [100% typed (12/12)]
    Model family/architecture conflict: SD15 vs SDXL.
    Involved (minimal core):
      - CheckpointLoaderSimple #4 (ckpt=SDXL via filename: juggernautXL_juggXIByRundiffusion.safetensors)
      - ControlNetLoader #14 (controlnet-arch=SD15 via filename: control_v11p_sd15_canny_fp16.safetensors)
      - edge CheckpointLoaderSimple #4 -> CLIPTextEncode #7.clip
      - edge CLIPTextEncode #7 -> ControlNetApplyAdvanced #19.negative
      - edge ControlNetLoader #14 -> ControlNetApplyAdvanced #19.control_net
    Possible fixes:
      - change the file at CheckpointLoaderSimple #4 (ckpt=SDXL via filename: juggernautXL_juggXIByRundiffusion.safetensors) to one matching the rest of the graph
      - change the file at ControlNetLoader #14 (controlnet-arch=SD15 via filename: control_v11p_sd15_canny_fp16.safetensors) to one matching the rest of the graph
\end{lstlisting}

\ustask{T12\quad\texttt{c-7335-ctrllora}}
\usfield{Task class}{Crash}
\usfield{Source}{Adapted from \href{https://github.com/Comfy-Org/ComfyUI/issues/7335}{ComfyUI issue \#7335}.}
\usfield{Task description}{Inpaint an image region with canny ControlNet guidance (SDXL).}
\usfield{Reference diagnosis}{A Control-LoRA (control-lora-canny) is applied while sampling with an SDXL *inpainting* model (9-channel UNet input); use a regular SDXL base for the ControlNet pass or a standard (non-Control-LoRA) ControlNet.}

\ussection{Answer key}
\usfield{Q1. Decision}{The task supplies the runtime failure; diagnosis starts at Q2.}
\usfield{Q2. Region}{R1: Checkpoint loading --- CheckpointLoaderSimple \#18; R2: ControlNet branch --- ControlNetLoader \#43, ControlNetApplyAdvanced \#44, CannyEdgePreprocessor \#45}
\usfield{Q3. Node}{\#18 CheckpointLoaderSimple; \#43 ControlNetLoader}
\usfield{Q4. Category}{A: Swap the model/file it loads}
\usfield{Q5. Specific repair}{A1: change \#43 control-lora-canny-rank256 to a standard (non-Control-LoRA) SDXL canny ControlNet; A2: change \#18 dreamshaperXL\_lightningInpaint to the non-inpaint (base) build}

\ussection{Workflow and observed behavior}
\begin{center}
\includegraphics[width=\linewidth,height=0.38\textheight,keepaspectratio]{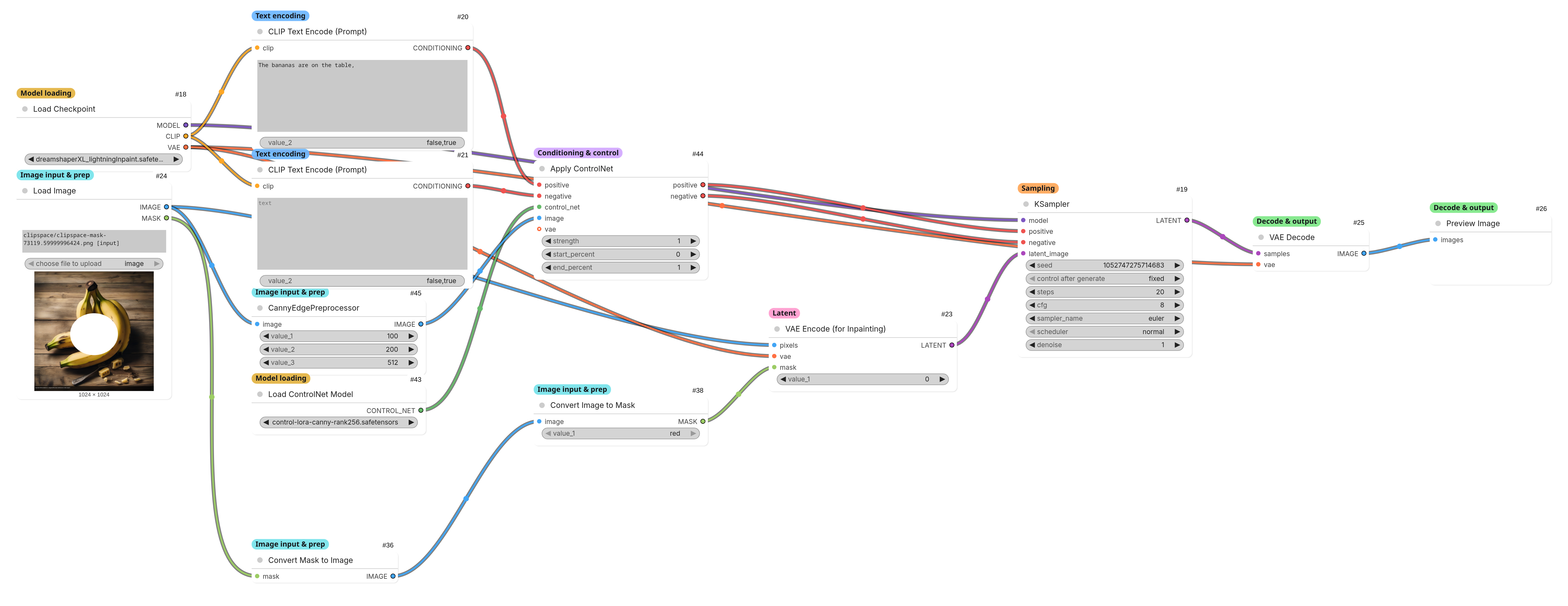}
\end{center}
\usfield{Graph}{13 nodes; 19 links. Node numbers connect the graph to the choices and diagnostic report.}
\usfield{Observed behavior}{ComfyUI reports the following runtime failure in both versions.}
\begin{lstlisting}[style=usreport]
!!! Exception during processing !!! shape '[320, 9, 3, 3]' is invalid for input of size 11520

# ComfyUI Error Report
## Error Details
- **Node ID:** 19
- **Node Type:** KSampler
- **Exception Type:** RuntimeError
- **Exception Message:** shape '[320, 9, 3, 3]' is invalid for input of size 11520
\end{lstlisting}

\ussection{Model-file facts (shown in both versions)}
\usfield{Node \#18}{\textbf{dreamshaperXL\_lightningInpaint.safetensors}\newline DreamShaper XL Lightning, inpainting edition. An SDXL finetune specialised for filling masked regions: its UNet takes nine input channels (four latent, one mask, four masked image) where a standard model takes four. Lightning schedule, 4-8 steps at CFG 1.5-2.}
\usfield{Node \#43}{\textbf{control-lora-canny-rank256.safetensors}\newline Control-LoRA, canny, rank 256, from Stability AI. A ControlNet distilled into LoRA form for SDXL: rather than running a parallel control network it patches the base model's own weights. Rank-128 and rank-256 variants are published.}

\ussection{Q1. Fault decision}
\noindent The error trace establishes the failure. Participants begin at Q2 and may select "Couldn't determine" to end the task.\par\smallskip

\ussection{Q2--Q3. Localization choices}
\usfield{Q2. Region}{1. Which part of the workflow caused the failure? Pick ONE option.}
\usfield{Q3. Node}{2. Your fix --- which node would you change?}
\noindent Each region below lists its Q2 option followed by the complete Q3 node menu. Select one node within the selected region.\par\smallskip
\begin{enumerate}
\item[\texttt{R1}] Checkpoint loading --- CheckpointLoaderSimple \#18\quad\textbf{(Accepted)}
\begin{itemize}
\item \#18 CheckpointLoaderSimple
\item I don't know
\end{itemize}
\item[\texttt{R2}] ControlNet branch --- ControlNetLoader \#43, ControlNetApplyAdvanced \#44, CannyEdgePreprocessor \#45\quad\textbf{(Accepted)}
\begin{itemize}
\item \#43 ControlNetLoader
\item \#44 ControlNetApplyAdvanced
\item \#45 CannyEdgePreprocessor
\item I don't know
\end{itemize}
\item[\texttt{R3}] Inpainting input \& mask --- LoadImage \#24, VAEEncodeForInpaint \#23, ImageToMask \#38, MaskToImage \#36
\begin{itemize}
\item \#24 LoadImage
\item \#23 VAEEncodeForInpaint
\item \#38 ImageToMask
\item \#36 MaskToImage
\item I don't know
\end{itemize}
\item[\texttt{R4}] Prompt encoding --- CLIPTextEncode \#20, \#21
\begin{itemize}
\item \#20 CLIPTextEncode
\item \#21 CLIPTextEncode
\item I don't know
\end{itemize}
\item[\texttt{R5}] Sampling \& decode --- KSampler \#19, VAEDecode \#25, PreviewImage \#26
\begin{itemize}
\item \#19 KSampler
\item \#25 VAEDecode
\item \#26 PreviewImage
\item I don't know
\end{itemize}
\end{enumerate}

\ussection{Q4--Q5. Repair choices}
\usfield{Q4. Question}{What kind of change?}
\begin{enumerate}
\item[\texttt{A}] Swap the model/file it loads\quad\textbf{(Accepted)}
\item[\texttt{B}] Change a setting/widget value
\item[\texttt{C}] Rewire connections
\item[\texttt{D}] Replace with a different node
\item[\texttt{E}] Other
\item[$\bullet$] I don't know
\end{enumerate}
\noindent Q5 displays the menu for the selected category A--D. Each menu also offers I don't know, which counts as an unsuccessful repair selection. Category E (Other) requires a free-text explanation. Selecting I don't know for Q4 proceeds directly to the participant ratings.\par\smallskip

\ussection{A. Swap the model/file it loads}
\usfield{Question}{Change the file to what?}
\begin{enumerate}
\item[\texttt{A1}] change \#43 control-lora-canny-rank256 to a standard (non-Control-LoRA) SDXL canny ControlNet\quad\textbf{(Accepted)}
\item[\texttt{A2}] change \#18 dreamshaperXL\_lightningInpaint to the non-inpaint (base) build\quad\textbf{(Accepted)}
\item[\texttt{A3}] change \#43 to an SD1.5 canny ControlNet
\item[\texttt{A4}] change \#43 to a higher-rank build of the Control-LoRA
\item[\texttt{A5}] change \#18 to an SD1.5 inpaint model
\item[$\bullet$] I don't know
\end{enumerate}

\ussection{B. Change a setting/widget value}
\usfield{Question}{Which setting --- and to what?}
\begin{enumerate}
\item[\texttt{B1}] lower `denoise` on KSampler \#19
\item[\texttt{B2}] change `grow\_mask\_by` on VAEEncodeForInpaint \#23
\item[\texttt{B3}] lower the ControlNet strength on \#44
\item[\texttt{B4}] increase the sampler `steps`
\item[$\bullet$] I don't know
\end{enumerate}

\ussection{C. Rewire connections}
\usfield{Question}{Which connection would you change?}
\begin{enumerate}
\item[\texttt{C1}] wire the mask from LoadImage \#24 directly (skip \#38/\#36)
\item[\texttt{C2}] wire \#24's image straight to the sampler branch
\item[\texttt{C3}] apply the ControlNet after the inpaint encode
\item[$\bullet$] I don't know
\end{enumerate}

\ussection{D. Replace with a different node}
\usfield{Question}{Replace it with what?}
\begin{enumerate}
\item[\texttt{D1}] replace ControlNetApplyAdvanced \#44 with basic ControlNetApply
\item[\texttt{D2}] replace VAEEncodeForInpaint \#23 with plain VAEEncode (drop the mask)
\item[\texttt{D3}] replace CannyEdgePreprocessor \#45 with a depth preprocessor
\item[$\bullet$] I don't know
\end{enumerate}

\ussection{Q6--Q8. Participant ratings}
\usfield{Q6. Confidence}{"How confident are you in this fix?" Options: 1, 2, 3, 4, 5 (1: not at all; 5: very).}
\usfield{Q7. Report helpfulness}{"How helpful was the checker report on this task?" Options: 1, 2, 3, 4, 5 (1: not helpful; 5: very helpful). Appears in the treatment version.}
\usfield{Q8. Mental effort}{"How much mental effort did this task take?" Options: 1, 2, 3, 4, 5, 6, 7, 8, 9 (1: very, very low; 9: very, very high).}
\noindent These ratings record participants' subjective judgments and remain separate from accuracy scoring. For crashes, the rating fields appear after region selection.\par\smallskip
\usfield{Optional comment}{"Anything else about your answer? (optional)" Participants may add a free-text comment.}

\ussection{ComfyTy diagnosis (treatment only)}
\begin{lstlisting}[style=usreport]
ILL-TYPED (brand)  [100% typed (13/13)]
    Model family/architecture conflict: SDXL_Inpaint vs ControlLora.
    Involved (minimal core):
      - KSampler #19
      - CheckpointLoaderSimple #18 (ckpt=SDXL_Inpaint via filename: dreamshaperXL_lightningInpaint.safetensors)
      - ControlNetApplyAdvanced #44
      - ControlNetLoader #43 (controlnet=Edge via filename: control-lora-canny-rank256.safetensors)
      - edge KSampler #19 -> VAEDecode #25.samples
      - edge CheckpointLoaderSimple #18 -> VAEDecode #25.vae
      - edge VAEDecode #25 -> PreviewImage #26.images
      - edge LoadImage #24 -> VAEEncodeForInpaint #23.pixels
      - edge CheckpointLoaderSimple #18 -> VAEEncodeForInpaint #23.vae
      - edge CheckpointLoaderSimple #18 -> KSampler #19.model
      - edge ControlNetApplyAdvanced #44 -> KSampler #19.positive
      - edge ControlNetApplyAdvanced #44 -> KSampler #19.negative
      - edge VAEEncodeForInpaint #23 -> KSampler #19.latent_image
      - edge LoadImage #24 -> CannyEdgePreprocessor #45.image
      - edge CLIPTextEncode #20 -> ControlNetApplyAdvanced #44.positive
      - edge CLIPTextEncode #21 -> ControlNetApplyAdvanced #44.negative
      - edge ControlNetLoader #43 -> ControlNetApplyAdvanced #44.control_net
      - edge CannyEdgePreprocessor #45 -> ControlNetApplyAdvanced #44.image
      - edge CheckpointLoaderSimple #18 -> CLIPTextEncode #20.clip
      - edge CheckpointLoaderSimple #18 -> CLIPTextEncode #21.clip
      - edge MaskToImage #36 -> ImageToMask #38.image
      - edge ImageToMask #38 -> VAEEncodeForInpaint #23.mask
      - edge LoadImage #24 -> MaskToImage #36.mask
    Possible fixes:
      - change the file at CheckpointLoaderSimple #18 (ckpt=SDXL_Inpaint via filename: dreamshaperXL_lightningInpaint.safetensors) to one matching the rest of the graph
      - change the file at ControlNetLoader #43 (controlnet=Edge via filename: control-lora-canny-rank256.safetensors) to one matching the rest of the graph
      - change the file at ControlNetLoader #43 (controlnet-arch=SDXL via filename: control-lora-canny-rank256.safetensors) to one matching the rest of the graph
      - change the file at ControlNetLoader #43 (controlnet-kind=ControlLora via filename: control-lora-canny-rank256.safetensors) to one matching the rest of the graph
\end{lstlisting}
\endgroup

%% file: sections/user-study-format.tex
\color{black}
\newcommand{\ustask}[1]{\par\addvspace{12pt}\noindent
  {\normalsize\bfseries #1}\par\nobreak\smallskip}
\newcommand{\ussection}[1]{\par\addvspace{6pt}\noindent
  {\bfseries #1}\par\nobreak\smallskip}
\newcommand{\usfield}[2]{\par\noindent\textbf{#1.} #2\par\smallskip}
\setlist[enumerate]{leftmargin=3em,labelsep=0.6em,
  topsep=3pt,itemsep=2pt,parsep=0pt,partopsep=0pt}
\setlist[itemize]{leftmargin=1.5em,
  topsep=2pt,itemsep=1pt,parsep=0pt,partopsep=0pt}
\lstdefinestyle{usreport}{language={},basicstyle=\footnotesize\ttfamily\color{black},
  columns=fullflexible,keepspaces=true,breaklines=true,breakatwhitespace=false,
  showstringspaces=false,numbers=none,frame=none,
  backgroundcolor=\color{white},aboveskip=3pt,belowskip=3pt,
  xleftmargin=0pt,xrightmargin=0pt,resetmargins=true}

%% file: study/data/user_study_results.tex
\begin{table}[!htbp]
\caption{\textbf{Per-trial user-study results.} Pnn/Tnn: user/task;
B/\underline{T}: baseline/treatment (without/with diagnostics).
Bits: decision, region, node, repair type, fix, overall (1/0: correct/incorrect).
*: timeout (zero credit); darker green: more correct bits;
summary pairs: baseline:treatment successes/trials.}
\label{tab:user-study}
\centering
\begingroup
\setlength{\tabcolsep}{0.85pt}
\renewcommand{\arraystretch}{1.15}
\definecolor{usgreen}{HTML}{238B45}
\newcommand{\uscell}[5]{\cellcolor{usgreen!#5}\if T#1\underline{\textbf{#1}#2#3}\else\textbf{#1}#2#3\fi\rlap{#4}}
\newcommand{\usvalue}[3]{\if T#3\underline{#1}\else#1\fi}
\scriptsize
\begin{tabular}{@{}l*{14}{c}@{}}
\toprule
& \multicolumn{6}{c}{Silent faults} & \multicolumn{4}{c}{Valid workflows}
& \multicolumn{2}{c}{Crashes} & \multicolumn{2}{c}{Summary} \\
\cmidrule(lr){2-7}\cmidrule(lr){8-11}\cmidrule(lr){12-13}\cmidrule(lr){14-15}
User & T01 & T02 & T03 & T04 & T05 & T06 & T07 & T08 & T09 & T10 & T11 & T12 & Solved & Repair \\
\midrule
P01 & \uscell{B}{000}{000}{}{0} & \uscell{B}{111}{100}{}{30} & \uscell{T}{111}{111}{}{45} & \uscell{T}{111}{100}{\textsuperscript{*}}{0} & \uscell{B}{000}{000}{}{0} & \uscell{T}{111}{111}{}{45} & \uscell{B}{111}{111}{}{45} & \uscell{T}{111}{111}{}{45} & \uscell{T}{111}{111}{}{45} & \uscell{B}{111}{111}{}{45} & \uscell{T}{111}{111}{}{45} & \uscell{B}{000}{000}{}{0} & \usvalue{3/6}{23}{B}:\usvalue{5/6}{38}{T} & \usvalue{0/4}{0}{B}:\usvalue{3/4}{34}{T} \\
P02 & \uscell{B}{000}{000}{}{0} & \uscell{T}{111}{000}{}{23} & \uscell{B}{100}{110}{\textsuperscript{*}}{0} & \uscell{T}{111}{111}{}{45} & \uscell{B}{111}{111}{}{45} & \uscell{T}{111}{111}{}{45} & \uscell{B}{111}{111}{}{45} & \uscell{B}{111}{111}{}{45} & \uscell{T}{111}{111}{}{45} & \uscell{T}{111}{111}{}{45} & \uscell{T}{111}{111}{}{45} & \uscell{B}{111}{111}{}{45} & \usvalue{4/6}{30}{B}:\usvalue{6/6}{45}{T} & \usvalue{2/4}{23}{B}:\usvalue{3/4}{34}{T} \\
P03 & \uscell{T}{111}{111}{}{45} & \uscell{T}{111}{111}{}{45} & \uscell{B}{100}{000}{}{8} & \uscell{T}{111}{000}{}{23} & \uscell{B}{100}{000}{}{8} & \uscell{B}{111}{111}{}{45} & \uscell{T}{111}{111}{}{45} & \uscell{T}{111}{111}{}{45} & \uscell{B}{111}{111}{}{45} & \uscell{B}{000}{000}{}{0} & \uscell{B}{000}{000}{}{0} & \uscell{T}{111}{100}{}{30} & \usvalue{2/6}{15}{B}:\usvalue{6/6}{45}{T} & \usvalue{1/4}{11}{B}:\usvalue{2/4}{23}{T} \\
P04 & \uscell{B}{000}{000}{}{0} & \uscell{T}{111}{111}{}{45} & \uscell{B}{100}{000}{}{8} & \uscell{T}{111}{100}{}{30} & \uscell{B}{111}{111}{}{45} & \uscell{T}{111}{111}{}{45} & \uscell{B}{111}{111}{}{45} & \uscell{T}{111}{111}{}{45} & \uscell{B}{111}{111}{}{45} & \uscell{T}{111}{111}{}{45} & \uscell{T}{111}{111}{}{45} & \uscell{B}{100}{000}{}{8} & \usvalue{3/6}{23}{B}:\usvalue{6/6}{45}{T} & \usvalue{1/4}{11}{B}:\usvalue{3/4}{34}{T} \\
P05 & \uscell{T}{111}{111}{}{45} & \uscell{B}{111}{000}{}{23} & \uscell{T}{111}{111}{}{45} & \uscell{B}{111}{000}{\textsuperscript{*}}{0} & \uscell{T}{111}{111}{}{45} & \uscell{B}{111}{111}{}{45} & \uscell{B}{111}{111}{}{45} & \uscell{T}{111}{111}{}{45} & \uscell{B}{111}{111}{}{45} & \uscell{T}{111}{111}{}{45} & \uscell{T}{111}{111}{}{45} & \uscell{B}{100}{000}{}{8} & \usvalue{4/6}{30}{B}:\usvalue{6/6}{45}{T} & \usvalue{1/4}{11}{B}:\usvalue{4/4}{45}{T} \\
P06 & \uscell{T}{100}{100}{}{15} & \uscell{B}{110}{000}{}{15} & \uscell{B}{000}{000}{}{0} & \uscell{T}{100}{000}{}{8} & \uscell{B}{000}{000}{}{0} & \uscell{T}{111}{100}{}{30} & \uscell{T}{111}{111}{}{45} & \uscell{B}{111}{111}{}{45} & \uscell{T}{111}{111}{}{45} & \uscell{B}{111}{111}{}{45} & \uscell{B}{100}{000}{}{8} & \uscell{T}{111}{100}{}{30} & \usvalue{2/6}{15}{B}:\usvalue{4/6}{30}{T} & \usvalue{0/4}{0}{B}:\usvalue{0/4}{0}{T} \\
P07 & \uscell{B}{100}{000}{}{8} & \uscell{T}{111}{111}{}{45} & \uscell{T}{100}{000}{}{8} & \uscell{B}{100}{100}{}{15} & \uscell{T}{111}{111}{}{45} & \uscell{B}{111}{111}{}{45} & \uscell{T}{111}{111}{}{45} & \uscell{T}{111}{111}{}{45} & \uscell{B}{111}{111}{}{45} & \uscell{B}{111}{111}{}{45} & \uscell{T}{111}{111}{}{45} & \uscell{B}{111}{111}{}{45} & \usvalue{4/6}{30}{B}:\usvalue{5/6}{38}{T} & \usvalue{2/4}{23}{B}:\usvalue{3/4}{34}{T} \\
P08 & \uscell{T}{111}{111}{}{45} & \uscell{B}{111}{111}{}{45} & \uscell{T}{111}{000}{\textsuperscript{*}}{0} & \uscell{B}{100}{100}{}{15} & \uscell{T}{111}{111}{}{45} & \uscell{B}{111}{111}{}{45} & \uscell{B}{111}{111}{}{45} & \uscell{B}{111}{111}{}{45} & \uscell{T}{111}{111}{}{45} & \uscell{T}{111}{111}{}{45} & \uscell{B}{111}{111}{}{45} & \uscell{T}{111}{111}{}{45} & \usvalue{5/6}{38}{B}:\usvalue{5/6}{38}{T} & \usvalue{3/4}{34}{B}:\usvalue{3/4}{34}{T} \\
P09 & \uscell{B}{110}{000}{}{15} & \uscell{B}{111}{111}{}{45} & \uscell{T}{111}{000}{}{23} & \uscell{B}{100}{000}{}{8} & \uscell{T}{111}{111}{}{45} & \uscell{T}{111}{111}{}{45} & \uscell{T}{111}{111}{}{45} & \uscell{B}{111}{111}{}{45} & \uscell{B}{111}{111}{}{45} & \uscell{T}{111}{111}{}{45} & \uscell{B}{100}{100}{}{15} & \uscell{T}{111}{111}{}{45} & \usvalue{3/6}{23}{B}:\usvalue{6/6}{45}{T} & \usvalue{1/4}{11}{B}:\usvalue{3/4}{34}{T} \\
P10 & \uscell{T}{111}{100}{}{30} & \uscell{T}{111}{000}{}{23} & \uscell{B}{100}{000}{}{8} & \uscell{B}{100}{100}{}{15} & \uscell{T}{111}{100}{}{30} & \uscell{B}{111}{111}{}{45} & \uscell{T}{111}{111}{}{45} & \uscell{B}{111}{111}{}{45} & \uscell{T}{111}{111}{}{45} & \uscell{B}{111}{111}{}{45} & \uscell{B}{100}{000}{}{8} & \uscell{T}{111}{111}{}{45} & \usvalue{3/6}{23}{B}:\usvalue{6/6}{45}{T} & \usvalue{1/4}{11}{B}:\usvalue{1/4}{11}{T} \\
\midrule
\multicolumn{5}{@{}l}{Overall summary}
& \multicolumn{4}{c}{Without diagnostics}
& \multicolumn{4}{c}{With diagnostics}
& \multicolumn{2}{c}{Change (pp)} \\
\midrule
\multicolumn{5}{@{}l}{Correct diagnosis (all workflows)}
& \multicolumn{4}{c}{\usvalue{33/60 (55.0\%)}{25}{B}}
& \multicolumn{4}{c}{\usvalue{55/60 (91.7\%)}{41}{T}}
& \multicolumn{2}{c}{+36.7} \\
\multicolumn{5}{@{}l}{Fault localization (faulty workflows)}
& \multicolumn{4}{c}{\usvalue{14/40 (35.0\%)}{16}{B}}
& \multicolumn{4}{c}{\usvalue{35/40 (87.5\%)}{39}{T}}
& \multicolumn{2}{c}{+52.5} \\
\multicolumn{5}{@{}l}{Valid repair selection (faulty workflows)}
& \multicolumn{4}{c}{\usvalue{12/40 (30.0\%)}{14}{B}}
& \multicolumn{4}{c}{\usvalue{25/40 (62.5\%)}{28}{T}}
& \multicolumn{2}{c}{+32.5} \\
\bottomrule
\end{tabular}
\endgroup
\end{table}

%% file: study/data/user_study_details.tex
This section reports the submitted choices and subjective ratings for the 120 scored trials. B and T denote baseline and treatment. The results cover the 12 study tasks and exclude practice trials.
\begingroup
\small
\color{black}
\setlength{\tabcolsep}{4pt}\setlength{\LTleft}{\fill}\setlength{\LTright}{\fill}
\renewcommand{\arraystretch}{1.15}
\par\medskip\noindent\textbf{Answer choices.} Table~\ref{tab:user-choice-details} shows the answers for each participant and task, using the anonymous participant IDs in Table~\ref{tab:user-study}. R1, R2, \ldots\ identify regions; \#N identifies a node; A--E identify repair categories; and A1, A2, \ldots\ identify concrete repairs, using the labels in Appendix~\ref{app:user-study:tasks}. Bold labels mark accepted individual answers. A correct individual choice can accompany errors at other stages. DK means `I don't know'; NE marks a field omitted by question branching; UA marks an unanswered field in an open branch. An em dash marks a question absent from the task: Q1 for crashes and Q5 for valid workflows. Timeout and give-up submissions retain their partial answers; Status records how each trial ends.\par
\begin{longtable}{@{}lllllllll@{}}
\caption{Submitted answer choices.}\label{tab:user-choice-details}\\
\toprule
Task & ID & B/T & Q1: fault & Q2: region & Q3: node & Q4: category & Q5: repair & Status \\
\midrule
\endfirsthead
\multicolumn{9}{l}{\tablename~\thetable\ (continued)}\\
\toprule
Task & ID & B/T & Q1: fault & Q2: region & Q3: node & Q4: category & Q5: repair & Status \\
\midrule
\endhead
\midrule
\multicolumn{9}{r}{Continued on next page}\\
\endfoot
\bottomrule
\endlastfoot
T01 & P01 & B & No & NE & NE & NE & NE & Submit \\
T01 & P02 & B & No & NE & NE & NE & NE & Submit \\
T01 & P03 & T & \textbf{Yes} & \textbf{R2} & \textbf{\#13} & \textbf{A} & \textbf{A1} & Submit \\
T01 & P04 & B & No & NE & NE & NE & NE & Submit \\
T01 & P05 & T & \textbf{Yes} & \textbf{R2} & \textbf{\#13} & \textbf{A} & \textbf{A1} & Submit \\
T01 & P06 & T & \textbf{Yes} & R1 & DK & \textbf{A} & A2 & Submit \\
T01 & P07 & B & \textbf{Yes} & R4 & \#3 & B & B4 & Submit \\
T01 & P08 & T & \textbf{Yes} & \textbf{R2} & \textbf{\#13} & \textbf{A} & \textbf{A1} & Submit \\
T01 & P09 & B & \textbf{Yes} & \textbf{R2} & \#14 & C & DK & Submit \\
T01 & P10 & T & \textbf{Yes} & \textbf{R2} & \textbf{\#13} & \textbf{A} & A5 & Submit \\
\midrule
T02 & P01 & B & \textbf{Yes} & \textbf{R2} & \textbf{\#12} & \textbf{B} & B2 & Submit \\
T02 & P02 & T & \textbf{Yes} & \textbf{R2} & \textbf{\#16} & C & C2 & Submit \\
T02 & P03 & T & \textbf{Yes} & \textbf{R2} & \textbf{\#12} & \textbf{B} & \textbf{B1} & Submit \\
T02 & P04 & T & \textbf{Yes} & \textbf{R2} & \textbf{\#12} & \textbf{B} & \textbf{B1} & Submit \\
T02 & P05 & B & \textbf{Yes} & \textbf{R2} & \textbf{\#16} & C & C2 & Submit \\
T02 & P06 & B & \textbf{Yes} & \textbf{R2} & DK & DK & NE & Submit \\
T02 & P07 & T & \textbf{Yes} & \textbf{R2} & \textbf{\#12} & \textbf{B} & \textbf{B1} & Submit \\
T02 & P08 & B & \textbf{Yes} & \textbf{R2} & \textbf{\#12} & \textbf{B} & \textbf{B1} & Submit \\
T02 & P09 & B & \textbf{Yes} & \textbf{R2} & \textbf{\#12} & \textbf{B} & \textbf{B1} & Submit \\
T02 & P10 & T & \textbf{Yes} & \textbf{R2} & \textbf{\#16} & D & D2 & Submit \\
\midrule
T03 & P01 & T & \textbf{Yes} & \textbf{R1} & \textbf{\#12} & \textbf{B} & \textbf{B1} & Submit \\
T03 & P02 & B & \textbf{Yes} & R5 & \#10 & \textbf{B} & \textbf{B1} & Timeout \\
T03 & P03 & B & \textbf{Yes} & R5 & \#73 & C & C2 & Submit \\
T03 & P04 & B & \textbf{Yes} & R5 & \#10 & A & A1 & Submit \\
T03 & P05 & T & \textbf{Yes} & \textbf{R1} & \textbf{\#12} & \textbf{B} & \textbf{B1} & Submit \\
T03 & P06 & B & Can't determine & NE & NE & NE & NE & Submit \\
T03 & P07 & T & \textbf{Yes} & R5 & \#10 & A & DK & Submit \\
T03 & P08 & T & \textbf{Yes} & \textbf{R1} & \textbf{\#12} & A & A3 & Timeout \\
T03 & P09 & T & \textbf{Yes} & \textbf{R1} & \textbf{\#12} & A & A4 & Submit \\
T03 & P10 & B & \textbf{Yes} & R5 & \#10 & A & A2 & Submit \\
\midrule
T04 & P01 & T & \textbf{Yes} & \textbf{R1} & \textbf{\#4} & \textbf{A} & A2 & Timeout \\
T04 & P02 & T & \textbf{Yes} & \textbf{R1} & \textbf{\#4} & \textbf{D} & \textbf{D1} & Submit \\
T04 & P03 & T & \textbf{Yes} & \textbf{R1} & \textbf{\#4} & C & C3 & Submit \\
T04 & P04 & T & \textbf{Yes} & \textbf{R1} & \textbf{\#4} & \textbf{A} & A2 & Submit \\
T04 & P05 & B & \textbf{Yes} & \textbf{R1} & \textbf{\#4} & B & UA & Timeout \\
T04 & P06 & T & \textbf{Yes} & R5 & \#3 & C & C3 & Submit \\
T04 & P07 & B & \textbf{Yes} & R4 & \#29 & \textbf{A} & A3 & Submit \\
T04 & P08 & B & \textbf{Yes} & R3 & \#56 & \textbf{A} & A5 & Submit \\
T04 & P09 & B & \textbf{Yes} & R2 & DK & DK & NE & Submit \\
T04 & P10 & B & \textbf{Yes} & R6 & \#43 & \textbf{A} & A2 & Submit \\
\midrule
T05 & P01 & B & No & NE & NE & NE & NE & Submit \\
T05 & P02 & B & \textbf{Yes} & \textbf{R2} & \textbf{\#10} & \textbf{A} & \textbf{A1} & Submit \\
T05 & P03 & B & \textbf{Yes} & R4 & \#3 & B & B3 & Submit \\
T05 & P04 & B & \textbf{Yes} & \textbf{R2} & \textbf{\#10} & \textbf{A} & \textbf{A1} & Submit \\
T05 & P05 & T & \textbf{Yes} & \textbf{R2} & \textbf{\#10} & \textbf{A} & \textbf{A1} & Submit \\
T05 & P06 & B & No & NE & NE & NE & NE & Submit \\
T05 & P07 & T & \textbf{Yes} & \textbf{R2} & \textbf{\#10} & \textbf{A} & \textbf{A1} & Submit \\
T05 & P08 & T & \textbf{Yes} & \textbf{R2} & \textbf{\#10} & \textbf{A} & \textbf{A1} & Submit \\
T05 & P09 & T & \textbf{Yes} & \textbf{R2} & \textbf{\#10} & \textbf{A} & \textbf{A1} & Submit \\
T05 & P10 & T & \textbf{Yes} & \textbf{R1} & \textbf{\#4} & \textbf{A} & DK & Submit \\
\midrule
T06 & P01 & T & \textbf{Yes} & \textbf{R2} & \textbf{\#20} & \textbf{A} & \textbf{A1} & Submit \\
T06 & P02 & T & \textbf{Yes} & \textbf{R2} & \textbf{\#20} & \textbf{A} & \textbf{A1} & Submit \\
T06 & P03 & B & \textbf{Yes} & \textbf{R2} & \textbf{\#20} & \textbf{A} & \textbf{A1} & Submit \\
T06 & P04 & T & \textbf{Yes} & \textbf{R2} & \textbf{\#20} & \textbf{A} & \textbf{A1} & Submit \\
T06 & P05 & B & \textbf{Yes} & \textbf{R2} & \textbf{\#20} & \textbf{C} & \textbf{C1} & Submit \\
T06 & P06 & T & \textbf{Yes} & \textbf{R2} & \textbf{\#20} & \textbf{A} & A2 & Submit \\
T06 & P07 & B & \textbf{Yes} & \textbf{R2} & \textbf{\#20} & \textbf{A} & \textbf{A1} & Submit \\
T06 & P08 & B & \textbf{Yes} & \textbf{R5} & \textbf{\#8} & \textbf{C} & \textbf{C1} & Submit \\
T06 & P09 & T & \textbf{Yes} & \textbf{R2} & \textbf{\#20} & \textbf{A} & \textbf{A1} & Submit \\
T06 & P10 & B & \textbf{Yes} & \textbf{R2} & \textbf{\#20} & \textbf{A} & \textbf{A1} & Submit \\
\midrule
T07 & P01 & B & \textbf{No} & NE & NE & NE & --- & Submit \\
T07 & P02 & B & \textbf{No} & NE & NE & NE & --- & Submit \\
T07 & P03 & T & \textbf{No} & NE & NE & NE & --- & Submit \\
T07 & P04 & B & \textbf{No} & NE & NE & NE & --- & Submit \\
T07 & P05 & B & \textbf{No} & NE & NE & NE & --- & Submit \\
T07 & P06 & T & \textbf{No} & NE & NE & NE & --- & Submit \\
T07 & P07 & T & \textbf{No} & NE & NE & NE & --- & Submit \\
T07 & P08 & B & \textbf{No} & NE & NE & NE & --- & Submit \\
T07 & P09 & T & \textbf{No} & NE & NE & NE & --- & Submit \\
T07 & P10 & T & \textbf{No} & NE & NE & NE & --- & Submit \\
\midrule
T08 & P01 & T & \textbf{No} & NE & NE & NE & --- & Submit \\
T08 & P02 & B & \textbf{No} & NE & NE & NE & --- & Submit \\
T08 & P03 & T & \textbf{No} & NE & NE & NE & --- & Submit \\
T08 & P04 & T & \textbf{No} & NE & NE & NE & --- & Submit \\
T08 & P05 & T & \textbf{No} & NE & NE & NE & --- & Submit \\
T08 & P06 & B & \textbf{No} & NE & NE & NE & --- & Submit \\
T08 & P07 & T & \textbf{No} & NE & NE & NE & --- & Submit \\
T08 & P08 & B & \textbf{No} & NE & NE & NE & --- & Submit \\
T08 & P09 & B & \textbf{No} & NE & NE & NE & --- & Submit \\
T08 & P10 & B & \textbf{No} & NE & NE & NE & --- & Submit \\
\midrule
T09 & P01 & T & \textbf{No} & NE & NE & NE & --- & Submit \\
T09 & P02 & T & \textbf{No} & NE & NE & NE & --- & Submit \\
T09 & P03 & B & \textbf{No} & NE & NE & NE & --- & Submit \\
T09 & P04 & B & \textbf{No} & NE & NE & NE & --- & Submit \\
T09 & P05 & B & \textbf{No} & NE & NE & NE & --- & Submit \\
T09 & P06 & T & \textbf{No} & NE & NE & NE & --- & Submit \\
T09 & P07 & B & \textbf{No} & NE & NE & NE & --- & Submit \\
T09 & P08 & T & \textbf{No} & NE & NE & NE & --- & Submit \\
T09 & P09 & B & \textbf{No} & NE & NE & NE & --- & Submit \\
T09 & P10 & T & \textbf{No} & NE & NE & NE & --- & Submit \\
\midrule
T10 & P01 & B & \textbf{No} & NE & NE & NE & --- & Submit \\
T10 & P02 & T & \textbf{No} & NE & NE & NE & --- & Submit \\
T10 & P03 & B & Yes & R2 & \#12 & B & --- & Submit \\
T10 & P04 & T & \textbf{No} & NE & NE & NE & --- & Submit \\
T10 & P05 & T & \textbf{No} & NE & NE & NE & --- & Submit \\
T10 & P06 & B & \textbf{No} & NE & NE & NE & --- & Submit \\
T10 & P07 & B & \textbf{No} & NE & NE & NE & --- & Submit \\
T10 & P08 & T & \textbf{No} & NE & NE & NE & --- & Submit \\
T10 & P09 & T & \textbf{No} & NE & NE & NE & --- & Submit \\
T10 & P10 & B & \textbf{No} & NE & NE & NE & --- & Submit \\
\midrule
T11 & P01 & T & --- & \textbf{R2} & \textbf{\#14} & \textbf{A} & \textbf{A1} & Submit \\
T11 & P02 & T & --- & \textbf{R2} & \textbf{\#14} & \textbf{A} & \textbf{A1} & Submit \\
T11 & P03 & B & --- & UA & NE & NE & NE & Give up \\
T11 & P04 & T & --- & \textbf{R1} & \textbf{\#4} & \textbf{A} & \textbf{A2} & Submit \\
T11 & P05 & T & --- & \textbf{R1} & \textbf{\#4} & \textbf{A} & \textbf{A2} & Submit \\
T11 & P06 & B & --- & R3 & \#10 & B & B3 & Submit \\
T11 & P07 & T & --- & \textbf{R2} & \textbf{\#14} & \textbf{A} & \textbf{A1} & Submit \\
T11 & P08 & B & --- & \textbf{R2} & \textbf{\#14} & \textbf{A} & \textbf{A1} & Submit \\
T11 & P09 & B & --- & R3 & \#13 & \textbf{A} & DK & Submit \\
T11 & P10 & B & --- & R5 & \#3 & C & C2 & Submit \\
\midrule
T12 & P01 & B & --- & R4 & UA & E & Free text & Give up \\
T12 & P02 & B & --- & \textbf{R2} & \textbf{\#43} & \textbf{A} & \textbf{A1} & Submit \\
T12 & P03 & T & --- & \textbf{R2} & \textbf{\#43} & \textbf{A} & A3 & Submit \\
T12 & P04 & B & --- & R5 & \#19 & B & B2 & Submit \\
T12 & P05 & B & --- & R3 & \#24 & B & DK & Submit \\
T12 & P06 & T & --- & \textbf{R2} & \textbf{\#43} & \textbf{A} & A5 & Submit \\
T12 & P07 & B & --- & \textbf{R2} & \textbf{\#43} & \textbf{A} & \textbf{A1} & Submit \\
T12 & P08 & T & --- & \textbf{R2} & \textbf{\#43} & \textbf{A} & \textbf{A1} & Submit \\
T12 & P09 & T & --- & \textbf{R2} & \textbf{\#43} & \textbf{A} & \textbf{A1} & Submit \\
T12 & P10 & T & --- & \textbf{R2} & \textbf{\#43} & \textbf{A} & \textbf{A1} & Submit \\
\end{longtable}
\par\medskip\noindent\textbf{Participant ratings.} Table~\ref{tab:user-task-ratings} shows each participant's ratings for all 12 tasks. Confidence (Conf.) ranges from 1 (not at all confident) to 5 (very confident); helpfulness ranges from 1 (not helpful) to 5 (very helpful); mental effort ranges from 1 (very, very low) to 9 (very, very high). Helpfulness appears only in treatment. For crash tasks, confidence specifically concerns the selected fix; Appendix~\ref{app:user-study:tasks} gives the exact prompts. Available ratings from partial submissions are retained. Underlined task entries indicate treatment; plain entries indicate baseline. An asterisk marks a missing rating, and an em dash marks helpfulness in baseline, where the question is absent. The rightmost columns give each participant's median by version; the bottom rows give task medians and, at the right, medians over all available trial ratings. Medians exclude missing ratings.\par
\begingroup
\setlength{\tabcolsep}{3pt}
\begin{longtable}{@{}ll*{14}{r}@{}}
\caption{Participant and task ratings.}\label{tab:user-task-ratings}\\
\toprule
ID & Rating & T01 & T02 & T03 & T04 & T05 & T06 & T07 & T08 & T09 & T10 & T11 & T12 & Med. B & Med. T \\
\midrule
\endfirsthead
\multicolumn{16}{l}{\tablename~\thetable\ (continued)}\\
\toprule
ID & Rating & T01 & T02 & T03 & T04 & T05 & T06 & T07 & T08 & T09 & T10 & T11 & T12 & Med. B & Med. T \\
\midrule
\endhead
\midrule
\multicolumn{16}{r}{Continued on next page}\\
\endfoot
\bottomrule
\endlastfoot
P01 & Conf. & 2 & 4 & \underline{4} & \underline{4} & 5 & \underline{5} & 5 & \underline{5} & \underline{5} & 3 & \underline{5} & 2 & 3.5 & 5 \\
P01 & Help & --- & --- & \underline{3} & \underline{5} & --- & \underline{5} & --- & \underline{5} & \underline{5} & --- & \underline{5} & --- & --- & 5 \\
P01 & Effort & 5 & 7 & \underline{6} & \underline{7} & 2 & \underline{3} & 1 & \underline{1} & \underline{1} & 4 & \underline{6} & 7 & 4.5 & 4.5 \\
\addlinespace
P02 & Conf. & 4 & \underline{4} & * & \underline{2} & 2 & \underline{4} & 5 & 4 & \underline{5} & \underline{5} & \underline{4} & 4 & 4 & 4 \\
P02 & Help & --- & \underline{5} & --- & \underline{3} & --- & \underline{4} & --- & --- & \underline{5} & \underline{5} & \underline{5} & --- & --- & 5 \\
P02 & Effort & 6 & \underline{4} & * & \underline{8} & 4 & \underline{4} & 1 & 3 & \underline{2} & \underline{2} & \underline{3} & 3 & 3 & 3.5 \\
\addlinespace
P03 & Conf. & \underline{2} & \underline{5} & 1 & \underline{4} & 3 & 5 & \underline{4} & \underline{4} & 2 & 3 & * & \underline{3} & 3 & 4 \\
P03 & Help & \underline{4} & \underline{5} & --- & \underline{5} & --- & --- & \underline{4} & \underline{3} & --- & --- & --- & \underline{5} & --- & 4.5 \\
P03 & Effort & \underline{7} & \underline{3} & 8 & \underline{5} & 6 & 2 & \underline{2} & \underline{1} & 4 & 8 & * & \underline{8} & 6 & 4 \\
\addlinespace
P04 & Conf. & 3 & \underline{5} & 3 & \underline{5} & 4 & \underline{5} & 4 & \underline{5} & 3 & \underline{5} & \underline{4} & 4 & 3.5 & 5 \\
P04 & Help & --- & \underline{5} & --- & \underline{5} & --- & \underline{5} & --- & \underline{5} & --- & \underline{5} & \underline{5} & --- & --- & 5 \\
P04 & Effort & 5 & \underline{3} & 7 & \underline{3} & 6 & \underline{4} & 3 & \underline{2} & 6 & \underline{2} & \underline{4} & 5 & 5.5 & 3 \\
\addlinespace
P05 & Conf. & \underline{3} & 4 & \underline{2} & * & \underline{4} & 4 & 4 & \underline{4} & 3 & \underline{4} & \underline{3} & 1 & 4 & 3.5 \\
P05 & Help & \underline{5} & --- & \underline{3} & --- & \underline{5} & --- & --- & \underline{4} & --- & \underline{4} & \underline{4} & --- & --- & 4 \\
P05 & Effort & \underline{6} & 5 & \underline{7} & * & \underline{3} & 5 & 3 & \underline{2} & 5 & \underline{2} & \underline{7} & 9 & 5 & 4.5 \\
\addlinespace
P06 & Conf. & \underline{3} & 1 & 5 & \underline{2} & 3 & \underline{5} & \underline{5} & 3 & \underline{5} & 5 & 3 & \underline{4} & 3 & 4.5 \\
P06 & Help & \underline{5} & --- & --- & \underline{5} & --- & \underline{5} & \underline{5} & --- & \underline{5} & --- & --- & \underline{5} & --- & 5 \\
P06 & Effort & \underline{4} & 7 & 8 & \underline{5} & 7 & \underline{3} & \underline{3} & 2 & \underline{1} & 1 & 3 & \underline{3} & 5 & 3 \\
\addlinespace
P07 & Conf. & 3 & \underline{5} & \underline{2} & 2 & \underline{5} & 4 & \underline{4} & \underline{4} & 4 & 3 & \underline{4} & 2 & 3 & 4 \\
P07 & Help & --- & \underline{5} & \underline{5} & --- & \underline{5} & --- & \underline{4} & \underline{4} & --- & --- & \underline{5} & --- & --- & 5 \\
P07 & Effort & 6 & \underline{6} & \underline{7} & 8 & \underline{3} & 5 & \underline{5} & \underline{3} & 3 & 3 & \underline{6} & 8 & 5.5 & 5.5 \\
\addlinespace
P08 & Conf. & \underline{4} & 4 & \underline{*} & 5 & \underline{5} & 5 & 4 & 3 & \underline{5} & \underline{4} & 4 & \underline{2} & 4 & 4 \\
P08 & Help & \underline{4} & --- & \underline{*} & --- & \underline{5} & --- & --- & --- & \underline{3} & \underline{4} & --- & \underline{4} & --- & 4 \\
P08 & Effort & \underline{4} & 6 & \underline{*} & 2 & \underline{1} & 1 & 6 & 3 & \underline{3} & \underline{2} & 6 & \underline{5} & 4.5 & 3 \\
\addlinespace
P09 & Conf. & 1 & 5 & \underline{3} & 1 & \underline{5} & \underline{5} & \underline{4} & 3 & 3 & \underline{4} & 2 & \underline{3} & 2.5 & 4 \\
P09 & Help & --- & --- & \underline{5} & --- & \underline{5} & \underline{5} & \underline{5} & --- & --- & \underline{5} & --- & \underline{5} & --- & 5 \\
P09 & Effort & 7 & 3 & \underline{6} & 7 & \underline{2} & \underline{2} & \underline{3} & 5 & 5 & \underline{5} & 8 & \underline{7} & 6 & 4 \\
\addlinespace
P10 & Conf. & \underline{4} & \underline{3} & 2 & 2 & \underline{2} & 5 & \underline{4} & 4 & \underline{3} & 4 & 4 & \underline{2} & 4 & 3 \\
P10 & Help & \underline{4} & \underline{2} & --- & --- & \underline{2} & --- & \underline{4} & --- & \underline{3} & --- & --- & \underline{3} & --- & 3 \\
P10 & Effort & \underline{6} & \underline{6} & 7 & 4 & \underline{4} & 3 & \underline{3} & 3 & \underline{6} & 2 & 3 & \underline{5} & 3 & 5.5 \\
\midrule
Median B & Conf. & 3 & 4 & 2.5 & 2 & 3 & 5 & 4 & 3 & 3 & 3 & 3.5 & 2 & 3 &  \\
Median B & Effort & 6 & 6 & 7.5 & 5.5 & 6 & 3 & 3 & 3 & 5 & 3 & 4.5 & 7 & 5 &  \\
Median T & Conf. & 3 & 5 & 2.5 & 4 & 5 & 5 & 4 & 4 & 5 & 4 & 4 & 3 &  & 4 \\
Median T & Help & 4 & 5 & 4 & 5 & 5 & 5 & 4 & 4 & 5 & 5 & 5 & 5 &  & 5 \\
Median T & Effort & 6 & 4 & 6.5 & 5 & 3 & 3 & 3 & 2 & 2 & 2 & 6 & 5 &  & 4 \\
\end{longtable}
\endgroup
\noindent Table~\ref{tab:user-post-ratings} reports the closing questionnaire once per participant. Adoption ranges from 1 (definitely not) to 5 (definitely yes), and trust from 1 (not at all) to 5 (fully). Overall workload items range from 1 (very low) to 7 (very high), except perceived performance, whose prompt specifies 1 (perfect) and 7 (failure). The workload ratings cover the whole study.\par
\noindent Mental and Time denote mental demand and time pressure; Performance denotes perceived performance.\par
\begin{longtable}{@{}l*{7}{r}@{}}
\caption{Post-study ratings.}\label{tab:user-post-ratings}\\
\toprule
ID & Adoption & Trust & Mental & Time & Performance & Effort & Frustration \\
\midrule
\endfirsthead
\multicolumn{8}{l}{\tablename~\thetable\ (continued)}\\
\toprule
ID & Adoption & Trust & Mental & Time & Performance & Effort & Frustration \\
\midrule
\endhead
\midrule
\multicolumn{8}{r}{Continued on next page}\\
\endfoot
\bottomrule
\endlastfoot
P01 & 5 & 4 & 6 & 5 & 5 & 6 & 4 \\
P02 & 5 & 4 & 4 & 4 & 3 & 6 & 3 \\
P03 & 5 & 4 & 6 & 6 & 7 & 6 & 7 \\
P04 & 5 & 5 & 4 & 2 & 4 & 6 & 2 \\
P05 & 5 & 4 & 5 & 5 & 4 & 6 & 2 \\
P06 & 5 & 5 & 3 & 2 & 4 & 5 & 1 \\
P07 & 5 & 4 & 5 & 5 & 5 & 6 & 6 \\
P08 & 5 & 3 & 5 & 4 & 4 & 5 & 5 \\
P09 & 5 & 5 & 5 & 4 & 4 & 6 & 5 \\
P10 & 4 & 4 & 5 & 6 & 4 & 4 & 5 \\
\midrule
Median & 5 & 4 & 5 & 4.5 & 4 & 6 & 4.5 \\
\end{longtable}
\noindent For future similar tasks, 10 participants prefer the report, 0 prefer working without it, and 0 express no preference. These responses describe willingness to use the checker and perceived trust.\par
\endgroup

%% file: sections/appendix-threats.tex
\section{Extended Threats to Validity}
\label{sec:appendix-threats}

Section~\ref{sec:discussion:validity} summarizes the main threats to the
validity of our results. This appendix discusses them in full, across the
failure study (RQ1), the evaluation (RQ2--RQ4), and the user study (RQ5). It
groups the threats by type and states, for each, what the results do and do not
show.

\parh{Construct validity.} Construct validity asks whether our measures
capture the concepts they are meant to. Two of our measures depend on labels,
and for each we chose the most reliable labeling approach available at its
scale.
\begin{itemize}
\item \emph{Failure-study labels (RQ1).} The census labels all 10{,}134 reports,
  far more than manual labeling can cover. We therefore label them with an LLM
  classifier, following prior use of LLMs to annotate software artifacts and
  issue reports~\cite{ahmed2025,colavito2024}. To guard against the bias of a
  single model, four further models, three of them from other
  developers~\cite{verga2024}, relabeled a random subsample. They agree with the
  classifier at $\kappa=0.71$--$0.81$, and an adjudicated two-rater consensus
  lies within one percentage point of its prevalence. These checks establish
  reliability across models. In addition, the census counts only failures that
  users report, and users may not notice silent failures. Any undercount would
  thus make our silent fraction conservative.
\item \emph{Labels and ground truth (RQ2).} To avoid circularity, the labels
  come from public documentation and from cited conditions in ComfyUI's source,
  never from \tool or its schemas. Documentation can be wrong, so we score
  detection only on faults that either execution confirms or confirmed in our
  manual audit. A candidate fault that raises an error confirms itself. A
  candidate fault that completes needs a judgment of its output, which we make
  in two conservative steps. We use an image oracle to judge the outputs of the
  candidate faults, and confirm the outputs by visual that the image oracle
  flags as potentially faulty. We then review the remaining outputs. Candidate
  faults that neither step confirms, such as an ignored LoRA, do not count in
  our measurements in RQ2.
\end{itemize}

\parh{Internal validity.} Internal validity asks whether the evidence itself,
and not the assumptions and settings of our experiments, supports our
conclusions. For each conclusion, we chose assumptions that can be audited and
settings that do not favor \tool.
\begin{itemize}
\item \emph{Faithfulness proofs.} As is standard for proofs about systems built
  on large libraries, the proofs (Appendix~\ref{app:faithful}) rely on trusted
  lemmas about PyTorch operators and ComfyUI's model code. We keep these lemmas
  auditable: each states one narrow behavior and is justified by PyTorch
  semantics or by cited source lines of ComfyUI~0.27.0. The proofs also assume
  that the model database identifies files correctly. A later ComfyUI release
  requires re-checking only the schemas whose cited code changes.
\item \emph{Case selection (RQ4).} RQ4 does not test \tool on all reported
  failures, but on the 108 of 753 whose graphs we can rebuild and whose failures
  execution reproduces. None of the curation steps refers to \tool or its
  verdicts. They ask whether a report states its components, whether the graph
  can be built and run, and whether the failure recurs. The steps deliberately
  \emph{keep} failures that lie outside the type system and graphs with nodes
  that have no schema.
\item \emph{Timing.} We compare checking time with the execution time of the
  same graphs, run with each workflow's own settings
  (Appendix~\ref{app:eval:injection}). Each run loads its models, which an
  interactive session may keep in memory. Even the fastest execution, about two
  seconds, exceeds the median checking time by more than two orders of
  magnitude, so the comparison does not depend on model loading.
\end{itemize}

\parh{External validity.} Several results may not generalize beyond the
evaluated settings.
\begin{itemize}
\item \emph{Injected faults.} The hosts are official workflows that run with
  public models within our GPU budget, and graphs derived from one host are
  correlated. Community workflows with custom nodes may expose other faults,
  and the quantization and modality families have few confirmed faults.
\item \emph{Missing schemas.} The experiment removes classes uniformly at
  random. Real schema libraries omit classes selectively, which may affect
  detection differently. The experiment also does not isolate individual
  solver components.
\item \emph{Scale.} Larger graphs and constraint systems require further
  evaluation.
\item \emph{Venues and workflows.} Reports from the ComfyUI core repository may
  contain a different mix of failures from those encountered across community
  workflows. False-alarm rates on arbitrary community workflows remain
  unmeasured. The compatibility dimensions arise from the underlying models, but
  their prevalence and \tool's coverage may differ in other workflow systems.
\item \emph{User study.} The study covers 10 participants on 12 read-only
  tasks. Usability while users edit and execute their own workflows, and the
  effectiveness of automatically suggested repairs, remain to be evaluated.
\end{itemize}